\documentclass[11pt,a4paper,reqno]{amsart}%
\usepackage{amsthm,amsmath,amsfonts,amssymb,xcolor,amsxtra,appendix,bookmark,dsfont,bm,mathrsfs}
\usepackage[margin=35mm]{geometry}
\usepackage[latin1]{inputenc}
\usepackage{color} 
\usepackage{graphicx}
\usepackage[shortlabels]{enumitem}
\usepackage{mathtools}
\usepackage{xparse}
\usepackage{tabularx,booktabs}
\usepackage[capitalize]{cleveref}
	\crefrangeformat{enumi}{(items~#3#1#4--#5#2#6)}
\numberwithin{equation}{section}
\usepackage{tikz}
\usepackage{etoolbox}
\usepackage{amstext}
\usetikzlibrary{calc,math}

\theoremstyle{plain}
\newtheorem{theorem}{Theorem}[section]
\newtheorem{lemma}[theorem]{Lemma}
\newtheorem{corollary}[theorem]{Corollary}
\newtheorem{proposition}[theorem]{Proposition}

\theoremstyle{definition}
\newtheorem*{definition}{Definition}
\newtheorem{example}[theorem]{Example}
\newtheorem{innercustomdef}{Assumption}
	\newenvironment{customdef}[1]
	{\renewcommand\theinnercustomdef{#1}\innercustomdef}
	{\endinnercustomdef}
	\Crefname{innercustomdef}{Assumption}{Assumptions}

\theoremstyle{remark}
\newtheorem{remark}[theorem]{Remark}

\newcommand{\R}{\mathbb{R}}
\newcommand{\Z}{\mathbb{Z}}
\newcommand{\C}{\mathbb{C}}
\newcommand{\N}{\mathbb{N}}
\newcommand{\T}{\mathbb{T}}

\newcommand\1{{\ensuremath {\mathds 1} }}
\newcommand{\tr}{{\rm Tr}\,}
\newcommand{\eps}{\varepsilon}
\newcommand{\nn}{\nonumber}

\DeclareMathOperator{\sgn}{sgn}

\DeclarePairedDelimiter\myp()
\DeclarePairedDelimiter\myt\{\}
\DeclarePairedDelimiter\myb[]
\DeclarePairedDelimiter\abs\lvert\rvert
\DeclarePairedDelimiter\norm\lVert\rVert
\DeclarePairedDelimiter\expec\langle\rangle
\DeclarePairedDelimiter\floor\lfloor\rfloor

\newcommand{\id}{\, \mathrm{d}}
\newcommand{\ids}{\mathrm{d}}

\newcommand\nonarg{{}\cdot{}}

\DeclarePairedDelimiterX\Set[1]\{\}{%
	
	#1
}

\NewDocumentCommand\normt{ s o m m }{
	\IfBooleanTF{#1}{
		\norm*{#3}
	}{
		\IfNoValueTF{#2}{
			\norm{#3}
		}{
			\norm[#2]{#3}
		}
	}
	_{#4}
}

\DeclareRobustCommand{\SkipTocEntry}[5]{}

\allowdisplaybreaks

\title[On the asymptotic behavior of a weakly interacting Fermi gas]{On the asymptotic behavior at the kinetic time of a weakly interacting Fermi gas}

\author[P. S. Madsen]{Peter S. Madsen}
\address{Department of Mathematics, LMU Munich, Theresienstrasse 39, 80333 Munich, Germany}
\email{madsen@math.lmu.de}

\author[P. T. Nam]{Phan Thành Nam}
\address{Department of Mathematics, LMU Munich, Theresienstrasse 39, 80333 Munich, Germany}
\email{nam@math.lmu.de}

\author[H. Spohn]{Herbert Spohn}
\address{Departments of Mathematics and Physics, Technical University of Munich, 85747 Garching, Germany}
\email{spohn@ma.tum.de}

\author[M.-B. Tran]{Minh-Binh Tran}
\address{Department of Mathematics, Texas A\&M University, College Station, TX 77843, USA}
\email{minhbinh@tamu.edu}

\begin{document}
\date{\today}

\begin{abstract}
	This paper is devoted to the dynamics of a weakly interacting Fermi gas at the kinetic time regime $t\sim \lambda^{-2}$ where $\lambda \ll 1$ is the strength of the interaction potential.
	We prove that if the initial state is close to equilibrium, then the two-point time correlation function of the many-body quantum dynamics can be computed effectively.
	In fact, we show that its leading order behavior is determined completely by the collisional frequency of the Boltzmann-Nordheim collision operator at equilibrium.
	This settles a prediction by Lukkarinen-Spohn \cite{lukkarinen2009not}, and thus gives a justification of the quantum Boltzmann equation from many-body quantum mechanics.
\end{abstract}

\maketitle
\tableofcontents
 
\section{Introduction}
Quantum kinetic theory has a long history, starting with the Boltzmann-Nordheim equation \cite{Nordheim}, which describes the effective dynamics of weakly interacting quantum gases. The heuristic derivation of Nordheim via Fermi's golden rule is indeed an ingenious guess, supported by an H-theorem and by the physically expected stationary momentum distributions.
A few years later, Uehling and Uhlenbeck \cite{UehlingUhlenbeck:TPI:1933} studied properties of this transport equation by linearizing it around equilibrium and deriving the long-time hydrodynamic approximation. Since then, the problem of deriving
irreversible kinetic equations from reversible microscopic dynamics has remained one of the central topics in
mathematical physics.

For around 100 years, there have been many attempts to make Nordheim's argument rigorous. On the physics side, we mention in particular a series of heuristic and perturbative arguments, including the works of van Hove \cite{van1954quantum}, Prigogine \cite{prigogine1962prigogine}, Balescu \cite{balescu1975equilibrium}, and Hugenholtz \cite{hugenholtz1983derivation}. From a more mathematical perspective, as argued by Erd\H{o}s, Salmhofer, and Yau \cite{ErdSalYau-04}, the emergence of quantum kinetic equations can be understood as a consequence of the quasi-free approximation on the kinetic time scale. However, turning these heuristic pictures into a mathematically controlled derivation is highly nontrivial.
The main difficulties are by now well understood: one has to isolate the leading resonant
interactions over long times, control the combinatorial complexity of the perturbation series,
and show that all subleading recollisions and interference patterns are negligible in the kinetic 
limit. Addressing these issues requires methods that go substantially beyond naive perturbation theory.

This problem is closely related to the emergence of classical kinetic equations from the nonlinear Schr\"odinger equation in the weak-coupling regime. In \cite{LukkarinenSpohn:WNS:2011}, Lukkarinen and Spohn investigated this problem for initial data close to equilibrium. They derived the asymptotic behavior of the solution on the kinetic time scale by means of a Duhamel expansion together with a careful graph analysis of the resulting high-dimensional oscillatory integrals. Related methods have also been used by Peierls \cite{Peierls:1993:BRK} to derive kinetic equations from weakly nonlinear wave equations, and by Erd\H{o}s, Salmhofer, and Yau \cite{erdHos2008quantum} to study the long-time diffusion of a quantum particle; an earlier short-time result was obtained in \cite{spohn1977derivation}. As discussed in \cite{lukkarinen2009not}, this approach is expected to be useful for the derivation of quantum kinetic equations, but the corresponding detailed analysis has not yet been carried out.

The purpose of this paper is to implement the program proposed in \cite{lukkarinen2009not} for a weakly interacting Fermi gas.
We consider the many-body dynamics on the kinetic time scale
$t \sim \lambda^{-2}$, where $\lambda \ll 1$ denotes the strength of the interaction. Our initial state is assumed
to be close to thermal equilibrium, and we study the two-point time correlation function of
the Heisenberg evolution. The main result shows that, to leading order in the weak-coupling
limit, this correlation function is governed by the collisional frequency associated with the Boltzmann--Nordheim collision operator at equilibrium. In this sense,
the many-body quantum dynamics exhibits, at kinetic times, the damping predicted by the
effective quantum kinetic equation. This settles a prediction in \cite{lukkarinen2009not}, and thus gives the first rigorous justification of
the quantum Boltzmann equation from many-body quantum mechanics, at least in this particular
scenario.

The assumptions of the paper are close to those used in the work of Lukkarinen and Spohn
on weakly nonlinear Schr\"odinger equations: we require $\ell^1$-clustering of the equilibrium
state together with dispersive, constructive-interference, and crossing estimates for the
dispersion relation. Under these assumptions, and in dimension $d \geq 4$, we obtain an
effective description of the equilibrium two-point function on a nontrivial kinetic time interval.
Although this still falls short of a full derivation of the nonlinear Boltzmann--Nordheim
equation, it gives rigorous evidence that the quantum many-body dynamics selects the same
collision mechanism as the one predicted by formal kinetic theory, at least in the regime of
small perturbations around equilibrium.

From the technical point of view, the general strategy of the present paper is to adopt the graphical methods developed in \cite{LukkarinenSpohn:WNS:2011} in the context
of the classical discrete nonlinear Schr\"odinger equation. Similarly to the classical case, we
obtain an expansion in $\lambda$ for the two-point time correlation function, which yields a series of
high-dimensional oscillatory integrals with zero radius of convergence. There are, however, two
differences: first, the emergence of a modified dispersion relation; and second, the need to
deal with a nonlocal interaction potential. These lead to additional challenges in the analysis.

It remains open to understand the full derivation of the nonlinear Boltzmann--Nordheim equation for arbitrary
non-equilibrium states. In the related problem in the classical setting, let us mention recent developments of Deng and Hani \cite{DenHan-23,DenHan-23pre}, who derive the nonlinear kinetic equations from the kinetic-time behavior of the nonlinear Schr\"odinger equation. Moreover, in very recent work \cite{DenHanMa-24pre}, Deng, Hani, and Ma reported a spectacular breakthrough regarding kinetic limits. They
developed an ingenious ``resummation scheme'' and thereby proved convergence for arbitrary
kinetic times concerning the derivation of the classical Boltzmann equation for a dilute gas of
classical hard spheres. Hopefully, the impressive developments on classical systems in \cite{DenHan-23,DenHan-23pre,DenHanMa-24pre} will inspire further results in the quantum setting.

In the next section we describe the model, formulate the
Boltzmann--Nordheim prediction, state the assumptions, and present the main theorem.

\addtocontents{toc}{\SkipTocEntry}
\subsection*{Acknowledgements}
P.S. Madsen wishes to thank the Department of Mathematics at Texas A\&M University for hosting him on several occasions.
This work was partially funded by the Deutsche Forschungsgemeinschaft (DFG, German Research Foundation) via TRR 352 (Project-ID 470903074). 
P.S. Madsen was partially funded by NSF CAREER DMS-2303146 and NSF Grants DMS-2305523 and DMS-2306379.
P.T. Nam was partially funded by the European Research Council via the ERC Consolidator Grant RAMBAS (Project Nr. 10104424).
M.-B. Tran is funded in part by a Humboldt Fellowship, NSF CAREER DMS-2303146 and NSF Grants DMS-2305523 and DMS-2306379.
The authors wish to thank Jani Lukkarinen for helpful discussions.

\section{Setting and main result} 
We consider a system of fermions on the lattice $\Omega= \Set{0, \dotsc, L-1}^d$ ($d\geq 1$) interacting via a two-body potential $\lambda V \myp{x-y}$.
The function $V$ is fixed and the small coupling constant $\lambda \to 0^+$ models the weakly interacting regime.
Here we impose periodic boundary conditions, which is natural to work in momentum space later. 

\addtocontents{toc}{\SkipTocEntry}
\subsection*{Hamiltonian in Fock space} 
In the grand canonical ensemble, the system is described by the Hamiltonian 
\begin{align*}
	H_{\lambda} = \bigoplus_{n=0}^{\infty} \myp[\bigg]{ \sum_{k=1}^n \mathfrak{h}_{x_k} +\lambda \sum_{1\leq k<\ell \leq n} V \myp{x_k-x_{\ell}} } 
\end{align*}
on the fermionic Fock space 
\begin{equation*}
	\mathcal{F} 
	={} \bigoplus_{n=0}^{\infty} \myp[\bigg]{ \bigwedge^n \ell^2 \myp{\Omega} },
\end{equation*}
where $\mathfrak{h}$ is a suitable one-body operator on $\ell^2 \myp{\Omega}$.
It is convenient to introduce the Fourier transform 
\begin{equation*}
	\hat{f} \myp{k} ={} \sum_{x \in \Omega} f \myp{x} e^{-2\pi i k\cdot x}, 
	\quad k\in \Omega^{\ast} ={} \Set[\bigg]{0, \frac{1}{L}, \dotsc,\frac{L-1}{L} }^d
\end{equation*}
and work instead exclusively in momentum space.
The set $\Omega^{\ast}$ is a subset of the $d$-dimensional torus $ \T^d = \myb{0,1}^d$. 
In this convention, the physical momentum is $p=2\pi k$, and the inverse Fourier transform is 
\begin{equation*}
	f \myp{x} 
	={} \frac{1}{\abs{\Omega}} \sum_{k\in\Omega^{\ast}} \hat{f} \myp{k} e^{2\pi i k\cdot x}, 
\end{equation*}
where $\abs{\Omega}=L^d$. 
In the sequel, we also employ the following short hand notations
\begin{equation}
\label{Shorthand1}
	\int_{\Omega^{\ast}} \ids k 
	={} \frac{1}{\abs{\Omega}} \sum_{k\in\Omega^{\ast}}
\end{equation}
and
\begin{equation}
\label{Shorthand2}
	\expec{f, g} 
	={} \sum_{x\in\Omega} f \myp{x}^{\ast} g \myp{x},
\end{equation}
as well as the standard Japanese bracket
\begin{equation}
\label{JapaneseBracket}
	\expec{x}
	= \sqrt{1+x^2}, \quad \forall x\in\R.
\end{equation}

In the momentum space, we will use the usual creation and annihilation operators $\hat{a} \myp{k}^{\ast}, \hat{a} \myp{k}$, which  satisfy the anti-commutation relations
\begin{equation}
\label{eq:anticom}
	\myt{\hat{a} \myp{k}, \hat{a} \myp{k'}^{\ast}} ={} \delta_L \myp{k-k'}, 
	\quad \myt{ \hat{a} \myp{k}, \hat{a} \myp{k'} } ={} 0, 
	\quad \myt{\hat{a} \myp{k}^{\ast}, \hat{a} \myp{k'}^{\ast}} ={} 0,
\end{equation}
where $\myt{A,B}=AB+BA$, and $\delta_L: \myp{\Z / L}^d \to \R$ is the discrete $\delta$-function
\begin{equation}
\label{eq:discrete_delta}
	\delta_L \myp{k} \coloneq L^d \1 \myp{k \mod 1 = 0}.
\end{equation}
The Hamiltonian $H_{\lambda}$ can then be rewritten as 
\begin{align}
\label{eq:Hamiltonian}
	H_{\lambda} = \int_{\Omega^{\ast}} \ids k \omega \myp{k} \hat{a} \myp{k}^{\ast} \hat{a} \myp{k} + \lambda &\iiiint_{\myp{\Omega^{\ast}}^4}  \ids k_1 \ids k_2 \ids k_3 \ids k_4 \delta_L \myp{k_1+k_2-k_3-k_4} \nn \\
	&\times \widehat{V} \myp{k_2-k_3} \hat{a} \myp{k_1}^{\ast} \hat{a} \myp{k_2}^{\ast} \hat{a} \myp{k_3} \hat{a} \myp{k_4},
\end{align}
where $\omega \myp{k}$ is the dispersion relation (its properties will be specified later). 

\addtocontents{toc}{\SkipTocEntry}
\subsection*{Two-point time correlation function}
For any operator $A$ on Fock space, the Heisenberg time evolution is defined by
\begin{equation*}
	A \myp{t} \coloneq{} e^{itH_{\lambda}} A e^{-itH_{\lambda}}.
\end{equation*}
This satisfies the equation
\begin{equation*}
	\frac{\id}{\id t} A \myp{t} 
	={} i \myb{H_{\lambda},A \myp{t}},
	\quad A \myp{0}={} A. 
\end{equation*}
In particular, for the annihilation operators we have inserting the expression \eqref{eq:Hamiltonian} of $H_{\lambda}$ and using \eqref{eq:anticom},
\begin{align}
	\frac{\id}{\id t} \hat{a} \myp{k_1,t} 
	={} - i \omega \myp{k_1} \hat{a} \myp{k_1,t} - i\lambda &\iiint_{\myp{\Omega^{\ast}}^3} \ids k_2 \ids k_3 \ids k_4 \delta_L \myp{k_1+k_2-k_3-k_4} \nn \\
	&\times \widehat{V} \myp{k_2-k_3} \hat{a} \myp{k_2,t}^{\ast} \hat{a} \myp{k_3,t} \hat{a} \myp{k_4,t},
\label{eq:at}
\end{align}
and similarly for the creation operators.

We are interested in the two-point time correlation function $\expec{\hat{a} \myp{k}^{\ast} \hat{a} \myp{k',t}}$.
We will restrict the attention to the initial Gibbs state $Z^{-1}e^{-\frac{1}{T} H_{\lambda}}$, namely
\begin{equation*}
	\expec{A}_{\lambda} \coloneq{} \frac{1}{Z} \tr \myp{e^{-\frac{1}{T} H_{\lambda}} A},
	\quad Z ={} \tr e^{-\frac{1}{T} H_{\lambda}},
\end{equation*}
where $T > 0$ is the temperature.
Here we ignore the chemical potential for simplicity. 
The result remains valid for a general chemical potential $\mu \neq 0$, which only changes the equilibrium background. 
Moreover, we work in finite volume throughout and take the limit $L \to \infty$ only in the final step (the volume dependence of the estimates is important, of course).

\begin{lemma}
\label{lem:wignerfunction}
	\begin{enumerate}[(a)]
		\item If $\expec{\nonarg}$ is a translation- and gauge invariant state on $\Omega$, then there exists a function $W_L^{\lambda} : \Omega^{\ast} \to \R$ such that for all $k,k' \in \Omega^{\ast}$ and all $t \in \R$,
		\begin{equation}
		\label{eq:wignerfunction}
			\expec[\big]{\hat{a} \myp{k,t}^{\ast} \hat{a} \myp{k',t}} = \delta_L \myp{k-k'} W_L^{\lambda} \myp{k,t},
		\end{equation}
		with $0 \leq W_L^{\lambda} \myp{k,t} \leq 1$.
		
		\item In the case of the Gibbs state, denoted by $\expec{\nonarg}_{\lambda}$, the two-point correlation function is time independent.
		Furthermore, at $\lambda=0$ (the non-interacting gas) and at infinite volume $\Omega \nearrow \Z^d$, it is simply
		\begin{equation}
		\label{eq:wignerfunction0}
			\expec{\hat{a} \myp{k}^{\ast} \hat{a} \myp{k'} }_0 
			={} \delta \myp{k-k'}  W^0 \myp{k},
			\quad W^0 \myp{k} ={} \frac{1}{e^{\frac{1}{T}\omega\myp{k}}+1},
		\end{equation}
		for $k, k' \in \T^d$, where $\delta$ denotes the usual $\delta$-distribution at the origin.
	\end{enumerate}
\end{lemma}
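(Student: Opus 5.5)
For part (a), I will work directly with translation invariance on the torus. Let $\tau_y$ be the unitary implementing translation by $y\in\Omega$ on $\mathcal F$. Then $\tau_y^* \hat a(k)\tau_y = e^{-2\pi i k\cdot y}\hat a(k)$, up to a sign convention, so $\hat a(k)^*\hat a(k')$ picks up the phase $e^{2\pi i (k-k')\cdot y}$. I interpret translation invariance as holding at every time $t$. For the Gibbs state, and for any state invariant under the dynamics, this is automatic because $H_\lambda$ commutes with $\tau_y$: the interaction depends only on $x_k-x_\ell$, and $\mathfrak h$ is assumed translation invariant, as its diagonalization in momentum space in \eqref{eq:Hamiltonian} shows. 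Invariance then gives $\expec{\hat a(k,t)^*\hat a(k',t)} = e^{2\pi i(k-k')\cdot y}\expec{\hat a(k,t)^*\hat a(k',t)}$ for all $y\in\Omega$. Whenever $k\neq k'$ in $\Omega^*$, I will choose $y$ so that the phase differs from $1$, which forces the expectation to vanish. I then define $W_L^\lambda(k,t) := L^{-d}\expec{\hat a(k,t)^*\hat a(k,t)}$, which yields \eqref{eq:wignerfunction} with the normalization \eqref{eq:discrete_delta}. Gauge invariance is not needed for this off-diagonal vanishing; it only ensures that terms like $\expec{aa}$ vanish, and those terms do not enter here.

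For the bounds $0\le W\le 1$, I will set $b = L^{-d/2}\hat a(k,t)$, where $\hat a(k,t)=e^{itH}\hat a(k)e^{-itH}$. By \eqref{eq:anticom} and unitarity, $\{b,b^*\}=L^{-d}\delta_L(0)=1$. Then $b^*b\ge0$ and $b^*b\le b^*b+bb^*=1$. Positivity and normalization of the state then give $0\le \expec{b^*b}=W_L^\lambda(k,t)\le1$. Realness follows because $b^*b$ is self-adjoint.

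For part (b), time independence is immediate: $e^{-H/T}$ commutes with $e^{itH}$, so cyclicity of the trace gives $\expec{A(t)}_\lambda=\expec{A}_\lambda$. This is legitimate because the Fock space over the finite lattice is finite dimensional. At $\lambda=0$, the operator $H_0=\int \omega(k)\hat a(k)^*\hat a(k)\,dk$ is a sum of commuting mode number operators $n_k = L^{-d}\hat a(k)^*\hat a(k)$, each with spectrum $\{0,1\}$, and the coefficient of $n_k$ is $\omega(k)$ because of the $L^{-d}$ in \eqref{Shorthand1}. The Gibbs state therefore factorizes over modes, and
\[
\expec{n_k}_0=\frac{e^{-\omega(k)/T}}{1+e^{-\omega(k)/T}}=\frac{1}{e^{\omega(k)/T}+1}.
\]
Hence $\expec{\hat a(k)^*\hat a(k')}_0=\delta_L(k-k')W^0(k)$ at every $L$.

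The step I expect to be the main obstacle is the infinite-volume statement, which has to be read distributionally. I will pair the identity above with test functions $f,g$ on $\T^d$ and use the Riemann sum $\int_{\Omega^*}dk\,dk'\, \delta_L(k-k')\bar f(k)g(k')W^0(k)=\int_{\Omega^*}dk\,\bar f g W^0$. This converges to $\int_{\T^d}\bar f g W^0$ as $L\to\infty$, provided $\omega$ is continuous, since the convergence is just that of a Riemann sum. The limit is exactly \eqref{eq:wignerfunction0}.
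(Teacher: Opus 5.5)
Your proposal is correct and follows essentially the paper's route. Translation invariance forces the $\delta_L(k-k')$ structure; note that since $H_\lambda$ commutes with lattice translations, this invariance holds at all times for any translation-invariant state, not only stationary ones. The bounds $0\le W_L^\lambda\le 1$ come from positivity and the anticommutation relations, and time independence of the Gibbs correlations comes from $[e^{itH_\lambda},e^{-H_\lambda/T}]=0$. The differences are cosmetic: you extract $\delta_L$ via the phase picked up under the translation unitary, whereas the paper sums the position-space correlation explicitly, obtaining the formula $W_L^\lambda(k,t)=\sum_{x\in\Omega}e^{-2\pi i k\cdot x}\langle a(0,t)^*a(x,t)\rangle$ that is reused in \cref{lem:wigner_estimate}. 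Your mode-factorization computation of $W^0$ and your Riemann-sum reading of the infinite-volume limit fill in what the paper simply calls well known.
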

\begin{proof}
	By translation invariance, we immediately have
	\begin{align}
		\expec[\big]{\hat{a} \myp{k,t}^{\ast} \hat{a} \myp{k',t}}
		={}& \sum_{x,y\in \Omega} e^{-2\pi i k \cdot \myp{x-y}} e^{-2 \pi i y \cdot \myp{k-k'}} \expec[\big]{a \myp{0,t}^{\ast} a \myp{x-y,t}} \nn \\
		={}& \sum_{y \in \Omega} e^{-2 \pi i y \cdot \myp{k-k'}} \sum_{x \in \Omega} e^{-2\pi i k \cdot x} \expec[\big]{a \myp{0,t}^{\ast} a \myp{x,t}} \nn \\
		={}& \delta_L \myp{k-k'} \sum_{x \in \Omega} e^{-2\pi i k \cdot x} \expec[\big]{a \myp{0,t}^{\ast} a \myp{x,t}},
	\label{eq:wignerfunctioncalc}
	\end{align}
	establishing \eqref{eq:wignerfunction} with
	\begin{equation*}
		W_L^{\lambda} \myp{k,t} = \sum_{x \in \Omega} e^{-2\pi i k \cdot x} \expec[\big]{a \myp{0,t}^{\ast} a \myp{x,t}}.
	\end{equation*}
	The fact that $0\leq W_L^{\lambda} $ follows directly from \eqref{eq:wignerfunction} since $\hat{a} \myp{k,t}^{\ast} \hat{a} \myp{k,t}$ is a non-negative operator, and $W_L^{\lambda} \leq 1$ follows in the same way, using the anti-commutation relation \eqref{eq:anticom}.
	
	For the Gibbs state, it is clear that $\expec[\big]{\hat{a} \myp{k,t}^{\ast} \hat{a} \myp{k',t}}_{\lambda} = \expec{\hat{a} \myp{k}^{\ast} \hat{a} \myp{k'} }_{\lambda} $, since $e^{-itH_{\lambda}}$ and $e^{-\beta H_{\lambda}}$ commute.
	The computation of $W^0 \myp{k}$ for the ideal gas is well-known.
\end{proof}
Throughout the paper, we will also use the notations
\begin{equation}
\label{eq:Wnotations}
	\widetilde{W} \myp{k} \coloneq{} 1-W \myp{k},
	\quad \text{and} \quad
	W \myp{k,\sigma} \coloneq{} \begin{cases}
							W \myp{k} & \text{if } \sigma = -1, \\
							1-W \myp{k} & \text{if } \sigma = 1.
						\end{cases}
\end{equation}

Our goal is to prove an effective description for the two-point time correlation function $\expec{\hat{a} \myp{k}^{\ast} \hat{a} \myp{k',t}}_{\lambda}$ in the weakly interacting regime $\lambda\to 0^+$ up to the kinetic time $t \sim \lambda^{-2}$. 

\addtocontents{toc}{\SkipTocEntry}
\subsection*{Prediction by the Boltzmann-Nordheim kinetic theory}
In the following heuristic discussion, we will take the formal limit $L\to \infty$ and set $\T^d = \myb{0,1}^d$ for simplicity.
If we think of $\expec{\nonarg}$ as a quasi-free state, then we can solve the Cauchy problem \eqref{eq:at} by Duhamel expansion and find that in the kinetic time $t\lambda^{-2} = O \myp{\lambda^{-2}}$, the two-point correlation function is approximated by
\begin{equation}\label{eq:kinetic-conjecture}
	\expec{\hat{a} \myp{k,t \lambda^{-2}}^{\ast} \hat{a} \myp{k',t \lambda^{-2}} }
	\approx{} \delta \myp{k-k'} W \myp{k,t},
\end{equation}
where $W \myp{k,t}$ solves the Boltzmann-Nordheim equation
\begin{equation}
\label{eq:BN}
	\frac{\partial}{\partial t} W \myp{k,t}
	={} \mathcal{C} \myp[\big]{W \myp{t}} \myp{k}
\end{equation}
with the collision operator 
\begin{align}
	\mathcal{C} \myp{W} \myp{k_1}
	={}& \int_{\T^{3d}} \ids k_2 \ids k_3 \ids k_4 \delta \myp{k_1+k_2-k_3-k_4} \delta \myp{ \omega \myp{k_1} + \omega \myp{k_2} - \omega \myp{k_3} - \omega \myp{k_4} } \nn \\
	&\times \abs{\widehat{V} \myp{k_2-k_3} - \widehat{V} \myp{k_2-k_4}}^2 \myp[\big]{ \widetilde{W}_1 \widetilde{W}_2 W_3 W_4 - W_1 W_2 \widetilde{W}_3 \widetilde{W}_4 }.
\label{eq:BN-C}
\end{align}
Here we have introduced the shorthand notation $W_j = W \myp{k_j}, j=1,2,3,4$. 
Note that $W^0$ from \eqref{eq:wignerfunction0} is the equilibrium to \eqref{eq:BN}, as
\begin{equation*}
	\mathcal{C} \myp{W^0} ={} 0.
\end{equation*}
Therefore, we may expect that $\lim_{t\to \infty} W \myp{t} = W^0$.
A straightforward computation shows that the remainder 
\begin{equation*}
	\overline{W} \coloneq{} W-W^0
\end{equation*}
satisfies the equation 
\begin{equation}
\label{eq:W-W-infty}
	\frac{\partial}{\partial t} \overline{W} \myp{k_1}
	={} K \myb{\overline{W} \myp{t}} - \mu \myp{k_1} \overline{W} \myp{t} + {\rm error \ terms},
\end{equation}
where $K$ is a compact operator of convolution type, and the collisional frequency is defined to be
\begin{align}
	\mu \myp{k_1}
	={}& \int_{\T^{3d}} \ids k_2 \ids k_3 \ids k_4 \delta \myp{k_1+k_2-k_3-k_4} \delta \myp{\omega \myp{k_1} + \omega \myp{k_2} - \omega \myp{k_3} - \omega \myp{k_4} } \nn \\
	&\times \widehat{V} \myp{k_2-k_3} \myp[\big]{ \widehat{V} \myp{k_2-k_3} - \widehat{V} \myp{k_2-k_4} } \myp[\big]{ W^0_3 W^0_4 -  W^0_2 W^0_4 + W^0_2 \widetilde{W}^0_3}.
\label{eq:mu-infty}
\end{align}

The well-posedness theory for the Boltzmann--Nordheim equation \eqref{eq:BN} can be established using the techniques from \cite{LukMeiSpo-15}; in particular, one has the uniform bound $0 \leq W \leq 1$ provided that the initial datum satisfies the same bound (which corresponds to Pauli's exclusion principle for fermions).
We also refer to the recent work \cite{LukPirVuo-26pre} for related results. 

Establishing the leading-order behavior \eqref{eq:kinetic-conjecture} is a long-standing problem in mathematical physics. Heuristically, it follows from the idea that the many-body quantum dynamics still admits a suitable quasi-free approximation; see, for example, \cite{ErdSalYau-04}. However, propagating such a strong bound rigorously is notoriously difficult.

In the classical setting, the analogue of \eqref{eq:kinetic-conjecture} for the kinetic-time behavior of the nonlinear Schr\"odinger equation is known as the {\em kinetic conjecture} \cite{LukkarinenSpohn:WNS:2011}, which was resolved recently in \cite{DenHan-23} for short kinetic times and in \cite{DenHan-23pre} for long kinetic times.

\medskip

In the following, we will justify this heuristic argument for the many-body quantum problem.
For simplicity, we will consider the simplified two-point time correlation function $\expec{\hat{a} \myp{k}^{\ast} \hat{a} \myp{k',t} }_{\lambda}$ instead of $\expec{\hat{a} \myp{k,t}^{\ast} \hat{a} \myp{k',t} }_{\lambda}$.

\subsection{Assumptions}\label{sec:assumptions}
Before stating our main result, we list the assumptions that will be used.
To treat the fermionic quantum case, we essentially need the same assumptions as in \cite{LukkarinenSpohn:WNS:2011} for the weakly non-linear Schr\"odinger equation.
However, since we will allow for a more general class of interaction potentials, we need to consider the following \emph{modified dispersion relation},
\begin{equation}
	\label{ModifiedDispersion}
	\omega^{\lambda} \coloneq{} \omega + \lambda R_{\lambda},
\end{equation}
with
\begin{equation}
	\label{QuantityR}
	R_{\lambda} \myp{k_1} \coloneq{} \int_{\Omega^{\ast}} \ids k_2 W_L^{\lambda} \myp{k_2} \myb{\widehat{V} \myp{0} - \widehat{V} \myp{k_1-k_2}},
\end{equation}
where $W_L^{\lambda}$ is given by \cref{lem:wignerfunction}.
The reason for introducing this modification is to absorb first order effects in $\lambda$.
This will become clear later when we expand $\hat{a} \myp{k,t}$ in a pertubation series using the Duhamel formula.
The assumptions listed below are the same as in \cite{LukkarinenSpohn:WNS:2011}, except that we require the estimates in \cref{ass:DR2,ass:DR3,ass:DR4} to hold for $\omega^{\lambda}$ uniformly in $\lambda$ in the weak coupling regime.
\begin{customdef}{(V)}
\label{ass:V}
	The potential $V: \Z^d \to \R$ is smooth and non-negative.
	Its Fourier transform $\widehat{V}$ belongs to $C^2 \myp{\T^d}$ and $\normt{\widehat{V}}{C^2 \myp{\T^d}}$ is bounded.
\end{customdef}
\begin{customdef}{(L1)}[$\ell_1$-clustering]
\label{ass:L1}
	There exists $\lambda_0>0$ and $c_0>0$ independent of $n$ such that for $0<\lambda\leq \lambda_0$ and all $n\geq 4$, the following estimate holds true
	\begin{equation}
		\sum_{x \in \myp{\Z^d}^n} \delta \myp{{x_1=0}} \abs[\Bigg]{ \expec[\bigg]{ \prod^n_{j=1} a \myp{x_j,\sigma_j} }_{\lambda}^T }
		\leq{} \lambda c_0^n n!,
	\end{equation}
	where the truncation operator is defined in \eqref{Def:Truncation:2}.
	In addition,
	\begin{equation}
		\sum_{x \in \Z^d} \abs[\big]{ \expec{a \myp{0}^{\ast} a \myp{x}}_{\lambda} - \expec{a \myp{0}^{\ast} a\myp{x}}_0 } 
		\leq{} \lambda 2 c_0^2.
	\end{equation}
\end{customdef}
\begin{customdef}{(DR1)}
\label{ass:DR1}
	The periodic extension of $\omega$ is real-analytic and $\omega \myp{-k} = \omega \myp{k}$.
\end{customdef}
\begin{customdef}{(DR2)}[$\ell_3$-dispersivity]
\label{ass:DR2}
	Let us consider the free propagator 
	\begin{equation}
	\label{eq:propagator}
		p_t^{\lambda} \myp{x} 
		\coloneq{} \int_{\T^d} \ids k \, e^{i2\pi x\cdot k} e^{-it\omega^{\lambda} \myp{k}}.
	\end{equation} 
	There exist $\lambda_0,\delta > 0$ such the following inequality holds true for all $0 \leq \lambda < \lambda_0$, $t \in \R$, 
	\begin{equation}
		\normt{p_t^{\lambda}}{3}^3 
		={} \sum_{x \in \Z^d} \abs{p_t^{\lambda} \myp{x}}^3
		\lesssim \expec{t}^{-1-\delta}.
	\end{equation}
\end{customdef}
\begin{customdef}{(DR3)}[Constructive interference]
\label{ass:DR3}
	There exists a set $M^{\mathrm{sing}} \subset \T^d$ consisting of a union of a finite number of closed, one-dimensional, smooth submanifolds, and constants $C,\lambda_0 > 0$, such that for all $0 \leq \lambda < \lambda_0$ $t \in \R$, $k_0\in \T^d$, and $\sigma\in \Set{\pm 1}$,
	\begin{equation}
	\label{5.20a}
		\abs[\bigg]{ \int_{\T^d} \ids k \, e^{-it \myp{\omega^{\lambda} \myp{k}+\sigma \omega^{\lambda} \myp{k-k_0}}} } 
		\leq{} \frac{C \expec{t}^{-1}}{d \myp{k_0,M^{\mathrm{sing}} }},
	\end{equation}
	where $d \myp{k_0,M^{\mathrm{sing}}}$ is the distance from $k_0$ to $M^{\mathrm{sing}}$. 
\end{customdef}
\begin{customdef}{(DR4)}[Crossing bounds]
\label{ass:DR4}
	Define for $t_0,t_1,t_2\in \R$, $u_1,u_2\in \T^d$, and $x\in \Z^d$,
	\begin{equation}
	\label{eq:defp2tx}
	    \mathcal{K} \myp{x;t_0,t_1,t_2,u_1,u_2} 
	    \coloneq{} \int_{\T^d} \ids k \, e^{i 2\pi x\cdot k} e^{-i \myp{ t_0 \omega^{\lambda} \myp{k}+t_1 \omega^{\lambda} \myp{k+u_1}+ t_2 \omega^{\lambda} \myp{k+u_2} }}.
	\end{equation}
	We assume that there is a measurable function $F^{\mathrm{cr}}: \T^d \times \R_+ \to \myb{0,\infty}$ and constants $0<\gamma \leq 1$, $c_1,c_2,\lambda_0 > 0$ such that the following hold for all $0 \leq \lambda < \lambda_0$.
\begin{enumerate}
	\item For any $u_j\in \T^d$, $\sigma_j \in \Set{\pm 1}$, $j=1,2,3$, and $0<\zeta\leq 1$, the following bounds are satisfied:
	\begin{equation}
		\int_{\R^2} \ids s \ids t \, e^{-\zeta \abs{s}} \normt{p_t^{\lambda}}{3}^2 \normt{\mathcal{K} \myp{\nonarg; t,\sigma_1 s,\sigma_2 s,u_1,u_2} }{3} 
		\leq{} \zeta^{\gamma-1} F^{\mathrm{cr}} \myp{u_2-u_1,\zeta}, 
		\label{eq:DR4_1}
	\end{equation}
	and for any $n\in\{1,2,3\}$,
	\begin{equation}
		\int_{\R^2} \ids s \ids t \, e^{-\zeta \abs{s}} \prod_{j=1}^3 \norm{\mathcal{K} \myp{\nonarg; t,\sigma_j s,0,u_j,0} }_3
		\leq{} \zeta^{\gamma-1} F^{\mathrm{cr}} \myp{u_n,\zeta}.
		\label{eq:DR4_2}
	\end{equation}

	\item For all $0<\zeta \leq 1$ we have
	\begin{equation}
	\label{eq:crossingest1}
		\int_{\T^d} \mathrm{d} k F^{\mathrm{cr}} \myp{k,\zeta} 
		\leq{} c_1 \expec{\ln \zeta }^{c_2},
	\end{equation}
	and if also $u,k_0\in \T^d$, $\alpha\in \R$, $\sigma\in \Set{\pm 1}$, and $n\in \Set{1,2,3}$, and we denote $k= \myp{k_1,k_2,k_0-k_1-k_2}$, then
	\begin{equation}
	\label{eq:crossingest2}
		\iint_{\myp{\T^d}^2} \ids k_1 \ids k_2 F^{\mathrm{cr}} \myp{k_n+u;\zeta} \frac{1}{ \abs{\alpha-\Theta \myp{k,\sigma}+i\zeta} } 
		\leq{} c_1 \expec{\ln \zeta}^{1+c_2},
	\end{equation}
	where $\Theta: \myp{\T^d}^3\times \Set{\pm 1} \to \R$ is defined by
	\begin{equation}
	\label{eq:defTheta} 
		\Theta \myp{k,\sigma} 
		={} \omega^{\lambda} \myp{k_3} -\omega^{\lambda} \myp{k_1}+\sigma \myp{ \omega^{\lambda} \myp{k_2}-\omega^{\lambda} \myp{k_1+k_2+k_3} }.
	\end{equation}
\end{enumerate}
\end{customdef}
Note that $\Theta$ of course depends on $\lambda$, but we will suppress this dependence from the notation throughout.
We note also that for fermions at thermal equilibrium, the $\ell^1$-clustering estimate of \cref{ass:L1} can be justified using the work of Salmhofer \cite{Salmhofer-09}.

\subsection{Main result}
For any compactly supported $f,g\in \ell^2 \myp{\Z^d}$, we  denote by $\hat{f}$, $\hat{g}$ their Fourier transforms and consider the following testing two-point correlation operator
\begin{equation}
\label{eq:space_time_corr}
	Q^{\lambda} \myb{g,f} \myp{t} 
	\coloneq{} \iint_{\myp{\Omega^{\ast}}^2} \mathrm{d}k\mathrm{d}k' \hat{g} \myp{k}^{\ast} \hat{f} \myp{k'} \expec[\big]{ \hat{a} \myp{k',0}^{\ast} e^{i\omega^{\lambda} \myp{k} t /\eps} \hat{a} \myp{k,t/\eps} }_{\lambda},
\end{equation}
where $\eps =\lambda^2$, $\expec{\nonarg}_{\lambda}$ is the Gibbs state $Z^{-1} e^{-\frac{1}{T} H_{\lambda}}$, and $\omega^{\lambda} \myp{k}$ is the modified dispersion relation \eqref{ModifiedDispersion}.
\begin{theorem}
\label{thm:main}
	Suppose that $d \geq 4$ and that the assumptions of \cref{sec:assumptions} are satisfied.
	There exists $t_0>0$ such that for all $\abs{t} <t_0$ we have
	\begin{equation}
	\label{eq:main}
		\lim_{\lambda\to 0} \limsup_{L\to\infty} \abs[\bigg]{ Q^{\lambda} \myb{g,f} \myp{t}- \int_{\T^d} \ids k \hat{g} \myp{k}^{\ast} \hat{f} \myp{k} W^0 \myp{k} e^{-\nu_1 \myp{k} \abs{t} -it \nu_2 \myp{k}} } 
		={} 0,
	\end{equation}
	where $\nu_1,\nu_2$ are real functions defined by a quantity which is proportional to the collisional frequency $\mu$ of the Boltzmann-Nordheim operator at equilibrium.
	To be precise, $\nu \myp{k} =\nu_1 \myp{k} + i\nu_2 \myp{k}$, with
	\begin{align}
	\label{eq:nu-infty}
		\nu \myp{k_1}
		={}&-\int_0^{\infty} \ids t \int_{\T^{3d}} \ids k_2 \ids k_3 \ids k_4 \delta \myp{k_1+k_2-k_3-k_4} \nn \\
		&\times \widehat{V} \myp{k_2-k_3} \myp[\big]{\widehat{V} \myp{k_2-k_3} - \widehat{V} \myp{k_2-k_4}} e^{it \myp{\omega \myp{k_1} + \omega \myp{k_2} - \omega \myp{k_3} - \omega \myp{k_4}}} \nn \\
		&\times \myp[\big]{W^0 \myp{k_3} W^0 \myp{k_4} - W^0 \myp{k_2} W^0 \myp{k_4} + W^0 \myp{k_2} \myp{1- W^0 \myp{k_3}} }.
	\end{align}
\end{theorem}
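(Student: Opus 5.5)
We follow the Lukkarinen--Spohn strategy for the discrete NLS, transplanted to the fermionic Heisenberg dynamics. First we pass to the interaction picture with respect to the modified dispersion $\omega^{\lambda}$. That is, we set $\hat{b}(k,t)=e^{i\omega^{\lambda}(k)t}\hat{a}(k,t)$ and rewrite \eqref{eq:at} as an equation for $\hat{b}$. The cubic term $\hat{a}^{\ast}\hat{a}\hat{a}$ is then split into a part that is Wick-ordered with respect to the Gibbs state and a part that contains a pairing $\expec{\hat{a}^{\ast}\hat{a}}_{\lambda}$. The pairing contributions produce exactly $\lambda R_{\lambda}(k)\hat{a}(k)$, since the direct term gives $\widehat V(0)$ and the exchange term gives $-\widehat V(k_1-k_2)$, and these are cancelled by the choice \eqref{ModifiedDispersion}. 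So only "truncated" cubic interactions remain.

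We then iterate the Duhamel formula $N$ times, where $N=N(\lambda)\sim |\ln\lambda|/\ln|\ln\lambda|$. This writes $\hat b(k,t/\eps)$ as a sum over interaction trees of depth at most $N$, plus a remainder containing the full evolution at the last step. Pairing with $\hat a(k',0)^{\ast}$ and taking $\expec{\cdot}_\lambda$, each tree term becomes a time-ordered oscillatory integral against a Gibbs correlation of many $a,a^{\ast}$. We expand this correlation into truncated (connected) cumulants via \eqref{Def:Truncation:2}. The result is a sum over Feynman-type graphs: each cumulant of order $\ge4$ is controlled by \cref{ass:L1} with a gain of $\lambda$, and the two-point cumulants are replaced by $W^0$ at cost $O(\lambda)$ by the second part of \cref{ass:L1}. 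Fermionic signs only matter through absolute values in the error bounds, and through the combinatorics of the main terms.

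The core step is the graph classification for the pairing graphs of the \emph{amputated} expansion. We identify the leading "leading-motive" graphs: iterated nested immediate recollisions (self-energy insertions) along the single line from $k$ to $k'$. Each insertion contributes $\lambda^2\int_0^{s}$ of the kernel whose Laplace transform gives $-\nu(k)$, with the integrand in \eqref{eq:nu-infty}. Here the combination $W^0_3W^0_4-W^0_2W^0_4+W^0_2\widetilde W^0_3$ arises from summing the fermionic contractions, and the direct-minus-exchange factor arises from antisymmetrization. Summing these chains over $n$ and using time-ordering gives $\sum_n (-\nu t)^n/n!$, and hence $W^0(k)e^{-\nu_1|t|-it\nu_2}$. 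All other graphs must be shown to vanish as $\lambda\to0$. Following the LS classification into nested, simple and crossing graphs, we bound each by $\lambda^{2n}$ times oscillatory integrals. These are estimated using:
\begin{itemize}
\item the $\ell_3$-dispersivity \cref{ass:DR2}, together with $d\ge4$, which gives integrability in time of the vertex propagators;
\item \cref{ass:DR3}, for the "constructive interference" graphs where two momenta get glued;
\item the crossing bounds \cref{ass:DR4}, which give an extra factor $\lambda^{\gamma}|\ln\lambda|^{c}$ for every graph containing a crossing.
\end{itemize}
Since we use $\omega^\lambda$ in the exponent, these assumptions hold uniformly in $\lambda$ as stated. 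The $L\to\infty$ limit is taken first, with the Riemann sums converging to integrals, and finite-volume errors are controlled via the $\ell^1$ clustering.

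We expect the main obstacle to be the remainder term and the combinatorial count. The number of graphs at order $n$ grows like $n!$ (times $c_0^n$ from clustering). Only the time-simplex volume $t^n/n!$ and the smallness $\lambda^{\gamma}$ per non-leading graph compensate, which is why the result is restricted to $|t|<t_0$. The remainder after $N$ iterations is controlled by a unitarity/$L^2$ argument: $\|\hat b(t)\|$ is bounded because fermionic operators are bounded by $1$ (a genuine simplification compared with NLS). One then uses a time-slicing trick that cuts $[0,t/\eps]$ into pieces of length $\kappa/\eps$, so that the last Duhamel step costs only a factor $\kappa$-small. The choice of $N(\lambda)$ and $\kappa$ is then tuned so that $(Cc_0t)^N N!^{-1/2}\lambda^{\gamma}$-type bounds tend to zero.
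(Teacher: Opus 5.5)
\paragraph{The main gap: which graphs are leading.}
The argument breaks at your identification of the leading graphs. You keep only the immediate-recollision (self-energy) insertions on the line from $k$ to $k'$, and you plan to show that every other graph vanishes as $\lambda\to0$.

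That second step cannot succeed. The graphs whose long time slices are all trivial are \emph{all} graphs built by iteratively attaching leading motives at \emph{any} pairing at the bottom of the graph. This includes loss motives on the legs created by earlier motives, and gain motives, which replace a pairing by two vertices and three new pairings. Their phases cancel pairwise exactly as for the main-line insertions, so no dispersive, nesting or crossing bound makes them small. Individually, each has a generically nonzero limit of order $t^m/m!$.

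They vanish only in aggregate, and that cancellation is an idea your plan lacks. Fix an $(m-1)$-motive graph and a pairing other than the one containing the edge of $\hat a(k',0)^{\ast}$, and sum over the sixteen motives attachable there. After the swap $k_2\leftrightarrow k_3$, the loss and gain terms combine into
$\cos(r\Theta)\,(\widehat V(k_1+k_2)-\widehat V(k_1+k_3))^2\,(\widetilde W_0\widetilde W_1W_2W_3-W_0W_1\widetilde W_2\widetilde W_3)$.
With $\beta=1/T$, the Fermi--Dirac identity
$\widetilde W_0\widetilde W_1W_2W_3-W_0W_1\widetilde W_2\widetilde W_3=W_0W_1W_2W_3\,e^{\beta(\omega_0+\omega_1)}(1-e^{\beta\Theta})$
shows this is $\Theta$ times a smooth function. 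So the $r$-integral up to $M$ is a boundary term proportional to $\sin(M\Theta)$, which tends to zero by \cref{prop:osc_bound}.

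This is the diagrammatic form of $\mathcal C(W^0)=0$. It has to be run inductively: attach the newest motive at the bottom and sum over its type and position. Only the six loss motives at the special pairing survive, and they replace $W^0$ by $-W^0\nu$.

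Only time-adjacent recollisions are leading. A self-energy inserted inside the time span of another, which your phrase ``iterated nested'' insertions suggests, is a nested graph and is $O(\lambda^{\gamma'})$.

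\paragraph{The remainder.}
Your treatment of the Duhamel remainder also fails. Operator norms discard all oscillation: $\lVert\hat a(k)\rVert=L^{d/2}$ in finite volume. Even for smeared fields, the $N$-th Duhamel remainder is $\lambda^N$ times a time simplex of volume of order $(t/\lambda^2)^N/N!$, i.e.\ of size $\lambda^{-N}$, and slicing the time interval does not change this.

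The substitute for unitarity is Cauchy--Schwarz in the Gibbs state,
$\lvert\langle A^*B(s)\rangle_\lambda\rvert^2\le\langle A^*A\rangle_\lambda\langle B(s)^*B(s)\rangle_\lambda$.
Here $B(s)$ is a polynomial in the $b(\cdot,s)$, and stationarity lets you evaluate the second factor at time $0$. This turns each error term into a doubled graph expansion: plus and minus trees interlaced in time, with the bottom slices amputated for the last remainder. The whole classification and the crossing/nested bounds must then be applied again to these graphs.

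The factor $(t/\varepsilon)^2$ lost in Cauchy--Schwarz is recovered from the amputation and from an exponential damping $\kappa_n=\lambda^2N_0^{b_0}$ on the second half of the Duhamel steps. That damping creates a further error term with prefactor $\kappa$, whose leading part is bounded only by $(t/t_0)^{N_0/2}$. This, together with convergence of the leading series, is the actual source of the restriction $\lvert t\rvert<t_0$.

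\paragraph{Cut-off and the role of $d\geq4$.}
The \cref{ass:DR3} bound is singular at $M^{\mathrm{sing}}$, so the vertex must be split by a cut-off $\Phi_0+\Phi_1$. The truncated $\Phi_0$-part is a separate error term. That term and the removal of $\Phi_1$ from leading and nested graphs are where $d\ge4$ enters, not the dispersive estimate.
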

As pointed out in \cite{lukkarinen2009not}, the real part of the decay coefficient $\nu$ satisfies $\mathrm{Re} \, \nu > 0$, since a direct calculation gives
\begin{align*}
	\mathrm{Re} \, \nu \myp{k_1}
	={} \frac{\pi}{2} \int_{\T^{3d}} \ids k_2 \ids k_3 \ids k_4 &\delta \myp{k_1+k_2-k_3-k_4} \delta \myp{\omega \myp{k_1} + \omega \myp{k_2} - \omega \myp{k_3} - \omega \myp{k_4} } \nn \\
	&\times \myp[\big]{ \widehat{V} \myp{k_2-k_3} - \widehat{V} \myp{k_2-k_4} }^2 \myp{W^0_1}^{-1} \widetilde{W}^0_2 W^0_3 W^0_4.
\end{align*}
Theorem \ref{thm:main} focuses on the short kinetic time. The case of long kinetic time is more challenging and requires new ingredients going beyond the existing techniques in \cite{LukkarinenSpohn:WNS:2011}.

\addtocontents{toc}{\SkipTocEntry}
\subsection*{Restriction to times \texorpdfstring{$t > 0$}{t>0}}
We recall the \emph{cell step function} $\myb{k}: \R^d \to \Omega^{\ast}$ defined by $\myb{k}_i \coloneq \floor{L \myp{k_i \mod 1}}/L$.
Since $\myb{k}$ is periodic, it is also identifiable with a map $\T^d \to \Omega^{\ast}$.
Then, for any $F: \Omega^{\ast} \to \C$, we have the following formula relating the discrete sum over $\Omega^{\ast}$ to a Lebesgue integral,
\begin{equation}
\label{eq:discrete-lebesque}
	\int_{\Omega^{\ast}} F \myp{k} \id k
	= \int_{\T^d} F \myp{\myb{k}} \id k,
\end{equation}
where $F \myp{\myb{k}}$ is now a piecewise constant step function on $\T^d$.
\begin{lemma}
\label{lem:wigner_estimate}
	If the assumptions \ref{ass:L1} and \ref{ass:DR1} are satisfied, then for all $0 < \lambda\leq \lambda_0$,
	\begin{equation}
		\limsup_{L \to \infty} \sup_{k \in \T^d} \abs[\big]{W_L^{\lambda} \myp{\myb{k}} - W^0 \myp{k}}
		\leq 2 c_0^2 \lambda.
	\end{equation}
\end{lemma}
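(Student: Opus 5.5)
We split the difference into a finite-volume interaction error and a discretization error. Write
\begin{equation*}
	W_L^{\lambda} \myp{\myb{k}} - W^0 \myp{k}
	= \myb[\big]{W_L^{\lambda} \myp{\myb{k}} - W_L^0 \myp{\myb{k}}} + \myb[\big]{W_L^0 \myp{\myb{k}} - W^0 \myp{\myb{k}}} + \myb[\big]{W^0 \myp{\myb{k}} - W^0 \myp{k}}.
\end{equation*}
Here $W^0$ is the infinite-volume Fermi--Dirac function from \eqref{eq:wignerfunction0}, extended periodically to $\T^d$. The third term goes to zero uniformly in $k$ as $L \to \infty$. Indeed, by \ref{ass:DR1} the function $\omega$ is real-analytic and periodic, hence Lipschitz on $\T^d$, so $W^0 = (e^{\omega/T}+1)^{-1}$ is Lipschitz, while $\abs{\myb{k}-k} \leq \sqrt{d}/L$ on the torus.

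For the second term we use the explicit structure of the free Gibbs state on $\Omega$. Since $H_0$ is diagonal in the momentum modes $\hat{a}\myp{k}$, $k\in\Omega^{\ast}$, we get exactly $\expec{\hat a(k)^* \hat a(k')}_0 = \delta_L(k-k') (e^{\omega(k)/T}+1)^{-1}$. Hence $W_L^0(k) = W^0(k)$ for $k\in\Omega^{\ast}$, and the second term vanishes identically.

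The first term carries the content. By the formula in the proof of \cref{lem:wignerfunction},
\begin{equation*}
	W_L^{\lambda} \myp{k} - W_L^{0} \myp{k} = \sum_{x\in\Omega} e^{-2\pi i k\cdot x} \myb[\big]{ \expec{a(0)^* a(x)}_{\lambda,L} - \expec{a(0)^* a(x)}_{0,L} }.
\end{equation*}
Taking absolute values inside the sum gives a bound uniform in $k$ by the $\ell^1$ norm of the difference of position-space two-point functions on the torus. We then pass to infinite volume. The finite-volume correlations $\expec{a(0)^*a(x)}_{\lambda,L}$ are assumed to converge, for each fixed $x\in\Z^d$, to the infinite-volume ones appearing in \ref{ass:L1}, identifying $\Omega$ with a box centred at the origin. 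Together with a tail estimate this gives
\begin{equation*}
	\limsup_{L\to\infty} \sup_k \abs{W_L^\lambda(k)-W_L^0(k)} \leq \sum_{x\in\Z^d} \abs{\expec{a(0)^*a(x)}_\lambda - \expec{a(0)^*a(x)}_0} \leq 2c_0^2\lambda,
\end{equation*}
where the last inequality is the second estimate in \ref{ass:L1}.

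The main obstacle is justifying this interchange of $\limsup_L$ with the sum over $x$, since \ref{ass:L1} is formulated for the infinite-volume state. I would handle it by splitting $\Omega$ into $\abs{x}\leq R$ and $\abs{x}>R$. On the finite part, pointwise convergence of the correlations applies. On the tail, I would use the uniform-in-$L$ summability implicit in the finite-volume reading of \ref{ass:L1}: the paper works in finite volume throughout, so the clustering bound is interpreted to hold uniformly for the torus states. For the free part, the tail is controlled by the exponential decay of the Fourier coefficients of the analytic function $W^0$. Letting $L\to\infty$ and then $R\to\infty$ yields the claim.
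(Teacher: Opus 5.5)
Your three-term split is a legitimate variant of the paper's argument, though not the same one. The paper uses two terms, $\abs{W^0(\myb{k})-W^0(k)}$ and $\abs{W_L^{\lambda}(\myb{k})-W^0(\myb{k})}$. It expands $W^0(\myb{k})$ as its full Fourier series over $\Z^d$, whose coefficients are in $\ell^1$ by smoothness from \ref{ass:DR1}. This bounds the second term by a vanishing tail $\sum_{x\in\Z^d\setminus\Omega}\abs{\expec{a(0)^{\ast}a(x)}_0}$ plus $\sum_{x\in\Omega}\abs{\expec{a(0)^{\ast}a(x)}_{\lambda}-\expec{a(0)^{\ast}a(x)}_0}$, and the latter is bounded by \ref{ass:L1} applied directly, with no limit interchange. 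So the paper tacitly compares the finite-volume interacting correlations with the infinite-volume free ones. You instead insert $W_L^0$ and use the exact identity $W_L^0=W^0$ on $\Omega^{\ast}$, which is correct because $H_0$ is diagonal in the modes $\hat a(k)$. This removes the aliasing tail and turns \ref{ass:L1} into a comparison of the two finite-volume Gibbs states, arguably the more natural reading. The two readings differ only by an aliasing error that vanishes as $L\to\infty$, thanks to the exponential decay of the Fourier coefficients of $W^0$.

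The weak point is your ``main obstacle'' paragraph. Pointwise convergence of $\expec{a(0)^{\ast}a(x)}_{\lambda,L}$ to infinite-volume correlations is not among the hypotheses. Neither is uniform-in-$L$ smallness of tails. Moreover, the tail control does not follow from a finite-volume reading of \ref{ass:L1}: a bound on the total $\ell^1$ norm for each $L$ gives no uniform smallness of tails. For instance, a difference concentrated at a single site $x_L$ with $\abs{x_L}\to\infty$ has constant $\ell^1$ norm and zero pointwise limit. On the other hand, once you adopt the finite-volume reading, as you already do and as the paper does tacitly, you have
$\sum_{x\in\Omega}\abs{\expec{a(0)^{\ast}a(x)}_{\lambda,L}-\expec{a(0)^{\ast}a(x)}_{0,L}}\le 2c_0^2\lambda$ for every $L$. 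Your first term is then bounded by $2c_0^2\lambda$ uniformly in $k$ and $L$, with no interchange of limits needed. Drop the $R$-splitting and the appeal to the thermodynamic limit; with that change your proof is complete and slightly cleaner than the paper's.
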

\begin{proof}
	We have from \cref{lem:wignerfunction} that
	\begin{equation*}
		\frac{1}{e^{\frac{1}{T} \omega \myp{k}} + 1} 
		= W^0 \myp{k}
		= \sum_{x \in \Z^d} e^{-2 \pi i k \cdot x} \expec[\big]{a \myp{0}^{\ast} a \myp{x}}_0,
	\end{equation*}
	so since the left hand side is smooth by our assumptions \ref{ass:DR1} on $\omega$, the inverse Fourier transform $x \mapsto \expec{a \myp{0}^{\ast} a \myp{x}}_0$ is in $\ell^1 \myp{\Z^d}$.
	We can now estimate
	\begin{equation*}
		\abs[\big]{W_L^{\lambda} \myp{\myb{k}} - W^0 \myp{k}}
		\leq{} \abs[\big]{W^0 \myp{\myb{k}} - W^0 \myp{k}} + \abs[\big]{W_L^{\lambda} \myp{\myb{k}} - W^0 \myp{\myb{k}}},
	\end{equation*}
	where the first term converges to zero uniformly on $\T^d$ as $\Omega$ grows, and the second term is bounded by
	\begin{equation*}
		\abs[\big]{W_L^{\lambda} \myp{\myb{k}} - W^0 \myp{\myb{k}}}
		\leq{} \sum_{x \in \Z^d \setminus \Omega} \abs[\big]{\expec{a \myp{0}^{\ast} a \myp{x}}_0} + \sum\limits_{x \in \Omega} \abs[\big]{\expec{a \myp{0}^{\ast} a \myp{x}}_{\lambda} - \expec{a \myp{0}^{\ast} a \myp{x}}_0}.
	\end{equation*}
	Here, the first term tends to zero as $L \to \infty$, and the second term is bounded by $\lambda 2 c_0^2$ by \cref{ass:L1}.
\end{proof}
\begin{lemma}
	\cref{thm:main} is true for $t = 0$.
	Furthermore, assuming that the theorem holds for $0<t<t_0$, then it also holds for $-t_0 < t < 0$.
\end{lemma}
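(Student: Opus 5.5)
For $t=0$ the plan is to write $Q^{\lambda}[g,f](0)$ via \cref{lem:wignerfunction}(a) with $t=0$: translation invariance of the Gibbs state gives $\expec{\hat a(k')^{\ast}\hat a(k)}_{\lambda}=\delta_L(k-k')W_L^{\lambda}(k)$. Combined with the normalization \eqref{Shorthand1}, this collapses the double sum to
\begin{equation*}
	Q^{\lambda}[g,f](0)=\int_{\Omega^{\ast}}\ids k\,\hat g(k)^{\ast}\hat f(k)W_L^{\lambda}(k)=\int_{\T^d}\ids k\,\hat g([k])^{\ast}\hat f([k])W_L^{\lambda}([k]),
\end{equation*}
using \eqref{eq:discrete-lebesque}. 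Since $f,g$ are compactly supported, $\hat f,\hat g$ are trigonometric polynomials. They are therefore smooth and bounded on $\T^d$, uniformly in $L$, so $\hat f([k])\to\hat f(k)$ uniformly. The difference from $\int\hat g^{\ast}\hat f W^0$ is then bounded by $\|\hat g\|_\infty\|\hat f\|_\infty\sup_k|W_L^{\lambda}([k])-W^0(k)|$ plus a term that vanishes as $L\to\infty$. By \cref{lem:wigner_estimate} the $\limsup_L$ of this difference is at most $C\lambda$, which tends to $0$ as $\lambda\to0$. This is exactly \eqref{eq:main} at $t=0$, because the exponential factor equals $1$.

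For negative times, set $s=-t>0$. The idea is to relate $Q^{\lambda}[g,f](-s)$ to the complex conjugate of $Q^{\lambda}[f,g](s)$ by time-stationarity and self-adjointness. Stationarity of the Gibbs state (it commutes with $e^{isH_\lambda}$) gives
\begin{equation*}
	\expec{\hat a(k')^{\ast}\hat a(k,-s/\eps)}_\lambda=\expec{\hat a(k',s/\eps)^{\ast}\hat a(k)}_\lambda=\overline{\expec{\hat a(k)^{\ast}\hat a(k',s/\eps)}_\lambda}.
\end{equation*}
The last equality holds because $\expec{A^{\ast}}=\overline{\expec{A}}$ for a state. Translation invariance again forces $k=k'$: the correlation $\expec{\hat a(k)^{\ast}\hat a(k',\tau)}_\lambda$ is proportional to $\delta_L(k-k')$, by the same computation as \eqref{eq:wignerfunctioncalc}, since the Heisenberg evolution commutes with translations. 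The phase $e^{-i\omega^\lambda(k)s/\eps}$ in $Q$ is then the conjugate of $e^{i\omega^\lambda(k)s/\eps}$, since $\omega^\lambda$ is real. This holds because $R_\lambda$ is real: $W_L^\lambda$ is real by \cref{lem:wignerfunction}, and $\widehat V$ is real since $V$ is real and even. Evenness of $V$ must be checked, or follows from symmetrizing the interaction in \eqref{eq:Hamiltonian}. Altogether this yields
\begin{equation*}
	Q^{\lambda}[g,f](-s)=\overline{Q^{\lambda}[f,g](s)}.
\end{equation*}

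Applying the theorem at $s\in(0,t_0)$ with $f,g$ swapped and conjugating, the limit becomes
\begin{equation*}
	\int\hat g^{\ast}\hat f W^0 e^{-\nu_1|s|+is\nu_2}=\int\hat g^{\ast}\hat fW^0e^{-\nu_1|t|-it\nu_2},
\end{equation*}
using that $W^0,\nu_1,\nu_2$ are real. This is \eqref{eq:main} at $t=-s$. The only delicate point I expect is bookkeeping: the conjugation must land on the correct argument order of $Q$, and the phase factor with $\omega^\lambda$ must be real, as verified above. The passage to the limits is harmless, since conjugation commutes with $\limsup$ of absolute values.
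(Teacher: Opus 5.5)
Your proposal is correct and follows the paper's proof. The $t=0$ case is the same reduction to $\int_{\Omega^\ast}\hat g^\ast\hat f\,W_L^\lambda$ followed by \cref{lem:wigner_estimate}. For $t<0$ the paper derives the same identity $Q^\lambda[g,f](-s)=\overline{Q^\lambda[f,g](s)}$, phrased in position space through $F(x,t)=\expec{a(0)^\ast a(x,t)}_\lambda$ and $F(-x,-t)^\ast=F(x,t)$; this is exactly your stationarity-plus-conjugation step combined with translation invariance. Your explicit check that $\omega^\lambda$ is real, which is needed to conjugate the phase factor, is a point the paper uses only implicitly.
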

\begin{proof}
	For $t=0$, we have by \cref{lem:wignerfunction},
	\begin{equation*}
		Q^{\lambda} \myb{g,f} \myp{0} 
		={} \iint_{\myp{\Omega^{\ast}}^2} \ids k \ids k' \hat{g} \myp{k}^{\ast} \hat{f} \myp{k'} \expec[\big]{ \hat{a} \myp{k'}^{\ast} \hat{a} \myp{k} }_{\lambda}
		={} \int_{\Omega^{\ast}} \ids k \hat{g} \myp{k}^{\ast} \hat{f} \myp{k} W_L^{\lambda} \myp{k},
	\end{equation*}
	so using \cref{lem:wigner_estimate}, we can replace $W_L^{\lambda}$ by $W^0$ and immediately obtain \eqref{eq:main} at the initial time.
	
	Now, suppose that the theorem is true for $0<t<t_0$, and denote
	\begin{equation*}
		F \myp{x,t} \coloneq{} \expec[\big]{a \myp{0}^{\ast} a \myp{x,t}}_{\lambda}.
	\end{equation*}
	Since the time evolution commutes with spatial translation, we have by translation invariance that
	\begin{equation*}
		\expec[\big]{a \myp{y}^{\ast} a \myp{x,t}}_{\lambda}
		={} F \myp{x-y,t},
	\end{equation*}
	so repeating calculation in \eqref{eq:wignerfunctioncalc}, we obtain
	\begin{equation*}
		\expec[\big]{\hat{a} \myp{k}^{\ast} \hat{a} \myp{k',t}}_{\lambda}
		={} \delta_L \myp{k-k'} \widehat{F} \myp{k,t}.
	\end{equation*}
	This means that we can write
	\begin{equation}
	\label{eq:space_time_corr2}
		Q^{\lambda} \myb{f,g} \myp{t}
		={} \int_{\Omega^{\ast}} \id k \hat{g} \myp{k}^{\ast} \hat{f} \myp{k} e^{i \omega^{\lambda} \myp{k} t/\eps} \widehat{F} \myp{k,t/\eps}.
	\end{equation}
	Noting that
	\begin{equation*}
		F \myp{-x,-t}^{\ast}
		={} \expec[\big]{a \myp{-x,-t}^{\ast} a \myp{0}}_{\lambda}
		={} \expec[\big]{a \myp{0,-t}^{\ast} a \myp{x}}_{\lambda}
		={} \expec[\big]{a \myp{0}^{\ast} a \myp{x,t}}_{\lambda}
		={} F \myp{x,t},
	\end{equation*}
	which implies
	\begin{equation*}
		\widehat{F} \myp{k,-t}
		={} \sum\limits_{x \in \Omega} e^{2\pi i k \cdot x} F \myp{-x,-t}
		={} \widehat{F} \myp{k,t}^{\ast},
	\end{equation*}
	we conclude from \eqref{eq:space_time_corr2} that
	\begin{equation*}
		Q^{\lambda} \myb{f,g} \myp{-t}
		={} \myp[\big]{Q^{\lambda} \myb{g,f} \myp{t}}^{\ast}.
	\end{equation*}
	It then follows easily for $0<t<t_0$ that we can replace $t$ by $-t$ in \eqref{eq:main}.
\end{proof}

The rest of the paper is organized as follows.
In \cref{sec:feynman_diagrams} we develop the iterated Duhamel expansion and the corresponding diagrammatic representation. 
In \cref{Sec:MomentumGraph} we review the momentum graph formalism and the classification of graphs needed in the proof.
In \cref{sec:main_lemmata} we introduce the key parameters and auxiliary estimates.
The graph bounds are established in \cref{sec:graph_estimates}. 
Finally, in \cref{sec:mainproof} we combine these ingredients to prove the main theorem.

\section{Feynman diagrams}
\label{sec:feynman_diagrams}
\subsection{Feynman diagrams for Duhamel expansions}
\subsubsection{Duhamel expansions}
We rewrite \eqref{eq:at} as
\begin{align}
\label{eq:duhamel1bis}
	\frac{\partial}{\partial t} \hat{a} \myp{k_1,t} 
	={}& - i\omega \myp{k_1} \hat{a} \myp{k_1,t} - i\lambda \iiint_{\myp{\Omega^{\ast}}^3} \ids k_2 \ids k_3 \ids k_4 \delta_L \myp{k_1-k_2-k_3-k_4} \nn \\
	&\times \widehat{V} \myp{k_2+k_3} \hat{a} \myp{-k_2,t}^{\ast} \hat{a} \myp{k_3,t} \hat{a} \myp{k_4,t}. 
\end{align}
We will continue to expand \eqref{eq:duhamel1bis} in several layers of Duhamel expansions.
This strategy leads to a technical difficulty of treating the wave number near the singular manifold $M^{\mathrm{sing}}$.
As a result, we introduce the following proposition about the existence of cut-off functions, whose construction is detailed in \cite[Section $7.1$]{LukkarinenSpohn:WNS:2011}.

\begin{proposition}\label{Propo:Cutoff}
	Let $b > 0$.
	There are constants $C>0,\lambda_0'>0$ such that for any $ \myp{k_1,k_2,k_3} \in \myp{\T^d}^3$, $0<\lambda<\lambda_0'$ and for any pair of indices $i\neq j$, $i,j\in\Set{1,2,3}$, there exist smooth cut-off functions $\Phi_0: \myp{\T^d}^3 \to \myb{0,1}$ and $\Phi_1 \myp{k_1,k_2,k_3} = 1- \Phi_0 \myp{k_1,k_2,k_3},$ such that
	\begin{enumerate}[(i)]
		\item If $k_i+k_j=0$ then $\Phi_1=0$ and $\Phi_0=1$.
		\item $0 \leq \Phi_1 \myp{k} \leq C \lambda^{-b} d \myp{k_i+k_j,M^{\mathrm{sing}}}$.
		\item $\Phi_1 \myp{k_3,k_2,k_1} = \Phi_1 \myp{k_1,k_2,k_3}$, $\Phi_0 \myp{k_3,k_2,k_1} = \Phi_0 \myp{k_1,k_2,k_3}$.
		\item  
		\begin{equation}
		\label{PsiProperty4}
			0 \leq \Phi_0 \myp{k_1,k_2,k_3} 
			\leq \sum_{1\leq i<j \leq 3} \1 \myp[\big]{d \myp{k_i+k_j,M^{\mathrm{sing}}} 
			< \lambda^b}.
		\end{equation}
		\item There exist smooth functions $F_1,F_0: \T^d \to \R$ such that $F_0 = 1 - F_1$ and $\Phi_1 \myp{k_1,k_2,k_3} = F_1 \myp{k_1+k_2} F_1 \myp{k_2+k_3} F_1 \myp{k_3+k_1}$.
		Moreover, there exist constants $C_F,C_F'>0$ such that for any $0<\lambda<\lambda_0'$,
		\begin{enumerate}[label=(\roman{enumi}.\arabic*)]
			\item $0 \leq  F_1 \myp{k}, F_0 \myp{k} \leq 1$.
			\item $ \abs{\nabla F_1 \myp{k}}$, $ \abs{\nabla F_0 \myp{k}} \leq C_F \lambda^{-b}$, for all $k \in \T^d$.
 			\item If $k\in M^{\mathrm{sing}}$, then $F_1 \myp{k}=0$ and $F_0 \myp{k}=1$. 
  			\item If $d \myp{k,M^{\mathrm{sing}}} \geq \lambda^b$, then $F_1 \myp{k}=1$ and $F_0 \myp{k}=0$. 
    		\item If $0 \leq F_1 \myp{k} \leq C_F \lambda^{-b} d \myp{k,M^{\mathrm{sing}}}$. 
			\item 
			\begin{equation}
			\label{PsiProperty5}
				\int_{\T^d} \mathrm{d}k F_0 \myp{k} 
				\leq \int_{\T^d} \mathrm{d}k \1 \myp[\big]{ d \myp{k,M^{\mathrm{sing}}} < \lambda^b }
				\leq C'_F \lambda^{b \myp{d-1}}.
			\end{equation}
		\end{enumerate}
	\end{enumerate}
\end{proposition}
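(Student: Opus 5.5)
The construction follows \cite[Section 7.1]{LukkarinenSpohn:WNS:2011}. I will build a one-variable cut-off $F_1$ on $\T^d$ and then set $\Phi_1(k_1,k_2,k_3)=F_1(k_1+k_2)F_1(k_2+k_3)F_1(k_3+k_1)$ and $\Phi_0=1-\Phi_1$. Item (v) then holds by definition. Item (iii) is immediate because swapping $k_1\leftrightarrow k_3$ permutes the three factors. The remaining work is to choose $F_1$ well and check (i), (ii), (iv) and the sub-items of (v).

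\emph{Step 1: building $F_1$.} Fix a smooth $\chi:\R\to[0,1]$ with $\chi(r)=0$ for $r\le 1/3$ and $\chi(r)=1$ for $r\ge 2/3$. The function $k\mapsto d(k,M^{\mathrm{sing}})$ is Lipschitz with constant $1$ but need not be smooth. I will therefore mollify $\1(d(k,M^{\mathrm{sing}})\ge \lambda^b/2)$ at scale $\lambda^b/10$ on the torus and compose with $\chi$, or equivalently mollify $d$ itself at scale $\lambda^b/10$. This gives a smooth $F_1$ with values in $[0,1]$ such that:
\begin{itemize}
\item $F_1=0$ on $\{d<\lambda^b/4\}$, which contains $M^{\mathrm{sing}}$ (this is (i.3));
\item $F_1=1$ on $\{d\ge \lambda^b\}$ (this is (i.4));
\item $|\nabla F_1|\le C_F\lambda^{-b}$ (this is (i.2)).
\end{itemize}
Since $F_1$ vanishes where $d<\lambda^b/4$ and is at most $1$, we get $F_1(k)\le 4\lambda^{-b}d(k,M^{\mathrm{sing}})$, which is (i.5). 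Setting $F_0=1-F_1$ gives (i.1) and the first inequality in \eqref{PsiProperty5}, because $F_0$ is supported in $\{d<\lambda^b\}$.

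\emph{Step 2: the tube volume bound.} This is the only genuinely geometric input in \eqref{PsiProperty5}. $M^{\mathrm{sing}}$ is a finite union of compact smooth curves in $\T^d$. By the tubular neighbourhood theorem, or more crudely by covering each curve with $O(\lambda^{-b})$ balls of radius $\lambda^b$ using its finite length, the $\lambda^b$-neighbourhood of $M^{\mathrm{sing}}$ has volume at most $C\lambda^{-b}\lambda^{bd}=C'_F\lambda^{b(d-1)}$ for $\lambda$ small. This step fixes $\lambda_0'$.

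\emph{Step 3: the properties of $\Phi$.} For (ii), bound each of the other two factors by $1$ to get $\Phi_1\le F_1(k_i+k_j)\le C\lambda^{-b}d(k_i+k_j,M^{\mathrm{sing}})$.

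For (i), suppose $k_i+k_j=0$. Then $\Phi_1\le F_1(0)$, and (ii) gives $F_1(0)=0$ provided $0\in M^{\mathrm{sing}}$. This is the one point needing care, and I expect it to be the main (mild) obstacle. \cref{ass:DR3} at $k_0=0$, $\sigma=-1$ gives $\int \ids k\,1=1$, which does not decay in $t$. So the estimate \eqref{5.20a} forces $d(0,M^{\mathrm{sing}})=0$, i.e. $0\in M^{\mathrm{sing}}$; if needed one may also harmlessly enlarge $M^{\mathrm{sing}}$. Hence $\Phi_1=0$ and $\Phi_0=1$.

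For (iv), use the elementary identity
\[
1-\prod_{m=1}^3 a_m=\sum_{m=1}^3 \Bigl(\prod_{l<m}a_l\Bigr)(1-a_m)\le \sum_{m=1}^3 (1-a_m), \qquad a_m\in[0,1].
\]
This gives $\Phi_0\le \sum_{i<j}F_0(k_i+k_j)\le \sum_{i<j}\1(d(k_i+k_j,M^{\mathrm{sing}})<\lambda^b)$, which is \eqref{PsiProperty4}. Smoothness of $\Phi_0$ and $\Phi_1$ follows from smoothness of $F_1$.
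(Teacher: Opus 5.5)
Your proposal is correct and is essentially the construction the paper defers to in \cite[Section 7.1]{LukkarinenSpohn:WNS:2011}: a single smoothed cut-off $F_1$ at scale $\lambda^b$ around $M^{\mathrm{sing}}$, the product $\Phi_1=F_1(k_1+k_2)F_1(k_2+k_3)F_1(k_3+k_1)$ required by item (v), the tube-volume bound for a finite union of compact curves, and the telescoping inequality for (iv). Your observation that \cref{ass:DR3} at $k_0=0$, $\sigma=-1$ forces $0\in M^{\mathrm{sing}}$ is exactly what item (i) needs, since it gives $F_1(0)=0$.
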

Of course, the functions $\Phi_0$, $\Phi_1$, $F_0$, $F_1$ all depend on $\lambda$, but we will suppress this from the notation throughout.
Plugging the expression $1=\Phi_0+\Phi_1$ into \eqref{eq:duhamel1bis}, we find
\begin{align}
\label{eq:duhamel1}
	\frac{\partial}{\partial t} \hat{a} \myp{k_1,t} 
	={}& - i \omega(k_1)  \hat{a} \myp{k_1,t} - i \lambda \iiint_{\myp{\Omega^{\ast}}^3} \mathrm{d}k_2 \mathrm{d}k_3 \mathrm{d}k_4 \delta \myp{k_1-k_2-k_3-k_4} \nn \\
	&\qquad \times \widehat{V} \myp{k_2+k_3} \Phi_0 \myp{k_2,k_3,k_4} \hat{a} \myp{-k_2,t}^{\ast} \hat{a} \myp{k_3,t} \hat{a} \myp{k_4,t} \nn \\
	&- i \lambda \iiint_{\myp{\Omega^{\ast}}^3} \mathrm{d}k_2 \mathrm{d}k_3 \mathrm{d}k_4 \delta \myp{k_1-k_2-k_3-k_4} \nn \\
	&\qquad \times \widehat{V} \myp{k_2+k_3} \Phi_1 \myp{k_2,k_3,k_4} \hat{a} \myp{-k_2,t}^{\ast} \hat{a} \myp{k_3,t} \hat{a} \myp{k_4,t}.
\end{align}
We now define a truncated product $\myb{\hat{a}^{\ast} \hat{a} \hat{a}}^T$ by writing
\begin{align}
\label{Def:Truncation}
	\hat{a} \myp{k_2,t}^{\ast} \hat{a} \myp{k_3,t} \hat{a} \myp{k_4,t}
	={}& \hat{a} \myp{k_4,t} \expec{\hat{a} \myp{k_2,t}^{\ast} \hat{a} \myp{k_3,t}}_{\lambda} - \hat{a} \myp{k_3,t} \expec{\hat{a} \myp{k_2,t}^{\ast} \hat{a} \myp{k_4,t}}_{\lambda} \nn \\
	&+ \myb{\hat{a} \myp{k_2,t}^{\ast} \hat{a} \myp{k_3,t} \hat{a} \myp{k_4,t} }^T.
\end{align}
Inserting \eqref{Def:Truncation} into the term of \eqref{eq:duhamel1} containing $\Psi_0$ yields
\begin{align*}
	\frac{\partial}{\partial t} \hat{a} \myp{k_1,t} 
	={}& - i \omega \myp{k_1}  \hat{a} \myp{k_1,t} - i \lambda \iiint_{\myp{\Omega^{\ast}}^3} \mathrm{d}k_2 \mathrm{d}k_3 \mathrm{d}k_4 \delta_L \myp{k_1-k_2-k_3-k_4} \nn \\
	&\qquad\qquad\begin{aligned}[t]
			\times \widehat{V} \myp{k_2+k_3} \Phi_0 \myp{k_2,k_3,k_4} \myb{&\hat{a} \myp{k_4,t} \delta_L \myp{k_2+k_3} W_L^{\lambda} \myp{k_3} \\
			&-\hat{a} \myp{k_3,t} \delta_L \myp{k_2+k_4} W_L^{\lambda} \myp{k_4} }
		\end{aligned} \nn \\
	&- i \lambda \iiint_{\myp{\Omega^{\ast}}^3} \mathrm{d}k_2 \mathrm{d}k_3 \mathrm{d}k_4 \delta_L \myp{k_1-k_2-k_3-k_4} \widehat{V} \myp{k_2+k_3} \nn \\
	&\qquad\qquad\begin{aligned}[t]
		\times \myb[\big]{&\Phi_1 \myp{k_2,k_3,k_4} \hat{a} \myp{-k_2,t}^{\ast} \hat{a} \myp{k_3,t} \hat{a} \myp{k_4,t} \\
		&+ \Phi_0 \myp{k_2,k_3,k_4} \myb{ \hat{a} \myp{-k_2,t}^{\ast} \hat{a} \myp{k_3,t} \hat{a} \myp{k_4,t}}^T },
	\end{aligned} 
\end{align*}
which, after integrating out the delta functions and taking into account the properties of the cut-off function $\Phi_0$, leads to
\begin{align}
\label{eq:duhamel1b}
	\frac{\partial}{\partial t} \hat{a} \myp{k_1,t}
	={}& - i \omega \myp{k_1} \hat{a} \myp{k_1,t} - i \lambda \hat{a} \myp{k_1,t} \int_{\Omega^{\ast}} \mathrm{d}k_2 W_L^{\lambda} \myp{k_2} \myb{\widehat{V} \myp{0} - \widehat{V} \myp{k_1-k_2}} \nn \\
	&- i \lambda \iiint_{\myp{\Omega^{\ast}}^3} \mathrm{d}k_2 \mathrm{d}k_3 \mathrm{d}k_4 \delta_L \myp{k_1-k_2-k_3-k_4} \widehat{V} \myp{k_2+k_3} \nn \\
	&\qquad\qquad\begin{aligned}[b]
	\times \myb[\big]{&\Phi_1 \myp{k_2,k_3,k_4} \hat{a} \myp{-k_2,t}^{\ast} \hat{a} \myp{k_3,t} \hat{a} \myp{k_4,t} \\
		&+ \Phi_0 \myp{k_2,k_3,k_4} \myb{ \hat{a} \myp{-k_2,t}^{\ast} \hat{a} \myp{k_3,t} \hat{a} \myp{k_4,t}}^T },
	\end{aligned} 
\end{align}
At this point, we recall the definition of the modified dispersion relation \eqref{ModifiedDispersion}, \eqref{QuantityR}, whose role is to absorb the terms containing $\hat{a} \myp{k_1,t}$ in \eqref{eq:duhamel1b}.
In order to do that, we introduce the following new presentation of the field operators $\hat{a}$, 
\begin{equation}
\label{eq:hatanew}
	b \myp{k,1,t} \coloneq e^{i\omega^{\lambda} \myp{k}t} \hat{a} \myp{k,t}, 
	\qquad
	b \myp{k,-1,t} \coloneq e^{-i\omega^{\lambda} \myp{k}t} \hat{a} \myp{-k,t}^{\ast}.
\end{equation}
A new equation for $b$ can then be derived from \eqref{eq:duhamel1b},
\begin{align}
\label{eq:duhamelforb1}
	\frac{\partial}{\partial t}  b \myp{k_1,1,t} 
	={}& - i \lambda \iiint_{\myp{\Omega^{\ast}}^3} \mathrm{d}k_2 \mathrm{d}k_3 \mathrm{d}k_4 \delta_L \myp{k_1-k_2-k_3-k_4} \widehat{V} \myp{k_2+k_3} \nn \\
	&\times \exp{ \myb[\big]{-it \myp{-\omega^{\lambda} \myp{k_1} - \omega^{\lambda} \myp{k_2} + \omega^{\lambda} \myp{k_3} + \omega^{\lambda} \myp{k_4}} } } \nn \\
	&\times \myb[\big]{\Phi_1 \myp{k_2,k_3,k_4} b \myp{k_2,-1,t} b \myp{k_3,1,t} b \myp{k_4,1,t} \nn \\
	&\qquad + \Phi_0 \myp{k_2,k_3,k_4} \myb{ b \myp{k_2,-1,t} b \myp{k_3,1,t} b \myp{k_4,1,t}}^T}. 
\end{align}
Note that this step relies on the stationarity of the Gibbs state $\expec{\nonarg}_{\lambda}$.
For general initial states, $\omega^{\lambda}$ will depend also on $t$ through $W_L^{\lambda}$, which would make absorbing the first two terms in \eqref{eq:duhamel1b} more complicated.

A similar computation can also be done for $b \myp{k_1,-1,t}$, starting from
\begin{align*}
	\frac{\partial}{\partial t} \hat{a} \myp{k_1,t}^{\ast} 
	={} i \omega(k_1) \hat{a} \myp{k_1,t}^{\ast} + i \lambda &\iiint_{\myp{\Omega^{\ast}}^3} \mathrm{d}k_2 \mathrm{d}k_3 \mathrm{d}k_4  \delta_L \myp{k_1+k_2+k_3+k_4} \\
	&\times \widehat{V} \myp{k_3+k_4} \hat{a} \myp{-k_2,t}^{\ast} \hat{a} \myp{-k_3,t}^{\ast} \hat{a} \myp{k_4,t},
\end{align*}
and using, similarly to \eqref{Def:Truncation},
\begin{align}
	\hat{a} \myp{k_2,t}^{\ast} \hat{a} \myp{k_3,t}^{\ast} \hat{a} \myp{k_4,t} 
	={}& \hat{a} \myp{k_2,t}^{\ast} \expec{\hat{a} \myp{k_3,t}^{\ast} \hat{a} \myp{k_4,t}}_{\lambda} - \hat{a} \myp{k_3,t}^{\ast} \expec{\hat{a} \myp{k_2,t}^{\ast} \hat{a} \myp{k_4,t}}_{\lambda} \nn \\
	&+  \myb{ \hat{a} \myp{k_2,t}^{\ast} \hat{a} \myp{k_3,t}^{\ast} \hat{a} \myp{k_4,t}}^T.
\label{Def:Truncation2}
\end{align}
The following general equation for $b \myp{k_1,\sigma,t}$ can then be derived,
\begin{align}
	\frac{\partial}{\partial t} b \myp{k_1,\sigma,t} 
	={}& - i\sigma \lambda \iiint_{\myp{\Omega^{\ast}}^3} \mathrm{d}k_2 \mathrm{d}k_3 \mathrm{d}k_4 \delta_L \myp{k_1-k_2-k_3-k_4} \nn \\
	&\times\frac{1}{2} \myb[\big]{ \myp{1+\sigma} \widehat{V} \myp{k_2+k_3} + \myp{1-\sigma} \widehat{V} \myp{k_3+k_4}} \nn \\
	&\times \exp{ \myb[\big]{-it \myp{-\sigma\omega^{\lambda} \myp{k_1} - \omega^{\lambda} \myp{k_2} + \sigma\omega^{\lambda} \myp{k_3} + \omega^{\lambda} \myp{k_4}} }} \nn \\
	&\times  \myb[\big]{ \Phi_1 \myp{k_2,k_3,k_4} b \myp{k_2,-1,t} b \myp{k_3,\sigma,t} b \myp{k_4,1,t} \nn \\
	&\qquad + \Phi_0 \myp{k_2, k_3,k_4} \myb{ b \myp{k_2,-1,t} b \myp{k_3,\sigma,t} b \myp{k_4,1,t}}^T}. 
\label{eq:duhamelforb2}
\end{align}
To simplify the expression, we denote
\begin{align}
	\Theta \myp{k_2,k_3,k_4,\sigma} \coloneq{}& \omega^{\lambda} \myp{k_4} -\omega^{\lambda} \myp{k_2} + \sigma \myp{\omega^{\lambda} \myp{k_3} -\omega^{\lambda}\myp{k_2+k_3+k_4}}, \nn \\
	U \myp{k_2,k_3,k_4,\sigma} \coloneq{}& \frac{1}{2} \myb[\big]{ \myp{1+\sigma} \widehat{V} \myp{k_2+k_3} + \myp{1-\sigma} \widehat{V} \myp{k_3+k_4} }, \label{Def:Uinteraction}\\
	\Psi_j \myp{k_2,k_3,k_4,\sigma} \coloneq{}& \Phi_j \myp{k_2, k_3,k_4} U \myp{k_2,k_3,k_4,\sigma}, \qquad j = 1,2, \nn
\end{align}
where $\Theta$ is the same as in \eqref{eq:defTheta}.
Plugging these expressions into \eqref{eq:duhamelforb2}, we arrive at
\begin{align}
	\MoveEqLeft[2] \frac{\partial}{\partial t} b \myp{k_1,\sigma,t} 
	={} - i\sigma \lambda \iiint_{\myp{\Omega^{\ast}}^3} \mathrm{d}k_2 \mathrm{d}k_3 \mathrm{d}k_4 \delta_L \myp{k_1-k_2-k_3-k_4} \nn \\
	&\times \exp{\myb[\big]{-it \Theta \myp{k_2,k_3,k_4,\sigma}}} \myb[\big]{ \Psi_1 \myp{k_2,k_3,k_4,\sigma} b \myp{k_2,-1,t} b \myp{k_3,\sigma,t} b \myp{k_4,1,t} \nn \\
	&\qquad +\Psi_0 \myp{k_2,k_3,k_4,\sigma} \myb{b \myp{k_2,-1,t} b \myp{k_3,\sigma,t} b \myp{k_4,1,t}}^T}. 
\label{eq:duhamelforbUSimplified}
\end{align}
We want to use this to obtain an expression for general products of $b$ operators.
For this, we recall the product rule
\begin{equation*}
	\frac{\partial}{\partial t} \prod_{j=1}^n b \myp{k_j,\sigma_j,t} 
	={} \sum_{m=1}^n \myp[\bigg]{ \prod_{j=1}^{m-1} b \myp{k_j,\sigma_j,t} } \frac{\partial}{\partial t} b \myp{k_m,\sigma_m,t} \myp[\bigg]{ \prod_{j=m+1}^n b \myp{k_j,\sigma_j,t} },
\end{equation*}
where we use the general convention that the operators are ordered from left to right according to the order of the index in the $\prod$-symbol, i.e., the product on the left hand side above is by definition $b \myp{k_1,\sigma_1,t} \cdots b \myp{k_n,\sigma_n,t}$.
Introducing a new parameter $\kappa>0$ which will be specified later, plugging \eqref{eq:duhamelforbUSimplified} into the expression above leads to
\begin{align}
	\MoveEqLeft[2] \frac{\partial}{\partial t} \myb[\Big]{ e^{\kappa t} \prod_{j=1}^n b \myp{k_j,\sigma_j,t} } \nn \\
	={}& \kappa e^{\kappa t} \prod_{j=1}^n b \myp{k_j,\sigma_j,t} - i\lambda \sum_{m=1}^n \sigma_m \iiint_{\myp{\Omega^{\ast}}^3} \mathrm{d}k'_2 \mathrm{d}k'_3 \mathrm{d}k'_4 \delta_L \myp{k_m-k_2'-k_3'-k_4'} \nn \\
	&\quad\quad \times \exp{ \myb[\big]{\kappa t-it \Theta \myp{k_2',k_3',k_4',\sigma_m} }} \nn \\
	&\quad\quad \times \prod_{j=1}^{m-1} b(k_j,\sigma_j,t) \myb[\big]{ \Psi_1(k_2,k_3,k_4,\sigma)b(k_2',-1,t) b(k_3',\sigma_m,t) b(k_4',1,t) \nn \\
	&\quad\quad \quad  +\Psi_0(k_2',k_3',k_4',\sigma_m)[b(k_2',-1,t) b(k_3',\sigma_m,t) b(k_4',1,t)]^T } \prod_{j=m+1}^{n} b(k_j,\sigma_j,t).
\label{eq:duhamelproductlambda}
\end{align}
Integrating both sides of the above equation yields
\begin{align}
	\MoveEqLeft[2] \prod_{j=1}^n b \myp{k_j,\sigma_j,t}
	={} e^{-\kappa t} \prod_{j=1}^n b \myp{k_j,\sigma_j,0} + \kappa \int_0^t \mathrm{d}s e^{-\myp{t-s} \kappa} \prod_{j=1}^n b \myp{k_j,\sigma_j,s} \nn \\
	&- i\lambda \int_0^t \mathrm{d}s \sum_{m=1}^n \sigma_m \iiint_{\myp{\Omega^{\ast}}^3} \mathrm{d}k'_2 \mathrm{d}k'_3 \mathrm{d}k'_4  \delta_L \myp{k_m-k_2'-k_3'-k_4'} \nn \\
	&\quad \times \exp{\myb[\big]{-\myp{t-s}\kappa - is \Theta \myp{k_2',k_3',k_4',\sigma_m}}} \nn \\
	&\quad \times \prod_{j=1}^{m-1} b \myp{k_j,\sigma_j,s} \myb[\big]{ \Psi_1 \myp{k_2',k_3',k_4',\sigma_m} b \myp{k_2',-1,s} b \myp{k_3',\sigma_m,s} b \myp{k_4',1,s} \nn \\
	&\quad \quad +\Psi_0 \myp{k_2',k_3',k_4',\sigma_m} \myb{b \myp{k_2',-1,s} b \myp{k_3',\sigma_m,s} b \myp{k_4',1,s}}^T } \prod_{j=m+1}^{n} b \myp{k_j,\sigma_j,s},
\label{eq:duhamelproduct}
\end{align}
which provides the basis for the full pertubation expansion of $b \myp{k,\sigma,t}$ that we will use to treat the correlation operator \eqref{eq:space_time_corr}.

Observe that the term in \eqref{eq:duhamelproduct} associated to $\Psi_1$ contains a product of $n+2$ field operators $b$.
The idea is then to iteratively plug \eqref{eq:duhamelproduct} back into itself in the term containing $\Psi_1$, using different values of $\kappa$ at each step.
Starting with $n=1$, we label the terms in \eqref{eq:duhamelproduct} as
\begin{align*}
	b \myp{k,\sigma,t}
	={}& \mathcal{E}_0^0 \myp{t,k,\sigma,\kappa_0} \myb{b \myp{0}} + \kappa_0 \int_0^t \id s \mathcal{E}_0^1 \myp{s,t,k,\sigma,\kappa_0} \myb{b \myp{s}} \\
	&+ \int_0^t \id s \mathcal{E}_1^2 \myp{s,t,k,\sigma,\kappa_0} \myb{b \myp{s}} + \int_{0}^t \id s \mathcal{E}_1^3 \myp{s,t,k,\sigma,\kappa_0} \myb{b \myp{s}},
\end{align*}
where $\mathcal{E}_0^0$ is the first term in \eqref{eq:duhamelproduct}, $\mathcal{E}_0^{1}$ comes from the second term, $\mathcal{E}_1^2$ is the term containing $\Psi_0$, and $\mathcal{E}_1^3$ is the term with $\Psi_1$.
The subscripts appearing on the $\mathcal{E}$'s will indicate that any term in $\mathcal{E}_n$ contains $2n+1$ factors of $b$.
Because of the integral over $s$, it is convenient to imagine that we split the time interval $\myb{0,t}$ into two slices of length $s_0 = s$ and $s_1 = t-s$.
Denoting $\sigma_1 =-1$, $\sigma_2 = \sigma$, and $\sigma_3 = 1$, we can write for instance
\begin{align*}
	\MoveEqLeft[2] \mathcal{E}_1^3 \myp{s,t,k,\sigma,\kappa_0} \myb{b\myp{s_0}}
	={} -i\lambda \int_{\myp{\Omega^{\ast}}^3} \ids k_1 \ids k_2 \ids k_3 \delta_L \myp{k-k_1-k_2-k_3} \\
	&\times \sigma \Psi_1 \myp{k_1,k_2,k_3,\sigma} \prod\limits_{j=1}^3 b\myp{k_j,\sigma_j,s_0} \int_{\R_+} \ids s_1 \delta \myp{t-s_0-s_1} e^{-s_1 \kappa_0 - is_0 \Theta \myp{k_1,k_2,k_3,\sigma}},
\end{align*}
and similarly for $\mathcal{E}_1^2$.
Then, plugging \eqref{eq:duhamelproduct} into $\mathcal{E}_1^3$ replaces $\mathcal{E}_1^3$ with a sum of four new terms $\mathcal{E}_1^0$, $\mathcal{E}_1^1$, $\mathcal{E}_2^2$, $\mathcal{E}_2^3$.

Repeating this process $N$ times, we obtain an $N$-time Duhamel generalization of \eqref{eq:duhamelproduct}.
In this expansion, the time interval $ \myb{0,t} $ is divided into $N+1$ time slices $ \myb{0,s_0} $, $ \myb{s_0,s_0+s_1} $, $\dots$, $ \myb{s_0+\cdots+s_{N-1},t} $ and $t=s_0+\cdots+s_{N}$.
In order to better represent these Duhamel expansions, we will construct their presentations in terms of Feynman diagrams in the next subsection. 

\subsubsection{Feynman diagrams}\label{Duhamel}
As discussed in the previous subsection, the time interval $ \myb{0,t}$ is divided into $N+1$ time slices, each of length $s_i$.
We define the sets of indices
\begin{equation}
\label{IndexSet1}
	I_N=\Set{1,\dotsc,N} \mbox{ and } I_{N',N}=\Set{N',N'+1,\dotsc,N}.
\end{equation}
We now how to construct the Feynman diagrams corresponding to the expansion.
The time slices are represented from the bottom to the top of the diagram, with the lengths, $s_0$, $s_1$, $\dots$, $s_N$, as shown in \cref{Fig1}. 

\begin{figure}
	\centering
	\includegraphics{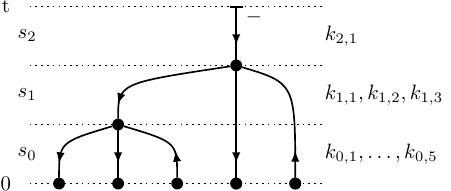}
	\caption{An example of a Feynman diagram. At time slice $s_0$, the edges are $k_{0,1},k_{0,2},k_{0,3},k_{0,4},k_{0,5}$, with the parities $-,-,+,-,+$. At time slice $s_1$, the edges are $k_{1,1},k_{1,2},k_{1,3}$, with the parities $-,-,+$.  At time slice $s_2$, the edge is $k_{2,1}$, with parity $-$. }
\label{Fig1}
\end{figure}

\subsubsection*{Indexing the segments}
At time slice $s_j$, since we expand the term containing $\Psi_1$, the delta function associated to $\Psi_1$ in \eqref{eq:duhamelproduct} means that the three momenta $k_2,k_3,k_4$ are combined into the momentum $k_1$.
This is represented on the diagram by the fact that at time slice $s_j$, there is exactly one triplet of the segments of time slice $s_{j-1}$ fuses into one segment of time slice $s_j$.
Those segments of time slice $s_{j-1}$ are denoted by $k_{j-1,\ell_j},k_{j-1,\ell_j+1},k_{j-1,\ell_j+2}$ and the one at time slice $s_j$ is denoted by $k_{j,\ell_j}$.
By the delta function associated to $\Psi_1$ in \eqref{eq:duhamelforbUSimplified}, we have $k_{j,\ell_j} = k_{j-1,\ell_j} + k_{j-1,\ell_j+1} + k_{j-1,\ell_j+2}$.
In these notations, the $j$ is the index of time slice $s_j$ and $\ell_j$ is the index of the segment where the fusion happens i.e. where we plug \eqref{eq:duhamelproduct} into itself at time slice $s_j$.
As we notice from \eqref{eq:duhamelproduct} there is a sign parameter $\sigma$ associated to each momentum $k$.
We call those parameters \emph{parities}.
We then denote these parameters for $k_{j-1,\ell_j},k_{j-1,\ell_j+1},k_{j-1,\ell_j+2}$, and $k_{j,\ell_j}$ by $\sigma_{j-1,\ell_j}, \sigma_{j-1,\ell_j+1}, \sigma_{j-1,\ell_j+2}$, and $\sigma_{j,\ell_j}$.
Now, at the very first time slice $N$, whose length is $s_N$, there is only one segment $k_{N,1}$ being decomposed into $k_{N-1,1}, k_{N-1,2},  k_{N-1,3}$.
In this case, $\ell_N=1$.
We denote the number of segments at this time slice by $m_0=1$.
However, at the second time slice $s_{N-1}$, $\ell_{N-1}$ can be either $1,2$, or $3$.
Each choice of $\ell_{N-1}$ leads to a different Feynman diagram.
The number of segments at this time slice is $m_1=m_0+2=3$.
By induction, the number of segments at time slice $s_{j}$ is $m_{N-j}=1+2(N-j)$.
In this time slice $s_{j}$, the fusion only happens at the segment $k_{j,\ell_j}$, where the three segments $\sigma_{j-1,\ell_j},\sigma_{j-1,\ell_j+1},\sigma_{j-1,\ell_j+2}$ are combined.
This leads us to the following way of indexing the segments in two consecutive time slices $s_{j-1}$ and $s_j$: $k_{j,l}=k_{j-1,l}, \sigma_{j,l}=\sigma_{j-1,l}, $ for  $l \in \Set{1,\cdots,\ell_j-1}$ and $k_{j,l}=k_{j-1,l+2}, \sigma_{j,l}=\sigma_{j-1,l+2}, $ for $l \in \Set{\ell_j+1,\cdots,m_{N-i}}$, with the notice that $m_{N-i}=m_{N-i+1}-2$.
Looking closer into equation \eqref{eq:duhamelproduct}, we can see that the parities associated to $k_1,k_2,k_3,k_4$ are $\sigma,-1,\sigma,1$ in the term containing $\Psi_1$.
This essentially means that $ \sigma_{j,\ell_j}=\sigma_{j-1,\ell_j+1}$, $ \sigma_{j-1,\ell_j}=-1$ and $ \sigma_{j-1,\ell_j+2}=1$.
We now denote  $\ell = \myp{\ell_1,\cdots, \ell_N}$, which is a vector in the space 
\begin{equation}
	G_N \coloneq{} I_{m_{N-1}} \times I_{m_{N-2}} \times\cdots \times I_{m_0}.
\end{equation}
We also need the notation 
\begin{equation}
\label{IndexSet2}
	\mathcal{I}_{n;m_0} \coloneq{} \Set{(j,l) \mid 0\le j\le n-1, 1\le l \le m_0+2(n-j) }, 
	\mbox{ and } 
	\mathcal{I}_n \coloneq{} \mathcal{I}_{n;1}, 
\end{equation}
which collects all index pairs associated to the above segments, excluding the final time slice.

Graphically, we place an \emph{interaction vertex} in the diagram at the places where three segments fuse into one.
We also place an \emph{initial time vertex} at the bottom of each of the $2N+1$ edges at the bottom of the diagram (at time $t=0$).
The parities $\sigma_{j,l}$ are represented on the diagram by adding an arrow on the corresponding edge.
An upward pointing arrow represents the parity $\sigma = 1$, while a downward pointing arrow represents $\sigma = -1$.
See \cref{Fig1} for a simple example of a Feynman diagram.

With all of these definitions in hand, we are able to write down the full Duhamel expansion, up to $N$ time slices. 

\subsubsection*{The full Duhamel expansion}
Plugging  \eqref{eq:duhamelproduct} back into itself at the terms containing $\Psi_1$, we find by induction that
\begin{align}
	b \myp{k,\sigma,t} 
	={}& \sum_{n=0}^{N-1} \mathcal{E}_{n}^0 \myp{t,k,\sigma,\kappa} \myb{b \myp{0}} + \sum_{n=0}^{N-1} \kappa_n \int_0^t \mathrm{d}s_0 \mathcal{E}_{n}^1 \myp{s_0,t,k,\sigma,\kappa} \myb{b \myp{s_0}} \nn \\
	&+ \sum_{n=1}^{N} \int_0^t \mathrm{d}s_0 \mathcal{E}_{n}^2 \myp{s_0,t,k,\sigma,\kappa} \myb{b \myp{s_0}} + \int_0^t \mathrm{d}s_0 \mathcal{E}_{N}^3 \myp{s_0,t,k,\sigma,\kappa} \myb{b \myp{s_0}}.
\label{eq:fullDuhamel}
\end{align}
The terms  in the formulation \eqref{eq:fullDuhamel} are explicitly written as follows.

(i) $\kappa$ denotes the vector $ \kappa = \myp{\kappa_0,\cdots,\kappa_{N-1}}$ in $\R_+^N$.

(ii) The first quantity $\mathcal{E}_{n}^0 \myp{t,k,\sigma,\kappa}$ in \eqref{eq:fullDuhamel} takes the form (for $n\geq 1$)
\begin{align}
	\mathcal{E}_{n}^0 \myp{t,k_{n,1},\sigma_{n,1},\kappa} \myb{b}
	={}& \myp{-i\lambda}^n \sum_{\ell \in G_n, \bar{\sigma} \in \Set{\pm 1}^{\mathcal{I}_n}} \int_{\myp{\Omega^{\ast}}^{\mathcal{I}_n}} \mathrm{d} \bar{k} \Delta_{n,\ell} \myp{\bar{k},\bar{\sigma}} \prod_{j=1}^{1+2n} b \myp{k_{0,j},\sigma_{0,j}} \nn \\
	&\times \prod_{j=1}^n \myb[\big]{\sigma_{j,\ell_j} \Psi_1 \myp{k_{j-1:\ell_{j}},\sigma_{j,\ell_j}} } \nn \\
	&\times \int_{\myp{\R_+}^{I_{0,n}}} \mathrm{d}s \delta \myp[\bigg]{t-\sum_{j=0}^n s_j} \prod_{j=0}^{n} e^{-s_j \kappa_{n-j}} \prod_{j=1}^{n} e^{-it_j \myp{s} \Theta_{j-1:\ell_{j}} \myp{\bar{k},\bar{\sigma}}},
\label{eq:DefE0}
\end{align}
in which $t_j \myp{s} = \sum_{l=0}^{j-1} s_l$ is the total time before time slice $s_j$; $k_{j:l}$ denotes the triplet $(k_{j,l},k_{j,l+1},k_{j,l+2})$; $\bar{k}$ and $\bar{\sigma}$ represent all of the quantities $k_{j,l},\sigma_{j,l}$ appearing in the integration; and $\Theta_{j:l} \myp{\bar{k},\bar{\sigma}}$ is the function $\Theta \myp{k_{j:l},\sigma_{j+1,l}}$.
Moreover, $\mathcal{E}_{0}^0 \myp{t,k,\sigma,\kappa} \myb{b} = e^{-\kappa_0 t} b \myp{k,\sigma}$.

(iii) The function $\Delta_{n,\ell}$ contains all of the $\delta$-functions appearing up to the $n$th time slice.
To be precise,
\begin{align}
	\Delta_{n,\ell} \myp{\bar{k},\bar{\sigma}}
	={} \prod_{j=1}^{n} \myt[\bigg]{ &\prod_{l=1}^{\ell_j-1} \myb[\Big]{\delta_L \myp{k_{j,l}-k_{j-1,l}} \1 \myp{\sigma_{j,l}=\sigma_{j-1,l}} }\nn \\
	&\times \delta_L \myp[\big]{k_{j,\ell_j}-k_{j-1,\ell_j}-k_{j-1,\ell_j+1}-k_{j-1,\ell_j+2} } \nn \\
	&\times \prod_{l=\ell_j+1}^{m_{n-j}} \myb[\Big]{ \delta_L \myp{k_{j,l}-k_{j-1,l+2}} \1 \myp{\sigma_{j,l} = \sigma_{j-1,l+2}}} \nn \\
	&\times \1 \myp{\sigma_{j-1,\ell_j}=-1} \1 \myp{\sigma_{j-1,\ell_j+1} = \sigma_{j,\ell_j}} \1 \myp{\sigma_{j-1,\ell_{j+2}}=1} }.
\label{eq:DefDelta}
\end{align}
This function makes sure that $k_{j,l}=k_{j-1,l}, \sigma_{j,l}=\sigma_{j-1,l} $, for  $l \in \Set{1,\cdots,\ell_j-1}$; $k_{j,l}=k_{j-1,l+2}, \sigma_{j,l}=\sigma_{j-1,l+2} $, for $ l \in \Set{\ell_j+1,\cdots,m_{N-j}} $; $ \sigma_{j,\ell_j}=\sigma_{j-1,\ell_j+1}$; $ \sigma_{j-1,\ell_j}=-1$; $ \sigma_{j-1,\ell_j+2}=1$; and $k_{j,\ell_j}-k_{j-1,\ell_j}-k_{j-1,\ell_j+1}-k_{j-1,\ell_j+2}=0$, as discussed in the previous subsection. 

(iv) The last quantity $\mathcal{E}_n^3$ can now be defined as:
\begin{align}
	\MoveEqLeft[3] \mathcal{E}^3_{n} \myp{s_0,t,k_{n,1},\sigma_{n,1},\kappa} \myb{b \myp{s_0}} \nn \\
	={}& \myp{-i\lambda}^n \sum_{\ell \in G_n, \bar{\sigma} \in \Set{\pm 1}^{\mathcal{I}_n}} \int_{\myp{\Omega^{\ast}}^{\mathcal{I}_n}} \mathrm{d} \bar{k} \Delta_{n,\ell} \myp{\bar{k},\bar{\sigma}}
	\prod_{j=1}^n \myb[\big]{ \sigma_{j,\ell_j} \Psi_1 \myp{k_{j-1:\ell_j},\sigma_{j,\ell_j}} } \nn \\
	&\times \prod_{j=1}^{1+2n} b \myp{k_{0,j},\sigma_{0,j}, s_0} \int_{\myp{\R_+}^{I_{n}}} \mathrm{d}s \delta \myp[\bigg]{t-\sum_{j=0}^n s_j} \prod_{j=1}^{n} e^{-s_j \kappa_{n-j}-i t_j \myp{s_0,s} \Theta_{j-1:\ell_{j}} \myp{\bar{k},\bar{\sigma}} }.
\label{eq:DefE3}
\end{align}
The only difference between the two formulas \eqref{eq:DefE0} and \eqref{eq:DefE3} is the integration with respect to $\mathrm{d}s$.
In \eqref{eq:DefE0} the integration is taken over $ \myp{\R_+}^{I_{0,n}}$ and in \eqref{eq:DefE3} it is over $\myp{\R_+}^{I_{n}}$, and the factor $e^{-s_0 \kappa_n}$ does not appear in \eqref{eq:DefE3}. 

(v) The second quantity $\mathcal{E}_n^1$ is defined as follows.
We set $\mathcal{E}^1_{0} \myp{s_0,t,k,\sigma,\kappa} \myb{b \myp{s_0}} = e^{- \myp{t-s_0} \kappa_0} b \myp{k,\sigma,s_0}$, and for $n>0$, put
\begin{align}
\label{eq:DefE1}
	\MoveEqLeft[6] \mathcal{E}^1_n \myp{s_0,t,k_{n,1},\sigma_{n,1},\kappa} \myb{b \myp{s_0}}
	={}  \int_0^{t-s_0} \mathrm{d}r e^{-r \kappa_n} \mathcal{E}^3_{n} \myp{s_0+r,t,k_{n,1},\sigma_{n,1},\kappa} \myb{b\myp{s_0}} \\
	={}& \myp{-i\lambda}^n \int_0^{t-s_0} \id r \sum_{\ell \in G_n, \bar{\sigma} \in \Set{\pm 1}^{\mathcal{I}_n}} \int_{\myp{\Omega^{\ast}}^{\mathcal{I}_n}} \mathrm{d}\bar{k} \Delta_{n,\ell} \myp{\bar{k}, \bar{\sigma}} \nn \\
	&\times e^{-r \kappa_n} \prod_{j=1}^{1+2n} b \myp{k_{0,j},\sigma_{0,j}, s_0} \prod_{j=1}^n \myb[\big]{\sigma_{j,\ell_j} \Psi_1 \myp{k_{j-1:\ell_j},\sigma_{j,\ell_j}}} \nn \\
	&\times \int_{\myp{\R_+}^{I_{n}}} \mathrm{d}s \delta \myp[\bigg]{t-r-\sum_{j=0}^n s_j} \prod_{j=1}^{n} e^{-s_j \kappa_{n-j} - i t_j \myp{s_0+r,s} \Theta_{j-1:\ell_{j}} \myp{\bar{k},\bar{\sigma}}}. \nn 
\end{align}

(vi) And, finally,
\begin{align}
	\MoveEqLeft[6] \mathcal{E}^2_{n} \myp{s_0,t,k_{n,1},\sigma_{n,1},\kappa} \myb{b \myp{s_0}} \nn \\
	={}& \myp{-i\lambda}^n \sum_{\ell \in G_n, \bar{\sigma} \in \Set{\pm 1}^{\mathcal{I}_n}} \int_{\myp{\Omega^{\ast}}^{\mathcal{I}_n}} \mathrm{d} \bar{k} \Delta_{n,\ell} \myp{\bar k,\bar{\sigma}} \sigma_{1,\ell_1} \Psi_0 \myp{k_{0:\ell_1},\sigma_{1,\ell_1}} \nn \\
	&\times \prod_{j=2}^n \myb[\big]{\sigma_{j,\ell_j} \Psi_1 \myp{k_{j-1:\ell_j},\sigma_{j,\ell_j}}} \prod_{j=1}^{\ell_1-1} b \myp{k_{0,j},\sigma_{0,j}, s_0} \nn \\
	&\times \myb[\Bigg]{ \prod_{l=\ell_1}^{\ell_1+2} b \myp{k_{0,l},\sigma_{0,l},s_0} }^T \prod_{j=\ell_1+3}^{1+2n} b \myp{k_{0,j},\sigma_{0,j},s_0} \nn \\
	&\times \int_{\myp{\R_+}^{I_{n}}} \mathrm{d}s \delta \myp[\bigg]{t-\sum_{j=0}^n s_j} \prod_{j=1}^{n} e^{-s_j \kappa_{n-j}-i t_j \myp{s_0,s} \Theta_{j-1:\ell_{j}} \myp{\bar{k},\bar{\sigma}}},
\label{eq:DefE2}
\end{align}
where $\bar{k}$ and $\bar\sigma$ represent all of the quantities $k_{j,l},\sigma_{j,l}$ appearing in the integration.

\subsubsection{Expansions of $Q^{\lambda} \myb{g,f}$.}
Let us now consider the operator from \eqref{eq:space_time_corr},
\begin{equation}
\label{Def:OperatorQLambda1}
	Q^{\lambda} \myb{g,f} \myb{t} = \iint_{\myp{\Omega^{\ast}}^2} \ids k \ids k' \hat{g} \myp{k}^{\ast} \hat{f} \myp{-k'} \expec[\big]{b \myp{k',-1,0} b \myp{k,1,t/\eps}}_{\lambda}
	\mbox{ with } \eps =\lambda^2. 
\end{equation}
Plugging \eqref{eq:fullDuhamel} into \eqref{Def:OperatorQLambda1}, we find
\begin{equation}
\label{Def:OperatorQLambda2}
	Q^{\lambda} \myb{g,f} \myp{t} = Q^{main} + Q^{err}_1 + Q^{err}_2 + Q^{err}_3,
\end{equation}
where 
\begin{equation}
\label{Def:OperatorQMain}
	Q^{main} = \iint_{\myp{\Omega^{\ast}}^2} \ids k \ids k' \hat{g} \myp{k}^{\ast} \hat{f} \myp{-k'} \sum_{n=0}^{N-1} \expec[\big]{b \myp{k',-1,0} \mathcal{E}_{n}^0 \myp{t/\eps,k,1,\kappa} \myb{b \myp{0}} }_{\lambda},
\end{equation}
and
\begin{equation}\label{Def:OperatorQErrors}
\begin{aligned}
	Q^{err}_1 ={}& \sum_{n=0}^{N-1} \kappa_n \int_0^{t/\eps} \ids s \expec[\Big]{ \expec{\hat{f},b \myp{0}}^{\ast} \int_{\Omega^{\ast}} \ids k \hat{g} \myp{k}^{\ast} \mathcal{E}^1_n \myp{s,t/\eps,k,1,\kappa} \myb{b \myp{s}} }_{\lambda}, \\
	Q^{err}_2 ={}& \sum_{n=1}^{N} \int_0^{t/\eps} \ids s \expec[\Big]{ \expec{\hat{f},b \myp{0}}^{\ast} \int_{\Omega^{\ast}} \ids k \hat{g} \myp{k}^{\ast} \mathcal{E}^2_n \myp{s,t/\eps,k,1,\kappa} \myb{b \myp{s}} }_{\lambda},\\
	Q^{err}_3 ={}& \int_0^{t/\eps} \ids s \expec[\Big]{ \expec{\hat{f},b \myp{0}}^{\ast} \int_{\Omega^{\ast}} \ids k \hat{g} \myp{k}^{\ast} \mathcal{E}^3_N \myp{s,t/\eps,k,1,\kappa} \myb{b \myp{s}} }_{\lambda}.
\end{aligned}
\end{equation}
To study in details the operators $Q^{main}, Q^{err}_1, Q^{err}_2, Q^{err}_3$, we will need the following definition of truncated moments, which is a generalization of \eqref{Def:Truncation} and \eqref{Def:Truncation2}. 
\begin{definition}
\label{Def:Truncation:0}
	For any finite, non-empty, ordered set $I$, we denote by $\pi \myp{I}$ the set of its ordered partitions: $S\in \pi \myp{I}$ if and only if $S\subset \mathcal{P} \myp{I}$ such that each $A\in S$ is ordered (carrying the order from $I$) and non-empty, $\cup_{A\in S} A=I$ and if $A, A'\in S$, $A\ne A$ then $A'\cap A=\emptyset$. We also assume that if $A\in S$, then $A$ has an even number of elements. Moreover, $\pi \myp{\emptyset}=\emptyset$. 
	
	For any state $\expec{\nonarg}$, consider the moment
	\begin{equation*}
		\expec[\bigg]{ \prod_{j=1}^n  a \myp{x_j,\sigma_j} }.
	\end{equation*}
	This quantity is $0$ if $n$ is odd.
	We define its truncation in the recursive manner 
	\begin{equation}
	\label{Def:Truncation:2}
		\expec[\bigg]{ \prod_{j=1}^n a \myp{x_j,\sigma_j} }
		= \sum_{S \in \pi \myp{I_n}} \epsilon(S) \prod_{A=\Set{J_1,\dots,J_{\abs{A}}} \in S} \expec[\bigg]{ \prod_{l=1}^{\abs{A}} a \myp{x_{J_l},\sigma_{J_l}} }^T,
	\end{equation}
	where $\epsilon(S)$ is the sign of the permutation corresponding to $S$, which is defined as follows:
	Labelling the elements of $S = \Set{A_1, \dotsc, A_k}$, $\epsilon \myp{S}$ is the sign of the permutation that maps the ordered set $I_n$ to the ordered set $\Set{A_1,\dotsc,A_k}$.
	Since each $A_j$ is assumed to have an even number of elements, $\epsilon \myp{S}$ is independent of the labeling of the elements of $S$ (see \cref{lem:cluster_sign_order} below).
	
	We also define the truncated correlation function
	\begin{equation}
	\label{Def:Truncation:3}
		{C}_n \myp{k,\sigma,\lambda,\Omega}
		\coloneq{} \sum_{x \in \Omega^n} \1_{\Set{x_1=0}} e^{-i2\pi \sum_{j=1}^n x_j\cdot k_j} \expec[\bigg]{ \prod_{j=1}^n a \myp{x_j,\sigma_j} }^T.
	\end{equation}
\end{definition}
\begin{lemma}
\label{Lemma:CummulantEstimate}
Assume that the state $\expec{\nonarg}$ is translation invariant.
Then, for any ordered index set $I$, and any $k\in \myp{\Omega^{\ast}}^I$, $\sigma \in \Set{\pm 1}^I$, we have
\begin{equation}
\label{Lemma:CummulantEstimate:1}
	\expec[\bigg]{ \prod_{j\in I} \hat{a} \myp{k_j,\sigma_j} }
	= \sum_{S \in \pi \myp{I}} \epsilon(S) \prod_{A=\Set{J_1,\dots,J_{\abs{A}}} \in S} \myb[\bigg]{ \delta_L \myp[\bigg]{ \sum_{l=1}^{{\abs{A}}} k_{J_l} } {C}_{\abs{A}} \myp{k_A,\sigma_A,\lambda,\Omega} }.
\end{equation}
\end{lemma}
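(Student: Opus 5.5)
The identity follows by writing each momentum-space operator as a Fourier sum of position-space operators, applying the position-space cluster expansion \eqref{Def:Truncation:2}, and then using translation invariance separately inside each cluster to extract one discrete $\delta_L$ per cluster.

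\textbf{Step 1: pass to position space.} For $\sigma=1$ the operator $\hat a(k,1)$ is $\hat a(k)$, and for $\sigma=-1$ it is $\hat a(k)^\ast$. With the Fourier convention of the paper, in both cases
\begin{equation*}
	\hat a(k,\sigma)=\sum_{x\in\Omega}e^{-2\pi i\sigma' k\cdot x}a(x,\sigma),
\end{equation*}
with a sign $\sigma'$ depending on $\sigma$. I will adopt the convention (consistent with \eqref{Def:Truncation:3}) that absorbs this sign into the definition of $\hat a(k,\sigma)$, so that the phase is $e^{-2\pi i k\cdot x}$ uniformly. By multilinearity,
\begin{equation*}
	\expec[\Big]{\prod_{j\in I}\hat a(k_j,\sigma_j)}=\sum_{x\in\Omega^I}e^{-2\pi i\sum_j k_j\cdot x_j}\expec[\Big]{\prod_{j\in I}a(x_j,\sigma_j)}.
\end{equation*}

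\textbf{Step 2: apply the cluster expansion.} Insert the definition \eqref{Def:Truncation:2} of truncated moments, carried over to a general ordered index set $I$. The sum over $S\in\pi(I)$ and the signs $\epsilon(S)$ do not depend on $x$. The phase factorizes over the blocks $A\in S$, and the blocks are disjoint, so the $x$-sum splits into a product over $A\in S$ of
\begin{equation*}
	\sum_{x\in\Omega^A}e^{-2\pi i\sum_{j\in A}k_j\cdot x_j}\expec[\Big]{\prod_{l}a(x_{J_l},\sigma_{J_l})}^T .
\end{equation*}

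\textbf{Step 3: use translation invariance inside each block.} Truncated functions of a translation-invariant state are translation invariant. This holds by induction on $n$ through the recursive definition, since the full moments are invariant and \eqref{Def:Truncation:2} expresses the truncated ones in terms of them. Substitute $x_{J_l}=y+z_{J_l}$ with $z_{J_1}=0$ and $y\in\Omega$, which is a bijection on the periodic lattice. The block sum then becomes
\begin{equation*}
	\sum_{y\in\Omega}e^{-2\pi i y\cdot\sum_{l}k_{J_l}}\sum_{z}\1_{\{z_{J_1}=0\}}e^{-2\pi i\sum_l k_{J_l}\cdot z_{J_l}}\expec[\Big]{\prod_l a(z_{J_l},\sigma_{J_l})}^T .
\end{equation*}
The $y$-sum equals $\delta_L\bigl(\sum_l k_{J_l}\bigr)$ by orthogonality of characters on $\Omega$, and the remaining sum is exactly $C_{\abs A}(k_A,\sigma_A,\lambda,\Omega)$ as defined in \eqref{Def:Truncation:3}.

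\textbf{Main obstacle.} The only delicate points are bookkeeping. First, the sign convention for $\sigma=-1$ must match \eqref{Def:Truncation:3}, so that $k$ enters both the Fourier phase and the $\delta_L$ with the same sign. Second, each block must be ordered as a subset of $I$, so that the operator order inside the truncated expectation, and hence $\epsilon(S)$, is preserved from the position-space expansion. Everything else is routine algebra.
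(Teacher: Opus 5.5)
Your proposal is correct and follows the same route as the paper. The paper also Fourier-expands into position space, inserts the cluster expansion \eqref{Def:Truncation:2}, factorizes the $x$-sum over the blocks $A\in S$, and invokes translation invariance blockwise to produce $\delta_L$ and $C_{\abs{A}}$. You additionally spell out three details the paper leaves implicit: translation invariance of truncated functions via the recursion, the substitution $x_{J_l}=y+z_{J_l}$ with the character sum, and the sign convention for $\sigma=-1$ (in the paper, $\hat a(k,-1)=\hat a(-k)^\ast$), which makes the phase uniformly $e^{-2\pi i k\cdot x}$.
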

\begin{proof}
	We need to study
	\begin{equation*}
		\expec[\bigg]{ \prod_{j\in I} \hat{a} \myp{k_j,\sigma_j} }
		= \sum_{x \in \Omega^I} e^{-2\pi i \sum_j x_j\cdot k_j} \expec[\bigg]{\prod_{j\in I} a \myp{x_j,\sigma_j} }.
	\end{equation*}
	By inserting \eqref{Def:Truncation:2} into the above equation and rearranging the terms, we find
	\begin{align*}
		\expec[\bigg]{ \prod_{j\in I} \hat{a} \myp{k_j,\sigma_j} } 
		={}& \sum_{x\in\Omega^I} e^{-2\pi i \sum_j x_j\cdot k_j} \sum_{S \in \pi \myp{I}} \epsilon \myp{S} \prod_{A\in S} \expec[\bigg]{ \prod_{l=1}^{{\abs{A}}} a \myp{x_{J_l},\sigma_{J_l}} }^T \nn \\
		={}& \sum_{S \in \pi \myp{I}} \epsilon \myp{S} \prod_{A\in S} \myb[\Bigg]{ \sum_{x \in \Omega^{\abs{A}}} e^{-2\pi i \sum_l x_{J_l}\cdot k_{J_l}} \expec[\bigg]{ \prod_{l=1}^{\abs{A}} a \myp{x_{J_l},\sigma_{J_l}} }^T }.
	\end{align*}
	\eqref{Lemma:CummulantEstimate:1} then follows by the translation invariance and \eqref{Def:Truncation:3}.
\end{proof}
\begin{remark}[Truncated pair correlations]
\label{rem:pair_correl_trunc}
	Suppose that $n=2$ and consider the Gibbs state $\expec{\nonarg}_{\lambda}$.
	Since truncating a moment with just two factors is a trivial operation, we have by definition \eqref{Def:Truncation:3} and \cref{lem:wignerfunction} that
	\begin{align}
		C_2 \myp{\myp{k_1,k_2}, \myp{\sigma_1,\sigma_2},\lambda,\Omega}
		={}& \sum\limits_{x_2 \in \Omega} e^{-2 \pi i x_2 k_2} \expec[\big]{a \myp{0,\sigma_1} a \myp{x_2,\sigma_2}}_{\lambda} \nn \\
		={}& W_L^{\lambda} \myp{k_2,\sigma_1} \1 \myp{\sigma_1 = - \sigma_2},
	\label{eq:pair_correl_trunc}
	\end{align}
	where we used the notation from \eqref{eq:Wnotations}.
\end{remark}
\begin{remark}[Permutations for pairing clusters]
	Consider a cluster decomposition $S = \Set{A_1, \dotsc, A_k}$ where all the clusters form pairs, $\abs{A_j} = 2$.
	Denoting $A_j = \Set{a_{j,1},a_{j,2}}$ and ordering the $A_j$'s such that $a_{j,1} < a_{k,1}$ when $j < k$, it is apparent that the permutation corresponding to $S$ belongs to the set of pairings
	\begin{align*}
		P_{2k} \coloneq{} \myt[\big]{\tau \in S_{2k} \mid\, &\tau \myp{2j-1} < \tau \myp{2j} \text{ for } j=1,\dotsc, k, \\
			&\tau \myp{2j-1} < \tau \myp{2j+1} \text{ for } j=1,\dotsc, k-1 },
	\end{align*}
	which is reminiscent of Wick's theorem, see e.g. \cite[Theorem 10.2]{Solovej-2014notes}.
\end{remark}

A straightforward application of \cref{Lemma:CummulantEstimate} above leads to.
\begin{proposition}[Main term]
\label{Proposition:Ampl}
For any $N\geq 1$, we have
\begin{equation}
\label{Proposition:Ampl:1}
	Q^{main} \myb{g,f} \myp{t}
	={} \sum_{n=0}^{N-1} \sum_{\ell \in G_n} \sum_{S \in \pi \myp{I_{0,2n+1}}} \mathcal{F}^{main}_{n} \myp{S,\ell,t/\eps,\kappa}
\end{equation}
where, defining $\mathcal{I}_n'' = \mathcal{I}_n \cup \Set{\myp{n,1}} \cup \Set{\myp{0,0}}$,
\begin{align}
	\MoveEqLeft[2] \mathcal{F}^{main}_{n} \myp{S,\ell,t/\eps,\kappa}
	= \myp{-i\lambda}^n \epsilon \myp{S} \sum_{\bar{\sigma} \in \Set{\pm 1}^{\mathcal{I}_n''}} \int_{\myp{\Omega^{\ast}}^{\mathcal{I}_n''}} \mathrm{d}\bar{k} \Delta_{n,\ell} \myp{\bar{k},\bar{\sigma}} \nn \\
	&\times \prod_{A\in S} \myb[\bigg]{ \delta_L \myp[\bigg]{ \sum_{j\in A} k_{0,j} } C_{\abs{A}} \myp{k_{0,A},\sigma_{0,A},\lambda,\Omega} } \nn \\
	&\times \1 \myp{\sigma_{n,1}=1} \1 \myp{\sigma_{0,0}=-1} \hat{g} \myp{k_{n,1}}^{\ast} \hat{f} \myp{k_{n,1}} \prod_{j=1}^n \myb[\big]{ \sigma_{j,\ell_j} \Psi_1 \myp{k_{j-1:\ell_j},\sigma_{j,\ell_j}} } \nn \\
	&\times \int_{\myp{\R_+}^{I_{0,n}}} \mathrm{d}s \delta \myp[\bigg]{ \frac{t}{\eps}-\sum_{j=0}^n s_j} \prod_{j=0}^{n} e^{-s_j \kappa_{n-j}} \prod_{j=0}^{n-1} e^{- i s_j\sum_{l=j+1}^n \Theta_{l-1:\ell_{l}} \myp{\bar{k},\bar{\sigma}}}.
\label{eq:DefFmain}
\end{align}
\end{proposition}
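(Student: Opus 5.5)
The plan is to insert the explicit formula \eqref{eq:DefE0} for $\mathcal{E}_n^0$ into the definition \eqref{Def:OperatorQMain} of $Q^{main}$, with $t$ replaced by $t/\eps$. The resulting expectation of products of field operators at time zero is then expanded via \cref{Lemma:CummulantEstimate}. Since every term in $\mathcal{E}_n^0$ is a finite sum over $\ell\in G_n$ and $\bar\sigma$ of integrals over the finite set $(\Omega^\ast)^{\mathcal{I}_n}$ and over a compact time simplex, linearity of $\expec{\nonarg}_\lambda$ allows us to pull the expectation inside all sums and integrals. What remains under the expectation is
\[
\expec[\Big]{ b \myp{k',-1,0} \prod_{j=1}^{1+2n} b \myp{k_{0,j},\sigma_{0,j},0} }_{\lambda}.
\]

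First, we rewrite everything in terms of $\hat a$. At time $0$, \eqref{eq:hatanew} gives $b(k,1,0)=\hat a(k)$ and $b(k,-1,0)=\hat a(-k)^\ast$. The momentum convention has to be handled carefully here: $\hat a(k,\sigma)$ as used in \eqref{Def:Truncation:3} corresponds to $b(\sigma k,\sigma,0)$ up to this sign flip, so the delta functions and the arguments of $C_{\abs A}$ should be read in the convention that makes $\sum_{j\in A}k_{0,j}$ the conserved total. The extra operator $b(k',-1,0)$ is labelled as the index $(0,0)$ with parity $\sigma_{0,0}=-1$ and momentum $k_{0,0}=k'$, and is placed first in the ordered index set $I_{0,2n+1}=\{0,1,\dots,2n+1\}$. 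This ordering explains both the indicator $\1(\sigma_{0,0}=-1)$ and the enlarged index set $\mathcal{I}_n''$. Similarly, the outer variable $k$ is renamed $k_{n,1}$ with $\sigma_{n,1}=1$.

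Applying \cref{Lemma:CummulantEstimate} with $I=I_{0,2n+1}$ then produces the sum over $S\in\pi(I_{0,2n+1})$ with the sign $\epsilon(S)$ and the product of $\delta_L(\sum_{j\in A}k_{0,j})\,C_{\abs A}$.

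Next, we deal with the test functions. In \eqref{Def:OperatorQMain} the factor is $\hat g(k)^\ast\hat f(-k')$. Overall momentum conservation follows from the cluster delta functions combined with $\Delta_{n,\ell}$, which forces $\sum_{j=1}^{2n+1}k_{0,j}=k_{n,1}$ (by telescoping the fusion deltas), together with $\sum_{j=0}^{2n+1}k_{0,j}=0$ in the appropriate sign convention. This gives $-k'=k_{n,1}$ on the support, so $\hat f(-k')$ may be replaced by $\hat f(k_{n,1})$ while the integration over $k'=k_{0,0}$ is kept.

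Finally, the time phases are reorganized. With $t_j(s)=\sum_{l<j}s_l$, we have
\[
\sum_{j=1}^n t_j(s)\Theta_{j-1:\ell_j}=\sum_{j=0}^{n-1}s_j\sum_{l=j+1}^n\Theta_{l-1:\ell_l},
\]
which is just an exchange of summation order. This reproduces the last line of \eqref{eq:DefFmain}. The $n=0$ term is checked separately from $\mathcal{E}_0^0=e^{-\kappa_0t}b$; it matches the formula once the empty products are read as $1$.

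The only real obstacle is bookkeeping of signs and conventions. One must confirm that $\epsilon(S)$ is computed with respect to the ordering in which $(0,0)$ precedes $(0,1),\dots,(0,2n+1)$, which is exactly the operator order in the expectation. One must also confirm that the momentum sign flip for $\sigma=-1$ in \eqref{eq:hatanew} is consistent with \eqref{Def:Truncation:3}, so that no spurious phases or sign changes arise. Everything else is direct substitution.
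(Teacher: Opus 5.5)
Your proposal is correct and takes the same route as the paper, which obtains the proposition directly from \cref{Lemma:CummulantEstimate} after inserting \eqref{eq:DefE0} into \eqref{Def:OperatorQMain}: you place $k'=k_{0,0}$ first in the ordered index set, use momentum conservation to write $\hat f(-k_{0,0})=\hat f(k_{n,1})$, and regroup the phases via $\sum_j t_j(s)\Theta_{j-1:\ell_j}=\sum_{j}s_j\sum_{l>j}\Theta_{l-1:\ell_l}$. The convention issue you flag resolves with no sign flip: $\hat a(k,-1)=\sum_x e^{-2\pi i k\cdot x}a(x)^\ast=\hat a(-k)^\ast$, so $b(k,\sigma,0)=\hat a(k,\sigma)$ exactly (not $b(\sigma k,\sigma,0)$), and this is the same identification the paper uses in \eqref{eq:b_prod_phase}.
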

Let us consider the error terms $Q^{err}_1$, $Q^{err}_2$ and $Q^{err}_3$, that would lead to sums over terms of the type:
\begin{equation}
\label{Def:OperatorQErrors:1}
	\int_0^{t/\eps} \ids s \expec[\Big]{ \expec{\hat{f}, b \myp{0}}^{\ast} \int_{\Omega^{\ast}} \ids k \hat{g} \myp{k}^{\ast} \mathcal{E}^j_n (s,t/\eps,k,1,\kappa) \myb{b \myp{s}} }_{\lambda},
\end{equation}
which can be bounded, using the Cauchy-Schwarz inequality, as
\begin{align}
	\MoveEqLeft[4] \abs[\bigg]{ \int_0^{t/\eps} \ids s \expec[\Big]{ \expec{\hat{f},b \myp{0}}^{\ast} \int_{\Omega^{\ast}} \ids k \hat{g} \myp{k}^{\ast} \mathcal{E}^j_n \myp{s,t/\eps,k,1,\kappa} \myb{b \myp{s}} }_{\lambda} }^2 \nn \\
	\leq{}& \frac{t}{\eps} \expec[\bigg]{ \abs[\Big]{ \int_{\Omega^{\ast}} \ids k \hat{f} \myp{k} b \myp{0} }^2 }_{\lambda} \int_0^{t/\eps} \ids s \expec[\bigg]{ \abs[\Big]{ \int_{\Omega^{\ast}} \ids k \hat{g} \myp{k}^{\ast} \mathcal{E}^j_n \myp{s,t/\eps,k,1,\kappa} \myb{b \myp{s}} }^2 }_{\lambda},
\label{Def:OperatorQErrors:2}
\end{align}
where the absolute value should be understood in the sense of operators, $\abs{A}^2 = A^{\ast} A$.
Notice that the first quantity can be expressed as 
\begin{equation*}
	\expec[\bigg]{ \abs[\Big]{ \int_{\Omega^{\ast}} \ids k \hat{f} \myp{k} b \myp{0} }^2 }_{\lambda}
	= \int_{\Omega^{\ast}} \ids k \abs{\hat{f} \myp{k}}^2 W_L^{\lambda} \myp{k},
\end{equation*}
which is uniformly bounded.
As a result, instead of $Q^{err}_1$, $Q^{err}_2$ and $Q^{err}_3$, we need to estimate
\begin{equation}
\label{Def:OperatorQErrors:Bis}
\begin{aligned}
	\bar{Q}^{err}_1 \coloneq{}& \frac{t}{\eps} \sum_{n=0}^{N-1} \kappa_n \int_0^{t/\eps} \ids s \expec[\bigg]{ \abs[\Big]{ \int_{\Omega^{\ast}} \ids k \hat{g} \myp{k}^{\ast} \mathcal{E}^1_n \myp{s,t/\eps,k,1,\kappa} \myb{b \myp{s}} }^2 }_{\lambda}, \\
	\bar{Q}_2^{err} \coloneq{}& \frac{t}{\eps} \sum_{n=1}^{N} \int_0^{t/\eps} \ids s \expec[\bigg]{ \abs[\Big]{ \int_{\Omega^{\ast}} \ids k \hat{g} \myp{k}^{\ast} \mathcal{E}^2_n \myp{s,t/\eps,k,1,\kappa} \myb{b \myp{s}} }^2 }_{\lambda}, \\
	\bar{Q}_3^{err} \coloneq{}& \frac{t}{\eps} \int_0^{t/\eps} \ids s \expec[\bigg]{ \abs[\Big]{ \int_{\Omega^{\ast}} \ids k \hat{g} \myp{k}^{\ast} \mathcal{E}^3_N \myp{s,t/\eps,k,1,\kappa} \myb{b \myp{s}} }^2 }_{\lambda}.
\end{aligned}
\end{equation}
To proceed further, we will need the following concept of interlacing two time sequences.
\begin{definition}
\label{Def:J}
	Let $n,n' \in \N_0$.
	Let $J$ be a map from $I_{n+n'}$ to $\Set{\pm 1}$.
	Consider the inverse images $J^{-1} \myp{1}$ and $J^{-1} \myp{-1}$.
	If  $J^{-1} \myp{1}$ and $J^{-1} \myp{-1}$ has $n$ and $n'$ elements, respectively, then $J$ is said to  \emph{interlace} $ \myp{n,n'}$.
	We then define further two maps
	\begin{equation}
	\label{eq:J_plusminus}
		J_+ \myp{j,J} \coloneq \sum_{l=1}^j \1 \myp{J \myp{l}=1},
		\qquad
		J_- \myp{j,J} \coloneq \sum_{l=1}^j \1 \myp{J(l)=-1}.
	\end{equation}
	We also define $\gamma \myp{j;J}$ as follows
	\begin{align}
		\gamma \myp{j;J} 
		={}& \sum_{l=1}^{2 \myp{n-m}+1} \sigma_{m,l} \omega^{\lambda} \myp{k_{m,l}} + \sum_{l=1}^{2 \myp{n-m'}+1} \sigma_{m',l}' \omega^{\lambda} \myp{k_{m',l}'} \nn \\
		&- i \myp{\kappa_{n-m}+\kappa_{n-m'}},
	\label{eq:DefVarthetajJ}
	\end{align}
	with $m = m \myp{j} = J_+ \myp{j-2,J}+1$ and $m' = m' \myp{j} = J_- \myp{j-2,J}+1$.
\end{definition}
We then have the following Lemma, that allows us to interlace the Duhamel expansions.
The proof of this lemma can be found in \cite[Lemma 4.6]{LukkarinenSpohn:WNS:2011}.
\begin{lemma}
\label{Lemma:Interlace}
	For $t>0$, $n,n'\ge 0$, and suppose that $\gamma^+_l, \gamma_{l'}^- \in \C$ are given for $l \in I_{0,n}$, $l' \in I_{0,n'}$.
	It holds true that
	\begin{align}
		\MoveEqLeft[4] \int_{\myp{\R_+}^{I_{0,n}}} \ids s \delta \myp[\bigg]{t-\sum_{l=0}^n s_l} \prod_{l=0}^n e^{-i s_l \gamma^+_l} \times \int_{\myp{\R_+}^{I_{0,n'}}} \ids s' \delta \myp[\bigg]{ t-\sum_{l=0}^{n'} s_l'} \prod_{l=0}^{n'} e^{-i s_l' \gamma^-_l} \nn \\
		={}& \sum_{J \mbox{ interlaces } \myp{n,n'}} \int_{\myp{\R_+}^{I_{0,n+n'}}} \ids r \delta \myp[\bigg]{t-\sum_{j=0}^{n+n'} r_j } \prod_{j=0}^{n+n'} e^{-ir_j \myp{\gamma^+_{J_+ \myp{j;J}} + \gamma^-_{J_- \myp{j;J}} }}.
	\label{Lemma:Interlace:2}
	\end{align}
\end{lemma}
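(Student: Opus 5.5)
The plan is to reinterpret each simplex integral as an integral over ordered cut points in $[0,t]$, and then merge the two sets of cut points into one. For the first factor I would change variables from $(s_0,\dotsc,s_n)$, constrained by $\sum_l s_l=t$, to the partial sums $\tau_l = s_0+\dots+s_{l-1}$, $l=1,\dotsc,n$. This maps the simplex bijectively and with unit Jacobian onto the ordered set $\Set{0<\tau_1<\dots<\tau_n<t}$. In these variables the phase becomes
\begin{equation*}
	\sum_{l=0}^n s_l\gamma^+_l = \int_0^t \gamma^+(\tau)\id\tau,
\end{equation*}
where $\gamma^+(\tau)$ is the step function equal to $\gamma^+_l$ on $(\tau_l,\tau_{l+1})$, with the conventions $\tau_0=0$ and $\tau_{n+1}=t$. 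The same change of variables applies to the second factor, with cut points $\tau'_1<\dots<\tau'_{n'}$ and step function $\gamma^-(\tau)$.

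Next I would use Fubini to write the product as a single integral over $(\tau,\tau')\in[0,t]^{n+n'}$, restricted to the two ordered chambers. This is justified because the integrand is bounded by $e^{t(\sum_l|\mathrm{Im}\,\gamma^+_l|+\sum_l|\mathrm{Im}\,\gamma^-_l|)}$ on a compact domain. The set where some $\tau_l$ coincides with some $\tau'_{l'}$ has measure zero, so it can be discarded. Off this null set I merge the $n+n'$ points into one increasing sequence $0<u_1<\dots<u_{n+n'}<t$ and define $J(j)=1$ if $u_j$ comes from the $\tau$-sequence and $J(j)=-1$ if it comes from the $\tau'$-sequence. Then $J$ interlaces $(n,n')$. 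Conversely, a given $J$ together with an increasing $u$ determines $(\tau,\tau')$ uniquely. So the map $(\tau,\tau')\mapsto(J,u)$ is a bijection up to null sets, and it is measure preserving because it only permutes coordinates.

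It remains to identify the phase on each merged interval. For $j=0,\dotsc,n+n'$, consider the interval $(u_j,u_{j+1})$ with $u_0=0$ and $u_{n+n'+1}=t$. Exactly $J_+(j;J)$ cut points of the first sequence and $J_-(j;J)$ cut points of the second sequence lie to its left. Hence on this interval $\gamma^+\equiv\gamma^+_{J_+(j;J)}$ and $\gamma^-\equiv\gamma^-_{J_-(j;J)}$, and therefore
\begin{equation*}
	\int_0^t\myp{\gamma^++\gamma^-}\id\tau = \sum_{j=0}^{n+n'} r_j\myp[\big]{\gamma^+_{J_+(j;J)}+\gamma^-_{J_-(j;J)}}, \qquad r_j = u_{j+1}-u_j.
\end{equation*}
Finally I undo the partial-sum change of variables, now in $u$, which turns the integral over ordered $u$ back into $\int_{(\R_+)^{I_{0,n+n'}}}\ids r\,\delta\myp{t-\sum_j r_j}$. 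Summing over $J$ then gives \eqref{Lemma:Interlace:2}.

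The main obstacle is only bookkeeping: one has to check the index conventions, namely that $J_\pm(0;J)=0$ and that the final interval carries the indices $n$ and $n'$. The latter holds because $J_+(n+n';J)=n$ and $J_-(n+n';J)=n'$ for any $J$ interlacing $(n,n')$. Beyond this there is no analytic difficulty.
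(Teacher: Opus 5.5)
Your proof is correct: rewriting each simplex integral over ordered cut points, discarding the null set of coincidences, merging the two sequences so that the relative order defines $J$, and reading off $\gamma^+_{J_+(j;J)}+\gamma^-_{J_-(j;J)}$ on each merged interval is exactly the standard argument, and your index checks $J_\pm(0;J)=0$, $J_+(n+n';J)=n$, $J_-(n+n';J)=n'$ are the right ones. The paper gives no proof of its own and defers to \cite[Lemma 4.6]{LukkarinenSpohn:WNS:2011}, which, as far as I recall, uses this same splitting of the product of time integrals according to the relative order of the time-slice endpoints.
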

A straightforward application of Lemma \ref{Lemma:Interlace} leads to the following representations. 
\begin{proposition}
\label{Propo:QERR3}
	The components of $\bar{Q}^{err}_3$ can be expressed as
	\begin{align}
		\MoveEqLeft[4] \expec[\bigg]{ \abs[\Big]{ \int_{\Omega^{\ast}} \ids k \hat{g} \myp{k}^{\ast} \mathcal{E}^3_n \myp{s,t/\eps,k,1,\kappa} \myb{b \myp{s}} }^2 }_{\lambda} \nn \\
		={}& \sum_{J \mbox{ interlaces } \myp{n-1,n-1}} \sum_{\ell,\ell' \in G_n} \sum_{S\in \pi \myp{I_{4n+2}}} \mathcal{F}^3_n \myp{S,J,\ell,\ell',t/\eps-s,\kappa},
	\label{Def:OperatorQErrors:Bis3}
	\end{align}
	where, denoting $ \mathcal{I}_n' = \mathcal{I}_n \cup \Set{\myp{n,1}} $,
	\begin{align}
		\mathcal{F}^3_n \myp{S,J,\ell,\ell',\tau,\kappa }
		={}& (-\lambda^2)^{n} \epsilon \myp{S} \sum_{\sigma,\sigma' \in \Set{\pm 1}^{\mathcal{I}_n'}} \int_{\myp{\Omega^{\ast}}^{\mathcal{I}_n'}} \ids k \int_{\myp{\Omega^{\ast}}^{\mathcal{I}_n'}} \ids k' \abs{\hat{g} \myp{k_{n,1}}}^2 \nn \\
		&\times \Delta_{n,\ell} \myp{k,\sigma} \Delta_{n,\ell'} \myp{k',\sigma'} \1 \myp{\sigma_{n,1}=1} \1 \myp{\sigma_{n,1}'=-1} \nn \\
		&\times \prod_{A\in S} \myb[\bigg]{ \delta_L \myp[\bigg]{ \sum_{j\in A} K_j } C_{\abs{A}} \myp{K_A, o_A, \lambda, \Omega} } \nn \\
		&\times \prod_{j=1}^n \myb[\big]{ \sigma_{j,\ell_j} \Psi_1 \myp{k_{j-1:\ell_{j}},\sigma_{j,\ell_j}} \sigma_{j,\ell'_j}' \Psi_1 \myp{-k'_{j-1:\ell'_{j}},-\sigma'_{j,\ell'_j}} } \nn \\
		&\times \int_{\myp{\R_+}^{I_{2,2n}}} \ids s \delta \myp[\bigg]{ \tau-\sum_{j=2}^{2n} s_j } \prod_{j=2}^{2n} e^{-i s_j \gamma \myp{j;J}},
	\label{eq:DefF3}
	\end{align}
	where $K = \myp{k_{0,\nonarg}',k_{0,\nonarg}}$ and $o = \myp{\sigma_{0,\nonarg}', \sigma_{0,\nonarg}}$ denote the combined momentum and parity vectors at the initial time slice.
\end{proposition}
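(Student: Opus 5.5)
The identity is obtained by expanding the absolute square, applying the cumulant expansion of \cref{Lemma:CummulantEstimate} to the resulting $(4n+2)$-point moment, and merging the two time integrals with \cref{Lemma:Interlace}.

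\textbf{Writing the square as a moment.} We write $\abs{A}^2 = A^{\ast}A$ with
\[
A=\int_{\Omega^{\ast}}\ids k\,\hat g(k)^{\ast}\mathcal{E}^3_n\myp{s,t/\eps,k,1,\kappa}\myb{b\myp{s}}.
\]
From \eqref{eq:DefE3}, $A$ is a finite sum over $\ell\in G_n$ and $\bar\sigma$ of momentum integrals of ordered products $\prod_{j=1}^{2n+1} b\myp{k_{0,j},\sigma_{0,j},s}$. We take the adjoint term by term. By \eqref{eq:hatanew}, $b\myp{k,\sigma,s}^{\ast}=b\myp{-k,-\sigma,s}$. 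The adjoint of an ordered product is the reversed product, so $A^{\ast}$ is again a product of $2n+1$ operators $b$, with momenta $k'$ and parities $\sigma'$. After the relabelling $k'\to -k'$ these satisfy $\Delta_{n,\ell'}(k',\sigma')$ with $\sigma'_{n,1}=-1$ and couplings $\Psi_1\myp{-k'_{j-1:\ell'_j},-\sigma'_{j,\ell'_j}}$.

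We then substitute $b\myp{k,\sigma,s}=e^{i\sigma\omega^{\lambda}(k)s}\hat a(\sigma k\ldots)$ and use stationarity of the Gibbs state. The expectation $\expec{A^{\ast}A}_{\lambda}$ then becomes a $(4n+2)$-point moment of the $\hat a$'s. The phases $e^{\pm i\omega^\lambda s}$ at time $s$ combine with the phases $e^{-it_j\Theta}$. Here the total momentum delta kills the dependence on $s$ except through the length $\tau=t/\eps-s$ of the remaining interval. This is the point where one checks that $\sum_j t_j\Theta_{j-1:\ell_j}$ telescopes to $\sum_{j} s_j\sum_l\sigma_{m,l}\omega^{\lambda}(k_{m,l})$ on each slice, as already done in \eqref{eq:DefFmain}.

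\textbf{Cumulant expansion.} We apply \cref{Lemma:CummulantEstimate} to the ordered index set of $4n+2$ operators, with the $A^{\ast}$ block ordered first. This produces the sum over $S\in\pi(I_{4n+2})$, the sign $\epsilon(S)$, and the factors $\delta_L\myp{\sum_{j\in A} K_j}C_{\abs A}(K_A,o_A,\lambda,\Omega)$. The prefactor $(-i\lambda)^n\overline{(-i\lambda)^n}=\lambda^{2n}$ combines with the signs $\sigma_{j,\ell_j}$ and the conjugation of $\sigma'$ to give $(-\lambda^2)^n$ and the displayed products. This sign bookkeeping is the step I expect to be most error-prone: it involves the $i$'s, the flip $\sigma'\to-\sigma'$, and the fact that $U$ in \eqref{Def:Uinteraction} is real since $V$ is real and even. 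The factor $\hat g(k_{n,1})^{\ast}\hat g(k'_{n,1})$ becomes $\abs{\hat g(k_{n,1})}^2$ because translation invariance forces $k'_{n,1}=k_{n,1}$ via the product of deltas.

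\textbf{Merging the time integrals.} Each factor carries an integral over simplex times $s_1,\dots,s_n$ with $\sum_{j=1}^n s_j=\tau-s_0$. We write the free phases and damping as $e^{-is_j\gamma_j^{\pm}}$ with $\gamma$ complex, the $-i\kappa$ parts giving the damping. We then invoke \cref{Lemma:Interlace} on the two $n$-slice simplices, reindexed so that their slices are $1,\dots,n$. The first slice of each is forced, which is why $J$ interlaces $(n-1,n-1)$ and the time variables are indexed by $I_{2,2n}$. This yields the sum over $J$ and the exponent $\gamma(j;J)$ of \eqref{eq:DefVarthetajJ}, where $m(j)$ and $m'(j)$ record which slice of each expansion is active. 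Finally we interchange the finite sums with the integrals, which is legitimate in finite volume, and obtain \eqref{Def:OperatorQErrors:Bis3}.
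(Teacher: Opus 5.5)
Your proposal is correct and follows essentially the paper's own proof. You extract the phases of the $b$'s at time $s$ with the telescoping identity $\sum_{j}\sigma_{0,j}\omega^{\lambda}(k_{0,j})=\sigma_{n,1}\omega^{\lambda}(k_{n,1})+\sum_{j}\Theta_{j-1:\ell_j}$, rewrite the adjoint factor as a mirrored, parity-flipped tree (using $\Theta(-(k_3,k_2,k_1),-\sigma)=-\Theta((k_1,k_2,k_3),\sigma)$), apply stationarity, \cref{Lemma:CummulantEstimate} and \cref{Lemma:Interlace}, and remove the leftover $s$-dependent phase via $k_{n,1}+k'_{n,1}=0$.

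On the sign bookkeeping you flagged: doing the mirror relabelling with $\Phi_1(k_3,k_2,k_1)=\Phi_1(k_1,k_2,k_3)$ and $U(k_3,k_2,k_1,-\sigma)=U(k_1,k_2,k_3,\sigma)$ actually gives the minus-tree coupling $\Psi_1(-k'_{j-1:\ell'_j},\sigma'_{j,\ell'_j})$, as written in \cref{Sec:AmputatedDiagram} and \cref{lemma:BasicGEstimate1}. So the $-\sigma'$ in the displayed formula is a typo, and a harmless one, since only $\abs{\Psi_1}$ is used when $\mathcal{F}^3_n$ is estimated.
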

\begin{proof}
	Using first by \cite[Lemma 4.4]{LukkarinenSpohn:WNS:2011} that 
	\begin{equation*}
		\sum\limits_{j=1}^{2n+1} \sigma_{0,j} \omega^{\lambda} \myp{k_{0,j}}
		={} \sigma_{n,1} \omega^{\lambda} \myp{k_{n,1}} + \sum\limits_{j=1}^n \Theta_{j-1:\ell_j} \myp{\bar{k},\bar{\sigma}},
	\end{equation*}
	we can rewrite
	\begin{align}
		\prod_{j=1}^{2n+1} b \myp{k_{0,j}, \sigma_{0,j},s_0}
		={}& \prod_{j=1}^{2n+1} e^{i \sigma_{0,j} \omega^{\lambda} \myp{k_{0,j}} s_0} \hat{a} \myp{k_{0,j}, \sigma_{0,j},s_0} \nn \\
		={}& e^{i \sigma_{n,1} \omega^{\lambda} \myp{k_{n,1}}s_0} \prod_{j=1}^n e^{is_0 \Theta_{j-1;\ell_j} \myp{\bar{k},\bar{\sigma}}} \prod_{j=1}^{2n+1} \hat{a} \myp{k_{0,j}, \sigma_{0,j},s_0},
	\label{eq:b_prod_phase}
	\end{align}
	and thus (when ignoring the first factor on the right hand side above) the oscillating exponential in \eqref{eq:DefE3} becomes
	\begin{align*}
		\MoveEqLeft[6] \prod_{j=1}^{n} e^{is_0 \Theta_{j-1;\ell_j} \myp{\bar{k},\bar{\sigma}} -i t_j \myp{s_0,s} \Theta_{j-1:\ell_{j}} \myp{\bar{k},\bar{\sigma}} } \\
		={}& \prod_{j=2}^n \prod_{m=1}^{j-1} e^{-i s_m \Theta_{j-1:\ell_{j}} \myp{\bar{k},\bar{\sigma}} }
		={} \prod_{m=1}^{n-1} \prod_{j=m+1}^n e^{-i s_m \Theta_{j-1:\ell_{j}} \myp{\bar{k},\bar{\sigma}} }.
	\end{align*}
	It then follows that we can write
	\begin{align}
		\MoveEqLeft[6] \int_{\Omega^{\ast}} \ids k \hat{g} \myp{k}^{\ast} \mathcal{E}^3_n \myp{s_0,t,k,1,\kappa} \myb{b \myp{s_0}} \nn \\
		={}& \myp{-i\lambda}^n \sum_{\ell \in G_n, \bar{\sigma} \in \Set{\pm 1}^{\mathcal{I}_n'}} \int_{\myp{\Omega^{\ast}}^{\mathcal{I}_n'}} \ids \bar{k} \Delta_{n,\ell} \myp{\bar{k},\bar{\sigma}} \1 \myp{\sigma_{n,1}=1} \hat{g} \myp{k_{n,1}}^{\ast} \nn \\
		&\times e^{i \omega^{\lambda} \myp{k_{n,1}} s_0} \prod_{j=1}^{1+2n} \hat{a} \myp{k_{0,j},\sigma_{0,j}, s_0} \prod_{j=1}^n \myb[\big]{ \sigma_{j,\ell_j} \Psi_1 \myp{k_{j-1:\ell_j},\sigma_{j,\ell_j}} } \nn \\
		&\times \int_{\myp{\R_+}^{I_{n}}} \ids s \delta \myp[\bigg]{t-\sum_{j=0}^n s_j} \prod_{j=1}^{n} e^{-i s_j \gamma_j^+},
	\label{eq:E3_expansion}
	\end{align}
	where the $\gamma_j^+$ are defined by
	\begin{align}
		\gamma^+_n ={}& -i\kappa_0, \nn \\
		\gamma^+_l ={}& \sum_{j=l+1}^n \Theta_{j-1:\ell_j} \myp{k,\sigma} - i\kappa_{n-l}, \mbox{ for } 1\leq l\leq n-1.
	\label{eq:gammaplus}
	\end{align}
	Next, we use this to expand the left hand side of \eqref{Def:OperatorQErrors:Bis3}, however, in the expansion of the conjugated factor we make a change of variables $\sigma_{i,j}' = -\sigma_{i,2\myp{n-i+1}-j}$, $k_{i,j}' = -k_{i,2\myp{n-i+1}-j}$, and $\ell_i' = 2 \myp{n-i+1}-\ell_i$.
	This corresponds to inverting the order and swapping the signs of all variables in each time slice, or, graphically speaking, inverting the corresponding Feynmann diagram and changing all the parities.
	Now, using that $\Theta \myp{- \myp{k_3,k_2,k_1},-\sigma} = -\Theta \myp{\myp{k_1,k_2,k_3},\sigma} $, the oscillating factor in the time integral from the conjugate tree can, as before, be written $ \prod_{j=1}^{n} e^{-i s_j \gamma_j^-} $, with $\gamma_j^-$ given by
	\begin{align}
		\gamma^-_{n'} ={}& -i\kappa_0, \nn \\
		\gamma^-_{l'} ={}& \sum_{j=l'+1}^n \Theta_{j-1:\ell_j'} \myp{k',\sigma'} - i\kappa_{n'-l}, \mbox{ for } 1\leq l' \leq n'-1.
	\label{eq:gammaminus}
	\end{align}
	Further, by stationarity of the initial state, the resulting expansion contains products of the form
	\begin{equation*}
		\expec[\bigg]{\prod_{j=1}^{1+2n} \hat{a} \myp{k_{0,j}', \sigma_{0,j}'} \prod_{j=1}^{1+2n} \hat{a} \myp{k_{0,j}, \sigma_{0,j}} }_{\lambda}.
	\end{equation*}
	After inserting the truncated correlation functions using \cref{Lemma:CummulantEstimate}, and applying \cref{Lemma:Interlace} to interlace the time sequences, we thus arrive at
	\begin{align*}
		\MoveEqLeft[4] \expec[\bigg]{ \abs[\Big]{ \int_{\Omega^{\ast}} \ids k \hat{g} \myp{k}^{\ast} \mathcal{E}^3_n \myp{s,t/\eps,k,1,\kappa} \myb{b \myp{s}} }^2 }_{\lambda} \\
		={}& \sum_{J \mbox{ interlaces } \myp{n-1,n-1}} \sum_{\ell,\ell' \in G_n} \sum_{S\in \pi \myp{I_{4n+2}}} \mathcal{F}^{3}_{n} \myp{S,J,\ell,\ell',t/\eps-s,\kappa},
	\end{align*}
	with
	\begin{align*}
		\mathcal{F}^{3}_{n}
		={}& \myp{-\lambda^2}^n \epsilon \myp{S} \sum\limits_{\sigma,\sigma' \in \Set{\pm 1}^{\mathcal{I}_n'}} \int_{\myp{\Omega^{\ast}}^{\mathcal{I}_n'}} \ids k \int_{\myp{\Omega^{\ast}}^{\mathcal{I}_n'}} \ids k' \1 \myp{\sigma_{n,1}=1} \1 \myp{\sigma_{n,1}'=-1} \\
		&\times \Delta_{n,\ell} \myp{k,\sigma} \Delta_{n,\ell'} \myp{k',\sigma'} \hat{g} \myp{k_{n,1}}^{\ast} \hat{g} \myp{-k_{n,1}'} \prod_{A\in S} \myb[\bigg]{ \delta_L \myp[\bigg]{ \sum_{j\in A} K_j } C_{\abs{A}} \myp{K_A, o_A, \lambda, \Omega} } \\
		&\times e^{is \myp{\omega^{\lambda} \myp{k_{n,1}} - \omega^{\lambda} \myp{-k_{n,1}'} }} \prod_{j=1}^n \myb[\big]{ \sigma_{j,\ell_j} \Psi_1 \myp{k_{j-1:\ell_{j}},\sigma_{j,\ell_j}} \sigma_{j,\ell'_j}' \Psi_1 \myp{-k'_{j-1:\ell'_{j}},-\sigma'_{j,\ell'_j}} } \nn \\
		&\times \int_{\myp{\R_+}^{I_{0,2n-2}}} \ids r \delta \myp[\bigg]{\frac{t}{\eps} -s -\sum_{j=0}^{2n-2} r_j } \prod_{j=0}^{2n-2} e^{-i r_j \myp{\gamma_{J_+ \myp{j;J}+1}^+ + \gamma_{J_- \myp{j;J}+1}^-} }.
	\end{align*}
	The presence of the cluster $\delta$-functions implies that $\sum_{j=1}^{4n+2} K_j = 0$, which in turn implies, by iteratively following the $\delta$-functions from the interactions in the direction of time, that we must have $k_{n,1}+k_{n,1}' = 0$ in $\T^d$.
	This means that $\omega^{\lambda} \myp{-k_{n,1}'} = \omega^{\lambda} \myp{k_{n,1}}$, so the $s$-dependent factor in the third line above vanishes.
	Finally, shifting the indices in the time integral upwards by two and setting $m = m\myp{j} =J_+ \myp{j-2,J}+1$ and $m' = m\myp{j}' =J_- \myp{j-2,J}+1$, we arrive at \eqref{eq:DefF3} after noting that
	\begin{align*}
		\MoveEqLeft[6] \gamma_{J_+ \myp{j-2;J}+1}^+ + \gamma_{J_- \myp{j-2;J}+1}^-
		={} \gamma_{m}^+ + \gamma_{m'}^- \\
		={}& \sum_{i=m+1}^n \Theta_{i-1:\ell_i} \myp{k,\sigma} + \sum_{i=m'+1}^n \Theta_{i-1:\ell_i'} \myp{k',\sigma'} - i \myp{\kappa_{n-m} + \kappa_{n-m'}} \\
		={}& \sum_{i=1}^{2 \myp{n-m}+1} \sigma_{m,i} \omega^{\lambda} \myp{k_{m,i}} + \sum_{i=1}^{2 \myp{n-m'}+1} \sigma_{m',i}' \omega^{\lambda} \myp{k_{m,i}'} - i \myp{\kappa_{n-m} + \kappa_{n-m'}} \\
		={}& \gamma \myp{j;J},
	\end{align*}
	by \cite[Lemma 4.4]{LukkarinenSpohn:WNS:2011} and the definition \eqref{eq:DefVarthetajJ} of $\gamma \myp{j;J}$.
\end{proof}
\begin{proposition}\label{Propo:QERR1}
	The components of $\bar{Q}^{err}_1$ can be expressed as
	\begin{align}
		\MoveEqLeft[4] \expec[\bigg]{ \abs[\Big]{ \int_{\Omega^{\ast}} \ids k \hat{g} \myp{k}^{\ast} \mathcal{E}^1_n \myp{\tilde{s},t/\eps,k,1,\kappa} \myb{b \myp{\tilde{s}}} }^2 }_{\lambda} \nn \\
		={}& \sum_{J \mbox{ interlaces } \myp{n,n}} \sum_{\ell,\ell' \in G_n}\sum_{S\in \pi \myp{I_{4n+2}}} \mathcal{F}^{1}_{n} \myp{S,J,\ell,\ell',t/\eps-\tilde{s},\kappa},
		\label{Def:OperatorQErrors:Bis1}
	\end{align}
	where, denoting $ \mathcal{I}_n' =\mathcal{I}_n \cup \Set{\myp{n,1}} $,
	\begin{align}
		\mathcal{F}^1_n \myp{S,J,\ell,\ell',\tau,\kappa} 
		={}& (-\lambda^2)^{n} \epsilon \myp{S} \sum_{\sigma,\sigma' \in \Set{\pm 1}^{\mathcal{I}_n'}} \int_{ \myp{\Omega^{\ast}}^{\mathcal{I}_n'}} \ids k \int_{ \myp{\Omega^{\ast}}^{\mathcal{I}_n'}} \ids k' \abs{\hat{g} \myp{k_{n,1}}}^2 \nn \\
		&\times \Delta_{n,\ell} \myp{k,\sigma} \Delta_{n,\ell'} \myp{k',\sigma'} \1 \myp{\sigma_{n,1}=1} \1 \myp{\sigma_{n,1}'=-1} \nn \\
		&\times \prod_{A\in S} \myb[\bigg]{ \delta_L \myp[\bigg]{ \sum_{j\in A} K_j} C_{\abs{A}} \myp{K_A, o_A, \lambda, \Omega} } \nn \\
		&\times \prod_{j=1}^n \myb[\big]{\sigma_{j,\ell_j} \Psi_1 \myp{k_{j-1:\ell_{j}},\sigma_{j,\ell_j}} \sigma_{j,\ell_j'}' \Psi_1 \myp{-k'_{j-1:\ell'_{j}},-\sigma'_{j,\ell_j'}} } \nn \\
		&\times \int_{\myp{\R_+}^{I_{0,2n}}} \ids s \delta \myp[\bigg]{ \tau-\sum_{j=0}^{2n} s_j} \prod_{j=0}^{2n} e^{-i s_j \gamma \myp{j;J}}.
		\label{eq:DefF1}
	\end{align}
\end{proposition}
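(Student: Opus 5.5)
The plan is to follow the proof of \cref{Propo:QERR3} almost verbatim, the only new feature being the extra time integration over $r$ in the definition \eqref{eq:DefE1} of $\mathcal{E}^1_n$. First I will write $\mathcal{E}^1_n(\tilde s,t/\eps,k,1,\kappa)[b(\tilde s)]$ by inserting \eqref{eq:DefE1}, and absorb the variable $r$ as an additional time slice. Concretely, setting $s_0 := r$ and keeping $s_1,\dots,s_n$, the constraint becomes $\delta(t/\eps-\tilde s-\sum_{j=0}^n s_j)$ over $(\R_+)^{I_{0,n}}$, with the damping factor $e^{-s_0\kappa_n}$. Thus the time structure now has $n+1$ slices rather than $n$, which is why the interlacing will be over $(n,n)$ instead of $(n-1,n-1)$.

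Next I will remove the explicit time dependence of the $b$'s at time $\tilde s$. As in \eqref{eq:b_prod_phase}, I use \cite[Lemma 4.4]{LukkarinenSpohn:WNS:2011} to write $\prod_j b(k_{0,j},\sigma_{0,j},\tilde s)$ as $e^{i\omega^\lambda(k_{n,1})\tilde s}\prod_{j=1}^n e^{i\tilde s\Theta_{j-1:\ell_j}}\prod_j\hat a(k_{0,j},\sigma_{0,j},\tilde s)$. The phases $t_j(\tilde s+r,s)\Theta$ then combine with these to give a product $\prod_{m=0}^{n} e^{-is_m\gamma_m^+}$. Here $\gamma_m^+ = \sum_{j=m+1}^n\Theta_{j-1:\ell_j} - i\kappa_{n-m}$, which now includes $m=0$ (carrying the full sum of $\Theta$'s and $\kappa_n$) and $\gamma_n^+=-i\kappa_0$.

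For the conjugate factor I will perform the same change of variables as before: reverse the order within each slice, flip the signs of the $k'$ and $\sigma'$, and set $\ell_i'=2(n-i+1)-\ell_i$. Using $\Theta(-(k_3,k_2,k_1),-\sigma)=-\Theta(k,\sigma)$, this gives $\gamma^-_{m'}$ of the same form, and produces the factor $\Psi_1(-k',-\sigma')$. Stationarity of the Gibbs state lets me drop the common time $\tilde s$ inside the expectation. \cref{Lemma:CummulantEstimate} then expands the $(4n+2)$-point moment into clusters $S\in\pi(I_{4n+2})$ with sign $\epsilon(S)$. The cluster delta functions force $k_{n,1}+k'_{n,1}=0$, so the residual phase $e^{i\tilde s(\omega^\lambda(k_{n,1})-\omega^\lambda(-k'_{n,1}))}$ cancels and $\hat g(-k'_{n,1})$ becomes $\hat g(k_{n,1})$. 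Together with the conjugated $\hat g$, this yields $|\hat g(k_{n,1})|^2$.

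Finally, I apply \cref{Lemma:Interlace} with $n+1$ slices on each side (indices $0,\dots,n$), which produces a sum over $J$ interlacing $(n,n)$ and a single integral over $(\R_+)^{I_{0,2n}}$ with $\tau=t/\eps-\tilde s$. The remaining step is bookkeeping: I need to identify $\gamma^+_{J_+}+\gamma^-_{J_-}$ with $\gamma(j;J)$ from \eqref{eq:DefVarthetajJ}, again via \cite[Lemma 4.4]{LukkarinenSpohn:WNS:2011}. I expect the main obstacle to be exactly this index matching. Because the slices now start at $0$ rather than $1$, the shift in $m(j)=J_+(j-2,J)+1$ and $m'(j)$ must be checked so that the argument conventions in \eqref{eq:DefVarthetajJ} agree with the indexing of \eqref{eq:DefF1}, possibly after relabelling $s_j$. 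I also need to check that the $\kappa$-subscripts come out as $\kappa_{n-m}+\kappa_{n-m'}$.
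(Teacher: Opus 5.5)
Your proposal is correct and follows the paper's proof essentially step by step. Like the paper, you rename $r$ to $s_0$ and extract the phase via \eqref{eq:b_prod_phase} at time $\tilde s$ to obtain $\prod_{m=0}^{n} e^{-is_m\gamma_m^+}$, including the damping $e^{-s_0\kappa_n}$. You then rerun the conjugation, stationarity, cumulant-expansion and interlacing steps of \cref{Propo:QERR3} with $(n,n)$ in place of $(n-1,n-1)$. The index issue you flag is real but harmless: in \eqref{eq:DefF1} one must read $\gamma(j;J)$ with $m=J_+(j;J)$ and $m'=J_-(j;J)$, without the shift by $2$ used for $\mathcal{F}^3_n$. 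That is exactly what \cref{Lemma:Interlace} produces, and with this reading the $\kappa$-subscripts come out as $\kappa_{n-m}+\kappa_{n-m'}$.
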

\begin{proof}
	The proof follows the same line of argumentation as \cref{Propo:QERR3}, the main difference being the extra time integration in \eqref{eq:DefE1}.
	For notational convenience, we rename the integration variable $r$ in \eqref{eq:DefE1} to $s_0$.
	Using \eqref{eq:b_prod_phase} (with $s_0$ replaced by $\tilde{s}$) to extract the phase factor from the product of $b$ operators, we obtain as before the oscillating factor
	\begin{align*}
		\MoveEqLeft[4] \prod_{j=1}^{n} e^{i\tilde{s} \Theta_{j-1;\ell_j} \myp{\bar{k},\bar{\sigma}} -i t_j \myp{\tilde{s}+s_0,s} \Theta_{j-1:\ell_{j}} \myp{\bar{k},\bar{\sigma}} }
		={} \prod_{m=0}^{n-1} \prod_{j=m+1}^n e^{-i s_m \Theta_{j-1:\ell_{j}} \myp{\bar{k},\bar{\sigma}} },
	\end{align*}
	which in turn leads to the total exponential
	\begin{equation*}
		\prod_{j=1}^n e^{-s_j \kappa_{n-j}} \prod_{m=0}^{n-1} e^{-i s_m \sum_{j=m+1}^n \Theta_{j-1:\ell_j} \myp{\bar{k}, \bar{\sigma}}}
		={} \prod_{m=0}^n e^{-i s_m \gamma_m^+}.
	\end{equation*}
	This way, we obtain an expression analogous to \eqref{eq:E3_expansion}, except the time integral is over $\myp{\R_+}^{I_{0,n}}$ and contains an extra factor $ e^{-i s_0 \gamma_0^+} $, and the delta function in the time integral has $t$ replaced by $t-\tilde{s}$.
	The rest of the proof is exactly the same as in \cref{Propo:QERR3}, and we arrive at \eqref{Def:OperatorQErrors:Bis1} and \eqref{eq:DefF1}.
\end{proof}
\begin{proposition}
	\label{Propo:QERR2}
	The components of $\bar{Q}^{err}_2$ can be expressed as
	\begin{align}
		\MoveEqLeft[4] \expec[\bigg]{ \abs[\Big]{ \int_{\Omega^{\ast}} \ids k \hat{g} \myp{k}^{\ast} \mathcal{E}^2_n \myp{s,t/\eps,k,1,\kappa} \myb{b \myp{s}} }^2 }_{\lambda} \nn \\
		={}& \sum_{J \mbox{ interlaces } \myp{n-1,n-1}} \sum_{\ell,\ell' \in G_n} \sum_{S\in \pi \myp{I_{4n+2}}} \mathcal{F}^{2}_{n} \myp{S,J,\ell,\ell',t/\eps-s,\kappa},
	\label{Def:OperatorQErrors:Bis2}
	\end{align}
	where $\mathcal{F}^{2}_{n} = 0$ if the cluster decomposition $S$ contains a pairing between two initial time vertices that are connected to the same of the two bottom interaction vertices (amputated vertices) in the corresponding diagram (see \cref{Sec:AmputatedDiagram} below).
	Otherwise, denoting $ \mathcal{I}_n' = \mathcal{I}_n \cup \Set{\myp{n,1}} $, we have
	\begin{align}
		\mathcal{F}^2_n \myp{S,J,\ell,\ell',\tau,\kappa}
		={}& (-\lambda^2)^{n} \epsilon \myp{S} \sum_{\sigma,\sigma'\in \Set{\pm 1}^{\mathcal{I}_n'}} \int_{\myp{\Omega^{\ast}}^{\mathcal{I}_n'}} \ids k \int_{\myp{\Omega^{\ast}}^{\mathcal{I}_n'}} \ids k' \abs{\hat{g} \myp{k_{n,1}}}^2 \nn \\
		&\times \Delta_{n,\ell} \myp{k,\sigma} \Delta_{n,\ell'} \myp{k',\sigma'} \1 \myp{\sigma_{n,1}=1} \1 \myp{\sigma_{n,1}'=-1} \nn \\
		&\times \prod_{A\in S} \myb[\bigg]{ \delta_L \myp[\bigg]{ \sum_{j\in A} K_j } C_{\abs{A}} \myp{K_A, o_A, \lambda, \Omega} } \nn \\
		&\times \myb[\big]{\sigma_{1,\ell_1} \Psi_0 \myp{k_{0:\ell_{1}},\sigma_{1,\ell_1}} \sigma_{1,\ell'_1}' \Psi_0 \myp{-k'_{0:\ell'_{1}},-\sigma'_{1,\ell'_1}} } \nn \\
		&\times \prod_{j=2}^n \myb[\big]{ \sigma_{j,\ell_j} \Psi_1 \myp{k_{j-1:\ell_{j}},\sigma_{j,\ell_j}} \sigma_{j,\ell'_j}' \Psi_1 \myp{-k'_{j-1:\ell'_{j}},-\sigma'_{j,\ell'_j}} } \nn \\
		&\times \int_{\myp{\R_+}^{I_{2,2n}}} \ids s \delta \myp[\bigg]{ \tau-\sum_{j=2}^{2n} s_j } \prod_{j=2}^{2n} e^{-i s_j \gamma \myp{j;J}}.
		\label{eq:DefF2}
	\end{align}
\end{proposition}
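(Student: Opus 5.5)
We follow the proof of \cref{Propo:QERR3}; the only new features are the truncated triple product at the bottom interaction vertex and the cutoff $\Psi_0$ that replaces $\Psi_1$ at the first time slice.

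\textbf{Step 1: phase extraction.} Starting from \eqref{eq:DefE2}, we use \cite[Lemma 4.4]{LukkarinenSpohn:WNS:2011} exactly as in \eqref{eq:b_prod_phase} to move the phases $e^{i\sigma_{0,j}\omega^\lambda(k_{0,j})s_0}$ out of the operators $b(k_{0,j},\sigma_{0,j},s_0)$. The truncated bracket $\myb{\cdot}^T$ is defined by subtracting two-point expectations. By stationarity of the Gibbs state, and since each of these expectations carries a factor $\delta_L$ forcing the corresponding phases to cancel, the truncation commutes with this phase extraction. This yields a representation of $\int \hat g^* \mathcal{E}^2_n$ analogous to \eqref{eq:E3_expansion}, with the same $\gamma_j^+$ as in \eqref{eq:gammaplus}. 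It differs only in two respects: the first vertex factor is $\sigma_{1,\ell_1}\Psi_0$, and the operator product is
\[
\prod_{j<\ell_1}\hat a\,\myb[\Big]{\prod_{l=\ell_1}^{\ell_1+2}\hat a}^T\prod_{j>\ell_1+2}\hat a .
\]

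\textbf{Step 2: the adjoint and the interlacing.} For the adjoint we make the same change of variables, inverting the diagram and flipping all parities. We use $\Theta(-(k_3,k_2,k_1),-\sigma)=-\Theta(k,\sigma)$ and check that $U$ transforms compatibly, so that the conjugated factor becomes $\Psi_0(-k'_{0:\ell'_1},-\sigma'_{1,\ell'_1})$. We also use that $(\myb{ABC}^T)^*$ is again a truncated triple product of the adjoint operators, taken in reversed order. \cref{Lemma:Interlace} then interlaces the two time sequences, and the $s$-dependent phase vanishes as before via $k_{n,1}+k'_{n,1}=0$.

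\textbf{Step 3: expansion of the expectation and the vanishing claim.} This is where the real work lies. We must evaluate an expectation of the form $\expec{X^*\,\myb{ABC}^T{}^*\,Y^*\,X'\,\myb{A'B'C'}^T\,Y'}_\lambda$, where the two truncated triples come from the two bottom (amputated) vertices. We would first expand each truncated triple via \eqref{Def:Truncation}, writing it as the full product minus two terms, each containing a two-point expectation. Applying \cref{Lemma:CummulantEstimate} to each resulting full moment then gives a sum over $S\in\pi(I_{4n+2})$.

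The subtracted terms are exactly a pairing of $\hat a(k_2)^*$ with one of $\hat a(k_3)$ or $\hat a(k_4)$ within the same triple, multiplied by the cumulant expansion of the remaining operators. Matching these term by term with signs, using $\epsilon(S)$ and the sign-independence lemma for even clusters, we expect them to cancel precisely those partitions $S$ in which two initial-time vertices attached to the same amputated vertex form a pair cluster. Two further observations are needed:

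\begin{itemize}
\item A pair within $\{k_3,k_4\}$, or one of type $\hat a^*\hat a^*$, has vanishing pair cumulant by gauge invariance (\cref{rem:pair_correl_trunc}).
\item If such a pairing survived, it would force $k_i+k_j=0$, where $\Psi_0$ equals $U$ by \cref{Propo:Cutoff}(i). This matches the subtraction already made in \eqref{eq:duhamel1b}.
\end{itemize}

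What remains is $\mathcal{F}^2_n$ as in \eqref{eq:DefF2}. The main obstacle is the sign bookkeeping in this cancellation. We would handle it by checking that the sign with which a subtracted term appears in \eqref{Def:Truncation} agrees with $\epsilon(S)$ for the partition containing that internal pair. This reduces to the observation that moving an adjacent pair to the front of the product is an even permutation of the remaining operators.
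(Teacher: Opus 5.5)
Your proposal is correct and follows essentially the same route as the paper. You extract the phases (the truncation commutes with this), conjugate and interlace exactly as in \cref{Propo:QERR3}, then expand both truncated triples (the paper writes out the resulting nine terms for $n=1$), insert the cumulant expansion, and note that the subtracted terms cancel precisely the partitions containing an internal $\hat a^{\ast}\hat a$ pair, while the remaining internal pairs have vanishing $C_2$ by gauge invariance. Your sign bookkeeping for general $n$ is somewhat more explicit than the paper's, which treats only $n=1$ and asserts the general case; your remark that $\Psi_0=U$ when $k_i+k_j=0$ is harmless but not needed.
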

\begin{proof}
	Noting that \eqref{eq:b_prod_phase} also holds for a product containing the truncated factor in the definition \eqref{eq:DefE2} of $\mathcal{E}_n^2$, the proof follows exactly the lines of \cref{Propo:QERR3} up until the point of inserting the cumulant expansion \eqref{Lemma:CummulantEstimate:1} into the expectation
	\begin{align*}
		\MoveEqLeft[6] \expec[\Bigg]{ \prod_{j=1}^{\ell_1'-1} \hat{a} \myp{k_{0,j}',\sigma_{0,j}'} \myb[\Bigg]{ \prod_{j=\ell_1'}^{\ell_1'+2} \hat{a} \myp{k_{0,j}',\sigma_{0,j}'} }^T \prod_{j=\ell_1'+3}^{1+2n} \hat{a} \myp{k_{0,j}',\sigma_{0,j}'} \\  
		&\times \prod_{j=1}^{\ell_1-1} \hat{a} \myp{k_{0,j},\sigma_{0,j}} \myb[\Bigg]{ \prod_{j=\ell_1}^{\ell_1+2} \hat{a} \myp{k_{0,j},\sigma_{0,j}} }^T \prod_{j=\ell_1+3}^{1+2n} \hat{a} \myp{k_{0,j},\sigma_{0,j}} }_{\lambda}.
	\end{align*}
	For the sake of readability, we will do this step just in the case $n=1$, where only the truncated factors remain.
	In this simple case, where the vector of parities is forced to be $ \myp{\sigma',\sigma} = \myp{-1,-1,1,-1,1,1}$, we use the definitions \eqref{Def:Truncation} and \eqref{Def:Truncation2} of the truncated factors to expand the expectation into a sum of nine terms,
	\begin{align*}
		\MoveEqLeft[4] \expec[\Big]{ \myb[\big]{ \hat{a} \myp{k_1',-1} \hat{a} \myp{k_2',-1} \hat{a} \myp{k_3',1} }^T \myb[\big]{ \hat{a} \myp{k_1,-1} \hat{a} \myp{k_2,1} \hat{a} \myp{k_3,1} }^T }_{\lambda} \\
		={}& \expec[\big]{ \hat{a} \myp{k_1',-1} \hat{a} \myp{k_2',-1} \hat{a} \myp{k_3',1} \hat{a} \myp{k_1,-1} \hat{a} \myp{k_2,1} \hat{a} \myp{k_3,1} }_{\lambda} \\
		&\quad- \expec[\big]{ \hat{a} \myp{k_1',-1} \hat{a} \myp{k_1,-1} \hat{a} \myp{k_2,1} \hat{a} \myp{k_3,1} }_{\lambda} \expec[\big]{ \hat{a} \myp{k_2',-1} \hat{a} \myp{k_3',1} }_{\lambda} \\
		&\quad+ \expec[\big]{ \hat{a} \myp{k_2',-1} \hat{a} \myp{k_1,-1} \hat{a} \myp{k_2,1} \hat{a} \myp{k_3,1} }_{\lambda} \expec[\big]{ \hat{a} \myp{k_1',-1} \hat{a} \myp{k_3',1} }_{\lambda} \\
		&- \expec[\big]{ \hat{a} \myp{k_1',-1} \hat{a} \myp{k_2',-1} \hat{a} \myp{k_3',1} \hat{a} \myp{k_3,1} }_{\lambda} \expec[\big]{ \hat{a} \myp{k_1,-1} \hat{a} \myp{k_2,1} }_{\lambda} \\
		&\quad+ \expec[\big]{ \hat{a} \myp{k_1',-1} \hat{a} \myp{k_3,1} }_{\lambda} \expec[\big]{ \hat{a} \myp{k_2',-1} \hat{a} \myp{k_3',1} }_{\lambda} \expec[\big]{ \hat{a} \myp{k_1,-1} \hat{a} \myp{k_2,1} }_{\lambda} \\
		&\quad- \expec[\big]{ \hat{a} \myp{k_2',-1} \hat{a} \myp{k_3,1} }_{\lambda} \expec[\big]{ \hat{a} \myp{k_1',-1} \hat{a} \myp{k_3',1} }_{\lambda} \expec[\big]{ \hat{a} \myp{k_1,-1} \hat{a} \myp{k_2,1} }_{\lambda} \\
		&+\expec[\big]{ \hat{a} \myp{k_1',-1} \hat{a} \myp{k_2',-1} \hat{a} \myp{k_3',1} \hat{a} \myp{k_2,1} }_{\lambda} \expec[\big]{ \hat{a} \myp{k_1,-1} \hat{a} \myp{k_3,1} }_{\lambda} \\
		&\quad -\expec[\big]{ \hat{a} \myp{k_1',-1} \hat{a} \myp{k_2,1} }_{\lambda} \expec[\big]{ \hat{a} \myp{k_2',-1} \hat{a} \myp{k_3',1} }_{\lambda} \expec[\big]{ \hat{a} \myp{k_1,-1} \hat{a} \myp{k_3,1} }_{\lambda} \\
		&\quad +\expec[\big]{ \hat{a} \myp{k_2',-1} \hat{a} \myp{k_2,1} }_{\lambda} \expec[\big]{ \hat{a} \myp{k_1',-1} \hat{a} \myp{k_3',1} }_{\lambda} \expec[\big]{ \hat{a} \myp{k_1,-1} \hat{a} \myp{k_3,1} }_{\lambda}.
	\end{align*}
	Now, inserting the cumulant expansion \eqref{Lemma:CummulantEstimate:1} into each of these, it follows that any given contribution from the first term above is cancelled exactly when the corresponding partition $S$ contains one of the pairs from the truncated products \eqref{Def:Truncation}, \eqref{Def:Truncation2}, that is,
	\begin{align*}
		\MoveEqLeft[6] \expec[\Big]{ \myb[\big]{ \hat{a} \myp{k_1',-1} \hat{a} \myp{k_2',-1} \hat{a} \myp{k_3',1} }^T \myb[\big]{ \hat{a} \myp{k_1,-1} \hat{a} \myp{k_2,1} \hat{a} \myp{k_3,1} }^T }_{\lambda} \\
		={}& \sum_{S \in \tilde{\pi} \myp{I_6}} \epsilon(S) \prod_{A=\Set{J_1,\dots,J_{\abs{A}}} \in S} \myb[\bigg]{ \delta_L \myp[\bigg]{ \sum_{j=1}^{{\abs{A}}} K_j } {C}_{\abs{A}} \myp{K_A, o_A,\lambda,\Omega} },
	\end{align*}
	where $\tilde{\pi} \myp{I_6}$ is the set of all ordered partitions of $\Set{k_1',k_2',k_3',k_1,k_2,k_3}$ \emph{not} containing any of the pairings $\Set{k_1',k_3'}$, $\Set{k_2',k_3'}$, $\Set{k_1,k_2}$, $\Set{k_1,k_3}$.
	Note also that that any partition containing one of the pairs $\Set{k_1',k_2'}$, $\Set{k_2,k_3}$ will not contribute in the expression above, because the corresponding $C_2$-factors vanish by \eqref{eq:pair_correl_trunc}.
	Thus, for the simple case $n=1$, the amplitude $\mathcal{F}_1^{2}$ vanishes for any cluster decomposition $S$ containing a pairing between the momenta $\Set{k_1',k_2',k_3'}$, or between the momenta $\Set{k_1,k_2,k_3}$, which are exactly the sets of momenta connected to the bottom two vertices in the corresponding diagram.
	The same conclusion holds generally for any $n \geq 1$, that is, $\mathcal{F}_n^2$ vanishes if $S$ contains a pairing in $\Set{k_{0,\ell_1'}',k_{0,\ell_1'+1}',k_{0,\ell_1'+2}'}$, or in $\Set{k_{0,\ell_1},k_{0,\ell_1+1},k_{0,\ell_1+2}}$.
	The expression \eqref{eq:DefF2} for other cluster decompositions is then obtained by following the same steps as in the proof of \cref{Propo:QERR3}.
\end{proof}

\subsection{Diagrams for main and amputated terms }
This subsection is devoted to the diagrams of the two key quantities that we need to estimate: the main and the error terms. 
\subsubsection{Main diagrams}
\label{Sec:MainDiagram}
The key difference between \eqref{eq:DefFmain} and \eqref{eq:DefE0} is the appearance of the delta function $\delta_L \myp[\big]{\sum_{j\in A} k_{0,j}}$ acting on the set $A$ and the two new pairities $\1 \myp{\sigma_{n,1}=1}$, $\1 \myp{\sigma_{0,0}=-1}$.
The construction of the diagram for \eqref{eq:DefFmain} can be done by modifying the graphs introduced in Subsection \ref{Duhamel} as follows (see Figure \ref{Fig2}):
\begin{figure}
	\centering
	\includegraphics{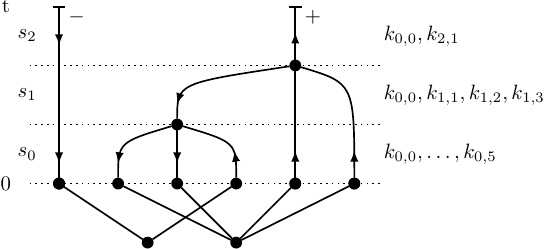}
	\caption{An example of a Feynman diagram with clusters, also featuring the additional initial vertex with momentum $k_{0,0}$.
		Two new cluster vertices are added at the bottom of the diagram, to connect $k_{0,0},k_{0,3}$ and $k_{0,1},k_{0,2},k_{0,4},k_{0,5}$.}
\label{Fig2}
\end{figure}
\begin{enumerate}[(i)]
	\item For each $A$, we add a \emph{cluster vertex} at the bottom of the graph and connect this new vertex to all of the vertices $ k_{0,j}$, $j\in A$, appearing in the delta function $\delta_L \myp[\big]{\sum_{j\in A} k_{0,j} }$.  
	
	\item To represent the parity  $\1 \myp{\sigma_{n,1}=1}$, we just assign the value $1$ to $\sigma_{n,1}$, which corresponds to the parity of the top of the graph.
	
	\item To represent the parity $\1 \myp{\sigma_{0,0}=-1}$, we will also add a vertex $\myp{0,0}$ and a segment $k_{0,0}$ with $\sigma_{0,0}=-1$.
	We assign the value of $k_{0,0}$ to be $-k_{n,1}$.
	This segment goes from the vertex $\myp{0,0}$ to the top line.
	
	\item Thanks to the parities $\sigma_{n,1}=1$ and  $\sigma_{0,0}=-1$, the remaining parities can be computed from top to bottom using the same rule discussed in \cref{Duhamel}.
	At each iteration, the parity remains unchanged in the middle line and we set $-1$ for the segment on the left and $1$ for the segment on the right.
	The cluster vertices do not affect the parities.
	
	\item In this diagram, when all cluster vertices and their edges are removed, the diagram splits into two components.
	The left one contains only one edge $k_{0,0}$ and is called the \emph{minus tree}.
	The right one is called the \emph{plus tree}.
	
	\item The time-dependent factor in \eqref{eq:DefFmain} can also be expressed under the form
	\begin{equation}
	\label{eq:gamma_j}
		\prod_{j=0}^n e^{-i s_j \gamma_j},
		\mbox{ where } 
		\gamma_j = \sum_{l=j+1}^n \Theta_{l-1:\ell_l} \myp{k,\sigma} - i \kappa_{n-j}.
	\end{equation}
	Since $-\omega^{\lambda} \myp{k_{n,1}} = \sigma_{0,0} \omega^{\lambda} \myp{k_{0,0}}$, we have that
	\begin{equation*}
		\mathrm{Re} \gamma_j = \sum_{l=1}^{2(n-j)+1} \sigma_{j,l} \omega^{\lambda} \myp{k_{j,l}} - \omega^{\lambda} \myp{k_{n,1}}.
	\end{equation*}
	
	\item Each fusion vertex carries a factor $-i\lambda \Psi_1$ and each cluster vertex carries a factor $C_{\abs{A}}$. 
\end{enumerate}

\subsubsection{Diagrams for error terms}
\label{Sec:AmputatedDiagram}
In this subsection, we will construct Feynman diagrams for the terms $\mathcal{F}^{1}_{n},\mathcal{F}^{2}_{n},\mathcal{F}^{3}_{n}$.
The diagram can be constructed in a similar manner with that of \eqref{eq:DefFmain} (see Figure \ref{Fig3}).
\begin{figure}
\centering
	\includegraphics{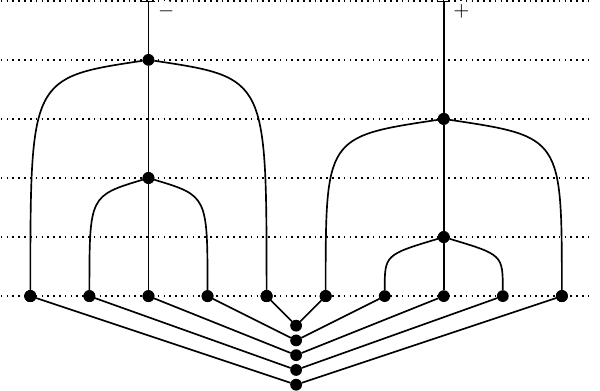}
	\caption{In this diagram, the plus and the minus trees are denoted by the plus $+$ and the minus $-$ signs. Each vertex at the bottom of the plus tree is paired to a vertex at the bottom of the minus tree.}
\label{Fig3}
\end{figure}
\begin{enumerate}[(i)]
	\item The appearance of $k,k'$ and $\sigma$, $\sigma'$ indicates that we need two graphs.
	Moreover, the quantity $\1 \myp{\sigma_{n,1}=1} \1 \myp{\sigma_{n,1}'=-1}$ shows that the two tops of the two  graphs have the parities $-1$ and $1$.
	We set the left one to be $-1$.
	
	\item Thanks to (i), when all cluster vertices and their edges are removed, the diagram also splits into two components.
	The left one is called the \emph{minus tree} and the right one is called the \emph{plus tree}.
	
	\item The delta function $\delta \myp[\big]{ t-\sum_{j=2}^{2n}s_j }$ in the expressions for $\mathcal{F}_n^{2}$ and $\mathcal{F}_n^{3}$ indicates that the first two time slices have zero length.
	We say that these time slices are \emph{amputated}, and we also call the corresponding diagrams \emph{amputated diagrams}.
	This can be denoted on the diagram for instance by shading the first two time slices.
	
	\item To make the identification of the amputated terms with their corresponding Feynman diagrams unique, we can without loss of generality assume that the first fusion from the bottom always takes place in the minus tree.
	
	\item For each $A$, we add a  cluster vertex at the bottom of the graph and connect the new vertex to all of the vertices $ k_{0,j}$, and $ k_{0,j}'$, $j\in A$.
	The plus and minus trees are indeed connected only by the cluster vertices.
	
	\item For $\mathcal{F}_n^1$ and $\mathcal{F}_n^3$, each interaction vertex in the plus tree carries a factor $i\lambda \sigma_{j,\ell_j} \Psi_1 \myp{k_{j-1:\ell_{j-1}},\sigma_{j,\ell_j}}$, and each interaction vertex in the minus tree carries a factor $-i\lambda \sigma_{j,\ell'_j}' \Psi_1 \myp{-k'_{j-1:\ell'_{j-1}},\sigma'_{j,\ell'_j}}$.
	The same is true for $\mathcal{F}_n^2$, except $\Psi_1$ is replaced by $\Psi_0$ at the amputated vertices at the bottom of the diagram.
	Each cluster vertex carries a factor $C_{\abs{A}}$ with a corresponding $\delta_L$-function. 
\end{enumerate}

\subsection{Signs of cluster decompositions, and leading motives}
\subsubsection{Properties of the sign of a cluster decomposition $S$}
The presence of the sign $\epsilon\myp{S}$ in the definition \eqref{Def:Truncation:2} of the truncated moments is a distinctly \emph{fermionic} feature, and it can be seen as an expression of the Pauli exclusion principle.
By commutativity, it does not appear in the classical setting \cite{LukkarinenSpohn:WNS:2011}, or in the case of bosons.
Ultimately, $\epsilon \myp{S}$ will be responsible for some of the signs appearing in \eqref{eq:nu-infty} in the main theorem.

Here we investigate more closely the sign $\epsilon \myp{S}$ for a given cluster decomposition $S$ of the set $I_n$, in particular the behaviour of the sign when attaching two diagrams together, as explained below.
First, we have:
\begin{lemma}
\label{lem:cluster_sign_order}
	Let $S = \Set{A_1, \dotsc, A_k}$ be a clustering of $I_n$, and assume that $\abs{A_j}$ is even for all $j$.
	Then the sign $\epsilon \myp{S}$ is independent of the ordering of the elements $A_j$ of $S$.
\end{lemma}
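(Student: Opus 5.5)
Write the permutation associated with the labeling $(A_1,\dotsc,A_k)$ as the word $w = A_1 A_2 \cdots A_k$, i.e.\ the elements of $A_1$ in increasing order, then those of $A_2$ in increasing order, and so on. The sign $\epsilon(S)$ is then the sign of the permutation $\tau_w \in S_n$ sending position $p$ to the $p$-th letter of $w$. Equivalently, $\epsilon(S) = (-1)^{\mathrm{inv}(w)}$, where $\mathrm{inv}(w)$ counts the pairs of positions $p<q$ whose letters satisfy $w_p > w_q$. The plan is to show that relabeling the blocks changes $\mathrm{inv}(w)$ by an even number.

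Since any reordering of $(A_1,\dotsc,A_k)$ is a product of adjacent transpositions of blocks, it suffices to compare the labeling with $A_j$ and $A_{j+1}$ swapped. Consider the words $w = \cdots A_j A_{j+1} \cdots$ and $w' = \cdots A_{j+1} A_j \cdots$.

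\begin{itemize}
\item Inversions between two letters that both lie outside $A_j \cup A_{j+1}$ are unaffected by the swap.
\item Inversions between a letter outside and a letter inside $A_j\cup A_{j+1}$ are also unaffected. The relative position of such a pair (left or right) does not change, because the two blocks occupy the same contiguous stretch of positions in $w$ and in $w'$.
\item Inversions within a single block are zero in both words, since each block is written in increasing order.
\end{itemize}

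What remains are the pairs $(a,b)$ with $a\in A_j$ and $b\in A_{j+1}$. In $w$ such a pair is an inversion if and only if $a>b$; in $w'$ it is an inversion if and only if $b>a$. As $a\neq b$, exactly one of these holds. The total count over these pairs therefore goes from $m$ to $\abs{A_j}\abs{A_{j+1}}-m$. The parity changes by $\abs{A_j}\abs{A_{j+1}} \bmod 2$, which vanishes because both cardinalities are even; in fact evenness of just one block would suffice.

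Hence $\epsilon(S)$ is invariant under adjacent block swaps, and therefore under all reorderings. There is no real obstacle here. The only point needing care is to fix the convention that the permutation is read as the concatenated word, so that its sign is the inversion parity; a permutation and its inverse have the same sign, so the direction of the map does not matter.
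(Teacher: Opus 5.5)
Your proof is correct and follows the paper's argument: you compute $\epsilon(S)$ as the inversion parity of the concatenated block word, reduce to swapping two adjacent blocks, and note that each cross pair between those two blocks flips between inversion and non-inversion. Your count $|A_j|\,|A_{j+1}|-m$ is the right one; the paper's displayed identity writes $n_1+n_2$ where $n_1 n_2$ is meant, which is harmless since both are even. Your remark that evenness of just one of the two swapped blocks suffices is a correct sharpening.
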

\begin{proof}
	Denoting the cluster elements by $A_j = \Set{a_{j,1}, \dotsc, a_{j,n_j}}$ with $n = \sum_j n_j$, the permutation $\tau_S$ corresponding to $S$ is by definition
	\begin{equation}
	\label{eq:S_permutation}
		\tau_S =	\begin{pmatrix}
						1		&	2		&	\cdots	&	n_1			&	n_1+1	&	\cdots	&	n-1			&	n \\
						a_{1,1}	&	a_{1,2}	&	\cdots	&	a_{1,n_1}	&	a_{2,1}	&	\cdots	&	a_{k,n_k-1}	&	a_{k,n_k}
					\end{pmatrix},
	\end{equation}
	and so $\epsilon \myp{S} \coloneq \sgn \myp{\tau_S} = \myp{-1}^{N\myp{\tau_S}}$, where $N\myp{\tau_S}$ is the number of inversions of $\tau_S$, that is, the number of pairs $i<j$ with $\tau_S \myp{i} > \tau_S \myp{j}$.
	Note that by construction, there is no contribution to the number of inversions coming from pairs within any given cluster $A_j$, since we assume $a_{j,1} < a_{j,2} < \cdots < a_{j,n_j}$, so all inversions of $\tau_S$ must come from pairs between the clusters.
	
	Now, fix $j \in \Set{1,\dotsc,n-1}$ and suppose that we switch the order of $A_j$ and $A_{j+1}$ in $S$.
	It is clear that this action does not change the number of inversions between $A_j$ and any of the other clusters, apart from $A_{j+1}$.
	It follows that we can assume that $S$ has only two clusters $S= \Set{A_1,A_2}$.
	Now, an inversion $i<j$ of $\tau_S$ must satisfy $i \in \Set{1,\dotsc,n_1}$, $j \in \Set{n_1+1,\dotsc,n_1+n_2}$, and $\tau_S \myp{i} = a_{1,i} > a_{2,j-n_1} = \tau_S \myp{j}$.
	After switching the order of $A_1$ and $A_2$, the permutation $\tau_{\widetilde{S}}$ of $\widetilde{S} = \Set{A_2,A_1}$ attains these values at $n_2 + i$ and $j-n_1$, respectively.
	That is, the corresponding pair in $\widetilde{S}$ will be $j-n_1 < n_2+i$ with $\tau_{\widetilde{S}} \myp{j-n_1} = a_{2,j-n_1} < a_{1,i} = \tau_{\widetilde{S}} \myp{n_2+i}$.
	It is now clear that and inversion of $S$ corresponds to a non-inversion of $\widetilde{S}$ (and vice versa), so
	\begin{equation*}
		N \myp{\tau_S} = n_1 + n_2 - N \myp{\tau_{\widetilde{S}}},
	\end{equation*}
	from which it follows that $\epsilon \myp{S} = \epsilon \myp{\widetilde{S}}$, since $n_1$, $n_2$ are assumed to be even.
\end{proof}
\begin{remark}[Reading the sign $\epsilon \myp{S}$ from a diagram]
\label{rem:cluster_sign_graph}
	Consider any clustering $S = \Set{A_1, \dotsc, A_k}$ of $I_n$.
	We make the following observations that allows one to graphically read the sign $\epsilon \myp{S}$ from the drawing of a diagram.
	\begin{enumerate}[(a)]
		\item Assume the sets $A_j = \Set{a_{j,1}, \dotsc, a_{j,n_j}}$ are ordered such that $a_{i,1} < a_{j,1}$ whenever $i<j$.
		Assume further that on the diagram, all cluster vertices are placed on the same horizontal line below the initial time vertices in such a way that the vertex of $A_i$ is to the left of $A_j$ whenever $i<j$.
		Then it is apparent that if a cluster edge of $A_i$ intersects a cluster edge of $A_j$, it exactly corresponds to an inversion of the permutation of $S$ (just keep the representation \eqref{eq:S_permutation} in mind).
		Consequently, the number of inversions of the permutation determined by $S$ is exactly equal to the number of intersections $g$ between the cluster edges of the graph, see e.g. \cref{fig:C1}.
		In particular, $\epsilon \myp{S} = \myp{-1}^g$.

		\item If $\abs{A_j}$ is even for all $j$, then the placement of the cluster vertices is irrelevant when determining $\epsilon \myp{S}$ (as long as the cluster vertices are placed below the initial time vertices).
		The sign $\epsilon \myp{S}$ can still be calculated as $\epsilon \myp{S} = \myp{-1}^{g}$.
		
		To see this, take two clusters $A_1$ and $A_2$ and consider what happens if we move one of the cluster vertices.
		Since $\abs{A_1}$ and $\abs{A_2}$ are both even, the number of intersections between their cluster edges can only change by an even number, as long as the cluster vertex remains below the horizontal $t = 0$ line, see e.g. \cref{fig:C1} for an illustration.
		Hence the parity of the total number of intersections remains the same.
	\end{enumerate}
	\begin{figure}[h]
		\centering
		\includegraphics{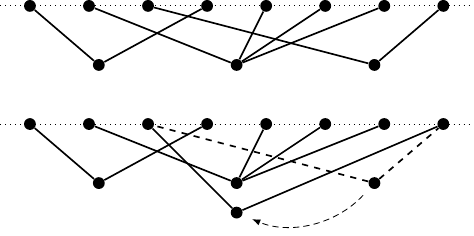}
		\caption{Two illustrations of the cluster decomposition $S = \Set{ \Set{1,4}, \Set{2,5,6,7}, \Set{3,8}}$.
			At the top, the clusters are drawn according to the instructions of \cref{rem:cluster_sign_graph} (a), so the sign of the corresponding permutation can be read as $\epsilon \myp{S} = \myp{-1}^5 = -1$.
			At the bottom, the rightmost cluster vertex have been moved, changing the number of intersections by an even number, thus leaving the parity invariant.}
		\label{fig:C1}
	\end{figure}
\end{remark}

Under certain conditions, it is possible to attach one Feynman diagram to the bottom of another one, and later it will be useful to know how the signs of the corresponding clusters are related.
Take two Feynman diagrams $\mathscr{G}_1,\mathscr{G}_2$ with clusters $S_1,S_2$.
Consider a cluster $A \in S_1$, and assume that $\mathscr{G}_2$ has a number of (plus or minus) trees that is equal to the number of elements in $A$.
(So far, we have only considered Feynman diagrams consisting of either one or two trees, but nothing prevents us from generalising to diagrams with an arbitrary number of trees).
If, furthermore, the parities of the cluster edges of $A$ coincide with the parities of the edges at the top of $\mathscr{G}_2$, then $\mathscr{G}_2$ can be attached to the bottom of $\mathscr{G}_1$ at the cluster $A$.
The resulting graph $\mathscr{G}$ is also a Feynman diagram with interaction vertices equal to the union of the interaction vertices of $\mathscr{G}_1$ and $\mathscr{G}_2$, and with clusters $S = \myp{S_1 \cup S_2} \setminus \Set{A}$, up to re-ordering.
At the bottom of $\mathscr{G}$, each initial time vertex of $\mathscr{G}_1$ connected to $A$ is replaced by the set of initial time vertices of $\mathscr{G}_2$ at the bottom of the corresponding tree of $\mathscr{G}_2$.

\begin{lemma}[Multiplicativity of the sign $\epsilon \myp{S}$]
	Let $\mathscr{G}_1,\mathscr{G}_2$ be two Feynman diagrams with corresponding cluster decompositions $S_1,S_2$.
	Assume that $S_1,S_2$ contain only clusters with an even number of elements, and that $S_1$ contains a cluster where $\mathscr{G}_2$ can be attached, forming a composite diagram $\mathscr{G}$.
	Then the sign of the cluster decomposition $S$ of $\mathscr{G} $satisfies
	\begin{equation*}
		\epsilon \myp{S} = \epsilon \myp{S_1} \epsilon \myp{S_2}.
	\end{equation*}
\end{lemma}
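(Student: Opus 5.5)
The plan is to read both signs off the diagrams via \cref{rem:cluster_sign_graph}: $\epsilon\myp{S}=\myp{-1}^{g}$, where $g$ is the number of crossings among cluster edges when all cluster vertices sit below the $t=0$ line. Part (b) of the remark shows that, because all clusters have even size, the positions of the cluster vertices do not matter, so we may choose them as convenient. We draw $\mathscr{G}_1$ with its cluster vertices below its initial time line, with the cluster $A$ placed lowest. Below the vertex of $A$ we then draw $\mathscr{G}_2$, with its trees ordered left to right in the same order as the edges of $A$, and with the clusters of $S_2$ placed at the very bottom.

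First, we describe the clusters of $\mathscr{G}$ as edge systems. The clusters in $S_1\setminus\Set{A}$ keep their edges, except that each edge now runs down to an initial vertex of $\mathscr{G}$. The clusters in $S_2$ keep their edges inside the bottom strip. The key step is an isotopy argument. Each edge of $\mathscr{G}_1$ that went into $A$ is replaced by the corresponding tree of $\mathscr{G}_2$. We deform the drawing so that these trees, whose bases at time $0$ are contiguous intervals of initial vertices, occupy a narrow vertical band beneath the former position of $A$. The cluster edges of $S_1\setminus\Set{A}$ then route around this band without entering it.

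Second, we count crossings in this picture. Crossings between two clusters of $S_1\setminus\Set{A}$ are unchanged. Crossings between two clusters of $S_2$ occur inside the band and number exactly the crossings of $\mathscr{G}_2$. Crossings between a cluster $B\in S_1\setminus\Set{A}$ and the band must be accounted for. In $\mathscr{G}_1$, each edge of $B$ crossing one edge of $A$ corresponds, in $\mathscr{G}$, to that edge crossing the whole bundle descending from one tree of $\mathscr{G}_2$. Each such bundle consists of the tree's initial edges, and the number of them is odd, namely $2n_i+1$. The crossing count therefore changes parity-consistently: a crossing counted once in $\mathscr{G}_1$ becomes an odd number of crossings in $\mathscr{G}$. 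Summing gives $g\equiv g_1+g_2 \pmod 2$, where $g_1$ counts crossings in $\mathscr{G}_1$ including those with $A$, and $g_2$ counts crossings in $\mathscr{G}_2$. Hence $\epsilon\myp{S}=\epsilon\myp{S_1}\epsilon\myp{S_2}$.

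The main obstacle is making the band picture rigorous in permutation language rather than pictures. I would encode it algebraically: $\tau_S$ equals $\tau_{S_1}$ with the block of positions occupied by $A$ replaced by the concatenated blocks coming from $\tau_{S_2}$, each original index being substituted by an interval of odd length. Inversions are then counted directly. An inversion of $\tau_{S_1}$ between an index $a\in A$ and an index $b\notin A$ becomes $\abs{\text{interval}(a)}$ inversions, an odd number. Inversions internal to the substituted block are exactly those of $\tau_{S_2}$, after using \cref{lem:cluster_sign_order} to reorder the clusters of $S_2$ contiguously. The only care needed is to verify that the trees' initial vertices indeed form contiguous odd-length intervals in the ordering of $\mathscr{G}$, which follows from the construction of the attached diagram.
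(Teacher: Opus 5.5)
Your proof is correct. Its core idea matches the paper's: a crossing between an edge of $S_1\setminus\{A\}$ and an edge of $A$ turns into an odd number of crossings. The execution differs in a useful way.

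The paper's sketch covers only the case where $\mathscr{G}_2$ has two trees and is attached at a pairing, and it argues pictorially. Clusters of $S_2$ lying inside a single tree are drawn locally, so they meet no edge of $S_1\setminus\{A\}$. The oddness then comes from the number of $S_2$-cluster edges running between the two trees, which is odd because each tree has an odd number of initial vertices and every cluster is even.

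You instead let all $2n_a+1$ cluster edges below the tree replacing $a$ count, and you recast this as an inversion count for $\tau_S$. By \cref{lem:cluster_sign_order} you may list the clusters of $S_2$ contiguously in the slot that $A$ occupied in $\tau_{S_1}$. Any $b\notin A$ compares with every element of the odd interval $J_a$ exactly as it compared with $a$. So each inversion $(b,a)$ of $\tau_{S_1}$ becomes $|J_a|$ inversions and each non-inversion becomes none. The inversions inside the block are exactly those of $\tau_{S_2}$, because the relabelling of the initial vertices of $\mathscr{G}_2$ into those of $\mathscr{G}$ is order preserving.

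This version is fully rigorous and handles clusters $A$ of any size. It needs no drawing conventions, only the hypothesis you already state: the trees of $\mathscr{G}_2$ are attached to the elements of $A$ in left-to-right order.

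I would present the inversion count as the proof and treat the band picture only as motivation, since as written it is muddled. If the trees of $\mathscr{G}_2$ hang below the vertex of $A$, an edge of $B$ crosses the stem of a tree. That is a tree edge, which does not count towards $g$. The claimed $2n_a+1$ cluster-edge crossings only appear after the tree bases are moved back to the $t=0$ line and their cluster edges are bundled along the old edge of $A$.
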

\begin{proof}[Sketch of proof]
	For simplicity, we consider only the simple case where $\mathscr{G}_2$ consists of two trees, and where $\mathscr{G}_2$ attaches to $\mathscr{G}_1$ at a pairing.
	Note that since $\mathscr{G}_2$ is a Feynman diagram, there is an odd number of initial vertices at the bottom of each of its trees.
	Furthermore, since each cluster of $S_2$ has an even number of elements, the number of cluster edges from the plus tree connecting to the minus tree must be odd.
	
	In the diagram $\mathscr{G}$, we need to determine the parity of the total number of intersections between cluster edges.
	Note that for any given cluster edge coming from $S_2$, the number of intersections between this edge and all other cluster edges originating from $S_2$ is the same as in the original graph $\mathscr{G}_2$.
	The same is true when considering intersections between cluster edges originating from $S_1 \setminus \Set{A}$, where $A$ is the cluster to which $\mathscr{G}_2$ is attached.
	Hence, we only need to argue that the number of intersections between $S_2$ and $S_1 \setminus \Set{A}$ has the same parity as the number of intersections between $A$ and $S_1 \setminus \Set{A}$ in the original graph $\mathscr{G}_1$.
	
	Consider thus a cluster edge $e$ from a cluster in $S_1 \setminus \Set{A}$ that intersects with a cluster edge $e'$ from $A$ in the original graph $\mathscr{G}_1$.
	The initial vertex $v'$ connected to $e'$ in $\mathscr{G}_1$ gets replaced by an odd number of initial vertices after attaching $\mathscr{G}_2$, corresponding to the initial vertices from one of the trees of $\mathscr{G}_2$.
	Furthermore, an odd number of cluster edges connect these new vertices in $\mathscr{G}$ to the other tree coming from $\mathscr{G}_2$.
	The remaining (even) number of new initial time vertices in the tree that replaces $v'$ will not contribute to the number of intersections with $e$, simply by drawing them so that they do not intersect with the cluster edges from $S_1$.
	This means that the intersection between $e$ and $e'$ gets replaced by an odd number of intersections between $e$ the cluster edges connecting the new vertices replacing $v'$ to the other tree in $\mathscr{G}_2$.
	In other words, the number of intersections $N\myp{S}$ between the cluster edges in $\mathscr{G}$ must satisfy
	\begin{equation*}
		N \myp{S} = N \myp{S_1} + N \myp{S_2} + 2m,
	\end{equation*}
	for some $m \in \N_0$, and the assertion of the lemma follows.
\end{proof}

\subsubsection{Leading motives and main terms}
\label{sec:leadingmotives}
It turns out (and will be proved later) that with suitable choices of $N$ and $\kappa$, the only terms giving a non-zero contribution in the weakly interacting limit come from the main term \eqref{Proposition:Ampl:1}, and only a certain type of diagrams will contribute to the limit.
These leading graphs are precisely the graphs where the clusters $S$ consist only of pairings, and which are formed by iteration of \emph{leading motives}, also called \emph{immediate recollisions}, which can be attached to pairings at the bottom of any existing graph.
\begin{figure}
	\centering
	\resizebox{0.384\textwidth}{!}{
		\begin{tikzpicture}[every node/.style={fill=black,circle,inner sep=0pt,minimum size=0.25cm},every path/.style={outer sep=0pt,line width=0.8pt}]
			\foreach \x in {0,...,3} \draw[dotted]	(-0.5,\x) -- (5.5,\x); 
			
			\foreach \name in {1,...,6} \node (0\name) at (\name-1,0) {};
			\foreach \name in {01,02,03,04,05,06} \path[dashed] (\name) edge (\name |- ,1); 
			
			\node (2) at (4,2) {};
			\node (3) at (1,3) {};
			\draw	(02 |- ,1)	edge (3)
			(05 |- ,1)	edge (2)
			(01 |- ,1) .. controls ($(3)+(-1,-0.3)$) .. (3) 
			(03 |- ,1) .. controls ($(3)+(1,-0.3)$) .. (3) 
			(04 |- ,1) .. controls ($(2)+(-1,-0.3)$) .. (2) 
			(06 |- ,1) .. controls ($(2)+(1,-0.3)$) .. (2); 
			
			\draw[-|]	(2) to node [at end,below,xshift=0.3cm,fill=white] {$+$} ++(0,1.4);
			\draw[-|]	(3) to node [at end,below,xshift=0.3cm,fill=white] {$-$} ++(0,0.4);
			
			\node (c1) at (2.5,-0.2) {};
			\node (c2) at (1.7,-0.6) {};
			\node (c3) at (3.3,-0.6) {};
			\path	(c1)	edge (03)
			edge (04)
			(c2)	edge (01)
			edge (05)
			(c3)	edge (02)
			edge (06);
			
			\node[rectangle,fill=white,anchor=north west] at (current bounding box.north west) {G1$^-$}; 
		\end{tikzpicture}
	}
	\resizebox{0.384\textwidth}{!}{
		\begin{tikzpicture}[every node/.style={fill=black,circle,inner sep=0pt,minimum size=0.25cm},every path/.style={outer sep=0pt,line width=0.8pt}]
			\foreach \x in {0,...,3} \draw[dotted]	(-0.5,\x) -- (5.5,\x); 
			
			\foreach \name in {1,...,6} \node (0\name) at (\name-1,0) {};
			\foreach \name in {01,02,03,04,05,06} \path[dashed] (\name) edge (\name |- ,1); 
			
			\node (2) at (1,2) {};
			\node (3) at (4,3) {};
			\draw	(02 |- ,1)	edge (2)
			(05 |- ,1)	edge (3)
			(01 |- ,1) .. controls ($(2)+(-1,-0.3)$) .. (2) 
			(03 |- ,1) .. controls ($(2)+(1,-0.3)$) .. (2) 
			(04 |- ,1) .. controls ($(3)+(-1,-0.3)$) .. (3) 
			(06 |- ,1) .. controls ($(3)+(1,-0.3)$) .. (3); 
			
			\draw[-|]	(2) to node [at end,below,xshift=0.3cm,fill=white] {$-$} ++(0,1.4);
			\draw[-|]	(3) to node [at end,below,xshift=0.3cm,fill=white] {$+$} ++(0,0.4);
			
			\node (c1) at (2.5,-0.2) {};
			\node (c2) at (1.7,-0.6) {};
			\node (c3) at (3.3,-0.6) {};
			\path	(c1)	edge (03)
			edge (04)
			(c2)	edge (01)
			edge (05)
			(c3)	edge (02)
			edge (06);
			
			\node[rectangle,fill=white,anchor=north west] at (current bounding box.north west) {G2$^-$}; 
		\end{tikzpicture}
	}
	\\[\medskipamount]
	\resizebox{0.384\textwidth}{!}{
		\begin{tikzpicture}[every node/.style={fill=black,circle,inner sep=0pt,minimum size=0.25cm},every path/.style={outer sep=0pt,line width=0.8pt}]
			\foreach \x in {0,...,3} \draw[dotted]	(-0.5,\x) -- (5.5,\x); 
			
			\foreach \name in {1,...,6} \node (0\name) at (\name-1,0) {};
			\foreach \name in {01,02,03,04,05,06} \path[dashed] (\name) edge (\name |- ,1); 
			
			\node (2) at (4,2) {};
			\node (3) at (1,3) {};
			\draw	(02 |- ,1)	edge (3)
			(05 |- ,1)	edge (2)
			(01 |- ,1) .. controls ($(3)+(-1,-0.3)$) .. (3) 
			(03 |- ,1) .. controls ($(3)+(1,-0.3)$) .. (3) 
			(04 |- ,1) .. controls ($(2)+(-1,-0.3)$) .. (2) 
			(06 |- ,1) .. controls ($(2)+(1,-0.3)$) .. (2); 
			
			\draw[-|]	(2) to node [at end,below,xshift=0.3cm,fill=white] {$+$} ++(0,1.4);
			\draw[-|]	(3) to node [at end,below,xshift=0.3cm,fill=white] {$-$} ++(0,0.4);
			
			\node (c1) at (2.5,-0.2) {};
			\node (c2) at (2.5,-0.5) {};
			\node (c3) at (2.5,-0.8) {};
			\path	(c1)	edge (03)
			edge (04)
			(c2)	edge (02)
			edge (05)
			(c3)	edge (01)
			edge (06);
			
			\node[rectangle,fill=white,anchor=north west] at (current bounding box.north west) {G3$^-$}; 
		\end{tikzpicture}
	}
	\resizebox{0.384\textwidth}{!}{
		\begin{tikzpicture}[every node/.style={fill=black,circle,inner sep=0pt,minimum size=0.25cm},every path/.style={outer sep=0pt,line width=0.8pt}]
			\foreach \x in {0,...,3} \draw[dotted]	(-0.5,\x) -- (5.5,\x); 
			
			\foreach \name in {1,...,6} \node (0\name) at (\name-1,0) {};
			\foreach \name in {01,02,03,04,05,06} \path[dashed] (\name) edge +(0,1); 
			
			\node (2) at (1,2) {};
			\node (3) at (4,3) {};
			\draw	(02 |- ,1)	edge (2)
			(05 |- ,1)	edge (3)
			(01 |- ,1) .. controls ($(2)+(-1,-0.3)$) .. (2) 
			(03 |- ,1) .. controls ($(2)+(1,-0.3)$) .. (2) 
			(04 |- ,1) .. controls ($(3)+(-1,-0.3)$) .. (3) 
			(06 |- ,1) .. controls ($(3)+(1,-0.3)$) .. (3); 
			
			\draw[-|]	(2) to node [at end,below,xshift=0.3cm,fill=white] {$-$} ++(0,1.4);
			\draw[-|]	(3) to node [at end,below,xshift=0.3cm,fill=white] {$+$} ++(0,0.4);
			
			\node (c1) at (2.5,-0.2) {};
			\node (c2) at (2.5,-0.5) {};
			\node (c3) at (2.5,-0.8) {};
			\path	(c1)	edge (03)
			edge (04)
			(c2)	edge (02)
			edge (05)
			(c3)	edge (01)
			edge (06);
			
			\node[rectangle,fill=white,anchor=north west] at (current bounding box.north west) {G4$^-$}; 
		\end{tikzpicture}
	}
	\\[\medskipamount]
	\resizebox{0.32\textwidth}{!}{
		\begin{tikzpicture}[every node/.style={fill=black,circle,inner sep=0pt,minimum size=0.25cm},every path/.style={outer sep=0pt,line width=0.8pt}]
			\foreach \x in {0,...,3} \draw[dotted]	(-0.5,\x) -- (4.5,\x); 
			
			\node (01) at (0,0) {};
			\node (02) at (1,0) {};
			\node (03) at (3,0) {};
			\node (04) at (4,0) {};
			\foreach \name in {01,02,03,04} \path[dashed] (\name) edge (\name |- ,1); 
			
			\node (2) at (1,2) {};
			\node (3) at (3,3) {};
			\draw	(02 |- ,1)	edge (2)
			(03 |- ,1)	edge (3)
			(01 |- ,1) .. controls ($(2)+(-1,-0.3)$) .. (2) 
			(04 |- ,1) .. controls ($(3)+(1,-0.3)$) .. (3) 
			(2) .. controls ($(2)+(0,0.7)$) .. (3); 
			
			\draw[-|]	(2) .. controls ($(2)+(1,-0.3)$) .. ($(2)+(1,-0.9)$) node [above,xshift=0.3cm,fill=white] {$+$};
			\draw[-|]	(3) to node [at end,below,xshift=0.3cm,fill=white] {$+$} ++(0,0.4);
			
			\node (c1) at (1,-0.6) {};
			\node (c2) at (3,-0.6) {};
			\path	(c1)	edge (01)
			edge (03)
			(c2)	edge (02)
			edge (04);
			
			\node[rectangle,fill=white,anchor=north west] at (current bounding box.north west) {L1}; 
		\end{tikzpicture}
	}
	\resizebox{0.32\textwidth}{!}{
		\begin{tikzpicture}[every node/.style={fill=black,circle,inner sep=0pt,minimum size=0.25cm},every path/.style={outer sep=0pt,line width=0.8pt}]
			\foreach \x in {0,...,3} \draw[dotted]	(-0.5,\x) -- (4.5,\x); 
			
			\node (01) at (0,0) {};
			\node (02) at (1,0) {};
			\node (03) at (2,0) {};
			\node (04) at (4,0) {};
			\foreach \name in {01,02,03,04} \path[dashed] (\name) edge (\name |- ,1); 
			
			\node (2) at (2,2) {};
			\node (3) at (2,3) {};
			\draw	(2) edge (3)
			(03 |- ,1)	edge (2)
			(01 |- ,1) .. controls ($(3)+(-2,-0.3)$) .. (3) 
			(02 |- ,1) .. controls ($(2)+(-1,-0.3)$) .. (2) 
			(04 |- ,1) .. controls ($(3)+(2,-0.3)$) .. (3); 
			
			\draw[-|]	(2) .. controls ($(2)+(1,-0.3)$) .. ($(2)+(1,-0.9)$) node [above,xshift=0.3cm,fill=white] {$+$};
			\draw[-|]	(3) to node [at end,below,xshift=0.3cm,fill=white] {$+$} ++(0,0.4);
			
			\node (c1) at (1,-0.6) {};
			\node (c2) at (3,-0.6) {};
			\path	(c1)	edge (01)
			edge (03)
			(c2)	edge (02)
			edge (04);
			
			\node[rectangle,fill=white,anchor=north west] at (current bounding box.north west) {L2}; 
		\end{tikzpicture}
	}
	\resizebox{0.32\textwidth}{!}{
		\begin{tikzpicture}[every node/.style={fill=black,circle,inner sep=0pt,minimum size=0.25cm},every path/.style={outer sep=0pt,line width=0.8pt}]
			\foreach \x in {0,...,3} \draw[dotted]	(-0.5,\x) -- (4.5,\x); 
			
			\node (01) at (0,0) {};
			\node (02) at (1,0) {};
			\node (03) at (2,0) {};
			\node (04) at (3,0) {};
			\foreach \name in {01,02,03,04} \path[dashed] (\name) edge (\name |- ,1); 
			
			\node (2) at (3,2) {};
			\node (3) at (1,3) {};
			\draw	(02 |- ,1)	edge (3)
			(04 |- ,1)	edge (2)
			(01 |- ,1) .. controls ($(3)+(-1,-0.3)$) .. (3) 
			(03 |- ,1) .. controls ($(2)+(-1,-0.3)$) .. (2) 
			(2) .. controls ($(2)+(0,0.7)$) .. (3); 
			
			\draw[-|]	(2) .. controls ($(2)+(1,-0.3)$) .. ($(2)+(1,-0.9)$) node [above,xshift=0.3cm,fill=white] {$+$};
			\draw[-|]	(3) to node [at end,below,xshift=0.3cm,fill=white] {$+$} ++(0,0.4);
			
			\node (c1) at (1.5,-0.3) {};
			\node (c2) at (1.5,-0.6) {};
			\path	(c1)	edge (02)
			edge (03)
			(c2)	edge (01)
			edge (04);
			
			\node[rectangle,fill=white,anchor=north west] at (current bounding box.north west) {L3}; 
		\end{tikzpicture}
	}
	\\[\medskipamount]
	\resizebox{0.32\textwidth}{!}{
		\begin{tikzpicture}[every node/.style={fill=black,circle,inner sep=0pt,minimum size=0.25cm},every path/.style={outer sep=0pt,line width=0.8pt}]
			\foreach \x in {0,...,3} \draw[dotted]	(-0.5,\x) -- (4.5,\x); 
			
			\node (01) at (0,0) {};
			\node (02) at (1,0) {};
			\node (03) at (3,0) {};
			\node (04) at (4,0) {};
			\foreach \name in {01,02,03,04} \path[dashed] (\name) edge (\name |- ,1); 
			
			\node (2) at (1,2) {};
			\node (3) at (3,3) {};
			\draw	(02 |- ,1)	edge (2)
			(03 |- ,1)	edge (3)
			(01 |- ,1) .. controls ($(2)+(-1,-0.3)$) .. (2) 
			(04 |- ,1) .. controls ($(3)+(1,-0.3)$) .. (3) 
			(2) .. controls ($(2)+(0,0.7)$) .. (3); 
			
			\draw[-|]	(2) .. controls ($(2)+(1,-0.3)$) .. ($(2)+(1,-0.9)$) node [above,xshift=0.3cm,fill=white] {$+$};
			\draw[-|]	(3) to node [at end,below,xshift=0.3cm,fill=white] {$+$} ++(0,0.4);
			
			\node (c1) at (2,-0.3) {};
			\node (c2) at (2,-0.6) {};
			\path	(c1)	edge (02)
			edge (03)
			(c2)	edge (01)
			edge (04);
			
			\node[rectangle,fill=white,anchor=north west] at (current bounding box.north west) {L4}; 
		\end{tikzpicture}	
	}
	\resizebox{0.32\textwidth}{!}{
		\begin{tikzpicture}[every node/.style={fill=black,circle,inner sep=0pt,minimum size=0.25cm},every path/.style={outer sep=0pt,line width=0.8pt}]
			\foreach \x in {0,...,3} \draw[dotted]	(-0.5,\x) -- (4.5,\x); 
			
			\node (01) at (0,0) {};
			\node (02) at (1,0) {};
			\node (03) at (3,0) {};
			\node (04) at (4,0) {};
			\foreach \name in {01,02,03,04} \path[dashed] (\name) edge (\name |- ,1); 
			
			\node (2) at (2,2) {};
			\node (3) at (2,3) {};
			\draw	(2) edge (3);
			\draw	(01 |- ,1) .. controls ($(3)+(-2,-0.3)$) .. (3) 
			(02 |- ,1) .. controls ($(2)+(-1,-0.3)$) .. (2) 
			(03 |- ,1) .. controls ($(2)+(1,-0.3)$) .. (2) 
			(04 |- ,1) .. controls ($(3)+(2,-0.3)$) .. (3); 
			
			\draw[-|]	(2) to node [at end,above,xshift=0.3cm,fill=white] {$+$} ++(0,-0.9);
			\draw[-|]	(3) to node [at end,below,xshift=0.3cm,fill=white] {$+$} ++(0,0.4);
			
			\node (c1) at (1,-0.6) {};
			\node (c2) at (3,-0.6) {};
			\path	(c1)	edge (01)
			edge (03)
			(c2)	edge (02)
			edge (04);
			
			\node[rectangle,fill=white,anchor=north west] at (current bounding box.north west) {L5}; 
		\end{tikzpicture}
	}
	\resizebox{0.32\textwidth}{!}{
		\begin{tikzpicture}[every node/.style={fill=black,circle,inner sep=0pt,minimum size=0.25cm},every path/.style={outer sep=0pt,line width=0.8pt}]
			\foreach \x in {0,...,3} \draw[dotted]	(-0.5,\x) -- (4.5,\x); 
			
			\node (01) at (0,0) {};
			\node (02) at (1,0) {};
			\node (03) at (2,0) {};
			\node (04) at (4,0) {};
			\foreach \name in {01,02,03,04} \path[dashed] (\name) edge (\name |- ,1); 
			
			\node (2) at (3,2) {};
			\node (3) at (1,3) {};
			\draw	(02 |- ,1)	edge (3)
			(01 |- ,1) .. controls ($(3)+(-1,-0.3)$) .. (3) 
			(03 |- ,1) .. controls ($(2)+(-1,-0.3)$) .. (2) 
			(04 |- ,1) .. controls ($(2)+(1,-0.3)$) .. (2) 
			(2) .. controls ($(2)+(0,0.7)$) .. (3); 
			
			\draw[-|]	(2) to node [at end,above,xshift=0.3cm,fill=white] {$+$} ++(0,-0.9);
			\draw[-|]	(3) to node [at end,below,xshift=0.3cm,fill=white] {$+$} ++(0,0.4);
			
			\node (c1) at (1.5,-0.3) {};
			\node (c2) at (1.5,-0.6) {};
			\path	(c1)	edge (02)
			edge (03)
			(c2)	edge (01)
			edge (04);
			
			\node[rectangle,fill=white,anchor=north west] at (current bounding box.north west) {L6}; 
		\end{tikzpicture}
	}
	\caption{Half of the leading motives.
	The top four are \emph{gain motives}, while the bottom six are called \emph{loss motives}.
	The remaining leading motives (denoted by G1$,\dotsc,$G4,L1$^-,\dotsc,$L6$^-$) can be obtained from the ones depicted by first inverting the parities of all edges, and then inverting the order of the edges below each interaction vertex (for the loss motives, this corresponds to "inverting" the entire motive).}
	\label{fig:LeadingMotives}
\end{figure}
There are twenty such leading motives, half of them are shown in \cref{fig:LeadingMotives}.
The gain motives (e.g. the four top motives in \cref{fig:LeadingMotives}) can be used to replace any pairing that has the correct parity.
For instance, the depicted gain motives can only replace a pairing whose left leg has the parity $-1$, while the remaining "inverted" motives can replace a pairing whose left leg has the parity $+1$.
The loss motives may be attached to any edge in the graph that has the correct parity, so any of the twelve loss motives may be attached to any pairing by simply attaching it to the leg that has the suitable parity.
\begin{remark}[Leading motives]
\label{rem:leading_motives}
	The leading motives have two defining features.
	The first feature is that they \emph{preserve the incoming momentum}, which for the loss motives simply means that the momenta corresponding to the top and bottom edges (where the motives are attached to a diagram) coincide.
	This is easily checked by noting that the momentum at the top edge of a loss motive is equal to the sum of the five momenta at the bottom, four of which cancel pairwise, due to the pairings.
	For gain motives, the momenta corresponding to the two top edges are equal, but with opposite signs.
	
	The second feature of the leading motives is that they \emph{preserve the phase}, which simply means that the phase factors $\Theta$ corresponding to the two vertices are the same, but with \emph{opposite} signs.
	Suppose that we attach any leading motive to an existing Feynman diagram, and that the vertices of the motive in the resulting diagram are $v_{i-1}$ and $v_i$.
	Then, since $\Theta_{i-1:\ell_i} \myp{k,\sigma} = - \Theta_{i:\ell_{i+1}} \myp{k,\sigma}$, the presence of the vertices $v_{i-1}$ and $v_i$ will only give a contribution to the phase factor $\gamma_i$.
	All other phase factors $\gamma_j$ will be independent of $v_{i-1}$ and $v_i$ by \eqref{eq:gamma_j}, because the contributions from these two vertices cancel out.
\end{remark}
\begin{example}[Amplitude for the loss motive L1]
\label{ex:L1}
	\begin{figure}[h]
		\centering
		\includegraphics{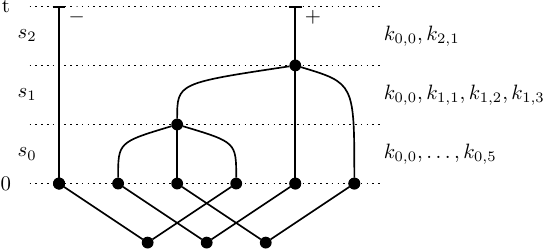}
		\caption{The leading motive L1 attached to a single pairing with parities $-,+$.}
		\label{fig:L1}
	\end{figure}
	For later reference, let us calculate the amplitude $\mathcal{F}_2^{main}$ for the diagram in \cref{fig:L1}, obtained by attaching the motive L1 to a single pairing with parities $-,+$.
	In this case, $\ell = \myp{1,1} $ and $S = \Set{\Set{0,3},\Set{1,4},\Set{2,5}}$.
	Since there are three intersections between cluster edges in the diagram, we have $\epsilon \myp{S} = -1$.
	It as also apparent from the diagram that the momenta are related by
	\begin{equation*}
		k_{0,0} ={} -k_{0,3},
		\qquad k_{0,1} ={} - k_{0,4},
		\qquad k_{0,2} ={} - k_{0,5},
	\end{equation*}
	\begin{equation*}
		k_{1,1} ={} k_{0,1} + k_{0,2} + k_{0,3},
	\end{equation*}
	and
	\begin{equation*}
		k_{2,1} ={} k_{1,1} + k_{1,2} + k_{1,3}
		={} \sum_{j=1}^5 k_{0,j}
		={} k_{0,3}.
	\end{equation*}
	From the two interaction vertices we get the factors
	\begin{align*}
		-\Psi_1 \myp{k_{0,1},k_{0,2},k_{0,3},-1}
		={}& -\Phi_1 \myp{k_{0,1},k_{0,2},k_{0,3}} U \myp{k_{0,1},k_{0,2},k_{0,3},-1} \\
		={}& -\Phi_1 \myp{k_{0,1},k_{0,2},k_{0,3}} \widehat{V} \myp{k_{0,2}+k_{0,3}}, \\
		\Psi_1 \myp{k_{1,1},k_{1,2},k_{1,3},1}
		={}& \Phi_1 \myp{k_{1,1},k_{0,4},k_{0,5}} \widehat{V} \myp{k_{1,1}+k_{0,4}}.
	\end{align*}
	Since all the clusters in the diagram are pairs, by \cref{rem:pair_correl_trunc}, we get the three factors $C_2 \myp{\myp{k_{0,0},k_{0,3}},\myp{-1,1},\lambda,\Omega} = W_L^{\lambda} \myp{k_{0,3},-1} = W_L^{\lambda} \myp{k_{0,3}}$, $W_L^{\lambda} \myp{k_{0,4}}$, and $W_L^{\lambda} \myp{k_{0,5}}$ from the cluster vertices.
	What remains is to calculate the phase factors in \eqref{eq:DefFmain},
	\begin{equation*}
		\Theta_{1:1} \myp{k,\sigma}
		={} \Theta \myp{k_{1,1},k_{0,4},k_{0,5},1},
	\end{equation*}
	and note that
	\begin{align*}
		\Theta_{0:1} \myp{k,\sigma}
		={}& \Theta \myp{k_{0,1},k_{0,2},k_{0,3},-1}
		={} \omega^{\lambda} \myp{k_{0,3}} - \omega^{\lambda} \myp{k_{0,1}} - \myp{\omega^{\lambda} \myp{k_{0,2}} - \omega^{\lambda} \myp{k_{1,1}}} \\
		={}& -\myp{ \omega^{\lambda} \myp{-k_{0,5}} - \omega^{\lambda} \myp{k_{1,1}} + \omega^{\lambda} \myp{k_{0,4}} - \omega^{\lambda} \myp{k_{0,3}} } \\
		={}& - \Theta \myp{k_{1,1},k_{0,4},k_{0,5},1},
	\end{align*}
	so the phase in the initial time slice vanishes.
	Finally, we can express everything in terms of the variables $k_0\coloneq k_{2,1}$, $k_1\coloneq k_{1,1}$, $k_2 \coloneq k_{0,4}$, and $k_3 \coloneq k_{0,5}$, and \eqref{eq:DefFmain} reduces to
	\begin{align}
		\mathcal{F}^{main}_{2} \myp{S,\ell,t/\eps,\kappa}
		={}& -\lambda^2 \int_{\myp{\Omega^{\ast}}^4} \ids k \delta_L \myp{k_0-k_1-k_2-k_3} \hat{g} \myp{k_0}^{\ast} \hat{f} \myp{k_0} \widehat{V} \myp{k_1+k_2}^2 \nn \\
		&\times \Phi_1 \myp{-k_2,-k_3,k_0} \Phi_1 \myp{k_1,k_2,k_3} W_L^{\lambda} \myp{k_0} W_L^{\lambda} \myp{k_2} W_L^{\lambda} \myp{k_3} \nn \\
		&\times \int_{\myp{\R_+}^3} \ids s \delta \myp[\bigg]{ \frac{t}{\eps}-\sum_{j=0}^2 s_j} \prod_{j=0}^{2} e^{-s_j \kappa_{2-j}} e^{- i s_1 \Theta \myp{k_1,k_2,k_3,1}}.
	\label{eq:L1}
	\end{align}
\end{example}

\section{A review of momentum graphs}\label{Sec:MomentumGraph}
In Subsections \ref{Sec:MainDiagram} and \ref{Sec:AmputatedDiagram}, we introduced Feynman diagrams for the main and error terms of the Duhamel expansion of $Q^{\lambda} \myb{g,f} \myp{t}$.
We will combine these two diagrams into a single class of diagrams called momentum graphs, and review the general properties of these graphs.
Since the general structure of the graphs appearing in our problem coincides completely with those appearing in the setting of the weakly nonlinear Schr\"odinger equation \cite{LukkarinenSpohn:WNS:2011}, we can reuse the toolbox that has already been developed for the momentum graphs in \cite{LukkarinenSpohn:WNS:2011}.
For the sake of completeness and for later reference, we recall in this section the relevant notions and results on momentum graphs, but mostly refer to \cite{LukkarinenSpohn:WNS:2011} for the proofs.

We add one more delta-function to \eqref{eq:DefFmain} by multiplying with a factor
\begin{align}
\label{eq:factormomentumgraphs}
	1 ={} \int_{\Omega^{\ast}} \ids k_{e_0} \delta_L \myp{k_{e_0}-k-k'},
\end{align}
where $\myp{k,k'}=\myp{k_{n,1},k_{0,0}}$ for a main diagram and  $\myp{k,k'} = \myp{k_{n,1},k_{n,1}'}$ for the diagram of an error term.
In the new graph, the additional delta function is represented by introducing two new vertices and one new edge to the graph.
One of the new vertices is a \emph{fusion vertex} connecting two edges $k$ and $k'$ to the new edge $k_{e_0}$, and the other vertex is a \emph{root vertex} placed at the top of the graph, at the other end of $k_{e_0}$.
The new graph is called the \emph{momentum graph}.
We denote a momentum graph by $\mathscr{G}= \myp{\mathscr{V},\mathscr{E}}$, where $\mathscr{V}$ and $\mathscr{E}$ are the sets of vertices and edges.  

This graph depends on the cluster partitioning $S$, the number of interaction vertices $n'$ and $n$ in the minus and plus trees, the interaction histories $\ell'$ and $\ell$ in the minus and plus trees, and finally the interlacing function $J$.
In the case of a main diagram, $n'=0$, and then $\ell',J$ are not relevant. 
As a consequence, given $S,n',n,\ell',\ell,J$, we can reconstruct the momentum graph.

\subsubsection*{Construction of momentum graphs}
Given the parameters $S,n',n,\ell',\ell,J$, the momentum graph can be reconstructed the following way (see Figure \ref{Fig4}). 
\begin{figure}
\centering
	\includegraphics{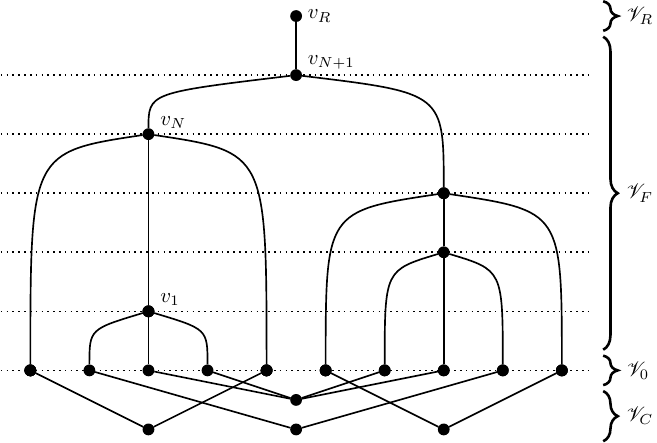}
	\caption{An example of a momentum graph with $n=n'=2$ and cluster decomposition $S = \Set{ \Set{1,5}, \Set{2,9}, \Set{3,4,7,8}, \Set{6,10}}$.
		The sets $\mathscr{V}_R$, $\mathscr{V}_F$, $\mathscr{V}_0$, $\mathscr{V}_C$ and the vertices $v_R$, $v_N$ are marked on the graph.}
\label{Fig4}
\end{figure}

\begin{itemize}
	\item Let $N=n+n'$ denote the total number of interactions.
	We start with $\mathscr{G}^0= \myp{\mathscr{V}^0,\mathscr{E}^0}$, where $\mathscr{V}^0$ contains two vertices $v_R$ and $v_{N+1}$, and $\mathscr{E}^0$ contains only one edge $e_0$ that connects those two vertices.

	\item At the first iteration, we attach one new edge $e_1$ to $v_{N+1}$.
	This edge belongs to the minus tree.
	Using $J$, we can determine if the final interaction should be in the minus tree.
	If this is the case, we label the edge $e_1 = \Set{v_{N+1},v_N}$; otherwise, we do not label it.
	
	\item In the second iteration, we attach a new edge $e_2$ to $v_{N+1}$.
	This edge then belongs to the plus tree.
	If the final interaction does not occur in the minus tree, then it is in the plus tree, and we label $e_2 = \Set{v_{N+1},v_N}$.
	
	\item The next three steps are to attach another three edges to $v_N$, from left to right (with parities $-1,\sigma,1$, as explained in \cref{Sec:MainDiagram,Sec:AmputatedDiagram}).
	For the moment, we leave the vertices that these edges connect to unlabelled.
	
	\item Using $\ell,\ell'$ and $J$, we can determine which edge should connect to the next interaction vertex.
	This edge has an unlabelled vertex which we now denote by $v_{N-1}$.
	
	\item This procedure of attaching three new edges to the most newly created interaction vertex and then labelling the next interaction vertex is iterated $N$ times in total.
	This produces a tree starting from $v_R$ with a vertex set composed of $N+2$ labelled and $2N+2$ unlabelled vertices.

	\item The labelled vertices belong to $\mathscr{V}_R= \Set{v_R}$ or $\mathscr{V}_F = \Set{v_1,\cdots,v_{N+1}}$.
	The set $\mathscr{V}_F$ is called the \emph{fusion vertex set}.
	When excluding the topmost fusion vertex $v_{N+1}$, the remaining fusion vertices $\mathscr{V}_I =\mathscr{V}_F \setminus \Set{v_{N+1}}$ are called \emph{interaction vertices}.
	They correspond to the interaction vertices in the original diagrams from \cref{sec:feynman_diagrams}.
	The set of the remaining vertices is denoted by $\mathscr{V}_0$ and is called the \emph{initial time vertex} set.
	
	\item For each cluster $A$ in $S$, we add a vertex labelled $u_A$.
	The set $\mathscr{V}_C = \Set{u_A}_{A\in S}$ is called the \emph{cluster vertex set}.
	Then the vertex set of the final graph $\mathscr{G}$ is $\mathscr{V} = \mathscr{V}_R \cup \mathscr{V}_F \cup \mathscr{V}_0 \cup \mathscr{V}_C$.
	The remaining edges connecting the initial time vertices to the cluster vertices can be added in a natural way (going through $\mathscr{V}_0$ from left to right). 
\end{itemize} 
The above process gives us an unoriented graph $\mathscr{G} = \mathscr{G} \myp{S,J,n,\ell,n',\ell'}= \myp{\mathscr{V},\mathscr{E}}$.
For two edges $e$ and $e'$, we say that $e<e'$ if $e$ is created before $e'$.
This defines then a complete order $e\leq e'$ on $\mathscr{E}$.
 
\subsection{Properties of momentum graphs}
\label{Sec:PropMometum}
We collect here various results on the properties of momentum graphs.
We simply recall the relevant definitions and constructions, but otherwise state the results without proof, since all the proofs are contained in \cite{LukkarinenSpohn:WNS:2011}.
To study the properties of the momentum graph $\mathscr{G}$ constructed above, we need the following concepts.
\begin{itemize}
	\item Define the function $\mathcal{T}: \mathscr{V} \to \myb{0,N+2}$, in which $\mathcal{T} \myp{v} = N+2$ if $v\in \mathscr{V}_R$, $\mathcal{T} \myp{v}=j$ if $v=v_j$ for $j \in \Set{1,\cdots,N+1}$, and $\mathcal{T} \myp{v}=0$ in the other cases (meaning $v \in \mathscr{V}_0 \cup \mathscr{V_C}$).
	This function gives a natural time-order on the vertices of $\mathscr{G}$.
	
	\item We also define 
	\begin{equation*}
		\widehat{\mathcal{T}} \myp{e} 
		={} \max \Set{\mathcal{T} \myp{v} \mid v\in e} \mbox{ for }  e\in \mathscr{E}.
	\end{equation*}
	It is clear that $e\leq e'$ leads to $\widehat{\mathcal{T}} \myp{e} \geq \widehat{\mathcal{T}} \myp{e'}$. 
	
	\item For any $v \in \mathscr{V}$, we denote the set of edges attached to $v$ by $\mathscr{E} \myp{v} = \Set{e \in \mathscr{E} \mid v \in e}$. 
	
	\item For $v \in \mathscr{V} \setminus \mathscr{V}_R$, there is a delta function associated to it
	\begin{equation}
	\label{eq:deltavertex}
		\delta_L \myp[\bigg]{ \sum_{e\in \mathscr{E}_+ \myp{v}} k_e - \sum_{e \in \mathscr{E}_- \myp{v}} k_e },
	\end{equation}  
	where $\mathscr{E} \myp{v} = \mathscr{E}_+ \myp{v} \cup \mathscr{E}_- \myp{v}$.
	If $v \in \mathscr{V}_C$, then $\mathscr{E}_- \myp{v} = \emptyset$ and  $\mathscr{E}_+ \myp{v} = \mathscr{E} \myp{v}$.
	Otherwise, $\mathscr{E}_+ \myp{v} = \Set{e}$, where $e$ is the first edge attached to $v$ in the graph construction described above (meaning that $e$ maximizes $\widehat{\mathcal{T}} \myp{e'}$ for $e' \in \mathscr{E} \myp{v}$), and $\mathscr{E}_- \myp{v} = \mathscr{E} \myp{v} \setminus \Set{e}$.
	
	\item The main aim of this section is to develop a standard procedure for integrating out all the delta functions.
	This can be done (see below) by associating to each vertex a particular edge which will be used in the integration.
	This edge is called \emph{integrated} and the remaining edges are called \emph{free}.
	We denote by $\mathscr{E}'$ the set of integrated edges and by $\mathscr{F}$ the set of free edges.
	We call this process \emph{integrating the momentum constraints}.
	
	\item For any fusion vertex $v \in \mathscr{V}_F$, we then call the number of free edges in $\mathscr{E}_- \myp{v}$ the \emph{degree} of the fusion vertex and denote it by $\deg \myp{v}$.
\end{itemize}  

\subsubsection*{Resolution of the momentum constraints}
Below, we introduce the algorithm that determines the sets of free and integrated edges in the graph $\mathscr{G}$.

\begin{theorem}
\label{thm:momentum_resolution}
	Let $\mathscr{G}$ be a momentum graph.
	There exists a spanning tree $\mathscr{T}$ of $\mathscr{G}$ that determines an integration of the momentum constraints.
	It satisfies that for any free edge $f$, all the edges $e<f$ satisfy that $k_{e}$ is independent of $k_{f}$.
	Moreover, all free edges end at a fusion vertex: if $f$ is free, there is a fusion vertex $v \in \mathscr{V}_F$ and $v' \in \mathscr{V}$ such that $\mathcal{T} \myp{v} > \mathcal{T} \myp{v'}$ and $f$ is the edge connecting $v$ and $v'$.
\end{theorem}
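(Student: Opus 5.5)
We will construct the spanning tree $\mathscr{T}$ explicitly by running through the edges in the reverse of their creation order $<$, following the integration algorithm of \cite{LukkarinenSpohn:WNS:2011}. The momentum graph has the property that every vertex $v\in\mathscr{V}\setminus\mathscr{V}_R$ carries exactly one $\delta_L$-constraint \eqref{eq:deltavertex}. The root $v_R$ carries none, and it is connected to the rest of the graph through $e_0$. Since $\mathscr{G}$ is connected (the plus and minus trees are joined through $v_{N+1}$ and the cluster vertices), a spanning tree has $\abs{\mathscr{V}}-1$ edges, which matches the number of constraints. The plan is to assign to each non-root vertex $v$ a distinct \emph{integrated} edge $e(v)\in\mathscr{E}(v)$ in such a way that the integrated edges form a spanning tree. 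The remaining edges are declared free.

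The construction proceeds as follows.
\begin{enumerate}[(1)]
	\item Order the non-root vertices by increasing time $\mathcal{T}$. Among vertices with $\mathcal{T}=0$, take the cluster vertices first and then the initial-time vertices.
	\item For an initial-time vertex $v\in\mathscr{V}_0$, integrate the edge going up to its parent fusion vertex, since its other edge leads to a cluster vertex.
	\item For a cluster vertex $u_A$, integrate one of its edges, chosen as the largest edge (in the order $<$) that does not close a cycle with the edges already integrated.
	\item Proceed upward through the fusion vertices $v_1,\dots,v_{N+1}$. At each $v_j$, integrate the unique edge in $\mathscr{E}_+(v_j)$, i.e.\ the edge going upwards. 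This gives $e_0$ at $v_{N+1}$.
\end{enumerate}
It is cleaner to phrase this as a Kruskal-type greedy procedure: scan the edges in decreasing order of $<$ (latest created first) and add an edge to $\mathscr{T}$ whenever it does not create a cycle. This yields a spanning tree. One then checks that each vertex is matched with exactly one tree edge by orienting $\mathscr{T}$ away from the root $v_R$.

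The independence statement follows from the triangular structure. If we orient $\mathscr{T}$ from the root, each non-root vertex's constraint can be solved for the momentum of the tree edge joining it to its parent. This expresses every integrated momentum $k_e$ as a signed sum of the free momenta $k_f$ that lie in the fundamental cycle through $e$. The claim that $k_e$ is independent of $k_f$ whenever $e<f$ is then equivalent to saying that the fundamental cycle of any free edge $f$ contains only edges $e\ge f$. This is exactly what the greedy order guarantees. A free edge $f$ was rejected because it closed a cycle made of tree edges already chosen, and those were scanned before $f$, i.e.\ they are created after $f$.

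For the last assertion, we note that free edges cannot be $e_0$, cannot join an initial vertex to its parent, and cannot be the upward edge of an interaction vertex, because all of these are processed and accepted at the stage where they cannot yet close a cycle. One must rule out cycles through cluster-edge connections to justify this. I expect this step to be the main obstacle: the claim is that every rejected edge is a downward edge $\{v,v'\}$ from a fusion vertex $v$ with $\mathcal{T}(v)>\mathcal{T}(v')$. Cluster edges and initial-time edges are the last created and hence the first scanned, so they could in principle be the ones accepted at cycle closure. Careful bookkeeping of the creation order is needed so that the edge removed from each cycle is always the earliest-created edge, which is one of the three lower legs of some fusion vertex. Since this combinatorics coincides with the weakly nonlinear Schr\"odinger setting, I would ultimately import the verification from \cite{LukkarinenSpohn:WNS:2011}.
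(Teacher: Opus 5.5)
Your Kruskal-type scan (edges in decreasing order of $<$, keeping an edge unless it closes a cycle) is exactly the paper's construction. So is orienting the tree toward $v_R$ to get the bijection $v\mapsto E(v)$, and so is the fundamental-cycle argument for the independence of $k_e$ from $k_f$ when $e<f$; that part is fine. The problem is your treatment of the last assertion. Your preliminary rules (2) and (4) are also wrong and should be discarded.

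You claim that a free edge can never join an initial-time vertex to its parent fusion vertex, nor be the upward edge $\mathscr{E}_+(v_j)$ of an interaction vertex. Both claims are false.

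\emph{First claim.} Suppose the three initial-time vertices below $v_1$ lie in a single cluster $A$ (so $|A|\ge 4$). The scan first accepts all cluster edges. It then accepts the first scanned lower edge of $v_1$ and rejects the other two, which close cycles through $u_A$. Thus $\deg(v_1)=2$ and two edges from $v_1$ down to initial-time vertices are free. For those two vertices the integrated edge $E(w)$ is the cluster edge, which contradicts rule (2).

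\emph{Second claim.} $\mathscr{E}_+(v_i)$ is a lower edge of some $v_j$ with $j>i$. It is rejected whenever, at the moment it is scanned, $v_i$ is already connected to $v_j$ through a previously scanned lower edge of $v_j$ and the clusters. This contradicts rule (4).

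In fact the two classes you exclude, together with $e_0$, are \emph{all} the non-cluster edges. Your claim would therefore force every free edge to be a cluster edge, the opposite of what is to be proved. Your worry that cluster edges, being scanned first, ``could be the ones accepted at cycle closure'' is also backwards. Being scanned first is precisely what guarantees they are kept.

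The correct argument is short and needs nothing imported from Lukkarinen--Spohn (the paper only sketches this part and defers to them).
\begin{enumerate}[(i)]
	\item Every edge other than $e_0$ and the cluster edges is either $e_1$, $e_2$, or one of the three lower edges of some $v_j$. Each such edge joins a fusion vertex $v$ to a vertex $v'$ labelled later: an interaction vertex of smaller index, or an initial-time vertex. Hence $\mathcal{T}(v)>\mathcal{T}(v')$.
	\item It therefore suffices that $e_0$ and all cluster edges lie in $\mathscr{T}$.
	\item The cluster edges are created last, hence scanned first. Since each initial-time vertex lies in exactly one cluster, they form vertex-disjoint stars centred at the $u_A$, so none of them closes a cycle.
	\item $e_0$ is the only edge at the leaf $v_R$, so it can never close a cycle.
\end{enumerate}
Equivalently, the rejected edge in each cycle is the earliest-created edge of that cycle. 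It is never a cluster edge, because cluster edges alone are acyclic and are created after all others. It is never $e_0$, because $e_0$ lies on no cycle. Hence it is a lower edge of a fusion vertex, which is your own final sentence made precise.
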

\begin{proof}[Part of proof]
	The construction consists of two steps.
	First, we will construct an unoriented tree $\mathscr{T} = \myp{\mathscr{V}_{\mathcal{T}}, \mathscr{E}_{\mathcal{T}}}$ from the graph $\mathscr{G}$.
	This can be done by an iteration process.
	Let us first start with $\mathscr{T}^{\myp{0}} = \myp{ \mathscr{V}_{\mathcal{T}}^{\myp{0}}, \mathscr{E}_{\mathcal{T}}^{\myp{0}} }$ in which both $\mathscr{V}_{\mathcal{T}}^{\myp{0}},\mathscr{E}_{\mathcal{T}}^{\myp{0}}$ are empty sets.
	The key idea is to go through all edges in $\mathscr{E}$ in the opposite direction that they were created above.
	To be more precise, suppose that at step $l$, the graph $\mathscr{T}^{\myp{l-1}}$ is created.
	Now, at step $l$, the corresponding edge is denoted by $e$.
	If adding $e$ to $\mathscr{T}^{\myp{l-1}}$ creates a loop i.e. a subset of the graph in which we can start with any vertex, follow the edges and return to the same vertex (see Figure \ref{Fig5}), then we do not add $e$ to $\mathscr{T}^{\myp{l-1}}$ and set $\mathscr{T}^{\myp{l}} = \mathscr{T}^{\myp{l-1}}$. 
	
	\begin{figure}
	\centering
		\includegraphics{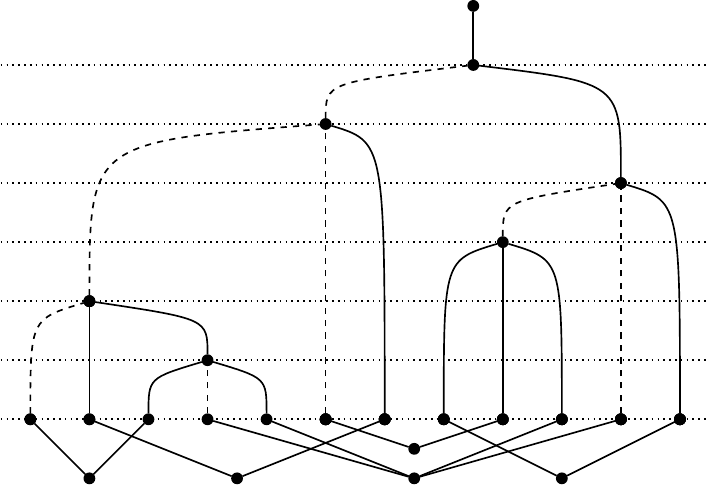}
		\caption{A momentum graph with $n'=3$ and $n=2$, where the free edges are denoted by dashed lines.
		The remaining (integrated) edges constitute the spanning tree constructed in the proof of \cref{thm:momentum_resolution}.}
	\label{Fig5}
	\end{figure}
	If adding $e$ to $\mathscr{T}^{\myp{l-1}}$ does not create a loop, then we define $\mathscr{T}^{\myp{l}}$ to be the graph created from this addition
	That is, $\mathscr{V}_{\mathcal{T}}^{\myp{l}}$ is created by adding the vertices of $e$ to $\mathscr{V}_{\mathcal{T}}^{\myp{l-1}}$ and $\mathscr{E}_{\mathcal{T}}^{\myp{l}}$ is created by adding  $e$ to $\mathscr{E}_{\mathcal{T}}^{\myp{l-1}}$.
	In the first case, since adding $e$ to $\mathscr{T}^{\myp{l-1}}$ would create a loop, the vertices of $e$ would already belong to $\mathscr{V}_{\mathcal{T}}^{\myp{l-1}} = \mathscr{V}_{\mathcal{T}}^{\myp{l}}$.
	In the second case, it is obvious that the vertices of $e$ also belong to $\mathscr{V}_{\mathcal{T}}^{\myp{l}}$.
	In both cases, no vertex in $e$ is lost in the iteration step.
	At the final step, we obtain the graph $\mathscr{T} = \myp{ \mathscr{V}_{\mathcal{T}}, \mathscr{E}_{\mathcal{T}} }$.
	It is then clear that $\mathscr{T}$ is a tree, and as every vertex of $\mathscr{V}$ belongs to some edge, we have $\mathscr{V}_{\mathcal{T}} = \mathscr{V}$.
	Moreover, no loop is contained in $\mathscr{E}_{\mathcal{T}} \subset \mathscr{E}$; and for every $e \in \mathscr{E} \setminus \mathscr{E}_{\mathcal{T}}$, adding $e$ to $\mathscr{E}_{\mathcal{T}}$ would create a loop  composed of edges $e'$ satisfying $e'\geq e$ and $\widehat{\mathcal{T}} \myp{e'} \leq \widehat{\mathcal{T}} \myp{e}$.
	
	Second, we create a new oriented tree $\mathscr{T}_0$ by adding an orientation to $\mathscr{T}$.
	We denote the root of the tree to be $v_R$ and iterate the following algorithm.
	Now, let us notice that for any vertex $v$, there is a unique path connecting $v_R$ and $v$.
	The path can be oriented such that it begins with $v$ and ends with $v_R$.
	This process is iterated for all the edges of the tree $\mathscr{T}$.
	The iterations may have two different vertices that share edges along the path, but these edges have the same orientation at all of the steps in the algorithm.
	Moreover, if two such paths share a vertex, the paths must coincide after this vertex; otherwise, a loop exists in the graph, which is a contradiction with the construction described above.
	The resulting graph has the following important feature: for every $v \in \mathscr{V} \setminus \mathscr{V}_R$, there is unique edge $ E \myp{v} \in \mathscr{E} \myp{v}$ pointing out of the vertex.
	Moreover, the map $E: \mathscr{V} \setminus \mathscr{V}_R \to \mathscr{E}_{\mathcal{T}}$ is one-to-one. 
	
	As a result, we can integrate all the delta functions, using the variable $k_{E \myp{v}}$ associated to the edge $E \myp{v}$.
	After integrating all the edges, the remaining variables consist of $k_e$, with $e \in \mathscr{E} \setminus \mathscr{E}_{\mathcal{T}}$.
	These are free integration variables, and we set $ \mathscr{F} = \mathscr{E} \setminus \mathscr{E}_{\mathcal{T}}$ to be the set of free edges and $\mathscr{E}'=\mathscr{E}_{\mathcal{T}}$ to be the set of integrated edges.
\end{proof}
The rest of this subsection mainly describes how the integrated edges of a graph depend on the free ones.
First, we define a few more concepts related to the oriented graph. 
\begin{itemize}
	\item For any $v \in \mathscr{V}$, let $\mathscr{F} \myp{v}$ denote the set of free edges attached to $v$, i.e. $ \mathscr{F} \myp{v} \coloneq \mathscr{E} \myp{v} \cap \mathscr{F}$.
	
	\item Any $v \in \mathscr{V} \setminus \mathscr{V}_R$ is associated with an \emph{edge parity} mapping $\sigma_v : \mathscr{E} \myp{v} \to \Set{\pm 1}$, defined by 
	\begin{equation}
	\label{eq:edgeparity}
		\sigma_v \myp{e} ={}
		\begin{cases}
			1, &\mbox{ if } e \in \mathscr{E}_+ \myp{v},\\
			-1, &\mbox{ if } e \in \mathscr{E}_- \myp{v}.
		\end{cases}
	\end{equation}
\end{itemize}
The following lemmas detail how the integrated edges belonging to an interaction vertex may depend on the free edges of the graph.
\begin{lemma}
\label{lemma:degreeofafusionvertex}
	Let $v$ be any fusion vertex.
	Then $\mathrm{deg} \myp{v} \in \Set{0,1,2}$. 
	\begin{enumerate}[(a)]
		\item If $v \in \mathscr{V}_I$ and $\mathrm{deg} \myp{v} = 1$, then $\mathscr{E}_- \myp{v} = \Set{f,e,e'}$, where $f$ is a free edge and $k_e = -k_f + q$, $k_{e'}=q'$, where both $q$ and $q'$ are independent of $k_f$. 
		
		\item If $v \in \mathscr{V}_I$ and $\mathrm{deg} \myp{v} = 2$, then $\mathscr{E}_- \myp{v} = \Set{f,f',e}$, where $f,f'$ are free edges and $k_e = -k_f -k_{f'} + q$, where $q$ is independent of both $k_f$ and $k_{f'}$. 
	\end{enumerate}
\end{lemma}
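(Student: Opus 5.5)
The plan is to work directly from the construction of the spanning tree $\mathscr{T}$ in the proof of \cref{thm:momentum_resolution}: edges are examined in the reverse of their creation order, and an edge becomes free exactly when adding it would close a loop. Fix a fusion vertex $v$. By construction $\mathscr{E}(v)$ consists of one edge $e_+$ (the unique element of $\mathscr{E}_+(v)$) and the edges of $\mathscr{E}_-(v)$. For $v=v_{N+1}$ there are two of these, and for $v\in\mathscr{V}_I$ there are three, namely the edges attached from left to right with parities $-1,\sigma,1$.

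\emph{Degree bound.} Every vertex lies in $\mathscr{T}$, since $\mathscr{T}$ is spanning. Moreover, $E(v)$ is an integrated edge in $\mathscr{E}(v)$. Orienting towards $v_R$, the path from $v$ to the root passes through $e_+$: the edge $e_+$ connects $v$ to a vertex of larger $\mathcal{T}$-value, and every edge of the loop closed by a free edge $f\in\mathscr{E}_-(v)$ has $\widehat{\mathcal{T}}\le\widehat{\mathcal{T}}(f)=\mathcal{T}(v)$. I would argue that $e_+$ is never free. It is created before all edges of $\mathscr{E}_-(v)$, so it is processed after them. When it is processed, its upper endpoint has larger time and is not yet connected to the subtree below $v$ through edges of time at most $\mathcal{T}(v)$, so no loop is formed. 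Hence $\mathscr{E}_-(v)$ cannot consist entirely of free edges only if one more argument is made. Indeed, the edge of $\mathscr{E}_-(v)$ processed first among them, which is the last one created, is examined while $v$ is not yet in the current forest, unless $v$ was reached from below. I expect the cleaner argument to be the following. If all three edges of $\mathscr{E}_-(v)$ were free, then $v$ would be joined to the rest of $\mathscr{T}$ only through $e_+$. The momentum of $e_+$ would then be forced, via the constraint at $v$, to be a function of the three free momenta alone. That contradicts the independence statement of \cref{thm:momentum_resolution} for the edge $e_+<f$. So at least one edge of $\mathscr{E}_-(v)$ is integrated, and $\deg(v)\le 2$.

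\emph{Parts (a) and (b).} Write the constraint at $v$, from \eqref{eq:deltavertex}, as $k_{e_+}=\sum_{e\in\mathscr{E}_-(v)}k_e$.
\begin{itemize}
\item For (b), let $\mathscr{E}_-(v)=\{f,f',e\}$ with $f,f'$ free. Then $k_e=k_{e_+}-k_f-k_{f'}$. Since $e_+<f$ and $e_+<f'$ (it was created earlier), \cref{thm:momentum_resolution} gives that $q:=k_{e_+}$ is independent of $k_f$ and $k_{f'}$.
\item For (a), with $f$ free and $e,e'$ integrated, one of the two integrated edges, say $e'$, must not lie on the loop closed by $f$. For that edge I would show that $k_{e'}$ is given by the sum over the subtree hanging below it, which contains no $f$-dependence. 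The other edge, $e$, then satisfies $k_e=k_{e_+}-k_{e'}-k_f=-k_f+q$.
\end{itemize}

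\emph{Main obstacle.} The delicate point is the independence of $k_{e'}$ from $k_f$ in (a). Both $e$ and $e'$ may have been created before or after $f$, so the ordering statement in \cref{thm:momentum_resolution} does not apply directly. To handle this, I would use the structure of $\mathscr{T}$ as a tree. Removing an integrated edge $e'$ splits $\mathscr{T}$ into two components, and $k_{e'}$ equals the signed sum of the free momenta whose fundamental cycles cross the cut at $e'$. The cycle of $f$ passes through $v$ and uses exactly two edges of $\mathscr{E}(v)$, namely $f$ itself and one integrated edge. I would call that integrated edge $e$; the remaining edge $e'$ is then not on the cycle, so $k_{e'}$ has no $k_f$-dependence.
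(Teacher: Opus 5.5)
Your final argument is correct, but it reaches the degree bound by a detour and carries two false side claims. The paper gives no proof of its own here and defers to \cite{LukkarinenSpohn:WNS:2011}. The argument that the construction of \cref{thm:momentum_resolution} is designed for is the one you started and then abandoned.

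\textbf{The direct argument for the degree bound.} The only edges at $v$ are $e_+$ and those of $\mathscr{E}_-(v)$, and $e_+$ is created first. So the reverse sweep examines all of $\mathscr{E}_-(v)$ before $e_+$. Every edge examined before the last-created edge of $\mathscr{E}_-(v)$ is either a cluster edge or belongs to some $\mathscr{E}_-(v_i)$ with $\mathcal{T}(v_i)<\mathcal{T}(v)$, so none of them touches $v$. Hence $v$ is isolated in the current forest at that moment. That edge therefore closes no loop and is integrated, and $\deg(v)\le 2$ follows at once. Your caveat ``unless $v$ was reached from below'' is vacuous.

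\textbf{Your replacement argument.} You argue that if all of $\mathscr{E}_-(v)$ were free, then $E(v)=e_+$ and $k_{e_+}=k_f+k_{f'}+k_{f''}$, contradicting the independence clause of \cref{thm:momentum_resolution} since $e_+<f$. This is valid, and it uses the theorem only as a black box. However, that clause is itself a consequence of the same processing order. The direct argument gives more: it tells you \emph{which} lower edge, the last-created one, is always integrated.

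\textbf{Parts (a) and (b).} Your treatment is the standard one. It combines the constraint at $v$, the independence of $k_{e_+}$ from the later-created free edges, and the cut/fundamental-cycle description of integrated momenta, which is the content of \cref{lemma:orientedpathedcomposition}. In (a), keep the bulleted version: take $e'$ off the fundamental cycle of $f$ and put $q=k_{e_+}-k_{e'}$. The relabelling in your last paragraph tacitly assumes the cycle leaves $v$ through $\mathscr{E}_-(v)$ rather than through $e_+$. That is true, since the cycle consists of edges created no earlier than $f$, but you did not say so.

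\textbf{Claims to delete.} You assert that ``$e_+$ is never free'' and that the tree path from $v$ to $v_R$ passes through $e_+$. Both are false, although neither is used in your final argument. Take the loss motive L1 or L5. The rightmost lower edge of the upper vertex is examined first. It lands on an initial vertex that is already joined, through a pairing, to the component of the lower vertex. So the middle edge between the two interaction vertices, which is the $e_+$ of the lower one, closes a loop and is free. In \cref{ex:L1}, for instance, the algorithm makes $k_{1,1}$ free. Consequently $E(v)$ for the lower vertex is one of its own lower edges.
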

It is worth noting that the vertex degrees are \emph{not} a graph invariant; they depend on the construction of the spanning tree in \cref{thm:momentum_resolution}.
It is easy to construct a different spanning tree such that the vertices can have $\deg \myp{v} = 3$.
\begin{lemma}
\label{lemma:orientedpathedcomposition}
	For any integrated edge $e = \myp{v_1,v_2} \in \mathscr{E}'$, the following identities hold true
	\begin{align}
		k_e 
		={}& \sum_{v \in \mathscr{P} \myp{v_1}} \sum_{f = \myp{v,v_f} \in \mathscr{F} \myp{v}} \1 \myp{v_f \notin \mathscr{P} \myp{v_1}} \myp[\big]{-\sigma_{v_1} \myp{e} \sigma_v \myp{f} } k_f \nn \\
		={}& -\sigma_{v_1} \myp{e} \sum_{f \in \mathscr{F}} \1 \myp{\exists v \in f \cap \mathscr{P} \myp{v_1} \mbox{ and } f \cap \mathscr{P} \myp{v_1}^c \neq \emptyset} \sigma_v \myp{f} k_f.
	\label{eq:orientedpathedcomposition}
	\end{align}
	Moreover, if $f = \myp{v,v'}$, such that $f\neq e$, $v \in \mathscr{P} \myp{v_1}$ and $v' \notin \mathscr{P} \myp{v_1}$, then $f$ is free. 
\end{lemma}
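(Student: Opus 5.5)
We prove \cref{lemma:orientedpathedcomposition} by summing the vertex delta constraints \eqref{eq:deltavertex} over the subtree that hangs below $e$ in the oriented spanning tree $\mathscr{T}_0$ of \cref{thm:momentum_resolution}. Here $\mathscr{P}\myp{v_1}$ denotes the set of vertices $v$ whose oriented path to $v_R$ passes through $v_1$, including $v_1$ itself. In other words, it is the branch of $\mathscr{T}_0$ rooted at $v_1$. Since $e=\myp{v_1,v_2}$ is integrated and oriented from $v_1$ towards the root, we have $e=E\myp{v_1}$, and $v_2\notin\mathscr{P}\myp{v_1}$. This set does not contain $v_R$, so every vertex in it carries a delta constraint.

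\textbf{Step 1: sum the constraints.} For each $v\in\mathscr{P}\myp{v_1}$ the constraint reads $\sum_{e'\in\mathscr{E}\myp{v}}\sigma_v\myp{e'}k_{e'}=0$. We add these over all $v\in\mathscr{P}\myp{v_1}$. Consider an edge $e'=\myp{v,w}$ with both endpoints in $\mathscr{P}\myp{v_1}$. It appears twice, once as $\sigma_v\myp{e'}$ and once as $\sigma_w\myp{e'}$. By construction of the graph, each non-cluster edge is the unique $\mathscr{E}_+$ edge at its lower endpoint and lies in $\mathscr{E}_-$ at its upper endpoint. A cluster edge is in $\mathscr{E}_+$ at the cluster vertex and in $\mathscr{E}_-$ at the initial time vertex, since there $\mathscr{E}_+$ is the edge going up. So the two parities are always opposite, and internal edges cancel.

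What remains is $\sum_{v\in\mathscr{P}\myp{v_1}}\sum_{e'\in\mathscr{E}\myp{v},\,e'\not\subset\mathscr{P}\myp{v_1}}\sigma_v\myp{e'}k_{e'}=0$.

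\textbf{Step 2: identify the boundary edges.} We claim that $e$ is the only tree edge leaving $\mathscr{P}\myp{v_1}$. Indeed, a tree edge $\myp{v,w}$ with $v\in\mathscr{P}\myp{v_1}$ and $w\notin\mathscr{P}\myp{v_1}$ must be oriented either $v\to w$ or $w\to v$.
\begin{itemize}
\item If it is oriented $v\to w$, then the path from $v$ to the root reaches $w$ before it could pass through $v_1$. This is impossible unless $v=v_1$ and the edge is $e$.
\item If it is oriented $w\to v$, then the path from $w$ passes through $v$ and hence through $v_1$, so $w\in\mathscr{P}\myp{v_1}$, a contradiction.
\end{itemize}
Every other edge crossing the boundary is therefore not in $\mathscr{E}_{\mathcal{T}}$, so it is free. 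This is the last assertion of the lemma.

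\textbf{Step 3: solve for $k_e$.} Isolating the term of $e$ in the boundary sum and using $\sigma_{v_1}\myp{e}^2=1$ gives
\begin{equation*}
k_e=-\sigma_{v_1}\myp{e}\sum_{v\in\mathscr{P}\myp{v_1}}\sum_{f=\myp{v,v_f}\in\mathscr{F}\myp{v}}\1\myp{v_f\notin\mathscr{P}\myp{v_1}}\sigma_v\myp{f}k_f .
\end{equation*}
This is the first line of \eqref{eq:orientedpathedcomposition}. The second line is the same sum reindexed over $f\in\mathscr{F}$, with the indicator that $f$ has exactly one endpoint in $\mathscr{P}\myp{v_1}$, namely $v$.

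\textbf{Main obstacle.} The delicate point is the parity bookkeeping in Step 1. One has to check the sign conventions at every vertex type: interaction vertices, the top fusion vertex $v_{N+1}$, initial time vertices and cluster vertices. The exact form of the signs in \eqref{eq:orientedpathedcomposition} rests entirely on the pairwise cancellation of internal edges. A further small check is needed for a self-loop or double edge, for example a cluster containing two legs from the same place. Such an edge would appear in a vertex's own constraint with both signs and would also cancel, so it does not affect the formula.
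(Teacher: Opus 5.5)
Your proof is correct and is the standard argument: the paper states this lemma without proof and defers to \cite{LukkarinenSpohn:WNS:2011}, where $k_e$ is likewise obtained by summing (equivalently, recursively solving) the vertex constraints over the branch $\mathscr{P}\myp{v_1}$. There too, interior edges cancel because each edge lies in $\mathscr{E}_+$ at one endpoint and in $\mathscr{E}_-$ at the other, and $e$ is the only tree edge crossing the cut. Your closing worry about self-loops or double edges is vacuous: every edge in the construction gets a fresh lower endpoint and each initial time vertex carries exactly one cluster edge, so the momentum graphs are simple.
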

In the above formula, the edge $e$ is said to \emph{depend} on the edges $f$, and $k_e$ is said to \emph{depend} on $k_f$.
To put it in words, the integrated edge $e$ depends on any given free edge $f$ if and only if there is there is a path in the oriented tree $\mathscr{T}_0$ going to $e$ from exactly \emph{one} of the vertices connected to $f$ (and not from the other).
As a consequence of the lemma, given a free edge $f$, any integrated edge can depend on it only through $\pm k_f$.
It is not possible to have any other prefactors in the linear relations of \eqref{eq:orientedpathedcomposition}.
\begin{corollary}
\label{corollary:decompositionfreeedgetofreeedge}
	For any edge $e \in \mathscr{E}$, there exists a unique set of free edges $\mathscr{F}_e$ and parities $\sigma_{e,f} \in \Set{\pm 1}$ such that
	\begin{equation}
	\label{eq:decompositionfreeedgetofreeedge}
		k_e ={} \sum_{f\in \mathscr{F}_e} \sigma_{e,f} k_f. 
	\end{equation}
	Moreover, $k_e$ is independent of all free momenta if and only if $k_e = 0$.
	This is equivalent to $\mathscr{F}_e = \emptyset$, which occurs only when the number of connected components of $ \mathscr{G} $ increases by $ 1 $ when $e$ is removed from the graph. 
\end{corollary}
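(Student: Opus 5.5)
We treat free and integrated edges separately, and read off the bridge characterization from the cut structure of the spanning tree $\mathscr{T}$ in \cref{thm:momentum_resolution}.

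\emph{Existence.} If $e\in\mathscr{F}$ is free, we take $\mathscr{F}_e=\Set{e}$ and $\sigma_{e,e}=1$. If $e=\myp{v_1,v_2}\in\mathscr{E}'$ is integrated, \cref{lemma:orientedpathedcomposition} gives \eqref{eq:orientedpathedcomposition} directly. Here $\mathscr{P}\myp{v_1}$ is the set of vertices whose oriented path to $v_R$ in $\mathscr{T}_0$ passes through $v_1$, i.e.\ the branch of $\mathscr{T}$ hanging below $e$. We then set
\[
\mathscr{F}_e=\Set{f\in\mathscr{F} \mid f \text{ has exactly one endpoint in } \mathscr{P}\myp{v_1}}.
\]
For $f\in\mathscr{F}_e$, let $v$ denote the endpoint of $f$ lying in $\mathscr{P}\myp{v_1}$, and put $\sigma_{e,f}=-\sigma_{v_1}\myp{e}\sigma_v\myp{f}\in\Set{\pm1}$. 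Every coefficient is $\pm1$, and the lemma guarantees that no free edge is counted twice.

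\emph{Uniqueness.} After the delta functions are integrated, the variables $\myp{k_f}_{f\in\mathscr{F}}$ are independent integration variables ranging over $\myp{\Omega^{\ast}}^{\mathscr{F}}$. A linear form $\sum_f c_f k_f$ with $c_f\in\Set{0,\pm1}$ that vanishes identically must have all $c_f=0$. Hence two representations of $k_e$ must coincide, which gives uniqueness of $\mathscr{F}_e$ and of the parities. The same argument shows that $k_e$ is independent of all free momenta if and only if $\mathscr{F}_e=\emptyset$, and in that case the empty sum gives $k_e=0$. Conversely, if $k_e\equiv0$ then $\mathscr{F}_e=\emptyset$ by uniqueness.

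\emph{Bridge characterization.} This is the step needing the most care, and the key is to identify $\mathscr{F}_e$ with a graph cut.
\begin{enumerate}[(i)]
	\item A free edge is never a bridge. By construction of $\mathscr{T}$, adding a free edge to $\mathscr{E}_{\mathcal{T}}$ closes a loop, so the free edge lies on a cycle of $\mathscr{G}$. Since $\mathscr{F}_f=\Set{f}\neq\emptyset$ for free $f$, the equivalence holds for free edges, and it remains to treat integrated edges.
	\item For integrated $e$, removing $e$ from $\mathscr{T}$ splits the vertex set into $\mathscr{P}\myp{v_1}$ and its complement. The edges of $\mathscr{G}$ crossing this cut are $e$ itself together with the free edges having exactly one endpoint in $\mathscr{P}\myp{v_1}$, and the latter are precisely $\mathscr{F}_e$. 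No other tree edge can cross, since $\mathscr{P}\myp{v_1}$ is a branch of a tree.
	\item If $\mathscr{F}_e=\emptyset$, then $e$ is the only edge joining $\mathscr{P}\myp{v_1}$ to the rest of $\mathscr{G}$, so removing it increases the number of components by one.
	\item Conversely, if removing $e$ disconnects $\mathscr{G}$, then $e$ lies in every spanning tree and is therefore integrated. The two resulting components must be $\mathscr{P}\myp{v_1}$ and its complement, since both sets are connected within $\mathscr{T}\setminus\Set{e}$. Hence no free edge crosses the cut, and $\mathscr{F}_e=\emptyset$.
\end{enumerate}

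As a consistency check, summing the vertex delta functions \eqref{eq:deltavertex} over the component $\mathscr{P}\myp{v_1}$ should independently force $k_e=0$. This is the main point to verify carefully, namely that the signs $\sigma_v$ cancel for internal edges. I expect it to follow from each internal edge appearing once in $\mathscr{E}_+$ and once in $\mathscr{E}_-$ at its two endpoints.
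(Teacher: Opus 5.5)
Your proposal is correct and follows the route implicit in the paper, which states this corollary without proof and defers to Lukkarinen--Spohn. Existence comes from \cref{lemma:orientedpathedcomposition}, uniqueness from the independence of the free variables, and the bridge statement from the observation that the edges crossing the cut between $\mathscr{P}(v_1)$ and its complement are exactly $e$ together with $\mathscr{F}_e$; there is one harmless slip, namely that the difference of two representations has coefficients in $\{0,\pm1,\pm2\}$ rather than $\{0,\pm1\}$, but this still forces all coefficients to vanish (on $\T^d$, or on $\Omega^{\ast}$ once $L\geq 3$).
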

This corollary implies for instance that the edge $e_0$ attached to the root vertex $v_R$ always satisfies $k_{e_0}=0$, since removing $e_0$ isolates $v_R$ in the graph.
Hence, by \eqref{eq:factormomentumgraphs}, the top momenta of the plus and minus trees always sum to zero.
We already used this fact in the expansions of the main term $Q^{main}$ and the error terms $\bar{Q}^{err}_j$ in \cref{sec:feynman_diagrams}.
\begin{lemma}
	\label{lemma:freeedgesorder2}
	Let $v_0 \in \mathscr{V}_I$ and suppose that $\mathrm{deg} \myp{v_0} = 2$.
	Let $f,f'$ be the two free edges ending at $v_0$. 
	\begin{enumerate}[(a)]
		\item Consider an integrated edge $e= \myp{v_1,v_2} \in \mathscr{E}'$.
		Then $k_e = F_e \myp{k_f,k_{f'}} + q_e$, where $ q_e $ is independent of $k_f$ and  $k_{f'}$, and $ F_e \myp{k_f,k_{f'}} $ is one of the seven functions: $0$, $\pm k_f$, $\pm k_{f'}$, $\pm \myp{k_f+k_{f'}}$.
		
		\item If $v \in \mathscr{V}_I$, and suppose $e,e' \in \mathscr{E}_- \myp{v}$, $e\neq e'$, then $k_e + k_{e'} = F \myp{k_f,k_{f'}} + q$, where $ q $ is independent of $k_f$ and  $k_{f'}$, and $ F \myp{k_f,k_{f'}} $ is also one of the seven functions: $0$, $\pm k_f$, $\pm k_{f'}$, $\pm \myp{k_f+k_{f'}}$.
		
		\item If $v = v_0$, and suppose $e,e' \in \mathscr{E}_- \myp{v}$, $e\neq e'$, then $k_e + k_{e'} = F \myp{k_f,k_{f'}} + q$, where $ q $ is independent of $k_f$ and  $k_{f'}$, and $ F \myp{k_f,k_{f'}} $ is one of the three functions $- k_f$, $- k_{f'}$, or $k_f+k_{f'}$.
	\end{enumerate}
\end{lemma}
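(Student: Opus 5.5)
The plan is to derive all three parts from \cref{lemma:orientedpathedcomposition}, which represents an integrated edge $e=(v_1,v_2)$ as
\[
k_e = -\sigma_{v_1}(e)\sum_{f''\in\mathscr{F}} \1\bigl(f''\text{ has exactly one endpoint in }\mathscr{P}(v_1)\bigr)\,\sigma_{v''}(f'')\,k_{f''},
\]
where $v''$ denotes the endpoint of $f''$ lying in $\mathscr{P}(v_1)$. The coefficient of each free momentum is therefore $0$ or $\pm1$. For both $f$ and $f'$, the relevant endpoint is $v_0$ whenever $v_0\in\mathscr{P}(v_1)$, because the other endpoints lie strictly below $v_0$. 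By \eqref{eq:edgeparity} and the fact that $f,f'\in\mathscr{E}_-(v_0)$, we have $\sigma_{v_0}(f)=\sigma_{v_0}(f')=-1$.

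For part (a), the whole question is whether $f$ and $f'$ can enter $k_e$ with \emph{opposite} signs. Write $f=(v_0,w)$ and $f'=(v_0,w')$. The coefficient of $k_f$ is nonzero exactly when one of $v_0,w$ lies in $\mathscr{P}(v_1)$ and the other does not. There are two cases.
\begin{enumerate}[(i)]
\item If $v_0\in\mathscr{P}(v_1)$, then neither $w$ nor $w'$ lies in $\mathscr{P}(v_1)$. These are lower vertices whose tree paths to the root pass through their own integrated edges, not through $v_0$. If $w$ were in $\mathscr{P}(v_1)$, then $f$ would lie on a cycle whose other edges have $\widehat{\mathcal{T}}$ below $\mathcal{T}(v_0)$, contradicting the loop structure from \cref{thm:momentum_resolution}. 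Both coefficients then equal $-\sigma_{v_1}(e)\cdot(-1)$, so the contribution is $\pm(k_f+k_{f'})$.
\item If $v_0\notin\mathscr{P}(v_1)$, then the contributions come from $w$ or $w'$ lying in $\mathscr{P}(v_1)$, with parities $\sigma_w(f)$ and $\sigma_{w'}(f')$. The edges $f,f'$ attach to $w,w'$ as the "$+$'' edge: for an initial-time vertex the unique edge above it, and for an interaction vertex its first edge. Hence both parities equal $+1$, and whichever of $k_f,k_{f'}$ appear, they appear with the same sign.
\end{enumerate}
In either case $F_e$ is one of the seven functions listed, and free edges of $\mathscr{G}$ other than $f,f'$ are absorbed into $q_e$. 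If $e$ is itself $f$ or $f'$, part (a) holds trivially.

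For part (b), I apply (a) to $e$ and $e'$ separately. The only danger is a combination such as $k_f-k_{f'}$ or $2k_f$. To exclude it, I use that $e,e'\in\mathscr{E}_-(v)$ share the vertex $v$. Each of $\mathscr{P}(v_e)$ and $\mathscr{P}(v_{e'})$ (lower endpoints) either contains $v_0$ or does not, so I case-split on these two memberships.
\begin{enumerate}[(i)]
\item If both paths contain $v_0$, both pass through the same edge $E(v_0)$ and hence through the same chain up to $v$. But then at most one of $e,e'$ lies on this chain, which gives a contradiction unless the path for one of them meets $v_0$ only through the other. I expect this to collapse to total coefficient $0$ or $\pm(k_f+k_{f'})$.
\item The mixed cases give $\pm k_f$, $\pm k_{f'}$, or the sum, with consistent signs by the parity argument of (a).
\end{enumerate}

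Part (c) is the direct computation at $v_0$ itself. Here $\mathscr{E}_-(v_0)=\{f,f',e''\}$, and \cref{lemma:degreeofafusionvertex}(b) gives $k_{e''}=-k_f-k_{f'}+q$. The three possible pair sums are therefore $k_f+k_{f'}$, $k_f+k_{e''}=-k_{f'}+q$, and $k_{f'}+k_{e''}=-k_f+q$.

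The main obstacle is the sign-consistency argument in (a) and (b), namely showing that $f$ and $f'$ never enter with opposite signs. This needs a careful use of the spanning-tree construction order in \cref{thm:momentum_resolution}, specifically the fact that edges processed later attach to lower vertices. For this I would follow the corresponding argument in \cite{LukkarinenSpohn:WNS:2011}.
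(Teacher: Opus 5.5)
Part (c) is fine. The conclusion of (a) is also right, but your case (i) rests on a false claim. It is not true that $v_0\in\mathscr{P}(v_1)$ forces $w,w'\notin\mathscr{P}(v_1)$, and the loop structure of \cref{thm:momentum_resolution} gives no contradiction here, since every free edge lies on such a cycle. For example, let $e_+$ be the unique edge of $\mathscr{E}_+(v_0)$, and suppose $e_+$ is integrated with $E(v_0)=e_+$. Then $v_1=v_0$ and $w,w'\in\mathscr{P}(v_0)$. Consistently with this, $k_{e_+}=k_f+k_{f'}+k_{e''}$ is independent of $k_f,k_{f'}$ by \cref{lemma:degreeofafusionvertex}(b), so $F_{e_+}=0$, not $\pm(k_f+k_{f'})$. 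What the formula of \cref{lemma:orientedpathedcomposition} actually gives is this: each of $k_f,k_{f'}$ enters $k_e$ with coefficient $0$ or $\sigma_{v_1}(e)$ if $v_0\in\mathscr{P}(v_1)$, and with coefficient $0$ or $-\sigma_{v_1}(e)$ otherwise. That already proves (a), without any appeal to the construction order.

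The genuine gap is in (b). The only dangerous outcome is $\pm(k_f-k_{f'})$, which arises when $e$ carries only $k_f$, $e'$ carries only $k_{f'}$, and the signs are opposite. Your parity argument from (a) compares $f$ and $f'$ \emph{inside a single} $k_e$, so it says nothing about signs across two different edges. Those signs are governed by $\sigma_{v_1}(e)$ and $\sigma_{v_1'}(e')$ and by subtree memberships, where $v_1,v_1'$ are the tails of $e,e'$ in the root-oriented tree. Your case split does not control these. In particular the tail need not be the lower endpoint: a root path can enter $v$ through $e$ and leave through $e'$, which happens for instance whenever the upper edge of $v$ is free. So the assertion that at most one of $e,e'$ lies on the chain fails, case (i) ends in ``I expect'', and the mixed case is unjustified.

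A way to close it is to normalise each edge by its side away from $v$.
\begin{enumerate}
\item For integrated $e\in\mathscr{E}_-(v)$, let $T_e$ be the component of $\mathscr{T}\setminus\{e\}$ not containing $v$.
\item Sum the constraints \eqref{eq:deltavertex} over $T_e\setminus\{v_R\}$. Internal edges cancel, because their two endpoints carry opposite edge parities, and $k_{e_0}=0$. Since $\sigma(e)=+1$ at the lower endpoint, this gives $k_e=-\sum_{g\in\mathscr{F},\,\abs{g\cap T_e}=1}\sigma_{g\cap T_e}(g)\,k_g$.
\item For distinct $e,e'\in\mathscr{E}_-(v)$ the sets $T_e,T_{e'}$ are disjoint. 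Summing over $S=T_e\cup T_{e'}$ gives the same formula for $k_e+k_{e'}$, with $S$ in place of $T_e$.
\item Since $\sigma_{v_0}(f)=\sigma_{v_0}(f')=-1$ and $\sigma_w(f)=\sigma_{w'}(f')=+1$, the coefficients of $k_f$ and $k_{f'}$ are $\1(v_0\in S)-\1(w\in S)$ and $\1(v_0\in S)-\1(w'\in S)$. These are either both in $\{0,1\}$ or both in $\{0,-1\}$, so they never have opposite signs.
\item If $e$ or $e'$ is free, it is either $f$ or $f'$, in which case $v=v_0$ and (c) applies, or it carries an independent free momentum and (b) reduces to (a).
\end{enumerate}
Alternatively, you can argue at the $X$-vertex where the fundamental cycles of $f$ and $f'$ merge, using momentum conservation at that vertex.
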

In particular, the lemma rules out that an integrated momentum $k_e$ can depend on $k_f$ and $k_{f'}$ through $\pm \myp{k_f - k_{f'}}$.
\begin{lemma}
\label{lemma:removalofedges}
	Suppose that $e,e' \in \mathscr{E}$, $e \neq e'$, and $\mathscr{F}_e = \mathscr{F}_{e'}$.
	Then either $k_e = 0 = k_{e'}$ or if one removes both $e$ and $e'$, the graph $\mathscr{G}$ splits into two disconnected graphs.
\end{lemma}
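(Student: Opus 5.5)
The plan is to argue by linear algebra over the cycle space of the momentum graph, using \cref{corollary:decompositionfreeedgetofreeedge} and \cref{lemma:orientedpathedcomposition}. Assume $k_e$ and $k_{e'}$ are not both zero. Since $\mathscr{F}_e = \mathscr{F}_{e'}$, if one of them vanishes so does the other, because then $\mathscr{F}_e = \emptyset = \mathscr{F}_{e'}$. We may therefore take $\mathscr{F}_e = \mathscr{F}_{e'} \neq \emptyset$. The goal is then to show that $\mathscr{G} \setminus \Set{e,e'}$ is disconnected.

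The first step is to reinterpret $\mathscr{F}_e$ graphically. The delta constraints \eqref{eq:deltavertex} at every vertex except $v_R$ say that $k$ is a flow (a $1$-cycle) on $\mathscr{G}$. Adding $e_0$ and $v_R$ does not change this: the constraints are a conservation law at all vertices, and the missing constraint at $v_R$ is implied by the others. The free momenta $\Set{k_f}_{f \in \mathscr{F}}$ parametrize this cycle space. By \cref{lemma:orientedpathedcomposition}, $f \in \mathscr{F}_e$ exactly when $e$ lies on the fundamental cycle $C_f$ of $f$ with respect to the spanning tree $\mathscr{T}$. For $e$ integrated, this is the cycle formed by $f$ and the tree path between its endpoints; for $e$ free, $\mathscr{F}_e = \Set{e}$. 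Equivalently, $\mathscr{F}_e$ is the set of free edges $f$ that cross the fundamental cut of $e$: removing the tree edge $e$ splits the vertex set into $\mathscr{P} \myp{v_1}$ and its complement, and $f$ has exactly one endpoint in $\mathscr{P} \myp{v_1}$.

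Next we use the equality $\mathscr{F}_e = \mathscr{F}_{e'}$ to find a disconnecting pair. In the main case, both $e$ and $e'$ are integrated. Then $e$ and $e'$ lie on exactly the same fundamental cycles. Since the fundamental cycles span the cycle space (over $\Z_2$), every cycle of $\mathscr{G}$ contains either both of $e,e'$ or neither. Now delete $e$ and $e'$ and suppose the graph remains connected. The endpoints of $e$ could then be joined by a path avoiding both $e$ and $e'$, and together with $e$ this path would form a cycle containing $e$ but not $e'$. This contradicts the previous sentence, so removing both edges splits $\mathscr{G}$ into two disconnected graphs.

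The mixed and free cases are handled directly. If $e$ is free, then $\mathscr{F}_e = \Set{e}$. If $e'$ is also free, then $\mathscr{F}_{e'} = \Set{e'}$, which is impossible for $e \neq e'$. So $e'$ is integrated, and $e$ is the only free edge crossing the fundamental cut of $e'$. Removing $e'$ from $\mathscr{T}$ separates the vertices into the two sides $\mathscr{P} \myp{v_1'}$ and its complement. All other edges of $\mathscr{G}$ are either tree edges, which do not cross this cut, or free edges not crossing it. Hence removing $e$ and $e'$ disconnects these two vertex sets.

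The main obstacle is justifying the cycle-space identification: that $\mathscr{F}_e$ is precisely the set of free edges whose fundamental cycle contains $e$, with the signs $\sigma_{e,f}$ being immaterial. I would read this off directly from the second formula in \eqref{eq:orientedpathedcomposition}. The condition that $f$ has one endpoint in $\mathscr{P} \myp{v_1}$ and one outside is exactly the condition that $f$ crosses the cut induced by removing $e$ from the tree. With that in hand, the remaining steps are standard cut–cycle duality.
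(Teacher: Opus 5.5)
Your proof is correct. The paper gives no argument of its own for this lemma; it is one of the momentum-graph facts quoted without proof from Lukkarinen--Spohn. So what you supply is a self-contained proof rather than a variant of an in-paper one.

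\emph{Comparison of routes.} In the case where $e,e'$ are both integrated, you pass through the $\Z_2$ cycle space and cut--cycle duality. The same conclusion can be read off directly from cuts, using \cref{lemma:orientedpathedcomposition} exactly as in your mixed case. Write $X=\mathscr{P}(v_1)$ and $X'=\mathscr{P}(v_1')$ for the vertex sets that $e$ and $e'$ separate from the root.
\begin{itemize}
\item The edges crossing $X$ are exactly $\{e\}\cup\mathscr{F}_e$, and those crossing $X'$ are exactly $\{e'\}\cup\mathscr{F}_{e'}$.
\item An edge crosses $X\mathbin{\triangle}X'$ iff it crosses exactly one of $X,X'$. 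Since $\mathscr{F}_e=\mathscr{F}_{e'}$, the edges crossing $X\mathbin{\triangle}X'$ are precisely $e$ and $e'$.
\item $X\mathbin{\triangle}X'$ is a nonempty proper vertex set: distinct tree edges cut off distinct subtrees, and neither set contains $v_R$.
\end{itemize}
This handles the integrated/integrated and mixed cases uniformly and names the separated piece explicitly. It also makes your flow paragraph unnecessary. If you keep that paragraph, you should check that each edge enters \eqref{eq:deltavertex} with opposite signs at its two endpoints; it does, by the $\mathscr{E}_\pm$ convention.

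\emph{One point to add.} In your main case you only prove that $\mathscr{G}\setminus\{e,e'\}$ is disconnected, not that it has exactly two components. The missing step is short. Since $\mathscr{F}_e\neq\emptyset$, the edge $e$ lies on some fundamental cycle $C_f$ (equivalently, by \cref{corollary:decompositionfreeedgetofreeedge}, $e$ is not a bridge). Hence $\mathscr{G}\setminus\{e\}$ is connected, and deleting $e'$ can create at most one further component.
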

\begin{lemma}
\label{lemma:dependenceoftwoedges0}
	Suppose that $e,e' \in \mathscr{E}$, $e \neq e'$.
	Then $\mathscr{F}_e = \mathscr{F}_{e'}$ if and only if there is a parity $\sigma \in \Set{\pm 1}$ such that $k_e = \sigma k_{e'}$ independently of the free momenta. 
\end{lemma}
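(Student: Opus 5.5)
The plan is to prove the two implications separately, using the unique representation from \cref{corollary:decompositionfreeedgetofreeedge} and the splitting statement of \cref{lemma:removalofedges}.

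\emph{The direction ``$k_e=\sigma k_{e'}$ implies $\mathscr{F}_e=\mathscr{F}_{e'}$''.} By \cref{corollary:decompositionfreeedgetofreeedge} we may write $k_e=\sum_{f\in\mathscr{F}_e}\sigma_{e,f}k_f$ and $k_{e'}=\sum_{f\in\mathscr{F}_{e'}}\sigma_{e',f}k_f$. The free momenta $k_f$, $f\in\mathscr{F}$, are independent integration variables, so these representations are unique. The identity $k_e=\sigma k_{e'}$, valid for all values of the free momenta, therefore forces $\mathscr{F}_e=\mathscr{F}_{e'}$, with $\sigma_{e,f}=\sigma\,\sigma_{e',f}$ for every $f$. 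To see uniqueness, compare coefficients: choose all free momenta zero except one $k_f$. This uses only the fact that each coefficient is $\pm1$ or $0$, and is routine.

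\emph{The direction ``$\mathscr{F}_e=\mathscr{F}_{e'}$ implies $k_e=\sigma k_{e'}$''.} Knowing only that the two sets agree does not immediately show that the signs $\sigma_{e,f}$ and $\sigma_{e',f}$ differ by one global factor. This is where I would use \cref{lemma:removalofedges}. If $k_e=0=k_{e'}$, take $\sigma=1$. Otherwise, removing $e$ and $e'$ splits $\mathscr{G}$ into two connected components, $\mathscr{G}_1$ and $\mathscr{G}_2$. Label them so that the root $v_R$, the only vertex without a $\delta$-constraint, lies in $\mathscr{G}_1$. I then sum the momentum constraints \eqref{eq:deltavertex} over all vertices of $\mathscr{G}_2$.

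The key observation is that every edge enters the constraints at its two endpoints with opposite signs, i.e.\ $\sigma_v(e'')=-\sigma_{v'}(e'')$ for $e''=\{v,v'\}$. I would check this for each edge type. An edge between fusion/interaction vertices lies in $\mathscr{E}_+$ of its lower endpoint and in $\mathscr{E}_-$ of its upper endpoint. A cluster edge lies in $\mathscr{E}_-$ of the initial time vertex, whose first-attached edge is the one coming from above, and in $\mathscr{E}_+$ of the cluster vertex. Edges touching $v_R$ do not matter, because $v_R\notin\mathscr{G}_2$.

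Consequently, in the summed constraint over $\mathscr{G}_2$, all edges internal to $\mathscr{G}_2$ cancel. What remains is $\sigma_a(e)k_e+\sigma_b(e')k_{e'}=0$, where $a$ and $b$ are the endpoints of $e$ and $e'$ in $\mathscr{G}_2$. This gives $k_e=\sigma k_{e'}$ with $\sigma=-\sigma_a(e)\sigma_b(e')$, identically in the free momenta. (If $e$ and $e'$ are both attached to $\mathscr{G}_2$ at a single endpoint each, this is exactly the situation; since removing them disconnects the graph, each has exactly one endpoint in each component.)

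I expect the main obstacle to be the careful bookkeeping of the edge-parity conventions $\mathscr{E}_\pm(v)$ for initial and cluster vertices, so that the ``opposite sign at the two ends'' claim really holds for every edge. The possible degenerate case $e=e_0$ is harmless, since then $k_e=0$ and the first alternative of \cref{lemma:removalofedges} applies.
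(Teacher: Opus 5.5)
Your proof is correct. The paper does not prove this lemma itself; the results of \cref{Sec:PropMometum} are quoted from \cite{LukkarinenSpohn:WNS:2011}. Your route is the natural one: uniqueness of the representation in \cref{corollary:decompositionfreeedgetofreeedge} for one direction, and for the other, \cref{lemma:removalofedges} followed by summing the constraints \eqref{eq:deltavertex} over the component not containing $v_R$, where all internal edges cancel because every edge carries opposite parities at its two endpoints.

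One step needs tightening. Your parenthetical claim that each of $e,e'$ has exactly one endpoint in each component does not follow from disconnection alone, since a bridge together with an edge internal to one component also disconnects the graph. It follows because neither edge is a bridge: a bridge carries zero momentum by \cref{corollary:decompositionfreeedgetofreeedge}, and together with $\mathscr{F}_e=\mathscr{F}_{e'}$ this would put you in the case $k_e=0=k_{e'}$.
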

\begin{definition}
\label{def:connection}
	Two sets $V_1, V_2 \subset \mathscr{V}_0$ are said to have a connection between them if there is a cluster that connects them in the sense that there is $A\in S$ such that $A \cap V_1 \neq \emptyset$ and $A \cap V_2\neq \emptyset$.
	If there is no connection between $V_1$ and $V_2$, they are said to be isolated from each other.
\end{definition}
The two following results are helpful to identify graphs where the corresponding amplitude is identically zero.
Conversely, they also assert useful properties for graphs with non-zero amplitudes.
\begin{lemma}
\label{lemma:dependenceoftwoedges}
	Let $v \in \mathscr{V}_I$ and suppose $e,e' \in \mathscr{E}_- \myp{v}$, $e \neq e'$.
	Then $k_e+k_{e'}$ is independent of all free momenta if and only if the initial time vertices at the bottom of the interaction trees beginning with $e$ and $e'$ are isolated from the rest of the initial time vertices.
	It also follows that $k_{e}+k_{e'}=0$. 
\end{lemma}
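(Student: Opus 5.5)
The plan is to reduce the statement to a cut-set property of the momentum graph and then apply \cref{corollary:decompositionfreeedgetofreeedge} and \cref{lemma:removalofedges}, rather than tracking the spanning tree directly.

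\emph{Setup.} Let $\mathscr{G}_e$ and $\mathscr{G}_{e'}$ denote the interaction subtrees hanging below $v$ through $e$ and $e'$. Let $\mathscr{V}_0(e,e')$ be the set of initial time vertices at the bottom of $\mathscr{G}_e \cup \mathscr{G}_{e'}$. The vertex constraints \eqref{eq:deltavertex} are summed over all vertices of $\mathscr{G}_e\cup\mathscr{G}_{e'}$ together with the cluster vertices attached only to $\mathscr{V}_0(e,e')$. This telescoping shows that $k_e+k_{e'}$ equals the signed sum of the momenta of the cluster edges leaving $\mathscr{V}_0(e,e')$, namely those edges joining $\mathscr{V}_0(e,e')$ to a cluster vertex $u_A$ with $A\not\subset \mathscr{V}_0(e,e')$.

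\emph{The ``if'' direction.} Assume $\mathscr{V}_0(e,e')$ is isolated in the sense of \cref{def:connection}. Then there are no crossing cluster edges, and removing $e$ and $e'$ disconnects the subgraph consisting of $\mathscr{G}_e\cup\mathscr{G}_{e'}$ and its clusters. The identity above then gives $k_e+k_{e'}=0$, which is in particular independent of all free momenta.

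\emph{The ``only if'' direction.} Assume $k_e+k_{e'}$ is independent of the free momenta. Write $k_e=\sum_{f\in\mathscr{F}_e}\sigma_{e,f}k_f$ and $k_{e'}=\sum_{f\in\mathscr{F}_{e'}}\sigma_{e',f}k_f$ using \cref{corollary:decompositionfreeedgetofreeedge}. By uniqueness of these representations, independence forces $\mathscr{F}_e=\mathscr{F}_{e'}$ with $\sigma_{e,f}=-\sigma_{e',f}$, so $k_e+k_{e'}=0$ identically. If $k_e=k_{e'}=0$, then each of $e$, $e'$ is a bridge by the corollary, and the subtrees below them already separate. Otherwise \cref{lemma:removalofedges} shows that removing $e$ and $e'$ splits $\mathscr{G}$ into exactly two components.

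\emph{Locating the components.} One must check that the component not containing $v$ is precisely $\mathscr{G}_e\cup\mathscr{G}_{e'}$ together with its clusters. The vertex $v$ stays connected to the root through its $\mathscr{E}_+$ edge and the remaining edge of $\mathscr{E}_-(v)$, whereas $\mathscr{G}_e$ and $\mathscr{G}_{e'}$ are reached from $v$ only through $e$ and $e'$. There is a subtlety here: if $\mathscr{G}_e$ and $\mathscr{G}_{e'}$ were not connected to each other by clusters, the graph could split into three components. This case is handled by applying the bridge argument separately to $e$ and to $e'$. Either way, no cluster joins $\mathscr{V}_0(e,e')$ to the rest of the graph, since such a cluster would give a path around the cut. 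Hence $\mathscr{V}_0(e,e')$ is isolated.

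The main obstacle is this last identification of which component is cut off: the argument must rule out that the third edge at $v$, or the root side of the graph, lies inside the detached piece. I would argue this using the tree structure of the plus and minus trees together with the fact that clusters are the only edges joining distinct subtrees.
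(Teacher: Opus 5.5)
Your proposal is correct. The paper does not prove this lemma itself; it quotes it from Lukkarinen--Spohn together with \cref{corollary:decompositionfreeedgetofreeedge} and \cref{lemma:removalofedges}. Your route builds on exactly those recalled lemmas: for the ``if'' part you sum the vertex constraints over the two subtrees, and for the ``only if'' part you use uniqueness of the free-momentum representation to get $\mathscr{F}_e=\mathscr{F}_{e'}$ and $k_e+k_{e'}=0$, and then apply \cref{lemma:removalofedges}.

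The step you flag as the main obstacle is routine. Every vertex not lying in the subtrees below $e$ and $e'$ is joined to $v_R$ by a path in the interaction tree that avoids $e$ and $e'$. This includes $v$, the subtree of the third edge of $v$, $v_{N+1}$, and every other initial time vertex. So after the cut all of these lie in the component of $v$. A cluster meeting both $\mathscr{V}_0(e,e')$ and its complement would then reconnect the two sides.

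The case that genuinely needs your bridge argument is slightly different from the ``three components'' picture. It is the case where only one of the lower endpoints of $e,e'$ is separated from $v$. Then that edge alone is a bridge, so $k_e=0=k_{e'}$ by \cref{corollary:decompositionfreeedgetofreeedge}, and you argue for each subtree separately. In the remaining case, the detached component contains both subtrees and, by the tree-path observation, no other initial time vertices.
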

\begin{lemma}
\label{lemma:oddcluster}
	If the momentum graph has an edge $e$, which is different from the topmost edge, such that $k_e=0$, then $S$ contains a cluster with an odd number of vertices.
\end{lemma}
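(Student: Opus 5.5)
We argue by contraposition: we show that if every cluster $A\in S$ has an even number of elements, then the only edge with vanishing momentum is the topmost edge $e_0$. This is the natural direction, because \cref{corollary:decompositionfreeedgetofreeedge} turns ``$k_e=0$'' into a purely graph-theoretic statement: $e$ is a bridge of $\mathscr{G}$, meaning that removing $e$ increases the number of connected components by one. So fix an edge $e\neq e_0$ with $k_e=0$. Removing it splits $\mathscr{G}$ into two components, and we let $\mathscr{G}_1$ denote the one not containing the root vertex $v_R$. The plan is to count vertices of $\mathscr{G}_1$ by type and derive a parity contradiction.

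First we identify what $\mathscr{G}_1$ looks like. Since $\mathscr{G}$ is built from the tree rooted at $v_R$ by attaching cluster vertices at the bottom, $e$ is either a tree edge or a cluster edge.
\begin{itemize}
\item If $e$ is a tree edge, then $\mathscr{G}_1$ consists of the subtree hanging below $e$, i.e.\ all interaction vertices and initial time vertices descending from the lower endpoint of $e$, together with every cluster vertex attached to those initial vertices and everything reachable from them.
\item If $e$ is a cluster edge, then $\mathscr{G}_1$ is either a single initial vertex (impossible, since each initial vertex lies on a tree edge leading to $v_R$ as well) or a component containing the cluster vertex $u_A$ but not $v_R$. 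In the latter case, everything in $\mathscr{G}_1$ hangs below a collection of tree edges.
\end{itemize}
For definiteness, we treat the tree-edge case in detail; the other case reduces to it by the same counting.

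Next comes the parity count, which is the key step. Every interaction vertex has exactly three downward edges and one upward edge, so a subtree rooted at an edge and containing $m$ interaction vertices has exactly $2m+1$ initial time vertices at its bottom, an odd number. Now $\mathscr{G}_1$ is a union of $p$ such subtrees (here $p=1$ in the tree-edge case), glued together only through cluster vertices. Since $e$ is a bridge, every cluster vertex in $\mathscr{G}_1$ has all of its edges inside $\mathscr{G}_1$, with the sole exception of $e$ itself in the cluster-edge case. Hence the initial vertices of $\mathscr{G}_1$ are partitioned by the clusters lying inside $\mathscr{G}_1$. In the tree-edge case this gives
\[
2m+1=\sum_{A\subset \mathscr{G}_1}|A|,
\]
so some cluster must have odd size, which is the desired conclusion. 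In the cluster-edge case the analogous identity includes the cluster $A$ containing $e$: the initial vertices of $\mathscr{G}_1$ together with the one initial vertex lost across $e$ are covered by full clusters. Each tree component reaching $v_R$ must connect through $e$, and one checks that the total count again has odd parity.

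The main obstacle is handling the cluster-edge case and the edges touching $v_{N+1}$ carefully. In particular, we must verify that $\mathscr{G}_1$ really contains whole subtrees, meaning that no tree edge other than $e$ leaves $\mathscr{G}_1$; this holds because every tree path leads upward to $v_R$. We will also use that the plus and minus trees each carry an odd number of leaves, which is already implicit in the fact that the top momenta sum to zero.
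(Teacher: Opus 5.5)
Your tree-edge case is correct. Since $e$ is a bridge and every initial vertex is joined to $v_R$ by its upward tree path, the initial vertices of $\mathscr{G}_1$ are exactly the $2m+1$ leaves below $e$. The clusters meeting them lie entirely inside $\mathscr{G}_1$, so an odd total forces an odd cluster. (The oddness of $2m+1$ comes from each interaction vertex replacing one leaf by three, not from the top momenta summing to zero.)

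The gap is the cluster-edge case, which you never actually prove. You say it ``reduces to the same counting'' and that ``one checks'' the parity is odd. But in your own picture, where $\mathscr{G}_1$ may contain whole subtrees hanging below tree edges, the count fails. If $\mathscr{G}_1$ contained the leaves of $p$ such subtrees, the clusters inside $\mathscr{G}_1$ would cover $\sum_i (2m_i+1)+1\equiv p+1 \pmod 2$ initial vertices, which is even whenever $p$ is odd. The phrase ``each tree component reaching $v_R$ must connect through $e$'' does not help either, since tree paths never use a cluster edge.

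The missing observation is the one you used only to rule out a lone initial vertex: \emph{no} initial vertex can lie in $\mathscr{G}_1$ at all. Its upward path to $v_R$ consists of tree edges only, so it avoids the cluster edge $e$. Hence $\mathscr{G}_1=\{u_A\}$, so $A$ is a singleton and thus an odd cluster; indeed $k_e=0$ is then just the constraint $\delta_L(k_e)$ at $u_A$.

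Alternatively, a handshake count handles both cases at once. $\mathscr{G}_1$ contains neither $v_R$ nor $v_{N+1}$, since the latter is joined to $v_R$ by $e_0\neq e$. So its vertices have degrees $4$ (interaction), $2$ (initial) and $|A|$ (cluster). The sum of these degrees equals twice the number of edges of $\mathscr{G}_1$ plus one for $e$, so $\sum_{u_A\in\mathscr{G}_1}|A|$ is odd. The paper itself gives no proof of this lemma and defers to Lukkarinen--Spohn.
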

\begin{lemma}
\label{lemma:freemomenta}
	A momentum graph has exactly $2 \myp{n+n'} + 2 - \abs{S}$ free momenta.
\end{lemma}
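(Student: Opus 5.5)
The plan is to count free edges as (number of edges) minus (number of integrated edges), using the spanning tree $\mathscr{T}$ from \cref{thm:momentum_resolution}. By construction $\mathscr{F} = \mathscr{E}\setminus\mathscr{E}_{\mathcal{T}}$, and $\mathscr{T}$ is a spanning tree of $\mathscr{G}$, so $\abs{\mathscr{E}_{\mathcal{T}}} = \abs{\mathscr{V}}-1$. Hence
\begin{equation*}
	\abs{\mathscr{F}} = \abs{\mathscr{E}} - \abs{\mathscr{V}} + 1 .
\end{equation*}
This step requires $\mathscr{G}$ to be connected. That holds because the construction is a single tree rooted at $v_R$ (containing both the plus and minus trees via $v_{N+1}$), to which the cluster vertices are attached by edges.

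The remaining work is bookkeeping with $N=n+n'$, read off the construction steps. For the vertices, we have $v_R$, then $N+1$ fusion vertices $v_1,\dots,v_{N+1}$, then the initial time vertices, then $\abs{S}$ cluster vertices. The tree built before adding clusters has $N+2$ labelled and $2N+2$ unlabelled vertices, and the unlabelled ones are exactly $\mathscr{V}_0$. Check: each of the $N$ interaction vertices converts one leaf into three leaves, starting from the two leaves of $v_{N+1}$, giving $2+2N$ leaves. So
\begin{equation*}
	\abs{\mathscr{V}} = (N+2) + (2N+2) + \abs{S} = 3N+4+\abs{S}.
\end{equation*}
For the edges, the pre-cluster tree has $\abs{\mathscr{V}}-\abs{S}-1 = 3N+3$ edges, namely $e_0$, the two edges at $v_{N+1}$, and three per interaction vertex. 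Each initial time vertex lies in exactly one cluster, so the clusters contribute $\sum_{A\in S}\abs{A} = \abs{\mathscr{V}_0} = 2N+2$ further edges. Thus $\abs{\mathscr{E}} = 5N+5$.

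Substituting gives $\abs{\mathscr{F}} = 5N+5-(3N+4+\abs{S})+1 = 2N+2-\abs{S} = 2(n+n')+2-\abs{S}$, as claimed.

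The only point needing care is the main-term case $n'=0$. There the minus tree is the single edge $k_{0,0}$, whose bottom vertex is an initial time vertex, so the leaf count $2+2N$ still holds with the same formulas. I expect no genuine obstacle beyond checking connectivity and that each initial vertex has degree one into clusters, which follows from $S$ being a partition.
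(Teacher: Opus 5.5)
Your proposal is correct and follows the paper's route: the paper also obtains the count by observing that the spanning tree of \cref{thm:momentum_resolution} has exactly one integrated edge for each vertex other than $v_R$, and then counting vertices and edges. Your totals of $3N+4+|S|$ vertices and $5N+5$ edges are exactly that count, and you correctly note that connectivity is what makes it a spanning tree and handle the $n'=0$ case.
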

As a side note, the proof of this in \cite{LukkarinenSpohn:WNS:2011} uses a construction of a second spanning tree along with the fact that the size of spanning trees is a graph invariant.
However, since by the construction of the spanning tree in \cref{thm:momentum_resolution} there is exactly one integrated edge for every vertex in the graph (excluding $v_R$), the result can be obtained by simply counting vertices and edges.

\subsection{Classification of momentum graphs}
We now introduce the following new concepts for a momentum graph.
\begin{enumerate}[(i)]
	\item Let $j$ be a time slice in $I_{0,n+n'}$ of a momentum graph.
	If the length of this time slice is $0$, this time slice is then called \emph{amputated}.
	If it ends in an interaction vertex, whose degree is either $1$ or $2$, the time slice is called \emph{short}.
	Otherwise, it is a \emph{long} time slice.
	
	\item The graph is called \emph{irrelevant}, if the corresponding amplitude is equal to zero.
	Otherwise, it is \emph{relevant}.
	
	\item The graph is called \emph{pairing} if for every $A\in S$, we have $ \abs{A}=2$.
	Otherwise, it is \emph{non-pairing}.
	
	\item The graph is called \emph{higher order} if it is relevant and non-pairing. 
	
	\item The graph is called \emph{fully paired} if it is pairing and it has no interaction vertices of degree one. 
	
	\item A pairing graph is called \emph{partially paired} if it is not fully paired. 
\end{enumerate}
\begin{remark}
\label{rem:irrelevant_graphs}
	We have the following examples of irrelevant graphs:
	\begin{itemize}
		\item If $S$ contains a cluster $A$ with an odd number of vertices, then the graph is irrelevant because the corresponding $C_{\abs{A}}$ is identically zero.
		
		\item If $S$ does not respect the parities of the edges in the initial time slice, then the graph is irrelevant, also because some of the truncated correlation functions $C_{\abs{A}}$ vanish.
		
		\item If there is a vertex $v \in \mathscr{V}_I$ with two edges $e,e' \in \mathscr{E}_- \myp{v}$ satisfying $k_e+k_{e'} = 0$, then the graph is irrelevant because of the presence of the cut-off function $\Phi_1$ at the vertex.
		(The same conclusion also holds for the amputated vertices of the terms $\mathcal{F}_n^2$ because of \cref{Propo:QERR2}).
	\end{itemize}
\end{remark}

\subsubsection{The iterative cluster scheme}
An important tool for estimating the amplitudes associated with a momentum graph is knowing bounds on the number of interaction vertices of various degrees.
Denote by $n_j$ the number of interaction vertices of degree $j$.
Then, since we know from \cref{lemma:degreeofafusionvertex} that the degree of a vertex can only to be either $0,1$, or $2$, it is clear that the total number of interaction vertices can be expressed as $n+n'=n_0+n_1+n_2$.

Denote also by $n_l \myp{j}$ with $l=0,1,2$ the number of interaction vertices of degree $l$ below and including $v_j$.

To estimate these numbers, one can apply the \emph{iterative cluster scheme} (see Figure \ref{Fig6}),
which follows the evolution of the cluster structure when integrating out the delta functions from the bottom to the top of the graph, as described in \cref{Sec:PropMometum}.
We first define $S^{\myp{0}}$ to be $S$. 
\begin{figure}
	\centering
		\includegraphics{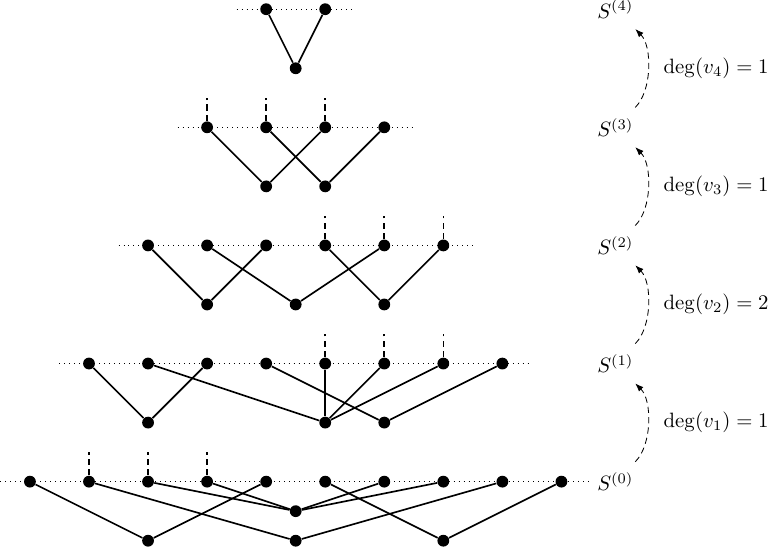}
		\caption{An illustration of the \emph{iterative cluster scheme} applied to the momentum graph in \cref{Fig4}, with the cluster decomposition $S = \Set{ \Set{1,5}, \Set{2,9}, \Set{3,4,7,8}, \Set{6,10}}$.
		At each level, we have denoted which three of the edges will be fused in the next iteration.
			}
	\label{Fig6}
\end{figure}

	The cluster $S^{\myp{j}}$ is defined to be a clustering of the edges intersecting with the time slice $j$.
	The construction $S^{\myp{j}}$ is done using an iterative procedure in which the interaction vertices are added to the graph, from the bottom to the top.
	The additional vertex $v_j$ fuses the three edges in $\mathscr{E}_- \myp{v_j}$ into the new one in $\mathscr{E}_+ \myp{v_j}$.
	The three edges of $\mathscr{E}_- \myp{v_j}$ belong to some clusters in the previous iteration $S^{\myp{j-1}}$.
	The cluster $S^{\myp{j}}$ is constructed by joining all clusters of $S^{\myp{j-1}}$ containing an edge from $\mathscr{E}_- \myp{v_j}$, and replacing the three edges by the one in $\mathscr{E}_+ \myp{v_j}$.
	The rest remain the same.  
	
	If two of the three old edges belong to the same cluster in $S^{\myp{j-1}}$, the addition of the second one would create a loop in the construction of the tree. 
	
	If all of the three old edges belong to the same cluster in $S^{\myp{j-1}}$, the addition of the second and third ones would create two loops in the construction of the tree. 
	
	Therefore, this process determines the degree of the added interaction vertex.
	If the new vertex fuses three separate clusters, it has degree 0.
	If it fuses two separate clusters, it has degree 1.
	Otherwise, it has degree 2. 
	
	The iterative cluster scheme shows that $\abs{S^{\myp{j}}} = \abs{S^{\myp{j-1}}} - 2 + \deg \myp{v_j}$.
	In addition, the structure of the clustering is conserved, meaning each $S^{\myp{j}}$ contains only even clusters. 

We then have the following two important lemmas concerning $n_0,n_1$ and $n_2$. 
\begin{lemma}
\label{lemma:numberofnonpairclusters}
	Consider a relevant graph, and define $r = n+n'+1-\abs{S}$.
	Then $n_{np} \leq r \leq n+n'$, where $n_{np}$ is the number of clusters that are not pairs, that is, $n_{np} \coloneq \abs{\Set{A\in S \mid \abs{A} > 2} }$.
	Moreover $r=0$ if and only if $S$ is a pairing.
	In addition, $n_2-n_0 = r$, $n_0 = \frac{1}{2} \myp{n+n'-r-n_1}$, $0 \leq n_1 \leq n+n'$, and $0 \leq n_0 \leq \floor{\frac{n+n'-r}{2}}$. 
\end{lemma}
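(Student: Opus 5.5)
The proof is essentially counting, via the iterative cluster scheme together with \cref{lemma:freemomenta}. Write $N=n+n'$ for the number of interaction vertices.

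\emph{Step 1: relations between $n_0,n_1,n_2$ and $r$.} Along the scheme we have $\abs{S^{\myp{j}}}=\abs{S^{\myp{j-1}}}-2+\deg\myp{v_j}$. Summing over the $N$ interaction vertices gives
\[
\abs{S^{\myp{N}}}=\abs{S}-2N+n_1+2n_2 .
\]
After the last interaction, the edges crossing the top slice are exactly the two top edges of the plus and minus trees. For a relevant graph these lie in a single cluster, because their momenta sum to zero by \cref{corollary:decompositionfreeedgetofreeedge} and a cluster containing only one of them would have to be odd, which \cref{lemma:oddcluster} excludes. Hence $\abs{S^{\myp{N}}}=1$, and
\[
n_1+2n_2=2N+1-\abs{S}=N+r .
\]
Combining this with $n_0+n_1+n_2=N$ gives $n_2-n_0=r$ and $n_0=\tfrac12\myp{N-r-n_1}$. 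The bound $0\le n_1\le N$ is trivial. Nonnegativity of $n_0$ then gives $n_0\le\floor{\myp{N-r}/2}$, since $n_0$ is an integer, and this also forces $r\le N$.

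For the main-term graphs, where $n'=0$, the top edges are the minus-tree leg $k_{0,0}$ and the plus-tree top. The same argument applies once one checks that the fused top edge at $v_{N+1}$ is handled consistently. I would verify this bookkeeping, including whether $v_{N+1}$ should be counted as a fusion, as the first detail.

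\emph{Step 2: $n_{np}\le r$, and $r=0$ iff $S$ is a pairing.} Every cluster of a relevant graph is even, so $\abs{A}\ge2$. Summing over clusters,
\[
\sum_{A\in S}\abs{A}=2N+2 ,
\]
the number of initial time vertices ($2n+1$ in each tree, or $2n+2$ including $k_{0,0}$ in the main case). Therefore
\[
2N+2=\sum_{A}\abs{A}\ge2\abs{S}+2n_{np},
\]
because each non-pair cluster has size at least $4$. This rearranges to $n_{np}\le N+1-\abs{S}=r$. Equality $r=0$ forces $n_{np}=0$, i.e. $S$ is a pairing. Conversely, a pairing has $\abs{S}=N+1$, so $r=0$. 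In particular $r\ge n_{np}\ge0$.

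\emph{Main obstacle.} The only delicate point is justifying $\abs{S^{\myp{N}}}=1$, together with the precise count of initial vertices, uniformly for the main, amputated and $\mathcal{F}^1_n$ graphs. If the argument via \cref{lemma:oddcluster} proves awkward, I would fall back on \cref{lemma:freemomenta}. The free-edge count $2N+2-\abs{S}$ must equal the number of loop-closing edges, which is $n_1+2n_2$ plus one for the closing edge at $v_{N+1}$ (or at $v_R$). This yields the same identity $n_1+2n_2=N+r$ directly.
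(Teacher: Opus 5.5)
Your argument is correct and follows the route the paper takes implicitly. The paper states the lemma right after recording $\abs{S^{(j)}}=\abs{S^{(j-1)}}-2+\deg(v_j)$ and the evenness of every $S^{(j)}$, and defers the details to Lukkarinen--Spohn; telescoping with $\abs{S^{(n+n')}}=1$ gives $n_1+2n_2=n+n'+r$, and counting the $2(n+n')+2$ initial vertices over even clusters gives $n_{np}\le r$, exactly as you do. Two small remarks: $\abs{S^{(n+n')}}=1$ follows directly from that evenness (two edges cannot form two even clusters), so \cref{lemma:oddcluster} is not needed; and it is $n_1\ge 0$, not $n_0\ge 0$, that yields $n_0\le\floor{(n+n'-r)/2}$, while $n_0\ge 0$ is what gives $r\le n+n'$.
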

\begin{lemma}
\label{lemma:n2jn0j}
	Consider a relevant graph, and an interaction vertex $v_j$, $1 \leq j \leq n+n'$, then
	\begin{equation*}
		n_2 \myp{j} \leq r + n_0 \myp{j}
		\mbox{  and  }
		n_0 \myp{j} \geq \frac{j- \myp{n_1+r}}{2},
	\end{equation*}
	where $r$ is defined in \cref{lemma:numberofnonpairclusters}.
\end{lemma}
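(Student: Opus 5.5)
The plan is to derive both inequalities from the cardinality recursion of the iterative cluster scheme, combined with the fact that, in a relevant graph, every cluster in every $S^{(j)}$ has an even number of elements.

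\textbf{Step 1: count edges per time slice.} The graph has $N=n+n'$ interaction vertices. At the bottom there are $(2n+1)+(2n'+1)=2N+2$ edges, one for each initial time vertex. Each interaction vertex fuses three edges into one, so after $j$ interaction vertices the slice $j$ is crossed by exactly $2N+2-2j$ edges.

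\textbf{Step 2: use evenness.} Since the graph is relevant, \cref{rem:irrelevant_graphs} shows that all clusters of $S=S^{(0)}$ are even. The iterative cluster scheme preserves this, so every cluster of $S^{(j)}$ is even. Each such cluster has at least two elements, which gives
\begin{equation*}
	\abs{S^{(j)}} \leq \frac{2N+2-2j}{2} = N+1-j.
\end{equation*}

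\textbf{Step 3: telescope the recursion.} From $\abs{S^{(i)}} = \abs{S^{(i-1)}} - 2 + \deg(v_i)$, summing over $i\leq j$ gives
\begin{equation*}
	\abs{S^{(j)}} = \abs{S} - 2j + n_1(j) + 2n_2(j).
\end{equation*}
Substituting $j = n_0(j)+n_1(j)+n_2(j)$ turns this into $\abs{S^{(j)}} = \abs{S} - 2n_0(j) - n_1(j)$. Note that $n_2(j)$ cancels here, so the lower-bound information alone cannot control $n_2(j)$. This is why the upper bound from Step 2 is the essential input.

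\textbf{Step 4: combine.} Insert Step 3 into the bound of Step 2, writing again $j=n_0(j)+n_1(j)+n_2(j)$ on the right:
\begin{equation*}
	\abs{S} - 2n_0(j) - n_1(j) \leq N+1 - n_0(j) - n_1(j) - n_2(j).
\end{equation*}
Rearranging gives $n_2(j) \leq N+1-\abs{S} + n_0(j) = r + n_0(j)$, with $r$ as defined in \cref{lemma:numberofnonpairclusters}. This is the first inequality.

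\textbf{Step 5: second inequality.} Using the first inequality and $n_1(j)\leq n_1$,
\begin{equation*}
	j = n_0(j)+n_1(j)+n_2(j) \leq 2n_0(j) + n_1 + r.
\end{equation*}
This rearranges to $n_0(j) \geq \frac{j-(n_1+r)}{2}$.

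The only delicate point is justifying that evenness is preserved along the scheme. When clusters are joined, three edges with parity contributions are removed and one is added, so the total size changes by $-2$. Starting from even clusters, every resulting cluster therefore stays even, and in particular each has size at least two.
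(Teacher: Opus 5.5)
Your proof is correct and follows the intended route. The paper does not write out a proof (it cites Lukkarinen--Spohn), but it records the two facts you rely on: the recursion $\abs{S^{(j)}} = \abs{S^{(j-1)}} - 2 + \deg(v_j)$ and the preservation of even clusters in the iterative cluster scheme. Combining these with the bound $\abs{S^{(j)}} \leq N+1-j$ gives $\abs{S} - j + n_2(j) - n_0(j) \leq N+1-j$, which is the first inequality, and the second follows exactly as you derive it.
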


\subsection{Fully paired graphs}
\label{sec:FullyPairedGraphs}
The goal of this section is to give a detailed study on fully paired graphs.
We have the following observations.
\begin{enumerate}[(i)]
	\item Since we have both $n_1=0$ and $r = 0$ for fully paired graphs, it follows immediately from the statements of \cref{lemma:numberofnonpairclusters} that $n_2 = n_0 = \frac{n+n'}{2}$.
		
	\item From \cref{lemma:n2jn0j}, we have $n_0 \myp{j} \geq \frac{j}{2}$ and $n_2 \myp{j} \leq n_0 \myp{j}$, implying that necessarily $\deg \myp{v_1} = 0$ and $\deg \myp{v_{n+n'}} = 2$.
	Therefore, the degrees of the interaction vertices form a sequence $ \myp{0,2,0,2,\dotsc,0,2}$, or this behavior ends in two or more consecutive zeros in the middle.
	
	\item Let us now consider the first phase, for any relevant graph
	\begin{equation} 
	\label{eq:Varthetaj0}
		\mathrm{Re} \gamma \myp{0;J} 
		={} \sum_{l=1}^{2n+1} \sigma_{0,l} \omega^{\lambda} \myp{k_{0,l}} + \sum_{l=1}^{2n'+1} \sigma_{0,l}' \omega^{\lambda} \myp{k_{0,l}'}
		={} 0,
	\end{equation}
	due to the fact that the terms cancel each other pairwise because of the pairings of momenta and parities on the initial time slice.
	
	\item For pairing graphs, we will drop the dependence on $J$, for the sake of simplicity.
	In other words, we denote $\gamma \myp{m}$ instead of $\gamma \myp{m;J}$.
	
	\item Recall that the time slice $m<n+n'$ is called \emph{long} if $\mathrm{deg} \myp{v_{m+1}} = 0$.
	The long time slice $m$ is \emph{trivial} if $\mathrm{Re} \gamma \myp{m} = 0$.
	Note that the initial time slice in any fully paired graph is always long and trivial.
	
	\item We define the \emph{index of the last trivial long time slice} to be the last trivial long time slice in the initial sequence of trivial long time slices
	\begin{equation}
	\label{eq:indexofthelasttriviallong}
		m'_0 \coloneq \max \Set{0 \leq m \leq n+n' \mid \mathrm{Re} \gamma \myp{j} = 0 \mbox{ if } 0 \leq j \leq m, \mbox{ and the slice $j$ is long}}. 
	\end{equation}
	
	\item Let us consider a degree two interaction vertex $v$.
	The two free edges associated to $v$ can be denoted by $e_1$ and $e_2$, with $e_1<e_2$.
	We now denote the third (integrated) edge in $\mathscr{E}_- \myp{v}$ by $e_3$ and the unique edge in $\mathscr{E}_+ \myp{v}$ by $e_0$.
	The momenta associated to $e_1,e_2,e_3$ and $e_0$ are denoted buy $ k_1,k_2,k_3$ and $k_0$.
	It is then clear that $k_3=k_0-k_1-k_2$.
	The two edges $e_1$ and $e_2$ generate two loops.
	The directions of these loops start from $e_1$ or $e_2$ and end at $e_3$.
	Following the two directions, there is a unique vertex $v'$ where the two loops meet.
	We call it the \emph{X-vertex}, or the \emph{crossing-vertex} associated to the double loop of $v$.
	By construction, $v'$ must be attached to at least three distinct integrated edges, so it must be a degree zero vertex (since there are no degree one vertices in fully paired graphs).
	Any other vertices along the two loops are called $T$-vertices (or \emph{through-vertices}) of the double loop of $v$, and they are classified into three types:
	\begin{itemize}
		\item A $T_1$-vertex belongs to the path from  $v\to v'$ containing the edge $e_1$.
		
		\item A $T_2$-vertex belongs to the path from  $v\to v'$ containing the edge $e_2$.
		
		\item A $T_3$-vertex belongs to the path from  $v'\to v$.
	\end{itemize}
	The reason for this naming convention of the $T$-vertices is the following:
	If $w$ is a $T_j$-vertex, $j=1,2,3$, then there are exactly two momenta $k_e$, $e \in \mathscr{E} \myp{w}$ such that $k_e$ depends on $k_1$ and $k_2$, and the dependence is of the type $\pm k_j$.
	This is a consequence of \cref{lemma:orientedpathedcomposition} combined with \cref{lemma:freeedgesorder2}.
	
	\item For an interaction vertex $v_j$, $ 1 \leq j \leq n+n'$, we define
	\begin{equation}
	\label{eq:Thetaj}
		\Theta_j \coloneq{} \Theta \myp{k_{\mathscr{E}_- \myp{v_j}}, \sigma_+ \myp{v_j}}.
	\end{equation}
	Now, let us recall that $\mathrm{Re} \gamma \myp{0} = 0$, which means, for every time slice $j$ of a fully paired graph $0\le  j\le n+n'$, the following holds true
	\begin{equation}
	\label{eq:varthetaj}
		\mathrm{Re} \gamma \myp{j}
		={} \sum_{l=j+1}^{n+n'} \Theta_l 
		={} - \sum_{l=1}^{j} \Theta_l.
	\end{equation}
	
	\item  Let $m$ be an  arbitrary long time slice, $0 \leq m \leq n+n'-1$.
	This long time slice ends in the interaction vertex $v_{m+1}$ with $\deg \myp{v_{m+1}} = 0$. 
	Consider also an arbitrary vertex $v_{j_2}$ with $\mathrm{deg} \myp{v_{j_2}} = 2$, and denote the two free edges of $v_{j_2}$ by $k_1$ and $k_2$. 
	\begin{enumerate}[label=(\roman{enumi}.\arabic*)]
		\item If $\mathrm{Re} \gamma \myp{m}$ does not depend on $k_1$ and $k_2$, the time slice $m$ is said to be \emph{independent of the double loop of $v_{j_2}$}.
		
		\item If $\mathrm{Re} \gamma \myp{m}$ depends on $k_1$ and $k_2$ but $ \mathrm{Re} \gamma \myp{m} - \Theta_{j_2}$ does not, the time slice $m$ is said to be \emph{nested inside the double loop of $v_{j_2}$.}
		
		\item If both $\mathrm{Re} \gamma \myp{m}$ and $\mathrm{Re} \gamma \myp{m} - \Theta_{j_2} $ depend on $k_1$ and $k_2$, the time slice $m$ is said to \emph{propagate a crossing with the double loop of $v_{j_2}$}.
	\end{enumerate}
\end{enumerate}

We can now give the following definition which allows for complete classification of the relevant fully paired graphs.
Recall from \cref{sec:leadingmotives} that a graph is called \emph{leading} if it can be obtained by starting from the trivial loop and then iteratively attaching leading motives to the bottom of the graph.
\begin{definition}
	We consider a relevant fully paired graph and iterate through its degree two vertices starting from the bottom of the graph, and consider the double loops associated with the vertices.
	There will be the following cases.
	\begin{enumerate}[(i)]
		\item If all long time slices of the graph are independent of the double loop, we move to the next vertex in the list.
		
		\item If there are long time slices depending on the double loop and if all of them are nested inside the loop, then the double loop is called a \emph{nest} and the graph is called \emph{nested} (see Figure \ref{Fig8}). 
		\begin{figure}
			\centering
			\includegraphics{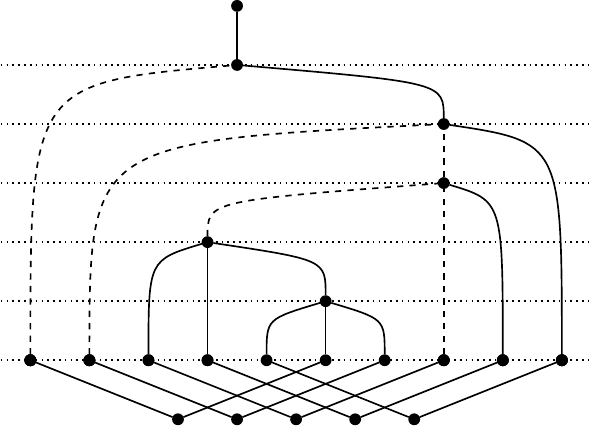}
			\caption{An example of a nested graph.
				This graph is constructed by inserting the leading motive L1 into the edge that connects the two vertices in the motive L5.
				It is easily checked that $\Theta_1 = -\Theta_4$ and $\Theta_2 = -\Theta_3$, and thus the time slice $1$ is nested in the double loop of $v_4$.}
			\label{Fig8}
		\end{figure}
		
		\item If there is a long time slice depending on the double loop but not nested inside it, the topmost of these time slices is called the \emph{last propagated crossing slice} and the graph is called \emph{crossing} (see Figure \ref{Fig7}).
		\begin{figure}
			\centering
			\includegraphics{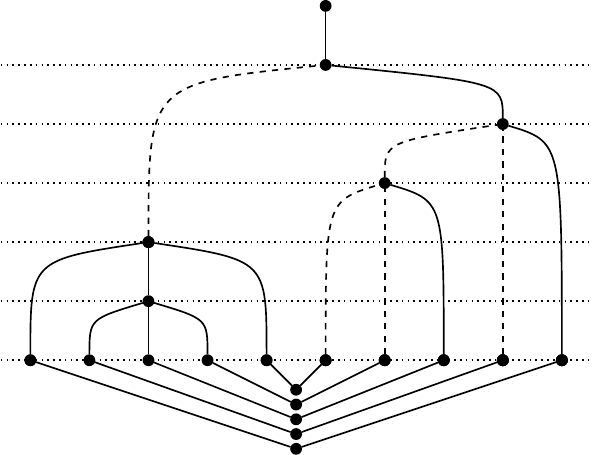}
			\caption{An example of a crossing graph.
				Denoting the free momenta of $v_3$ and $v_4$ by $k_1,k_2$ and $k_1',k_2'$, respectively, it is easily checked that $\gamma \myp{1} = - \Theta_1 = \omega^{\lambda} \myp{k_2'} -\omega^{\lambda} \myp{k_2} + \omega^{\lambda} \myp{k_1+k_2-k_1'} - \omega^{\lambda} \myp{k_1-k_1'-k_2'} \neq \Theta_3 $, meaning that the time slice $1$ propagates a crossing with the double loop of $v_3$.}
			\label{Fig7}
		\end{figure}
	\end{enumerate}
\end{definition}
It turns out (see \cref{lemma:recollisions} below), that the initial sequence of double loops satisfying $\mathrm{(i)}$ above are formed by iteration of leading motives (or immediate recollisions).
Thus, if a graph is not nested or crossing, it will be a leading graph (see \cref{prop:leadinggraphs} below).
A posteriori, another way of phrasing the definition above is thus that if a graph is not leading, we can consider the first degree two vertex from the bottom that does not correspond to an immediate recollision.
Then either there is a long time slice propagating a crossing with the double loop of this vertex (and the graph is crossing), or the double loop is a nest (and the graph is nested).
Note also that these definitions only concern the first degree two vertex in the graph that is not part of an immediate recollision; a nested graph may have crossings with degree two vertices above the initial nest, and a crossing graph may have nests above the initial crossing vertex.

Before providing the proof of this classification of the fully paired graphs, we need to know that the dispersion relation is sufficiently non-degenerate.
The lemma below is taken from \cite{LukkarinenSpohn:WNS:2011}.
The proof holds without change for the modified dispersion relation, since the only needed input is that a function satisfying \cref{ass:DR2} cannot be constant.
However, we recall the proof here since it will be instructive for our treatment of the crossing and nested graphs later.
\begin{lemma}
\label{lemma:Thetacannotindependfreemomenta}
	Suppose that the modified dispersion relation $\omega^{\lambda}$ satisfies \cref{ass:DR2}, and let $v_{j_0}$ be an interaction vertex in a relevant graph.
	Suppose that $f \in \mathscr{F}_e$ for some $e \in \mathscr{E} \myp{v_{j_0}}$, i.e., $k_e$ depends on $k_f$.
	The following hold true.
	\begin{enumerate}[(i)]
		\item $\nabla_{k_f} \Theta_{j_0} \neq 0$.
		
		\item $\Theta_{j_0}$ cannot be independent of all free momenta.
	\end{enumerate}
\end{lemma}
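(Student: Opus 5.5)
Part (ii) follows from part (i), so the plan is to prove (i) and then deduce (ii).

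\emph{Step 1: write out $\Theta_{j_0}$.} By \eqref{eq:Thetaj} and \eqref{eq:defTheta},
\begin{equation*}
	\Theta_{j_0} = \omega^{\lambda}(k_{e_3}) - \omega^{\lambda}(k_{e_1}) + \sigma\myp[\big]{\omega^{\lambda}(k_{e_2}) - \omega^{\lambda}(k_{e_0})},
\end{equation*}
where $\mathscr{E}_-(v_{j_0}) = \Set{e_1,e_2,e_3}$, $\mathscr{E}_+(v_{j_0}) = \Set{e_0}$, and $k_{e_0} = k_{e_1}+k_{e_2}+k_{e_3}$.

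\emph{Step 2: how the edge momenta depend on $k_f$.} By \cref{corollary:decompositionfreeedgetofreeedge} and \cref{lemma:orientedpathedcomposition}, each $k_e$ is $\pm k_f$ plus terms independent of $k_f$, or is independent of $k_f$. So each of the four edges has a coefficient $c_e \in \Set{0,\pm1}$ in $k_f$, and by the vertex delta function $c_{e_0} = c_{e_1}+c_{e_2}+c_{e_3}$. Hence
\begin{equation*}
	\nabla_{k_f} \Theta_{j_0} = c_{e_3}\nabla\omega^{\lambda}(k_{e_3}) - c_{e_1}\nabla\omega^{\lambda}(k_{e_1}) + \sigma c_{e_2}\nabla\omega^{\lambda}(k_{e_2}) - \sigma c_{e_0}\nabla\omega^{\lambda}(k_{e_0}),
\end{equation*}
with arguments affine in the free momenta. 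By hypothesis some $c_e \neq 0$.

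\emph{Step 3: prove (i) by contradiction.} Suppose this gradient vanishes identically as a function of the free momenta. I vary the remaining free momenta that enter $k_{e}$ for the edges with $c_e\neq0$, and use the relevance of the graph. By \cref{rem:irrelevant_graphs}, no two edges of $\mathscr{E}_-(v_{j_0})$ have momenta summing to zero. By \cref{lemma:oddcluster}, no edge other than the topmost has $k_e=0$. From these I will argue that at least one edge carrying $k_f$ also carries some further free momentum $k_g$ (or is combined with others) that does not appear in the other edges carrying $k_f$.

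Differentiating in $k_g$ then isolates a single term, giving $\nabla^2\omega^{\lambda}(\cdot)\equiv 0$ on an open set. Otherwise, if all $k_f$-carrying edges carry exactly the same free momenta up to sign, \cref{lemma:dependenceoftwoedges0} makes their momenta equal up to sign, and the identity reduces to something like $\nabla\omega^{\lambda}(q)\pm\nabla\omega^{\lambda}(\pm q)\equiv 0$. Using the evenness from \ref{ass:DR1}, this either cancels trivially, which would force a pair of edges to sum to zero and contradict relevance, or forces $\nabla\omega^{\lambda}\equiv0$.

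In either case, real-analyticity (\ref{ass:DR1}, which I assume carries over to $\omega^\lambda$) propagates the identity to all of $\T^d$. Then $\omega^{\lambda}$ is constant or affine, hence constant by periodicity. This contradicts \ref{ass:DR2}, since for constant $\omega^{\lambda}$ we get $\norm{p_t^{\lambda}}_3^3=1$, which does not decay.

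\emph{Step 4: deduce (ii).} Every edge of a relevant interaction vertex other than the top edge has nonzero momentum (by \cref{lemma:oddcluster}), so it depends on some free momentum $k_f$ by \cref{corollary:decompositionfreeedgetofreeedge}. Then (i) gives $\nabla_{k_f}\Theta_{j_0}\neq0$, so $\Theta_{j_0}$ cannot be independent of all free momenta.

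\emph{Main obstacle.} The hard part is the combinatorial case analysis in Step 3: showing that the configurations in which the $\pm k_f$ contributions cancel identically are exactly those excluded by relevance. I would enumerate the possible coefficient patterns $(c_{e_1},c_{e_2},c_{e_3},c_{e_0})$ and treat each one separately.
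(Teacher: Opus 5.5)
Your overall frame is the right one: argue by contradiction, and either force $\nabla^2\omega^{\lambda}\equiv 0$ or reduce to $k_e=\pm k_{e'}$ and invoke relevance. But the step you defer in Step~3 rests on a dichotomy that fails without a structural fact you never establish: \emph{exactly two} of the four edges of $\mathscr{E}(v_{j_0})$ depend on $k_f$.

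This fact follows from \cref{lemma:orientedpathedcomposition}. The momentum $k_e$ depends on $k_f$ iff $f$ has exactly one endpoint in the part of the spanning tree below $e$. Equivalently, $e$ lies on the unique cycle obtained by adding $f$ to the spanning tree, and such a cycle meets $\mathscr{E}(v_{j_0})$ in zero or two edges.

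Your Step~2 constraint $c_{e_0}=c_{e_1}+c_{e_2}+c_{e_3}$ does not give this; it permits, e.g., $(c_{e_1},c_{e_2},c_{e_3},c_{e_0})=(1,1,-1,1)$. In that pattern, the same conservation law forces every other free momentum $k_g$ entering one $k_f$-carrying momentum to enter a second one too. For instance, a $k_g$ occurring among $e_1,e_2,e_3$ only in $k_{e_1}$ reappears in $k_{e_0}$. So differentiating in $k_g$ never isolates a single term, while the sets $\mathscr{F}_e$ need not coincide either. Your dichotomy ``some $k_g$ isolates a term, or all $k_f$-carrying edges have the same free momenta'' is therefore not exhaustive. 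The enumeration you propose would leave you with genuine four-term functional equations for $\nabla\omega^{\lambda}$.

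Once you have the two-edge fact, the proof is short, and this is how the paper closes it. Write $k_{e_j}=\pm(k_f+u)$ and $k_{e_l}=\pm(k_f+u')$ with $u,u'$ independent of $k_f$, and substitute $x=k_f+u$.
\begin{itemize}
\item If $u'-u\not\equiv 0$, vary a free momentum it contains while holding $x$ fixed. This shows $\nabla\omega^{\lambda}$ is constant on $\T^d$, so the periodic $\omega^{\lambda}$ is constant, contradicting \cref{ass:DR2}.
\item If $u'=u$, then $k_{e_l}=\pm k_{e_j}$, and every case is excluded combinatorially, with no need to check whether the $\omega^{\lambda}$-terms cancel. Either two edges of $\mathscr{E}_-(v_{j_0})$ sum to zero, which is irrelevant by \cref{rem:irrelevant_graphs}. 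Or $\pm 2k_f$ would appear in $k_{e_0}$, impossible by \cref{corollary:decompositionfreeedgetofreeedge}, or in the sum of the two edges not carrying $k_f$, impossible by the choice of $e_j,e_l$.
\end{itemize}

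In particular, your claim that a trivial cancellation ``forces a pair of edges to sum to zero'' is not correct as stated. The case $k_{e_1}=k_{e_3}$ cancels $\omega^{\lambda}(k_{e_3})-\omega^{\lambda}(k_{e_1})$ with no pair summing to zero; it is ruled out instead because $k_{e_0}=k_{e_2}+2k_{e_1}$.

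You also need neither analyticity nor evenness of $\omega^{\lambda}$. The lemma only assumes \cref{ass:DR2}, \cref{ass:DR1} concerns $\omega$ rather than $\omega^{\lambda}$, and $R_\lambda$ is only as regular as $\widehat V\in C^2$. Since $k_f$ ranges over all of $\T^d$, every identity you derive already holds on the whole torus. Your Step~4 is fine.
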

\begin{proof}
	Suppose on the contrary that $\Theta_{j_0}$ is a constant in $k_f$.
	We denote the momenta associated to $v_{j_0}$ by $k_0, k_1, k_2, k_3$ and the edges by $e_0, e_1, e_2, e_3$, respectively.
	For the sake of simplicity, we also denote $\mathscr{F}_i = \mathscr{F}_{e_i}$.
	By \cref{lemma:orientedpathedcomposition} and \cref{corollary:decompositionfreeedgetofreeedge}, each $k_i$ depends on some free momenta and $\mathscr{F}_i \neq \emptyset$. 

	Recall that any given edge can depend on the free edge $f$ only if it is contained in the unique loop defined by $f$ in the spanning tree of the graph.
	This implies that $f$ can appear in zero or exactly two of the four sets $\mathscr{F}_0, \mathscr{F}_1, \mathscr{F}_2, \mathscr{F}_3$.
	Thus, there are unique $j,l$ such that $f \in \mathscr{F}_j \cap \mathscr{F}_l$, and we denote the other two indices by $j',l'$.
	It follows that $k_j = \pm \myp{k_f+u}$ and $k_l = \pm \myp{k_f+u'}$, where $u$, $u'$ are linear combinations of the remaining free momenta, by \cref{lemma:orientedpathedcomposition}.  
 
	Since $\nabla_{k_f} \Theta_{j_0} = 0$, it follows that $\nabla \omega^{\lambda} \myp{k_f} \pm \nabla \omega^{\lambda} \myp{k_f+u'-u} = 0$ for all $k_f \in \T^d$ and any value of $u'-u$.
	If $u'-u$ is not identically zero, then it depends on some other free momentum which is not $k_f$, so by differentiating with respect to this free momentum, it follows that the Hessian of $\omega^{\lambda}$ is uniformly zero.
	Then $\omega^{\lambda}$ has to be a linear map which is also periodic, implying that $\omega^{\lambda}$ is a constant.
	This is however not possible since $\omega^{\lambda}$ satisfies \cref{ass:DR2}, so it follows that $u'=u$.
	Therefore, $k_l=\pm k_j$. 
	Recalling that $k_0=k_1+k_2+k_3$, we have two cases:
	
	\emph{Case 1: $k_l=k_j$.}
	If either $l$ or $j$ is $0$, we  then have $k_{l'}+k_{j'}=0$, where $l',j' \in \Set{1,2,3}$ are the remaining indices, so in this case, the graph would be irrelevant by \cref{rem:irrelevant_graphs}.
	If both $l$ and $j$ are different from $0$, we have for instance $k_0 = k_{j'}+k_j+k_l = k_{j'} \pm 2 \myp{k_f + u}$, where $k_{j'}$ and $u$ are both independent of $k_f$.
	This contradicts the form of the presentation \eqref{eq:decompositionfreeedgetofreeedge}.

	\emph{Case 2: $k_l+k_j=0$.}
	If one of $l,j$ is $0$, say $j=0$ then $k_{l'}+k_{j'}=2 k_j = \pm 2 \myp{k_f + u} $.
	This is not possible since $k_{l'}$ and $k_{j'}$ are both independent of $k_f$.
	Otherwise, if $l$ and $j$ are both non-zero, then $k_l + k_j = 0$ implies by \cref{rem:irrelevant_graphs} that the graph is irrelevant.
	
	Thus our initial assumption cannot hold, and we conclude that $\nabla_{k_f} \Theta_{j_0} \neq 0$.
	The second statement is an obvious corollary of this.
 \end{proof}
 
\begin{lemma}
\label{lemma:recollisions}
	Consider a relevant fully paired graph and suppose that all of its long time slices are independent of double loops of the first $M$ degree two vertices.
	Then all the double loops of these $M$ vertices are immediate recollisions.
	That is, there is a graph from which the full graph can be obtained by iteratively adding $M$ leading motives.
\end{lemma}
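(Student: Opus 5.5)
We argue by induction on $M$; the case $M=0$ is empty. Let $v=v_{j_2}$ be the lowest degree two vertex among the first $M$. By the induction hypothesis applied to the remaining $M-1$ vertices (after the reduction below), it suffices to show that the double loop of $v$ is an immediate recollision. Removing that motive should then yield a relevant fully paired graph with the same properties and one fewer degree two vertex, which closes the induction.

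\textbf{Step 1: $v$ is preceded by a degree zero vertex with no slice in between.} Every vertex below $v_{j_2}$ other than those already identified as immediate recollisions has degree $0$, by \cref{lemma:n2jn0j} and the degree sequence $(0,2,0,2,\dots)$ of fully paired graphs. Denote the free momenta of $v$ by $k_1,k_2$. By \eqref{eq:varthetaj} we have $\mathrm{Re}\,\gamma(j_2-1)=\sum_{l\ge j_2}\Theta_l$. By \cref{lemma:Thetacannotindependfreemomenta}, $\Theta_{j_2}$ depends nontrivially on $k_1,k_2$. Independence of the slice $j_2-1$, when it is long, therefore forces some other $\Theta_l$ to cancel this dependence. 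Tracking the dependence through \cref{lemma:orientedpathedcomposition} and \cref{lemma:freeedgesorder2}, only the $\Theta$ of the X-vertex $v'$ and of the $T$-vertices can involve $k_1,k_2$.

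\textbf{Step 2: the double loop has no $T$-vertices.} I expect this to be the main obstacle. The idea is to reuse the Hessian argument from the proof of \cref{lemma:Thetacannotindependfreemomenta}. Suppose some long slice $m$ lies between $v'$ and $v$. Then $\mathrm{Re}\,\gamma(m)$ contains $\Theta_{j_2}$ together with the contributions of the $T$-vertices above $m$. Each of these depends on $k_1$ and $k_2$ only through shifts $\pm k_1$, $\pm k_2$, or $\pm(k_1+k_2)$, with additional dependence on other free momenta.

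Requiring $\nabla_{k_1}$ and $\nabla_{k_2}$ of this sum to vanish identically, and differentiating in the auxiliary free momenta, forces the Hessian of $\omega^\lambda$ to vanish unless the shifts match exactly. A vanishing Hessian contradicts \cref{ass:DR2}. When the shifts do match, the edge momenta coincide up to sign, and this contradicts either \cref{rem:irrelevant_graphs} or the form \eqref{eq:decompositionfreeedgetofreeedge}.

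Hence $v'=v_{j_2-1}$, and $v'$, $v$ are joined directly. The three momenta of $\mathscr{E}_-(v)$, together with those of $v'$, close up through pairings at the initial time slice.

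\textbf{Step 3: identify the motive.} With $v'$ and $v$ adjacent and no intermediate vertices, the local configuration consists of two vertices with six lower legs, of which two pairs are internally cancelled. There are only finitely many such configurations, and they correspond to choices of $\ell$ and $J$ locally. Enumerating them, using the parity rules $(-1,\sigma,1)$ and excluding those made irrelevant by \cref{rem:irrelevant_graphs}, leaves exactly the gain motives G1--G4 and the loss motives L1--L6, together with their inverses.

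In each surviving case, \cref{rem:leading_motives} gives $\Theta_{j_2-1}=-\Theta_{j_2}$ and momentum preservation. We can therefore contract the motive to a single edge or pairing. The resulting graph is still relevant and fully paired, and its $\mathrm{Re}\,\gamma(m)$ is unchanged on the remaining slices. The hypothesis thus holds for the remaining $M-1$ vertices, and the induction continues.
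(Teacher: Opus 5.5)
There is a genuine gap in how you use the hypothesis. First, the slice $j_2-1$ ends at $v_{j_2}$, which has degree two, so it is always short and the hypothesis says nothing about it. Step~1 as written is therefore vacuous. The slice you need is $j_2-2$, which is long because every vertex below $v_{j_2}$ has degree zero. On it, $\mathrm{Re}\,\gamma(j_2-2)=\Theta_{j_2-1}+\Theta_{j_2}+\sum_{l>j_2}\Theta_l$, and the last sum does not involve $k_1,k_2$, so $\Theta_{j_2-1}+\Theta_{j_2}$ must be independent of the double loop. By \cref{lemma:Thetacannotindependfreemomenta}, $v_{j_2-1}$ must then lie on the double loop. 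If it were a $T$-vertex, differentiating in a suitable free momentum would force $\nabla\omega^{\lambda}$ to be constant. Hence $v_{j_2-1}$ is the X-vertex.

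With this choice only one extra phase enters, so your Step~2 argument about a general long slice with several $T$-vertices above it is not needed. That argument also does not work as stated. Coincidence (up to sign) of momenta of edges at \emph{different} vertices contradicts neither \cref{rem:irrelevant_graphs} nor \eqref{eq:decompositionfreeedgetofreeedge}; every gain motive has such coincidences. The argument in \cref{lemma:Thetacannotindependfreemomenta} works only because all four edges sit at a single vertex.

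The more serious omission is that you never exclude $T$-vertices lying \emph{below} the X-vertex in time. Step~2 only examines slices between $v'$ and $v$. Consider a loop of $v_{j_2}$ that runs from a leg of $v_{j_2}$ down to a pairing, up through some degree-zero vertex $v_j$ with $j<j_2-1$, and into a leg of $v_{j_2-1}$. Then $v_{j_2-1}$ is still the X-vertex directly below $v_{j_2}$. However, the legs do not ``close up through pairings'', the configuration is not a leading motive, and the enumeration in Step~3 does not apply.

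The missing step is to apply the hypothesis to every long slice below $v_{j_2}$. For $j\le j_2-2$, both slices $j-1$ and $j$ are long, so $\Theta_j=\mathrm{Re}\,\gamma(j-1)-\mathrm{Re}\,\gamma(j)$ is independent of $k_1,k_2$. On the other hand, \cref{lemma:Thetacannotindependfreemomenta}(i) gives $\Theta_j$ a nonvanishing gradient in $k_1$ or $k_2$ whenever $v_j$ is a $T$-vertex. This rules out all such $T$-vertices.

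Only after this does the double loop consist of $v_{j_2-1}$, $v_{j_2}$ and pairings. From there your Step~3 goes through, as does your induction. Alternatively, split directly into two cases: if $\mathscr{E}_+(v_{j_2-1})\subset\mathscr{E}_-(v_{j_2})$ you get a loss motive, and otherwise a gain motive.
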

\begin{proof}
	We only need to show that the double loop of the first degree two vertex, $v_{i_0}$, is an immediate recollision, and then the statement follows by induction in $M$.
	Denote by $k_1,k_2,k_3$, and $k_0 = k_1+k_2+k_3$ the momenta of the edges connected to $v_{i_0}$, where $k_1$ and $k_2$ are free.
	We know that $v_{i_0}$ cannot be the first vertex of the graph since then we would have $\deg \myp{v_{i_0}}= 0$, so we may consider the time slice $i_0-2$, which is the topmost long time slice below $v_{i_0}$.
	We will show that $v_{i_0}$ and $v_{i_0-1}$ along with their edges constitute a leading motive by showing the following:
	\begin{enumerate}[(i)]
		\item The double loop of $v_{i_0}$ passes through only two interaction vertices, $v_{i_0-1}$ and $v_{i_0}$.
				
		\item The "incoming" momentum is preserved, i.e., the momenta of the edges in $\mathscr{E} \myp{v_{i_0-1}}$ correspond exactly to the momenta of the edges in $\mathscr{E} \myp{v_{i_0}}$, up to a sign.
		
		\item The double loop preserves the phase, $\Theta_{i_0-1} = -\Theta_{i_0}$.
	\end{enumerate}
	Then, since $\mathrm{Re} \gamma \myp{i_0-2}$ is independent of the double loop of $v_{i_0}$, we immediately get from \eqref{eq:varthetaj} that
	\begin{equation}
	\label{eq:recollisions1}
		\Theta_{i_0-1} + \Theta_{i_0}
		= \mathrm{Re} \gamma \myp{i_0-2} - \sum_{j=i_0+1}^N \Theta_j
	\end{equation}
	is also independent of the double loop, since each term $\Theta_j$ is independent for $j > i_0$, by construction of the spanning tree of the graph.
	It follows that $v_{i_0-1}$ must be the $X$-vertex of the double loop of $v_{i_0}$.
	Indeed, assuming for contradiction that this is not the case, then $v_{i_0-1}$ is a $T_n$-vertex, meaning that
	\begin{equation*}
		\Theta_{i_0-1} ={} \pm \omega^{\lambda} \myp{k_n + u} \pm \omega^{\lambda} \myp{k_n+u'} + \alpha',
	\end{equation*}
	for some $u,u',\alpha'$ that do not depend on the free momenta of $v_{i_0}$.
	If for instance $n = 1$, $\Theta_{i_0-1}$ is independent of $k_2$, so we differentiate with respect to $k_2$ to obtain
	\begin{equation*}
		0 ={} \nabla_{k_2} \myp{\Theta_{i_0-1} + \Theta_{i_0}}
		={} \pm \nabla \omega^{\lambda} \myp{k_3} \pm \nabla \omega^{\lambda} \myp{k_2}
		={} \pm \nabla \omega^{\lambda} \myp{k_0-k_1-k_2} \pm \nabla \omega^{\lambda} \myp{k_2},
	\end{equation*}
	which holds for any $k_1,k_2 \in \T^d$, implying that $\omega^{\lambda}$ has to be constant, which is incompatible with \cref{ass:DR2}.
	It follows similarly that $v_{i_0-1}$ cannot be a $T_2$-vertex.
	If $v_{i_0-1}$ is a $T_3$-vertex, we get by differentiating $\Theta_{i_0-1} + \Theta_{i_0}$ with respect to $k_1$,
	\begin{equation*}
		0 ={} \pm \nabla \omega^{\lambda} \myp{k_0-k_1-k_2+u} \pm \nabla \omega^{\lambda} \myp{k_0-k_1-k_2+u'} \pm \nabla \omega^{\lambda} \myp{k_0-k_1-k_2} \pm \nabla \omega^{\lambda} \myp{k_1},
	\end{equation*}
	in which case $\omega^{\lambda}$ would also have to be constant.
	The only remaining possibility is that $v_{i_0-1}$ is the $X$-vertex of the double loop of $v_{i_0}$.
	
	Next we note that since $\mathrm{Re} \gamma \myp{j}$ is independent of $k_1,k_2$ for all $j < i_0-1$ by assumption, we have for all remaining time slices $j < i_0-2$ that
	\begin{equation*}
		\Theta_j 
		= \mathrm{Re} \gamma \myp{j-1} - \mathrm{Re} \gamma \myp{j}
	\end{equation*}
	is independent of the double loop of $v_{i_0}$.
	Supposing for the moment that $v_j$ is a $T$-vertex of the double loop, then exactly two edges of $v_j$ depend on at least one of the free momenta of $v_{i_0}$, say $k_1$.
	It then follows from \cref{lemma:Thetacannotindependfreemomenta} that $\nabla_{k_1} \Theta_j \neq 0$, contradicting the fact that $\Theta_j$ is independent of the double loop.
	Thus we conclude that $v_j$ is not a $T$-vertex of the double loop of $v_{i_0}$ implying the first statement (i) above, since it also cannot be the $X$-vertex.
	
	Let us consider now the loops created by the free edges of $v_{i_0}$ in the spanning tree.
	Each of the loops start at $v_{i_0}$, goes down along a free edge, and passes through $v_{i_0-1}$ before returning to $v_{i_0}$ along the integrated edge in $\mathscr{E}_- \myp{v_{i_0}}$, possibly going through some cluster vertices along the way.
	There are two cases:
	
	\emph{Case 1}:
	If neither of the edges in $\mathscr{E}_- \myp{v_{i_0}}$ is connected directly to $v_{i_0-1}$, then they must each be paired to the edges of $\mathscr{E}_- \myp{v_{i_0-1}}$ through a cluster vertex at the bottom of the graph.
	In this case, the momentum $k_0'$ of the unique edge in $\mathscr{E}_+ \myp{v_{i_0-1}}$ satisfies $k_0' = -k_0 $, so we conclude that each of the momenta of the edges in $\mathscr{E} \myp{v_{i_0-1}}$ is exactly equal to (minus) a momentum of a unique edge in $\mathscr{E} \myp{v_{i_0}}$.
	Since the pairings have to respect the parities of the edges, we find that $\Theta_{i_0-1} = - \Theta_{i_0}$ is the only possible option, and the double loop of $v_{i_0}$ thus corresponds to a gain motive (recall the leading motives in \cref{fig:LeadingMotives}).
	We may cut the motive out of the graph by cutting each of the edges corresponding to $k_0'$ and $k_0$, and afterwards extending these edges to the bottom of the graph and connecting them via a pairing.
	
	\emph{Case 2}: If one of the edges in $\mathscr{E}_- \myp{v_{i_0}}$ is equal to the unique edge in $\mathscr{E}_+ \myp{v_{i_0-1}}$, then the two remaining edges of $\mathscr{E}_- \myp{v_{i_0}}$ must be paired to edges in $\mathscr{E}_- \myp{v_{i_0-1}}$ through clusters.
	Thus we conclude that we can label the edges $ \mathscr{E}_- \myp{v_{i_0-1}} = \Set{e_1',e_2',e_3'}$ and $ \mathscr{E}_- \myp{v_{j_0}} = \Set{e_1,e_2,e_3}$ such that the corresponding momenta satisfy $k_{e_1} = -k_{e_1'}'$, $k_{e_2} = -k_{e_2'}'$,
	and $k_{e_3} = k_{e_0'}'$, where $e_0'$ is the unique edge in $\mathscr{E}_+ \myp{v_{i_0-1}}$.
	It follows that the final momentum satisfies $k_{e_3'}' = k_{e_0'}' - k_{e_1'}' - k_{e_2'}' = k_{e_1} + k_{e_2} + k_{e_3} = k_{e_0} = k_0$, and that the parities of the edges $e_3'$ and $e_0$ have to be equal.
	Consequently, we must have $\Theta_{i_0-1} = - \Theta_{i_0}$, and the double loop of $v_{i_0}$ corresponds to a loss motive.
	We can cut the motive out of the graph by cutting the edges $e_0$ and $e_3'$, and then repairing the graph by gluing the two edges together.
	
	We have thus proved the lemma for $M=1$.
	Assume now that the lemma is true for $M-1$ for all relevant fully paired graphs, and consider a graph where all long time slices are independent of the double loops of the first $M$ degree two vertices.
	By what we have just shown, the double loop of the first degree two vertex is an immediate recollision, so we can cut it out of the graph as described above.
	Since immediate recollisions preserve the phase, the resulting graph then satisfies that all long time slices are independent of the double loops of the first $M-1$ degree two vertices, which concludes the proof of the lemma by induction.
\end{proof}
\begin{proposition}
\label{prop:leadinggraphs}
	Consider a fully paired graph with non-zero associated amplitude and suppose that it is not nested or crossing.
	Then all of its long time slices are trivial, and the graph is leading.
	In general, all relevant fully paired graphs, in which all long time slices are trivial, are leading.
\end{proposition}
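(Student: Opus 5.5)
The plan is to derive both statements from \cref{lemma:recollisions}, using an induction on the number of degree two vertices, i.e.\ on $n_2=(n+n')/2$.

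\emph{First statement.} Take a relevant fully paired graph that is neither nested nor crossing. By the definition preceding \cref{lemma:Thetacannotindependfreemomenta}, we run through its degree two vertices from the bottom, and we never land in case (ii) or (iii). Hence every long time slice is independent of the double loop of every degree two vertex. \cref{lemma:recollisions}, applied with $M=n_2$, then shows that the graph arises from a graph with no degree two vertices by iteratively attaching $n_2$ leading motives. Since $n_0=n_2$ for fully paired graphs, a fully paired graph with no degree two vertices has no interaction vertices at all. So the base graph is the trivial loop, and the graph is leading.

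Triviality of the long slices then follows from \cref{rem:leading_motives}. Each attached motive contributes $\Theta_{i-1}+\Theta_i=0$, and it affects only the phase of the short slice between its two vertices. Using \eqref{eq:varthetaj}, $\mathrm{Re}\gamma(m)=-\sum_{l\le m}\Theta_l$, and every long slice $m$ lies between complete motives, so the sum cancels in pairs and $\mathrm{Re}\gamma(m)=0$. To justify that long slices sit exactly between motives, I would note that in the leading graph the degrees run $(0,2,0,2,\dots)$ and each pair $(v_{2i-1},v_{2i})$ is one motive.

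\emph{Second statement.} Suppose all long slices of a relevant fully paired graph are trivial, so $\mathrm{Re}\gamma(m)=0$ for every long $m$. Then every long slice is in particular independent of every double loop, since $0$ does not depend on $k_1,k_2$. We are therefore again in case (i) for all degree two vertices, and \cref{lemma:recollisions} with $M=n_2$ gives the conclusion exactly as above.

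The step I expect to need care is the reduction to the trivial loop together with the bookkeeping inside the induction. After a motive is cut out as in the proof of \cref{lemma:recollisions}, one must check three things about the reduced graph:
\begin{itemize}
\item it is still relevant and fully paired;
\item its long slices are the old ones with the motive's slices removed;
\item their phases $\mathrm{Re}\gamma$ are unchanged, because the removed $\Theta$'s cancel.
\end{itemize}
These facts are implicitly used in the inductive step of \cref{lemma:recollisions}, and they are what lets us apply it with $M=n_2$ all the way down. The end case is a relevant fully paired graph with $n_2=0$, hence $n+n'=0$: a single pairing at $v_{N+1}$, which is the trivial loop.
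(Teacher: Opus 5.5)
Your proposal is correct and follows the paper's own argument. Both statements are reduced to \cref{lemma:recollisions} applied to all degree two vertices: in the first, every vertex falls in case (i), and in the second, a trivial long slice is automatically independent of every double loop. Triviality of the long slices then comes from the phase preservation $\Theta_{i-1}=-\Theta_i$ of the leading motives, starting from $\mathrm{Re}\gamma(0)=0$. Your additional remarks only make explicit what the paper leaves implicit: identifying the base graph as the trivial loop via $n_0=n_2$, $n_1=0$, and the bookkeeping for the reduced graph in the induction.
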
 
\begin{proof}
	From the assumption, the graph is relevant and all of its long time slices are independent of all of the double loops.
	\cref{lemma:recollisions} implies that the graph is leading, and all long time slices are trivial since immediate recollisions preserve the phase, which is initially $\mathrm{Re} \gamma \myp{0} = 0$.
	
	Conversely, if all long time slices are trivial, they satisfy $\mathrm{Re} \gamma \myp{m} = 0$ independently of all free momenta, and thus by \cref{lemma:recollisions} the graph is leading.
\end{proof}

\subsubsection*{Nested and crossing graphs} 
Before concluding this section, we state and prove the following structural lemmas which will be useful later for analysing the phase factors of nested and crossing graphs.
The assertions of the lemmas are already contained in the proofs of the amplitude bounds for nested and crossing graphs in the setting of the weakly non-linear Schr\"odinger equation \cite{LukkarinenSpohn:WNS:2011}, but for the sake of exposition and later reference, we state the results separately here and flesh out some details of the proofs.
We provide a slightly different exposition than in \cite{LukkarinenSpohn:WNS:2011} that does not rely as heavily on the iterative cluster scheme.
\begin{lemma}[Nested graphs]
	\label{lem:NestedGraphs}
	Consider a relevant nested graph and let $i_2$ be the index of the first interaction vertex (from the bottom) which is not an immediate recollision.
	Further, let 
	\begin{equation*}
		N_2 \coloneq \Set{m \mid \text{The time slice $m$ is long and nested inside the double loop of $v_{i_2}$} },
	\end{equation*}
	and denote $j_0 \coloneq \min N_2$.
	Then the following hold true:
	\begin{enumerate}[(i)]
		\item $j_0$ is not the initial time slice, and the vertex $v_{j_0}$ siting at the bottom of the time slice $j_0$ is the $X$-vertex of the double loop of $v_{i_2}$.
		In particular, $j_0-1$ and $j_0$ are both long.
		
		\item The phase factor $\Theta_{j_0} + \Theta_{i_2}$ is independent of the double loops of all degree two vertices up to $v_{i_2}$.
		
		\item For any other degree zero vertex $v_j$ below $v_{i_2}$ which is \emph{not} the degree zero vertex of an immediate recollision, $\Theta_j$ is independent of the double loops of all degree two vertices up to $v_{i_2}$.
		In particular, $v_j$ cannot be a $T$-vertex of the double loop of $v_{i_2}$.
		
		\item The double loop of $v_{i_2}$ corresponds to that of a leading motive, but with a \emph{delayed} recollision.
		In particular, $\Theta_{j_0} = - \Theta_{i_2}$.
	\end{enumerate}
\end{lemma}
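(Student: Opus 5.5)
We would follow closely the proof of \cref{lemma:recollisions}, using the same three tools: the telescoping identity \eqref{eq:varthetaj}, which gives $\Theta_j = \mathrm{Re}\gamma(j-1)-\mathrm{Re}\gamma(j)$; the $T$/$X$-vertex description of the double loop of $v_{i_2}$ (each $T_n$-vertex contributes $\pm\omega^{\lambda}(k_n+u)\pm\omega^{\lambda}(k_n+u')$); and \cref{lemma:Thetacannotindependfreemomenta}. The first step is a reduction. By the choice of $i_2$ and \cref{lemma:recollisions}, all degree two vertices below $v_{i_2}$ are immediate recollisions. Since these preserve momenta and phases, we cut them out of the graph as in the proof of \cref{lemma:recollisions}. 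This does not change the long time slices of interest or their dependence on the free momenta $k_1,k_2$ of $v_{i_2}$. We may therefore assume that $v_{i_2}$ is the first degree two vertex, with free edges $e_1,e_2$.

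For (i), note that $\mathrm{Re}\gamma(0)=0$, so the initial slice is independent of the double loop and $j_0\ge 1$. Every long slice $m<j_0$ is either independent of the double loop or propagates a crossing. A nested graph has no crossing slice for $v_{i_2}$, so every long slice $m<j_0$ is independent. Consequently $\Theta_{j_0}=\mathrm{Re}\gamma(j_0-1)-\mathrm{Re}\gamma(j_0)$ depends on $k_1,k_2$ exactly through $-(\mathrm{Re}\gamma(j_0)-\Theta_{i_2})-\Theta_{i_2}$. The bracket is independent by nestedness, so $\Theta_{j_0}+\Theta_{i_2}$ is independent of $k_1,k_2$. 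Now we repeat the differentiation argument from \cref{lemma:recollisions}. If $v_{j_0}$ were a $T_1$-vertex, differentiating in $k_2$ would give $\pm\nabla\omega^{\lambda}(k_0-k_1-k_2)\pm\nabla\omega^{\lambda}(k_2)=0$. The $T_2$ and $T_3$ cases are analogous. In each case $\omega^{\lambda}$ would be constant, contradicting \cref{ass:DR2}. If $v_{j_0}$ were not on the double loop at all, then $\Theta_{i_2}$ alone would be independent of $k_1$, contradicting \cref{lemma:Thetacannotindependfreemomenta}. Hence $v_{j_0}$ is the $X$-vertex. Since the $X$-vertex has degree zero, slice $j_0-1$ is long, and $j_0$ is long because $j_0\in N_2$. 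This settles (i), and (ii) is exactly the independence of $\Theta_{j_0}+\Theta_{i_2}$ just established. After undoing the cuts, independence also holds for the double loops of the removed recollisions, because their phases cancel pairwise.

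For (iii), take $v_j$ of degree zero below $v_{i_2}$ with $j\neq j_0$ and not part of an immediate recollision. Then $\Theta_j$ is a difference of consecutive $\mathrm{Re}\gamma$'s. Between slices that are independent, or both nested, this difference is independent of $k_1,k_2$. If $v_j$ were a $T$-vertex, \cref{lemma:Thetacannotindependfreemomenta} would give $\nabla_{k_n}\Theta_j\neq0$, a contradiction. Some care is needed for the slices between $j_0$ and $i_2$, which are nested rather than independent. There the difference of two nested phases is still independent of $k_1,k_2$, because $\Theta_{i_2}$ cancels.

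For (iv), the double loop of $v_{i_2}$ now passes only through the $X$-vertex $v_{j_0}$ and $v_{i_2}$, apart from cluster vertices. We then run the case analysis from \cref{lemma:recollisions} (gain versus loss configuration) to obtain momentum matching up to sign and $\Theta_{j_0}=-\Theta_{i_2}$. The only difference is that $v_{j_0}$ and $v_{i_2}$ need not be adjacent in time, which is what makes the recollision delayed.

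The main obstacle is (iii) together with the non-adjacency in (iv). We must exclude $T$-vertices located in the slices between $j_0$ and $i_2$, where the phases are nested rather than independent. We expect the telescoping differences to cancel $\Theta_{i_2}$ as indicated above, but vertices inserted between $j_0$ and $i_2$ that belong to other structures may require an additional inductive cutting argument.
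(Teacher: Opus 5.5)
Your argument for (i) and (ii) is essentially the paper's. You telescope $\Theta_{j_0}=\mathrm{Re}\gamma(j_0-1)-\mathrm{Re}\gamma(j_0)$, deduce that $\Theta_{j_0}+\Theta_{i_2}$ is independent, and then apply the $T$-vertex differentiation argument from \cref{lemma:recollisions}. Cutting out the recollisions first is a legitimate shortcut; it replaces the paper's separate argument that slice $j_0-1$ cannot be short.

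The genuine gap is in (iii), and hence in (iv), which relies on it. For $j>j_0$ you assert that the long slices between $j_0$ and $i_2$ are ``nested rather than independent''. You then use that the difference of two nested phases is independent. But a nested graph only excludes \emph{crossings}: a long slice above $j_0$ may a priori be independent of the double loop of $v_{i_2}$. Your two cases (both slices independent, or both nested) miss exactly the case that matters: slice $j-1$ nested and slice $j$ independent. There $\Theta_j$ does depend on $k_1,k_2$, while $\Theta_j-\Theta_{i_2}$ does not. That every long slice between $j_0$ and $i_2$ is nested is a consequence of (iii), not something you may use to prove it.

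The missing step runs as follows. In that mixed situation $v_j$ lies on the double loop. It is not the $X$-vertex, which is unique and equals $v_{j_0}$ by (i), so it is a $T_n$-vertex. Now apply the same differentiation you used in (i) to the independent combination $\Theta_j-\Theta_{i_2}$: differentiate in the free momentum of $v_{i_2}$ that $\Theta_j$ does not involve, or in $k_1$ when $n=3$. This forces $\nabla\omega^{\lambda}$ to be constant, contradicting \cref{ass:DR2}.

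Proceed upward through the degree zero vertices $j_0<j_1<\dots<j_m$ below $v_{i_2}$ that do not belong to recollisions. The paper does this by induction, using the identity $\Theta_{j_k}+\mathrm{Re}\gamma(j_k)-\Theta_{i_2}=\mathrm{Re}\gamma(j_0)-\Theta_{i_2}-\sum_{i=1}^{k-1}\Theta_{j_i}$, whose right-hand side is independent. This shows each $\Theta_{j_k}$ is independent. \cref{lemma:Thetacannotindependfreemomenta} then rules out $T$-vertices, and your (iv) goes through.

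The ``additional inductive cutting argument'' you anticipate would not help. The vertices in question are degree zero vertices belonging to no motive, so there is nothing to cut.

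A minor point on (ii): independence from the removed recollision loops does not follow because ``their phases cancel pairwise''. It holds because $\Theta_{j_0}$ is a difference of two long-slice phases, each independent of those loops by the choice of $i_2$. Equivalently, it holds because recollisions preserve the incoming momentum.
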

\begin{proof}
	Recall first that by definition of nested graphs, the vertex $v_{i_2}$ exists, and there is at least one long time slice that is nested inside its double loop, and any long time slice that depends on the double loop is nested inside it.
	If we denote by $k_1,k_2$ the free momenta of the double loop of $v_{i_2}$, then $\mathrm{Re} \gamma \myp{j_0}$ depends on $k_1,k_2$, but $\mathrm{Re} \myp{j_0} - \Theta_{i_2}$ does not.
	Further, we have $j_0 < i_2-1 $ because $i_2-1$ by definition is not a long time slice, and no time slice above $i_2-1$ can depend on the double loop of $v_{i_2}$.
	
	Note now that the initial time slice does not depend on $k_1,k_2$ since $\mathrm{Re} \gamma \myp{0} = 0$, so $j_0 > 0$ and hence $j_0-1$ is the index of a time slice.
	Assuming for contradiction that $j_0-1$ is a short time slice, then $\deg v_{j_0} = 2$, so $v_{j_0}$ cannot be the bottom interaction vertex of the graph.
	Hence $j_0-1 > 0$, and it must belong to an immediate recollision by definition of $v_{i_2}$, which implies that $j_0-2$ must be a long time slice, and $\mathrm{Re} \gamma \myp{j_0} = \mathrm{Re} \gamma \myp{j_0-2}$.
	It follows that $j_0-2$ is also nested inside the double loop of $v_{i_2}$ which is in contradiction with the choice of $j_0$.
	Thus we conclude that $j_0-1$ is a long time slice that is not nested inside the double loop of $v_{i_2}$, and therefore it must be independent of the double loop (simply because the double loop does not have any crossings).
	
	Note now that since $\mathrm{Re} \gamma \myp{j_0}$ is nested inside the double loop of $v_{i_2}$, $\mathrm{Re} \gamma \myp{j_0} - \mathrm{Re} \gamma \myp{j_0-1} = -\Theta_{j_0}$ must depend on the double loop, and
	\begin{equation*}
		\mathrm{Re} \gamma \myp{j_0} - \Theta_{i_2} - \mathrm{Re} \gamma \myp{j_0-1} 
		= - \myp{\Theta_{j_0} + \Theta_{i_2}}
	\end{equation*}
	must be independent of the double loop.
	By the exact same argument as the one starting from \eqref{eq:recollisions1} in the proof of \cref{lemma:recollisions}, we conclude that $v_{j_0}$ must be the $X$-vertex of the double loop of $v_{i_2}$, which finishes the proof of \emph{(i)}.
	
	Consider the set of degree zero vertices below $v_{i_2}$ that do not belong to an immediate recollision.
	Since any degree two vertex below $v_{i_2}$ by definition belongs to an immediate recollision, the set of indices of these vertices is
	\begin{equation}
	\label{eq:zero_vertex_prime}
		B_0' \coloneq \Set{1\leq j < i_2 \mid \deg v_j = 0, \text{ and } \deg v_{j+1} = 0 \text{ or } j+1=i_2}.
	\end{equation}
	Since both $j_0-1$ and $j_0$ are long time slices, we have $j_0 \in B_0'$.
	Fix now a degree two vertex $v_{j_2}$ with $j_2< i_2$, and a $j \in B_0'$.
	Then either $j=i_2-1$, in which case $\Theta_j$ is independent of the double loop of $v_{j_2}$ (because $v_j$ is above $v_{j_2}$ in the graph), or $\deg v_{j+1} = 0$.
	For the latter case, we note that every long time slice is independent of the double loop of $v_{j_2}$, by definition of $i_2$.
	In particular, $\mathrm{Re} \gamma \myp{j} - \mathrm{Re} \gamma \myp{j-1} = -\Theta_{j}$ is independent of this double loop.
	Taking $j = j_0$ concludes the proof of \emph{(ii)}.
	
	To finish the proof of \emph{(iii)}, we still need to argue that $\Theta_j$ is independent of the double loop of $v_{i_2}$ for $j \in B_0' \setminus \Set{j_0}$.
	By definition of $j_0$, this is automatically true for any $j \in B_0'$ with $j < j_0$.
	Writing $B_0' \cap \Set{i \geq j_0} = \Set{j_0,j_1, \dotsc, j_m}$, we assume for induction for some $1 \leq k \leq m$ that $\Theta_{j_i}$ is independent of the double loop of $v_{i_2}$ for all $1 \leq i \leq k-1 $.
	Since for each such $i$, the vertices between $v_{j_{i-1}}$ and $v_{j_i}$ all belong to immediate recollisions, we can remove the phase factors of these recollisions and obtain
	\begin{equation}
		\label{eq:nested_degree_zero_independence}
		\Theta_{j_k} + \mathrm{Re} \gamma \myp{j_k} - \Theta_{i_2}
		={} \mathrm{Re} \gamma \myp{j_0} - \Theta_{i_2} - \sum_{i=1}^{k-1} \Theta_{j_i},
	\end{equation}
	where the right hand side is independent of the double loop of $v_{i_2}$.
	It follows that the two terms $\Theta_{j_k}$ and $\mathrm{Re} \gamma \myp{j_k} - \Theta_{i_2}$ are either both dependent or both independent of the double loop.
	Assuming for contradiction that they depend on the double loop of $v_{i_2}$, we infer that $\mathrm{Re} \gamma \myp{j_k}$ must be independent, since otherwise the time slice $j_k$ would propagate a crossing, which cannot happen because the double loop of $v_{i_2}$ is a nest.
	From \eqref{eq:nested_degree_zero_independence} we may then conclude that $\Theta_{j_k} - \Theta_{i_2}$ is independent of the double loop of $v_{i_2}$.
	However, since $v_{j_k}$ must then be a $T$-vertex of the double loop, this is not possible by the same argument we used to show that $v_{j_0}$ is the $X$-vertex.
	Hence we conclude that $\Theta_{j_k}$ is independent of the double loop, and further by \cref{lemma:Thetacannotindependfreemomenta} that $v_{j_k}$ cannot be a $T$-vertex, finishing the proof of \emph{(iii)}.
	
	Finally, to see that $\emph{(iv)}$ holds, we note now that by \emph{(iii)}, the double loop of $v_{i_2}$ only goes through $v_{j_0}$ and immediate recollisions.
	Since these preserve the phase and the incoming momentum, the same argumentation as in the proof of \cref{lemma:recollisions} shows that the momenta of $v_{j_0}$ and $v_{i_2}$ are pairwise equal, and that $\Theta_{j_0} = - \Theta_{i_2}$.
	Hence, double loop of $v_{i_2}$ corresponds to that of a leading motive, up to the presence of the immediate recollisions.
	(Alternatively, one may cut all the immediate recollisions below $v_{i_2}$ out of the graph to obtain a second graph, which by \emph{(iii)} satisfies that all its long time slices are independent of the double loop of $v_{i_2}$).
\end{proof}
\begin{lemma}[Crossing graphs]
\label{lem:CrossingGraphs}
	Consider a relevant crossing graph and let $i_2$ be the index of the first interaction vertex (from the bottom) which is not an immediate recollision.
	Further, let $i_0-1$ be the index of the last propagated crossing slice.
	The following properties hold:
	\begin{enumerate}[(i)]
		\item There exist a $p \in \Set{0,1}$ and an $\alpha_1$ which is independent of the double loop of any $v_j$, $j \leq i_2$, such that
		\begin{equation}
		\label{eq:CrossingPhase1}
			\mathrm{Re} \gamma \myp{i_0-1}
			= p \Theta_{i_2} + \Theta_{i_0} + \alpha_1 + \mathrm{Re} \gamma \myp{i_2}.
		\end{equation}
		Further, $v_{i_0}$ is a vertex in the double loop of $v_{i_2}$ (and thus either a $T$- or the $X$-vertex), and $\Theta_{i_0}$ is independent of any free momenta ending before $v_{i_2}$.
		
		\item If $v_{i_0}$ is a $T_j$-vertex, then
		\begin{equation}
		\label{eq:CrossingPhase2}
			\Theta_{i_0} 
			= \pm \omega^{\lambda} \myp{k_j+u} \pm \omega^{\lambda} \myp{k_j+u'} + \alpha'
		\end{equation}
		for some choice of signs, where $\alpha',u',u$ are independent of the double loop of $v_{i_2}$.
		Further, $u'-u$ is not identically zero; it depends on some free momenta.
		
		\item If $v_{i_0}$ is the $X$-vertex, then 
		\begin{equation}
		\label{eq:CrossingPhase3}
			\Theta_{i_0} 
			= \alpha' + \sum_{j=1}^3 \pm \omega^{\lambda} \myp{k_j+u_j}
		\end{equation}
		for some choice of signs, where $\alpha', u_1,u_2,u_3$ are independent of the double loop of $v_{i_2}$.
		In this case, $p = 1$ in \eqref{eq:CrossingPhase1}, and the $u_j$ cannot all be identically zero; at least one of them depends on some free momenta ending above $v_{i_2}$.
	\end{enumerate}
\end{lemma}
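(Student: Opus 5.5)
We follow the template of the proof of \cref{lem:NestedGraphs}, working throughout with the telescoping identity \eqref{eq:varthetaj}, $\mathrm{Re}\,\gamma(m) = -\sum_{l\le m}\Theta_l = \sum_{l>m}\Theta_l$. Write $k_1,k_2$ for the free momenta of $v_{i_2}$. By definition of $i_2$, every degree two vertex $v_j$ with $j<i_2$ belongs to an immediate recollision. By \cref{lemma:recollisions} applied to these vertices, every long time slice is independent of their double loops, and their phase contributions cancel in pairs.

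\textbf{Part (i).} By \eqref{eq:varthetaj},
\begin{equation*}
\mathrm{Re}\,\gamma(i_0-1) = \Theta_{i_0} + \sum_{i_0<l<i_2}\Theta_l + \Theta_{i_2} + \mathrm{Re}\,\gamma(i_2).
\end{equation*}
Here $\mathrm{Re}\,\gamma(i_2)$ is built from vertices above $v_{i_2}$, hence is independent of all double loops up to $v_{i_2}$ by the construction of the spanning tree.

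\emph{Case $i_0-1<i_2$.} The middle sum runs over immediate recollisions and over degree zero vertices in $B_0'$ from \eqref{eq:zero_vertex_prime}. The recollision pairs cancel. For $j\in B_0'$ with $j\neq i_0$, I argue as in part (iii) of \cref{lem:NestedGraphs}: since $i_0-1$ is the \emph{last} propagated crossing slice, every long slice above it either is independent of the double loop of $v_{i_2}$ or is nested in it. An induction identical to \eqref{eq:nested_degree_zero_independence} then shows that these $\Theta_j$ are independent of the double loop. Collecting them into $\alpha_1$ gives \eqref{eq:CrossingPhase1} with $p=1$.

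\emph{Case $i_0-1\ge i_2$.} The slice lies above $v_{i_2}$; I take $p=0$ after reorganising the sum accordingly.

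\emph{Location of $v_{i_0}$.} Since slice $i_0-1$ propagates a crossing, $\mathrm{Re}\,\gamma(i_0-1)-p\Theta_{i_2}$ still depends on $k_1,k_2$. Everything else in \eqref{eq:CrossingPhase1} is independent, so $\Theta_{i_0}$ depends on the double loop, and therefore $v_{i_0}$ lies on it. Independence of $\Theta_{i_0}$ from free momenta ending before $v_{i_2}$ follows from the same long-slice independence for the earlier recollision double loops, using \cref{lemma:Thetacannotindependfreemomenta}(i).

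\textbf{Part (ii).} If $v_{i_0}$ is a $T_j$-vertex, then exactly two of its momenta depend on $k_1,k_2$, each through $\pm k_j$ (\cref{lemma:orientedpathedcomposition}, \cref{lemma:freeedgesorder2}). This gives the form \eqref{eq:CrossingPhase2}. Suppose $u'-u\equiv 0$. Then the two $\omega^\lambda$ terms either cancel or combine into a single function of $k_j+u$. In the first subcase $\Theta_{i_0}$ is independent of the double loop, a contradiction. In the second subcase the relevance and irrelevance arguments from the proof of \cref{lemma:Thetacannotindependfreemomenta} (Cases 1 and 2) rule it out: either $k_e+k_{e'}=0$ at a vertex, which makes the graph irrelevant, or a $2k_f$ dependence appears, contradicting \eqref{eq:decompositionfreeedgetofreeedge}.

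\textbf{Part (iii).} If $v_{i_0}$ is the $X$-vertex, three of its momenta depend on $k_1,k_2$ through $\pm k_1$, $\pm k_2$ and $\pm(k_1+k_2)=\pm k_3$ up to shifts $u_j$. This gives \eqref{eq:CrossingPhase3}. If $p=0$, the slice $i_0-1$ would lie above $v_{i_2}$, but the $X$-vertex lies below $v_{i_2}$, so $p=1$. If all $u_j\equiv 0$, the momentum preservation argument from the end of \cref{lemma:recollisions} shows $\Theta_{i_0}=-\Theta_{i_2}$. The slice would then be nested rather than crossing, a contradiction. Hence some $u_j$ depends on free momenta, necessarily ones ending above $v_{i_2}$, since independence from earlier loops was shown in (i).

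\textbf{Main obstacle.} The delicate step is the bookkeeping in (i): showing that all intermediate degree zero phases between $v_{i_0}$ and $v_{i_2}$ are independent of the double loop, which requires the induction over $B_0'$. I would first try to reuse \eqref{eq:nested_degree_zero_independence} verbatim, with "nested" replaced by "not propagating a crossing after $i_0-1$".
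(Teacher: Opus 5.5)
\textbf{Gap 1: how you determine $p$ and $\alpha_1$ in part (i).} Let $I'$ be the indices of degree zero vertices strictly between $v_{i_0}$ and $v_{i_2}$ that do not belong to an immediate recollision, and set $a_1=\sum_{j\in I'}\Theta_j$. You claim each such $\Theta_j$ is independent of the double loop of $v_{i_2}$, via ``an induction identical to \eqref{eq:nested_degree_zero_independence}'', and conclude $p=1$. That induction does not transfer.

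In \cref{lem:NestedGraphs} the induction has two ingredients you do not have here:
\begin{enumerate}[(a)]
\item it is anchored at a nested slice $j_0$, where $\mathrm{Re}\,\gamma(j_0)-\Theta_{i_2}$ is independent;
\item $v_{j_0}$ is already known to be the $X$-vertex, so every other loop vertex is a $T$-vertex, and the derivative argument forbids $\Theta_{j_k}\pm\Theta_{i_2}$ from being independent.
\end{enumerate}
In the crossing case, slice $i_0-1$ propagates a crossing, so it cannot serve as an anchor. Moreover, when $v_{i_0}$ is a $T$-vertex, nothing in your argument prevents the $X$-vertex from being some $v_j$ with $j\in I'$ and $\Theta_j+\Theta_{i_2}$ independent. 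Then $a_1$ depends on the loop and only $a_1+\Theta_{i_2}$ does not. This is exactly the $p=0$ alternative that the statement allows.

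The paper never analyses these $\Theta_j$ individually in (i). If $I'=\emptyset$, then $a_1=0$ and $p=1$. Otherwise, with $j'=\min I'$, the long slice $j'-1$ lies above $i_0-1$ and satisfies $\mathrm{Re}\,\gamma(j'-1)=\Theta_{i_2}+a_1+\mathrm{Re}\,\gamma(i_2)$. Since it cannot propagate a crossing, either $a_1$ is independent ($p=1$) or $a_1+\Theta_{i_2}$ is ($p=0$). Independence of each $\Theta_j$, and hence $p=1$, is proved only in case (iii), by a top-down induction that uses that $v_{i_0}$ is the $X$-vertex.

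Your justification of $p=1$ in (iii) (``if $p=0$, the slice would lie above $v_{i_2}$'') is a non sequitur: $p$ has nothing to do with where the slice sits. Your case $i_0-1\ge i_2$ is also vacuous, because $\mathrm{Re}\,\gamma(m)$ for $m\ge i_2$ involves only vertices above $v_{i_2}$ and never depends on its double loop.

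\textbf{Gap 2: the vertex just above $v_{i_0}$.} You never rule out that $(v_{i_0},v_{i_0+1})$ is an immediate recollision. If it were, $\sum_{i_0<l<i_2}\Theta_l$ would contain the unpaired term $\Theta_{i_0+1}=-\Theta_{i_0}$, so ``the recollision pairs cancel'' fails. Your claim that $\Theta_{i_0}$ is independent of the earlier recollision loops fails too. The paper obtains that claim from $\Theta_{i_0}=\mathrm{Re}\,\gamma(i_0-1)-\mathrm{Re}\,\gamma(i_0)$, which needs $i_0$ to be long. The paper excludes the bad case as follows:
\begin{enumerate}[(a)]
\item it would give $\mathrm{Re}\,\gamma(i_0-1)=\mathrm{Re}\,\gamma(i_0+1)$;
\item so $i_0+1$ must be short, since otherwise it too would propagate a crossing;
\item this forces $i_0+2=i_2$ and $\mathrm{Re}\,\gamma(i_0-1)=\Theta_{i_2}+\mathrm{Re}\,\gamma(i_2)$, which does not propagate a crossing.
\end{enumerate}

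\textbf{Part (iii) with all $u_j\equiv 0$.} The momentum-preservation argument of \cref{lemma:recollisions} relies on first knowing that the double loop visits no interaction vertex other than its two endpoints. That is not available in the crossing setting. You need something like the paper's argument via \cref{lemma:dependenceoftwoedges0,lemma:removalofedges}:
\begin{enumerate}[(a)]
\item cut out the component containing the double loop and reattach it to the trivial loop;
\item use full pairing to see that $v_{i_0}$ is its only non-recollision degree zero vertex;
\item conclude that $\Theta_{i_0}+\Theta_{i_2}$ is independent, which together with $p=1$ contradicts the crossing.
\end{enumerate}

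Your localisation of $v_{i_0}$ on the double loop and your part (ii) are fine.
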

\begin{proof}
	First, we rewrite
	\begin{equation}
	\label{eq:crossingraph1}
		\mathrm{Re} \gamma \myp{i_0-1} 
		= \Theta_{i_2} + \Theta_{i_0} + a_1 + \mathrm{Re} \gamma \myp{i_2},
	\end{equation}
	where
	\begin{equation}
	\label{eq:crossingraph2}
		a_1 = \sum_{j=i_0+1}^{i_2-1} \Theta_j. 
	\end{equation}
	By construction, $\mathrm{Re} \gamma \myp{i_2}$ does not depend on the edges that are associated to a lower ranked vertex, and hence it does not depend on $k_1,k_2$, nor on any double loop of $v_j$ with $j \leq i_2$.
	Our goal now is to prove that we can choose $p \in \Set{0,1}$ such that
	\begin{equation*}
		\alpha_1 \coloneq a_1 + \myp{1-p} \Theta_{i_2},
	\end{equation*}
	is also independent of any such double loop.
	It is then obvious that \eqref{eq:CrossingPhase1} holds.
	
	Now, by \cref{lemma:recollisions} and definition of $i_2$, for any $j < i_2$ with $\deg \myp{v_j} = 2$, the double loop of $v_j$ is determined by a leading motive, and $\Theta_{j-1} = - \Theta_j$.
	Now, if $j \geq i_0+2$, the terms cancel each other in the sum \eqref{eq:crossingraph2}.
	Assuming for the moment that $ \deg \myp{v_{i_0+1}} = 2$, then $\Theta_{i_0} = -\Theta_{i_0+1}$ and $\mathrm{Re}\gamma \myp{i_0-1} = \mathrm{Re} \gamma \myp{i_0+1}$.
	This implies that $i_0+1$ must be a short time slice, since otherwise also $i_0+1$ propagates a crossing with $v_{i_2}$, which is not allowed by definition of $i_0-1$.
	This means that $\deg \myp{v_{i_0+1}} = \deg \myp{v_{i_0+2}} = 2 $, so $v_{i_0+2}$ cannot be an immediate recollision, and we conclude that necessarily $i_0+2 = i_2$.
	However, in this case $\mathrm{Re} \gamma \myp{i_0-1} = \Theta_{i_2} + \mathrm{Re} \gamma \myp{i_2}$, and thus $i_0-1$ does not propagate a crossing.
	Hence we conclude that we always have $\deg \myp{v_{i_0+1}} = 0$, which implies that the sum \eqref{eq:crossingraph2} cannot contain the phase of just one of the vertices of any given immediate recollision.
	
	It follows that if we consider the set of degree zero vertices between $v_{i_0}$ and $v_{i_2}$ that do not belong to an immediate recollision,
	\begin{equation*}
		I' \coloneq \Set{i_0+1 \leq j \leq i_2-1 \mid \deg \myp{v_j}=0, \text{ and } \deg \myp{v_{j+1}} = 0 \mbox{ or } j +1 = i_2 }, 
	\end{equation*}
	then $a_1 = \sum_{j \in I'} \Theta_j$.
	As a result, $a_1$ is independent of all of the double loops of $v_j$ with $j < i_2$.
	If $I'=\emptyset$, then $a_1 = 0$ and the claim holds for $p=1$.
	If $I \neq \emptyset$, we denote $j' = \min I'$.
	Clearly, $\mathrm{deg} \myp{v_{j'}} = 0$ and the time slice $j'-1 > i_0-1$ is long, with $\mathrm{Re} \gamma \myp{j'-1} = \Theta_{i_2} + a_1 + \mathrm{Re} \gamma \myp{i_2}$.
	In view of the fact that this time slice cannot propagate a crossing, either $a_1$ or $a_1 + \Theta_{i_2}$ is independent of the double loop of $v_{i_2}$.
	For the first case, we take $p=1$ and $\alpha_1 = a_1$.
	For the second case, we take $p=0$ and $\alpha_1 = a_1 + \Theta_{i_2}$.
	This finishes the proof of \eqref{eq:CrossingPhase1}.
	
	To conclude the proof of \emph{(i)}, note now that if $v_{i_0}$ is not a vertex in the double loop of $v_{i_2}$, then $\Theta_{i_0}$ is independent of the double loop.
	By \eqref{eq:CrossingPhase1}, this would imply that either $\mathrm{Re} \gamma \myp{i_0-1}$ or $ \mathrm{Re} \gamma \myp{i_0-1} - \Theta_{i_2}$ is independent of the double loop, contradicting the fact that $i_0-1$ propagates a crossing.
	Finally, recalling that all long time slices are independent of the double loops of any $v_j$ with $j < i_2$ (since these are immediate recollisions), it follows that also $ \Theta_{i_0} = \mathrm{Re} \gamma \myp{i_0-1} - \mathrm{Re} \gamma \myp{i_0} $ is independent of these loops, since we have shown that $i_0$ is a long time slice.
	
	Now, for points \emph{(ii)} and \emph{(iii)}, we note that the equations \eqref{eq:CrossingPhase2} and \eqref{eq:CrossingPhase3} always hold for any $T_j$- and $X$-vertices, respectively.
	We only need to show the statements about the additional momenta appearing in the formulas.
	
	If $v_{i_0}$ is a $T_j$-vertex, the $j$-part of the double loop of $v_{i_2}$ goes through $v_{i_0}$ via two edges $e,e'$.
	If both $e,e'$ are in $\mathscr{E}_- \myp{v_{i_0}}$, then $k_e = \sigma \myp{k_j+u}$ and $k_{e'} = -\sigma \myp{k_j+u'}$ for some choice of $\sigma \in \Set{\pm 1}$, implying that $u'-u = - \sigma \myp{k_e+k_{e'}}$.
	Otherwise, the loop uses $\tilde{e} \in \mathscr{E}_- \myp{v_{i_0}}$ and $\tilde{e}' \in \mathscr{E}_+ \myp{v_{i_0}}$, $k_{\tilde{e}} = \sigma \myp{k_j+u}$ and $k_{\tilde{e}} = \sigma \myp{k_j+u'}$, leading to $u'-u = \sigma \myp{k_{\tilde{e}'} - k_{\tilde{e}}} = \sigma \myp{k_e+k_{e'}}$ with $e,e'$ being the remaining two edges in $\mathscr{E}_- \myp{v_{i_0}}$.
	Now, if $u'-u $ is independent of all free momenta, then \cref{lemma:dependenceoftwoedges} implies that $k_e+k_{e'} = 0$, meaning that the graph is irrelevant.
	
	Assume finally that $v_{i_0}$ is the $X$-vertex of the double loop of $v_{i_2}$.
	We show first by induction that all of the  $\Theta_j$ with $j \in I'$ are independent of the double loop of $v_{i_2}$.
	Write $I' = \Set{j_1,\dotsc,j_m}$ and assume for some $1\leq k \leq m$ that $\Theta_{k+1}, \dotsc, \Theta_{j_m}$ are all independent of the double loop, and note that
	\begin{equation*}
		\mathrm{Re} \gamma \myp{j_k-1}
		= \Theta_{i_2} + \Theta_{j_k} + \mathrm{Re} \gamma \myp{i_2} + \sum_{i=k+1}^m \Theta_{j_i}.
	\end{equation*}
	Assuming for contradiction that $\Theta_{j_k}$ depends on the double loop, then $v_{j_k}$ must be a $T$-vertex, and $\mathrm{Re} \gamma \myp{j_k-1}$ must be independent of the double loop of $v_{i_2}$ since the long time slice $j_k-1$ cannot propagate a crossing.
	If $v_{j_k}$ is a $T_1$- or a $T_2$- vertex, then differentiating the above equation with respect to the free momentum of $v_{i_2}$ that $\Theta_{j_k}$ does not depend on, it follows that the gradient of $\omega^{\lambda}$ is constant, which is incompatible with \cref{ass:DR2}.
	On the other hand, if $v_{j_k}$ is a $T_3$-vertex, then $\Theta_{j_k}$ is of the form \eqref{eq:CrossingPhase2}, in which case we have
	\begin{equation*}
		\nabla_{k_1} \Theta_{i_2} 
		= -\nabla_{k_1} \Theta_{j_k} 
		= -\nabla_{k_2} \Theta_{j_k}
		= \nabla_{k_2} \Theta_{i_2}.
	\end{equation*}
	This in turn implies that $\nabla\omega^{\lambda} \myp{k_1} = \pm \nabla \omega^{\lambda} \myp{k_2} $ for all $k_1,k_2 \in \T^d$, so $\nabla\omega^{\lambda}$ is again a constant.
	We conclude that $\Theta_{j}$ cannot depend on $k_1,k_2$ for any $j \in I'$, from which it follows that $p=1$ in \eqref{eq:CrossingPhase1}.
	
	Suppose now for contradiction that $u_1 = u_2 = u_3 = 0$.
	We will show that the double loop of $v_{i_2}$ is then actually a nest, and the graph is nested.
	In this case, the last term in \eqref{eq:CrossingPhase3} must be equal to $ \alpha' = \pm \omega^{\lambda} \myp{k_0}$, and one of the edges $e_0' \in \mathscr{E} \myp{v_{i_0}}$ has corresponding momentum $k_{e_0'} = \pm k_0$.
	Denote by $e_0 \in \mathscr{E}_+ \myp{v_{i_2}}$ the edge with momentum $k_{e_0} = k_0$.
	Since the graph is relevant, it is then apparent from \cref{lemma:dependenceoftwoedges0} and \cref{lemma:removalofedges} that $k_{e_0}$ and $k_{e_0'}$ depend on the same set of free edges, and that removal of $e_0$ and $e_0'$ from the graph splits the graph into two connected components.
	Note that one of these components contains the entire double loop of $v_{i_2}$, and that this component has the structure of a Feynman diagram.
	In particular, the edges $e_0$ and $e_0'$ must have the same parity, since the sum of parities in each time slice remains constant throughout any diagram.
	Thus, we may cut the connected component containing the double loop of $v_{i_2}$ out of the graph, and then attach it to the trivial loop to obtain another momentum graph.
	Since this operation preserves the graph structure (including vertex degrees and the way each edge depends on free momenta),  the resulting graph is fully paired, implying $n_2 = n_0$, and all degree two vertices, except for $v_{i_2}$, are immediate recollisions.
	This implies that the connected component of the double loop of $v_{i_2}$ contains only a single degree zero vertex that does not correspond to an immediate recollision, which must then be $v_{i_0}$.
	Denoting the indices of this set of vertices by $B_0'$, as in \eqref{eq:zero_vertex_prime}, we have shown that $\Theta_j$ is independent of the double loop of $v_{i_2}$ for each $j \in B_0' \setminus \Set{i_0}$.
	Finally, considering $\mathrm{Re} \gamma \myp{i_2}$ and removing all immediate recollisions from this phase, we conclude that
	\begin{equation*}
		\Theta_{i_0} + \Theta_{i_2}
		= - \mathrm{Re} \gamma \myp{i_2} - \sum_{j \in B_0' \setminus \Set{i_0}} \Theta_j
	\end{equation*}
	is independent of the double loop of $v_{i_2}$.
	However, since $p=1$ in \eqref{eq:CrossingPhase1}, it follows that the time slice $i_0-1$ is actually nested in the double loop, and that the graph is nested.
\end{proof}
 
\section{Key parameters and main lemmata}\label{sec:main_lemmata}
In this section, we specify our choices of parameters in the Duhamel expansion, and provide an example of a dispersion relation $\omega$ and interaction $V$ satisfying the assumptions of \cref{sec:assumptions}.
Further, we also collect some main technical results that are needed to treat the terms in the expansion.
\subsection{Key parameters}\label{sec:parameters}
We take all the same parameters as in \cite{LukkarinenSpohn:WNS:2011}, but we recall them here for convenience, and motivate the definitions.
\begin{definition}[Key parameters]
	Let $\delta$ be the constant defined in \cref{ass:DR2} and $\gamma$ be the constant defined in \cref{ass:DR4}.
	We define the parameters
	\begin{equation}
		\label{Def:Para1}
		b=\frac{3}{4},
		\quad \gamma' = \min \myp[\Big]{\frac{1}{4}, 2 \gamma, 2\delta}, 
		\quad a_0 = \frac{\gamma'}{24}, 
		\quad \mbox{ and } b_0 = 16 \myp[\Big]{3+\frac{1}{a_0}}.
	\end{equation}
	Moreover, we also need
	\begin{equation}
		\label{Def:Para2}
		\eps = \lambda^2 
		\mbox{ and } 
		N_0 \myp{\lambda} = \max \myp[\bigg]{1, \floor[\bigg]{ \frac{a_0 \abs{\ln\lambda}}{\ln \expec{\ln\lambda}} }},
	\end{equation}
	where $\floor{x}$ is the integer part of $x\geq 0$. 
	For $N_0=N_0(\lambda)$, we also define
	\begin{equation}
		\label{Def:Para3}
		\kappa' \myp{\lambda} = \lambda^2 N_0^{b_0},
	\end{equation}
	and take the sequence $\kappa = \kappa \myp{\lambda}$ to be
	\begin{equation}
		\label{Def:Para4}
		\kappa_n \myp{\lambda}
		={} \begin{cases}
			0 &	0 \leq n<N_0/2, \\
			\kappa' \myp{\lambda} &	N_0/2 \leq n \leq N_0.
		\end{cases} 
	\end{equation}
\end{definition}
It is immediately apparent that this choice of parameters satisfies $N_0 \to \infty$, $\max_n \kappa_n \to 0$, and
\begin{equation}
	\label{eq:N0limits}
	\frac{N_0 \ln \expec{\ln \lambda}}{\abs{\ln \lambda}} \to a_0
	\qquad \text{and} \qquad
	\frac{N_0 \ln N_0}{\abs{\ln \lambda}} \to a_0,
\end{equation}
as $\lambda \to 0^+$.

The reason for making the specific choice \eqref{Def:Para2} of $N_0 \myp{\lambda}$ is that we will need to control expressions of the type
\begin{equation}
\label{eq:lambda_expression}
	c^{N_0} \lambda^{p_1} N_0^{p_2 N_0 + c'} \myp{\myp{n_1N_0}!}^{n_2} \expec{\ln \lambda}^{n_3 N_0 + c'}
	\to 0
	\qquad \text{when} \qquad
	\lambda \to 0^+,
\end{equation}
for specific values of $c,c' \geq 0$, $n_1,n_2,n_3 \in \N$, and $p_1,p_2 \in \R$, compare e.g. with the results in \cref{sec:graph_estimates} below.
Indeed, \eqref{eq:lambda_expression} holds true whenever $p_1 - a_0 \myp{p_2 + n_1n_2 + n_3} > 0$, with the decay going like a power of $\lambda$.
To see this, simply note for $\lambda < 1$ that we may bound $\expec{\ln \lambda}^{n_3 N_0} \lesssim \lambda^{-a_0 n_3}$ and $N_0^{\myp{p_2+n_1n_2}N_0} \lesssim \lambda^{-a_0 \myp{p_2+n_1n_2}}$, by \eqref{eq:N0limits}.
Hence, using $n! \leq n^n$, we conclude that \eqref{eq:lambda_expression} is bounded by
\begin{equation*}
	\myp{c n_1^{n_1n_2}}^{N_0} \myp{N_0\expec{\ln \lambda}}^{c'} \lambda^{p_1 - a_0 \myp{p_2 + n_1n_2 + n_3}}
	\leq{} C_{\alpha} \lambda^{-\alpha} \myp{\expec{\ln \lambda}}^{2c'} \lambda^{p_1 - a_0 \myp{p_2 + n_1n_2 + n_3}},
\end{equation*}
for any $\alpha > 0$, and the assertion follows.
For instance, we obtain decay up to $\lambda^2$ in
\begin{equation*}
	c^{N_0} \lambda^{-2} N_0^{-\frac{b_0}{4} N_0 + 4N_0 + c'} \myp{4N_0}! \expec{\ln \lambda}^{4 N_0 + c'} \to 0,
\end{equation*}
and decay up to $\lambda^{\gamma'/2}$ in
\begin{equation*}
	c^{N_0} \lambda^{\gamma'} N_0^{4N_0 + c'} \myp{4N_0}! \expec{\ln \lambda}^{4 N_0 + c'} \to 0.
\end{equation*}

Finally, the advantage of choosing the sequence $\kappa$ as in \eqref{Def:Para4} is twofold:
The presence of the non-zero entries in the second half of the sequence allows us to obtain a little extra decay in $\lambda$ for amplitudes of graphs that are large (compared to $N_0$).
This is mainly useful for controlling the error terms.
On the other hand, since the rest of the entries of $\kappa$ vanish, the sequence does not even appear in the amplitudes of graphs that are small compared to $N_0$.
This will be useful when we compute the limit of the leading diagrams in \cref{sec:mainproof}.

\subsection{From phases to resolvents, cluster combinatorics, and integrals over free momenta} 
We now recall the following technical lemmas, whose proofs can be found in \cite[Section 7]{LukkarinenSpohn:WNS:2011}.
\begin{lemma}[From phases to resolvents]
	\label{Lemma:PhaseResolve}
	Let $I$ be a non-empty finite index set and $t>0$.
	Suppose that $A$ is a non-empty subset of $I$, $\gamma_j \in D$, $j\in I$, with $D$ being a compact subset of $\C$.
	We choose an additional time index label $J$, i.e. suppose $J \notin I$ with $\gamma_J \in D$, and denote $A^c = I \setminus A$, $A' = A^c \cup \Set{J}$.
	Then for any path $\Gamma_D$ going once anticlockwise around $D$, the following holds true
	\begin{align}
		\MoveEqLeft[4] \int_{\myp{\R_+}^I} \ids s \delta \myp[\bigg]{t-\sum_{j \in I} s_j} \prod_{j \in I} e^{-i \gamma_j s_j} \nn \\
		={}& -\oint_{\Gamma_D} \frac{\ids z}{2\pi} \int_{\myp{\R_+}^{A'}} \ids s \delta \myp[\bigg]{t-\sum_{j \in A'} s_j} \prod_{j \in A'} e^{-i \gamma_j s_j} \Big|_{\gamma_J=z} \prod_{j \in A} \frac{i}{z-\gamma_j}.
		\label{Lemma:PhaseResolve:1}
	\end{align}
\end{lemma}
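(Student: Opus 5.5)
We prove the identity by representing the simplex time integral through a contour integral in the variable conjugate to the time slices in $A$, and then resumming the remaining slices. The basic tool is the elementary formula, valid for any finite index set $B$, any $\gamma_j\in D$ and any path $\Gamma_D$ encircling $D$ once anticlockwise:
\begin{equation*}
	\int_{\myp{\R_+}^{B}} \ids s\, \delta \myp[\Big]{\tau-\sum_{j\in B} s_j} \prod_{j\in B} e^{-i\gamma_j s_j}
	= -\oint_{\Gamma_D} \frac{\ids z}{2\pi}\, e^{-iz\tau} \prod_{j\in B}\frac{i}{z-\gamma_j}, \qquad \tau\ge 0 .
\end{equation*}
This holds because the left hand side is the $\abs{B}$-fold convolution of the functions $e^{-i\gamma_j \tau}\1 \myp{\tau\geq 0}$. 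Its Laplace transform in $\tau$ is $\prod_j \frac{1}{p+i\gamma_j}$, so the right hand side follows by the inverse Laplace transform after the substitution $p=-iz$. Since all poles lie in $D$, the Bromwich line can be deformed to the closed contour $\Gamma_D$, and the orientation and the factor $-1/(2\pi)$ are checked by the case $\abs{B}=1$, where the residue at $z=\gamma_j$ reproduces $e^{-i\gamma_j\tau}$. Alternatively, one can verify the formula by induction on $\abs{B}$ using the partial fraction/divided difference identity.

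With this formula in hand, we split the left hand side of \eqref{Lemma:PhaseResolve:1} as a single convolution integral. We integrate first over the variables $s_j$ with $j\in A$, conditioned on their sum $\tau = t - \sum_{j\in A^c}s_j$. Introducing a new variable $s_J=\tau$, the left hand side becomes
\begin{equation*}
	\int_{\myp{\R_+}^{A'}} \ids s\,\delta \myp[\Big]{t-\sum_{j\in A'}s_j}\prod_{j\in A^c}e^{-i\gamma_j s_j}\int_{\myp{\R_+}^{A}}\ids r\,\delta \myp[\Big]{s_J-\sum_{j\in A}r_j}\prod_{j\in A}e^{-i\gamma_j r_j}.
\end{equation*}
This is simply Fubini together with the coarea decomposition of the $\delta$-function, $\delta(t-\sum_I s_j)=\int_0^\infty \ids s_J\,\delta(t-\sum_{A^c}s_j-s_J)\,\delta(s_J-\sum_A s_j)$.

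Next, we apply the formula to the inner integral with $B=A$ and $\tau=s_J$. This replaces the inner integral by $-\oint_{\Gamma_D}\frac{\ids z}{2\pi}e^{-izs_J}\prod_{j\in A}\frac{i}{z-\gamma_j}$. Since $\Gamma_D$ is compact and the integrand is bounded on it, uniformly for $s_J$ in the bounded range $[0,t]$, we may exchange the $z$-integral with the finite-measure $s$-integral. The factor $e^{-izs_J}$ is exactly $e^{-i\gamma_J s_J}$ evaluated at $\gamma_J=z$, which gives \eqref{Lemma:PhaseResolve:1}.

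The only delicate point is getting the orientation and sign conventions in the base formula right, that is, confirming that an anticlockwise contour yields the factor $-\frac{1}{2\pi}\prod_{j\in A}\frac{i}{z-\gamma_j}$ rather than its negative. We settle this by checking the case $\abs{A}=1$ directly with the residue theorem. The compactness of $D$, together with the fact that $\Gamma_D$ stays a positive distance from $D$, makes all interchanges of integrals routine. Coinciding $\gamma_j$ cause no difficulty, since in that case the residues at higher-order poles still produce the correct convolution, by continuity in the $\gamma_j$.
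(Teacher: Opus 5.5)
Your proof is correct. The base identity, with its sign fixed by the residue computation at $\abs{B}=1$, is applied to the simplex integral over the $A$-variables at fixed total time $s_J$, and Fubini on the compact contour and the finite-measure simplex then yields \eqref{Lemma:PhaseResolve:1}. The paper does not reprove this lemma but defers to \cite[Section 7]{LukkarinenSpohn:WNS:2011}, and your route (contour representation of the $A$-block, then integrating the remaining slices together with $s_J$) is the standard argument behind that cited result.
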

\begin{lemma}[Cluster combinatorics]
	\label{lemma:Clustercombinatorics}
	Suppose that \cref{ass:L1} holds.
	There is a constant $c$ such that for all $N>0$ and $0 < \lambda < \lambda_0$
	\begin{equation*}
		\sum_{S \in \pi(I_N)} \prod_{A\in S} \sup_{k,\sigma,\Omega} \abs[\big]{C_{\abs{A}} \myp{k,\sigma,\lambda,\Omega}} 
		\leq{} c^N N!
	\end{equation*}
	If the sum is over non-pairing $S$, the bound can be improved by a factor of $\lambda$.
\end{lemma}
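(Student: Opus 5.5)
The bound is proved by comparing the truncated cumulants with the free‑ensemble (Wick) quantities and then counting cluster decompositions. The whole argument rests on the $\ell_1$‑clustering assumption \cref{ass:L1}.

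\textbf{Step 1: a uniform bound for each cluster.} From the definition \eqref{Def:Truncation:3},
\[
\abs{C_n\myp{k,\sigma,\lambda,\Omega}} \leq \sum_{x\in\Omega^n}\1_{\Set{x_1=0}}\abs[\big]{\expec{\textstyle\prod_j a\myp{x_j,\sigma_j}}^T_{\lambda}},
\]
and the right‑hand side does not depend on $k$. The finite‑volume sums are dominated by the corresponding infinite‑volume ones via translation invariance and periodization (as in the proof of \cref{lem:wigner_estimate}). Hence \cref{ass:L1} gives
\[
\sup_{k,\sigma,\Omega}\abs{C_n}\le \lambda c_0^n n! \qquad\text{for } n\ge 4 .
\]
For $n=2$, \cref{rem:pair_correl_trunc} gives $\abs{C_2}=\abs{W_L^\lambda}\le 1$. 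Clusters of odd size vanish.

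\textbf{Step 2: the sum over partitions.} Consider
\[
\sum_{S\in\pi(I_N)}\prod_{A\in S} c_0^{\abs{A}}\abs{A}!\,\lambda^{\1(\abs{A}>2)} .
\]
This factors as $c_0'^N \sum_S \prod_{A} \abs{A}!$ with $c_0'=\max(c_0,1)$, times a factor $\lambda^{n_{np}}$ that is at most $\lambda$ whenever $S$ is non‑pairing (using $\lambda\le\lambda_0\le1$, shrinking $\lambda_0$ if needed). So the only real work is the combinatorial estimate
\[
\sum_{S\in\pi(I_N)}\prod_{A\in S}\abs{A}! \le c^N N! .
\]

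To prove this, group the partitions by their block sizes $n_1,\dots,n_m$ (unordered). The number of set partitions with given block sizes is
\[
\frac{N!}{\prod_i n_i!\,\prod_r m_r!},
\]
where $m_r$ counts the blocks of size $r$. Multiplying by $\prod_i n_i!$ leaves $N!/\prod_r m_r!\le N!$. The remaining task is to count the multisets of block sizes, that is, the integer partitions of $N$. There are at most $2^{N-1}$ of these (they are bounded by the compositions of $N$). This yields the bound $(2c_0')^N N!$.

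The ordering convention for partitions in \cref{lem:cluster_sign_order} only affects the sign, not the count, so it causes no overcounting.

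\textbf{Main obstacle.} The only delicate point is Step 1: passing from the infinite‑lattice assumption to the finite volume $\Omega$ uniformly in $L$, together with the requirement that the $\lambda$ gain survive. For the latter, any non‑pairing relevant $S$ contains at least one cluster of size $\ge4$, which supplies the factor $\lambda$. If \cref{ass:L1} is read as a statement about the periodic finite‑volume state, Step 1 is immediate. Otherwise I would first prove a periodization bound,
\[
\sum_{x\in\Omega^n}\abs{\cdot}\le\sum_{x\in(\Z^d)^n}\abs{\cdot},
\]
by folding the $\Z^d$ sum onto the torus.
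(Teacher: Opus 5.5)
Your argument is correct and follows the same route as the paper, which defers to Lukkarinen--Spohn. You bound clusters of size $\geq 4$ by \cref{ass:L1}, giving $\lambda c_0^{n} n!$, bound pairs by $0\le W_L^{\lambda}\le 1$ via \cref{rem:pair_correl_trunc}, count even-block partitions via $\sum_{S}\prod_{A\in S}\abs{A}! \le 2^{N-1}N!$, and take the extra $\lambda$ from any block of size $\geq 4$. The one thing to discard is the periodization/folding justification in Step 1, which fails for the interacting Gibbs state: its finite-volume truncated functions are not periodizations of infinite-volume ones. \cref{ass:L1} is meant, and is used in the proof of \cref{lem:wigner_estimate}, as a bound on the finite-volume sums uniform in $L$, so Step 1 is immediate.
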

The proof of \cref{lemma:Clustercombinatorics} for the classical case in \cite{LukkarinenSpohn:WNS:2011} carries over to the \emph{fermionic} quantum case without change.
For the non-pairing clusters, \cref{ass:L1} is the only needed input, but for the pairing clusters, the bound relies on the fact that $W_L^{\lambda}$ is bounded, cf. \cref{rem:pair_correl_trunc}, which is not the case for bosons.

In \cite{LukkarinenSpohn:WNS:2011}, the following three results are proved for the unmodified dispersion relation $\omega$.
However, since we require the modified dispersion relation $\omega^{\lambda}$ to satisfy \cref{ass:DR2,ass:DR3} uniformly in $\lambda$, the proofs carry over to our case without change.
\begin{proposition}
	\label{prop:osc_bound}
	Suppose \cref{ass:DR2} holds with constants $C,\delta > 0$, and assume $f \in \ell_1 \myp{\myp{\Z^d}^3}$.
	Then for all $s \in \R$, $k_0 \in \T^d$, and $\sigma,\sigma' \in \Set{\pm 1}$,
	\begin{equation}
		\abs[\bigg]{ \int_{\myp{\T^d}^2} \ids k \ids k' \, e^{is \myp{\omega^{\lambda} \myp{k} + \sigma' \omega^{\lambda} \myp{k'} + \sigma \omega^{\lambda} \myp{k_0 -k -k'}}} \hat{f} \myp{k,k',k_0-k-k'} }
		\leq C \normt{f}{1} \expec{s}^{-1-\delta}.
	\end{equation}
\end{proposition}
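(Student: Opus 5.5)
The plan is to reduce the bound to the $\ell_3$-dispersivity estimate of \cref{ass:DR2} by passing to position space and applying a Young/H\"older inequality. Write $\hat f(k,k',k'') = \sum_{x,y,z\in\Z^d} f(x,y,z) e^{-2\pi i (x\cdot k + y\cdot k' + z\cdot k'')}$. Since $f\in\ell_1$, we may interchange the sum and the momentum integral, which gives
\begin{equation*}
	I(s) = \sum_{x,y,z} f(x,y,z) \int_{(\T^d)^2} \ids k \ids k' \, e^{-2\pi i (x\cdot k + y\cdot k' + z\cdot(k_0-k-k'))} e^{is(\omega^{\lambda}(k) + \sigma'\omega^{\lambda}(k') + \sigma\omega^{\lambda}(k_0-k-k'))}.
\end{equation*}
Next we insert $1=\int_{\T^d}\ids k''\,\delta(k_0-k-k'-k'')$ and write the delta function as $\sum_{w\in\Z^d} e^{2\pi i w\cdot(k_0-k-k'-k'')}$. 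The triple momentum integral then factorizes into three propagators. Using the symmetry $\omega^{\lambda}(-k)=\omega^{\lambda}(k)$ (which holds since both $\omega$ and $R_\lambda$ are even, by \cref{ass:DR1} and $V$ real and even, or simply by absorbing signs into $x$), each factor has the form $p^{\lambda}_{\mp s}(w-x)$, $p^{\lambda}_{\mp\sigma' s}(w-y)$, $p^{\lambda}_{\mp\sigma s}(w-z)$ as in \eqref{eq:propagator}, up to complex conjugation and a unimodular phase $e^{2\pi i (w - z)\cdot k_0}$.

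This yields the bound
\begin{equation*}
	|I(s)| \le \sum_{x,y,z}|f(x,y,z)| \sum_{w\in\Z^d} |p^{\lambda}_{s}(w-x)|\,|p^{\lambda}_{\sigma' s}(w-y)|\,|p^{\lambda}_{\sigma s}(w-z)|.
\end{equation*}
Note that $|p_{-t}^{\lambda}(x)| = |p_t^{\lambda}(-x)|$, so $\ell_3$ norms are invariant under $t\mapsto -t$. By H\"older's inequality with exponents $(3,3,3)$, the inner sum is at most $\prod \|p^{\lambda}_{\pm s}\|_3 = \|p^{\lambda}_s\|_3^3$, uniformly in $x,y,z$. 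Then \cref{ass:DR2} gives $\|p_s^{\lambda}\|_3^3 \lesssim \langle s\rangle^{-1-\delta}$, uniformly in $0\le\lambda<\lambda_0$. The result is the claimed bound $C\|f\|_1\langle s\rangle^{-1-\delta}$.

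The only delicate step is justifying the delta-function insertion rigorously, since the sum over $w$ is not absolutely convergent at the momentum level. I would handle this by first taking $f$ finitely supported. In that case one can integrate out $k$ and $k'$ directly: for fixed $x,y,z$, the $(k,k')$-integral equals $\sum_w$ of the product of the three propagators. This is a convolution identity, $\widehat{gh}=\hat g*\hat h$ for $\ell_2$ sequences, applied twice. The resulting sum over $w$ converges absolutely because each $p^{\lambda}_t\in\ell_2\cap\ell_\infty$ and the product of three such sequences is summable. The general $f\in\ell_1$ case then follows by density, since both sides are continuous in $\|f\|_1$.
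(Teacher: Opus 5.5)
Your argument is correct and is essentially the one the paper relies on. The paper gives no proof of its own but defers to Lukkarinen--Spohn, whose argument (like the analogous Fubini--H\"older computation at the end of the proof of \cref{Propo:CrossingGraphs}) expands $\hat f$ in position space, reduces each term to $\sum_{w} p^{\lambda}_{\pm s}(w-x)\,p^{\lambda}_{\pm s}(w-y)\,p^{\lambda}_{\pm s}(w-z)$ up to a phase, and applies H\"older in $\ell_3$ together with \cref{ass:DR2}. The evenness of $\omega^{\lambda}$ and the density step are not needed, since $\abs{p^{\lambda}_{-t}(x)} = \abs{p^{\lambda}_t(-x)}$ and Fubini applies directly for $f\in\ell_1$.
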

As discussed in \cite[Section 7.4]{LukkarinenSpohn:WNS:2011}, \cref{prop:osc_bound} can be used to give rigorous meaning to the energy conservation delta functions in the collision operator \eqref{eq:BN-C} and \eqref{eq:mu-infty}.
\begin{lemma}[Integrals over free momenta - degree one vertex]
	\label{lemma:degree1vertex}
	If \cref{ass:DR3} is satisfied, then for any $k_0 \in \R^d$, $\alpha \in \R$, $\abs{\beta} > 0$, and $\sigma,\sigma' \in \Set{\pm 1}$,
	\begin{equation}
		\label{eq:degree1vertex}
		\int_{\T^d} \ids k \frac{F_1 \myp{\sigma' k}}{\abs{\omega^{\lambda} \myp{k} + \sigma \omega^{\lambda} \myp{k_0-k} - \alpha + i\beta}}
		\lesssim{} \lambda^{-b} \expec{ \ln \abs{\beta} }^2 .
	\end{equation}
\end{lemma}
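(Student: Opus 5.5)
We need to prove Lemma (degree one vertex): bound $\int_{\T^d} \ids k \, F_1(\sigma' k)\, \abs{\omega^{\lambda}(k) + \sigma\omega^{\lambda}(k_0-k) - \alpha + i\beta}^{-1} \lesssim \lambda^{-b}\expec{\ln\abs{\beta}}^2$.

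The plan is to turn the resolvent into an oscillatory time integral and then use \cref{ass:DR3}. We write $\Omega(k) = \omega^{\lambda}(k) + \sigma\omega^{\lambda}(k_0-k) - \alpha$. The absolute value prevents a direct Fourier representation, so we first pass to a bounded smooth comparison. Write $1/\abs{x+i\beta}$ as a superposition of functions $\phi(x/r)$ over dyadic scales $r$, with $\phi$ smooth and having integrable Fourier transform. Alternatively, use the standard bound
\begin{equation*}
	\frac{1}{\abs{x+i\beta}} \lesssim \int_{\abs{\beta}}^{C} \frac{\ids r}{r}\, \frac{r}{\sqrt{x^2+r^2}}\cdot\frac{1}{r} + 1 ,
\end{equation*}
valid since $\Omega$ is bounded. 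It is cleaner to split $\abs{\Omega}$ dyadically: the region $\abs{\Omega} \geq 1$ contributes $O(1)$, and for each dyadic shell $\abs{\Omega}\sim 2^{-j}$ with $2^{-j}\geq\abs{\beta}$ we need a measure bound. The region $\abs{\Omega}\lesssim\abs{\beta}$ is handled as the last shell with weight $\abs{\beta}^{-1}$. There are $O(\expec{\ln\abs{\beta}})$ shells, so it suffices to show
\begin{equation*}
	\int_{\T^d}\ids k\, F_1(\sigma'k)\,\1\myp{\abs{\Omega(k)}\leq r} \lesssim \lambda^{-b}\, r\,\expec{\ln r} .
\end{equation*}

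To prove this weighted level-set bound, majorize $\1(\abs{x}\leq r)$ by $\chi(x/r)$ with $\chi\geq0$ a Schwartz function whose Fourier transform $\hat\chi$ is compactly supported (e.g.\ $\chi$ a squared Fej\'er-type kernel). Then
\begin{equation*}
	\int \ids k\, F_1(\sigma'k)\chi(\Omega(k)/r) = r\int_{\R}\ids t\,\hat\chi(rt)\, e^{-it\alpha}\int \ids k\, F_1(\sigma'k) e^{it(\omega^{\lambda}(k)+\sigma\omega^{\lambda}(k_0-k))},
\end{equation*}
with $\abs{t}\lesssim 1/r$. If there were no cut-off $F_1$, \cref{ass:DR3} would bound the inner integral by $C\expec{t}^{-1}/d(k_0,M^{\mathrm{sing}})$. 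The cut-off $F_1$ is the key difficulty. It is designed (property (v) of \cref{Propo:Cutoff}) to vanish when its argument is within $\lambda^b$ of $M^{\mathrm{sing}}$, with $F_1(k)\lesssim\lambda^{-b}d(k,M^{\mathrm{sing}})$. So we must move the cut-off onto the parameter $k_0$. To do this, expand $F_1$ in Fourier series, $F_1(\sigma'k)=\sum_x \widehat{F_1}(x)e^{2\pi i\sigma' x\cdot k}$. Each mode combines with the phase as a shift, which leads to integrals of type $\mathcal{K}$. This loses a factor $\normt{\widehat F_1}{1}$, which is only polynomial in $\lambda^{-b}$, and that is too costly.

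\textbf{Main obstacle.} The obstacle is therefore combining the pointwise weight $F_1(\sigma' k)$ with \cref{ass:DR3}, whose constant blows up near $M^{\mathrm{sing}}$. Following \cite[Section 7]{LukkarinenSpohn:WNS:2011}, I would instead change variables so that the singular quantity is the argument of $F_1$. Note that the natural pairing is $k_0\leftrightarrow$ the summed momentum: \cref{ass:DR3} with $k$ replaced by $k_0-k$ gives $\int \ids k\, e^{-it(\omega(k_0-k)+\sigma\omega(-k))}$, i.e.\ the relevant distance is $d(\cdot,M^{\mathrm{sing}})$ of the momentum paired with $F_1$. Concretely, I would treat the integral as $\int \ids q\,F_1(\sigma' q)\,G(q)$ after rewriting in terms of the pair variables. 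Then I would apply \cref{ass:DR3} in the form $\abs{\int \ids k\,e^{-it(\omega(k)+\sigma\omega(k-q))}}\lesssim \expec{t}^{-1}/d(q,M^{\mathrm{sing}})$, and multiply by $F_1(q)\lesssim \lambda^{-b} d(q,M^{\mathrm{sing}})$ so that the singularity cancels, leaving $\lambda^{-b}\expec{t}^{-1}$. Integrating over $\abs{t}\lesssim 1/r$ against $r\abs{\hat\chi(rt)}$ gives $\lambda^{-b} r\expec{\ln r}$. Summing over the shells then yields $\lambda^{-b}\expec{\ln\abs{\beta}}^2$. If the variable matching in the degree one setting does not align directly, I would fall back on the structure $F_1(k)=$ function of $k$ only and apply \cref{ass:DR3} after the substitution $k\mapsto k_0-k$, which swaps the roles of the two dispersion terms.
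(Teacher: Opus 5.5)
Your proposal has a genuine gap at exactly the step you call the main obstacle. Read literally, the weight $F_1(\sigma' k)$ depends on the integration variable, while $k_0$ is a fixed parameter. There is no $q$-integral to ``rewrite in terms of pair variables'', and no change of variables turns $F_1(\sigma'k)$ into a function of $k_0$; your fallback $k\mapsto k_0-k$ only produces $F_1(\sigma'(k_0-k))$. So the cancellation of the $1/d(k_0,M^{\mathrm{sing}})$ from \cref{ass:DR3} against $F_1\lesssim\lambda^{-b}d(\cdot,M^{\mathrm{sing}})$ is never actually available. In fact no argument can prove that literal reading, because it is false. Take $k_0=0$, $\sigma=-1$, $\alpha=0$. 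Since $\omega$ is even, the phase $\omega^{\lambda}(k)-\omega^{\lambda}(-k)=\lambda\bigl(R_\lambda(k)-R_\lambda(-k)\bigr)$ is $O(\lambda)$ uniformly in $k$, and it vanishes identically if $\omega^{\lambda}$ is even. Hence with $\abs{\beta}=\lambda$ the left side is $\gtrsim\lambda^{-1}\int_{\T^d}F_1\gtrsim\lambda^{-1}$, which is not $\lesssim\lambda^{-3/4}\expec{\ln\lambda}^2$.

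The missing observation is that the cut-off sits on the parameter, not on the integration variable. The lemma is meant with $F_1(\sigma'k_0)$, and this is how it is used in \cref{lemma:BasicGEstimate1,lemma:BasicGEstimate3}: there $\abs{\Psi_1}\le\norm{\widehat{V}}_{\infty}F_1(\pm k_0)$, with $k_0$ the sum of the free momentum and its partner. Your remark that ``the relevant distance is that of the summed momentum'' was the right instinct.

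With that reading there is no obstacle, since $F_1(\sigma'k_0)$ is constant in $k$ and factors out. The argument the paper relies on (from \cite{LukkarinenSpohn:WNS:2011}) runs as follows.
\begin{itemize}
\item Write $1/\abs{r+i\beta}=\int_{\R}\ids s\,e^{isr}F(s;\beta)$ with $0\le F(s;\beta)\le\expec{\ln\beta}e^{-\beta\abs{s}}+\1(\abs{s}\le1)\ln\frac{1}{\abs{s}}$.
\item Bound the remaining $k$-integral by \cref{ass:DR3} as $C\expec{s}^{-1}/d(k_0,M^{\mathrm{sing}})$.
\item Integrate in $s$ to get $\expec{\ln\beta}^2$.
\item Cancel the distance using $F_1(\sigma'k_0)\le C_F\lambda^{-b}d(\sigma'k_0,M^{\mathrm{sing}})$ from \cref{Propo:Cutoff}. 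Evenness of $\omega^{\lambda}$ is used here to match the phase with the one in \cref{ass:DR3} and to pass between $d(\pm k_0,M^{\mathrm{sing}})$.
\end{itemize}

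Your dyadic level-set scheme, with a majorant $\chi$ whose Fourier transform is compactly supported, also works once the weight is constant. Each shell of size $r$ contributes $r\expec{\ln r}/d(k_0,M^{\mathrm{sing}})$, and there are $O(\expec{\ln\abs{\beta}})$ shells. It is a valid but longer substitute for the single resolvent formula.
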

Here, the function $F_1$ is the one appearing in the construction of the momentum cut-off function $\Phi_1$ in \cref{Propo:Cutoff}.
\begin{lemma}[Integrals over free momenta - degree two vertex]
	\label{lemma:degree2vertex}
	Suppose that \cref{ass:DR2,ass:DR3} are satisfied.
	Then for any $k_0 \in \R^d$, $\alpha \in \R$, $\abs{\beta} > 0$, and $\sigma,\sigma' \in \Set{\pm 1}$,
	\begin{equation}
		\label{eq:degree2vertex}
		\int_{\T^{2d}} \ids k \ids k' \frac{1}{\abs{\omega^{\lambda} \myp{k} + \sigma' \omega^{\lambda} \myp{k'} +\sigma \omega^{\lambda} \myp{k_0-k-k'} - \alpha + i\beta}} 
		\lesssim{} \expec{ \ln \abs{\beta} }.
	\end{equation}
\end{lemma}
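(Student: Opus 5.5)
The plan is to reduce the two-dimensional integral to a one-dimensional time integral, and then bound that time integral using the dispersive estimate of \cref{ass:DR2} and the constructive-interference estimate of \cref{ass:DR3}. This follows the strategy of \cite{LukkarinenSpohn:WNS:2011}; nothing beyond the uniformity in $\lambda$ of the assumptions should be needed.

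\textbf{Step 1 (resolvent to time integral).} Assume without loss of generality that $\beta>0$; the case $\beta<0$ follows by complex conjugation. For $\abs{\beta}\geq 1$ the bound is trivial, since the integrand is at most $1$. For $0<\beta<1$ we write
\[
\frac{1}{x-\alpha+i\beta}=-i\int_0^\infty \ids s\, e^{is(x-\alpha+i\beta)}.
\]
This cannot be inserted directly under the absolute value. Instead we use the standard layer-cake (or Fourier) representation of $1/\abs{x+i\beta}$, as in \cite[Section 7]{LukkarinenSpohn:WNS:2011}. Its Fourier transform is $\int_\R \ids r\, e^{irx}\, K_0(\beta\abs{r})/\pi$, whose kernel is positive and behaves like $\ln(1/(\beta \abs{r}))$ for small $\beta\abs{r}$ and decays exponentially beyond $\abs{r}\sim\beta^{-1}$. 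After inserting this representation, the momentum integral reduces to controlling
\[
\int_\R \ids r\, \abs{\widehat{K}_\beta(r)}\,\abs[\Big]{\int_{\T^{2d}} \ids k\ids k'\, e^{ir(\omega^\lambda(k)+\sigma'\omega^\lambda(k')+\sigma\omega^\lambda(k_0-k-k'))}}.
\]

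\textbf{Step 2 (oscillatory decay).} The inner integral is the Fourier transform at $0$ of a triple convolution of propagators:
\[
\int_{\T^{2d}}(\dots)=\sum_x p^\lambda_{-r}(x)\,p^\lambda_{-\sigma' r}(x)\,p^\lambda_{-\sigma r}(x)\,e^{i2\pi x\cdot k_0}.
\]
Up to complex conjugations (using $\omega(-k)=\omega(k)$), we can bound this by H\"older's inequality as $\le\normt{p^\lambda_r}{3}^3\lesssim \expec{r}^{-1-\delta}$ using \cref{ass:DR2}. This is exactly \cref{prop:osc_bound} with $f=\delta_0$, uniformly in $k_0,\sigma,\sigma'$. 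Since $\expec{r}^{-1-\delta}$ is integrable, the weight $\widehat K_\beta(r)\lesssim \expec{\ln(\beta\abs{r})}$ on $\abs{r}\lesssim\beta^{-1}$ (with exponential decay afterwards) yields
\[
\int_\R \ids r\,\expec{r}^{-1-\delta}\expec{\ln\beta}\lesssim\expec{\ln\beta}.
\]
Indeed, $\ln(1/(\beta\abs{r}))\le \ln(1/\beta)+\ln(1/\abs{r})$, and the $\ln(1/\abs{r})$ singularity at $r=0$ is integrable.

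\textbf{Main obstacle.} The delicate point is justifying the passage from $1/\abs{\cdot}$ to a time integral with a kernel whose absolute size is only logarithmically large. A naive bound $\abs{1/(x+i\beta)}$ paired with $\int_0^{\infty} e^{-\beta s}\ids s=\beta^{-1}$ loses a full power of $\beta$. If the $K_0$ representation proves awkward, I would first try the alternative decomposition used in \cite{LukkarinenSpohn:WNS:2011}. In that approach one splits the range $\abs{x-\alpha}\le\beta$ from dyadic shells $2^j\beta<\abs{x-\alpha}\le 2^{j+1}\beta$, and bounds the measure of each shell by a smooth bump $\chi((x-\alpha)/(2^{j}\beta))$. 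The measure of each shell is then
\[
\int \ids r\,2^j\beta\,\hat\chi(2^j\beta r)\,\expec{r}^{-1-\delta}\lesssim 2^j\beta.
\]
Summing $2^{-j}\beta^{-1}\cdot 2^j\beta$ over the roughly $\expec{\ln\beta}$ relevant shells, together with the trivial bound for $\abs{x-\alpha}\gtrsim1$, gives the claimed $\expec{\ln\abs{\beta}}$. \cref{ass:DR3} is not needed here. The only remaining check is that all constants are uniform in $\lambda<\lambda_0$, which holds because \cref{ass:DR2} is assumed uniformly.
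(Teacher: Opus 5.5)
Your proposal is correct and is essentially the paper's proof. The positive kernel $K_0(\beta\abs{r})/\pi$ is the function $F(s;\beta)$ in \eqref{eq:resolvent_integral}; after Fubini, the momentum integral is bounded by \cref{prop:osc_bound} (H\"older plus \cref{ass:DR2}), which gives $\int_{\R} F(s;\beta)\expec{s}^{-1-\delta}\ids s \lesssim \expec{\ln\beta}$. Your dyadic-shell fallback is also valid, and you are right that \cref{ass:DR3} is not needed for the degree two vertex.
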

The proofs of \cref{lemma:degree1vertex,lemma:degree2vertex} are based on the following formula (see e.g. \cite[Lemma 4.21]{LukkarinenSpohn:KLW:2007}), expressing a resolvent as an oscillating integral,
\begin{equation}
	\label{eq:resolvent_integral}
	\frac{1}{\abs{r + i \beta}}
	={} \int_{\R} \ids s \, e^{isr} F \myp{s;\beta},
\end{equation}
for $r \in \R$ and $0 < \beta \leq 1$, where the function $F \myp{s;\beta}$ satisfies $0 \leq F \myp{s;\beta} \leq \expec{\ln \beta} e^{-\beta \abs{s}} + \1 \myp{\abs{s} \leq 1} \ln \frac{1}{\abs{s}}$.

\cref{Lemma:PhaseResolve,lemma:degree1vertex,lemma:degree2vertex} will be especially useful to bound the amplitudes for graphs that are not fully paired in \cref{sec:graph_estimates} below.

\subsection{Extended Bessel functions and nearest neighbor interactions}
\label{sec:verify_assumptions}
We consider here the modified dispersion relation \eqref{ModifiedDispersion} and take a closer look at \cref{ass:DR1,ass:DR2,ass:DR3,ass:DR4} from \cref{sec:assumptions}.
The goal of this section is to provide an example of $\omega$ and $V$ such that $\omega^{\lambda}$ satisfies these assumptions.
Note first that regardless of $\omega$ and $V$, the quantity $\int_{\Omega^{\ast}} \mathrm{d}k_2 W_L^{\lambda} \myp{k_2} \widehat{V} \myp{0}$ is a constant and thus will only appear as a constant phase factor in the integrals in \cref{ass:DR2,ass:DR3,ass:DR4}.
As a result, we only need to consider the convolution term
\begin{equation}
	\label{QuantityR1}
	\widetilde{R}_{\lambda} \myp{k_1}
	\coloneq{} -\int_{\Omega^{\ast}} \mathrm{d}k_2 W_L^{\lambda} \myp{k_2} \widehat{V} \myp{k_1-k_2},
\end{equation}
and modify the dispersion relation as
\begin{equation}
	\label{ModifiedDispersion1}
	\widetilde{\omega}^{\lambda} 
	\coloneq{} \omega + \lambda \widetilde{R}_{\lambda}
\end{equation}
when checking that the assumptions hold true.
Similar as in \cite{LukkarinenSpohn:WNS:2011}, we consider the nearest neighbour dispersion relation
\begin{equation}
\label{Nearest}
	\omega \myp{k} 
	\coloneq c - \sum_{j=1}^d \cos \myp{2\pi k_j},
\end{equation}
where the constant $c$ is arbitrary, and $ k_j $ denotes the $ j $'th coordinate of $ k \in \T^d $.
We will also assume that the interaction potential is of the same form as $\omega$, i.e.,
\begin{equation}
\label{eq:Vassump}
	\widehat{V} \myp{k} 
	={} \tilde{c} - \sum\limits_{j=1}^d \cos \myp{2 \pi k_j-\alpha_j},
\end{equation}
for some constants $\tilde{c}, \alpha_1,\dotsc, \alpha_d \in \R$ which will not play a role in the following.
The constants $c,\tilde{c}$ have no influence on the absolute values of the integrals in \cref{ass:DR2,ass:DR3,ass:DR4}, so we can assume them to be zero.
Furthermore, we will also take $\alpha_1,\dotsc,\alpha_d$ to be zero in the following, for simplicity.
\begin{lemma}
\label{lemma:Kcrossing}
	There are constants $C, \lambda_0 > 0$ such that for any $t_0,t_1,t_2 \in \R$, $u_1,u_2 \in \myb{0,2 \pi}^d$, and $0\leq \lambda < \lambda_0$, the function $\mathcal{K}$ from \eqref{eq:defp2tx} satisfies
	\begin{equation}
	\label{eq:Kcrossing}
		\normt[\Big]{\mathcal{K} \myp[\Big]{\nonarg ; t_0,t_1,t_2,\frac{u_1}{2 \pi},\frac{u_2}{2 \pi}}}{\ell^3}
		\leq C \prod\limits_{j=1}^d \frac{1}{\expec{r_j}^{\frac{1}{7}}},
	\end{equation}
	where $r_j = \abs{t_0 + t_1 e^{i u_{1,j}} + t_2 e^{i u_{2,j}}}$ for $j = 1, \dotsc, d$.
	Furthermore, it follows that
	\begin{equation}
	\label{eq:l3_dispersivity2}
		\normt{p_t^{\lambda}}{\ell^3}^3
		\leq{} C \expec{t}^{-\frac{3d}{7}},
	\end{equation}
	so that \cref{ass:DR2} is satisfied for $d \geq 3$.
\end{lemma}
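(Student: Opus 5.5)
The plan is to compute $\mathcal{K}$ explicitly by separating variables. With $\omega^{\lambda}$ replaced by $\widetilde{\omega}^{\lambda}$ from \eqref{ModifiedDispersion1}, we first compute $\widetilde{R}_{\lambda}$. Since $\widehat{V}(k_1-k_2)=-\sum_j \cos(2\pi(k_{1,j}-k_{2,j}))$, expanding the cosine gives
\begin{equation*}
	\widetilde{R}_{\lambda}(k_1)=\sum_{j=1}^d \bigl(A_j\cos(2\pi k_{1,j})+B_j\sin(2\pi k_{1,j})\bigr),
\end{equation*}
where $A_j=\int W_L^{\lambda}\cos(2\pi k_{2,j})$ and $B_j=\int W_L^{\lambda}\sin(2\pi k_{2,j})$. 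By \cref{lem:wignerfunction}, $0\le W_L^{\lambda}\le 1$, so $\abs{A_j},\abs{B_j}\le 1$. (By the symmetry $\omega(-k)=\omega(k)$ and the Gibbs state one even expects $B_j=0$, but this is not needed.) Hence
\begin{equation*}
	\widetilde{\omega}^{\lambda}(k)=-\sum_j \rho_j\cos(2\pi k_j-\phi_j),
\end{equation*}
with $\rho_j=\abs{1-\lambda(A_j+iB_j)}\in[1-2\lambda,1+2\lambda]$ and phases $\phi_j$. The dispersion is therefore still a sum of one-dimensional cosines, with amplitudes uniformly close to $1$.

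Because of this, $\mathcal{K}$ factorizes over coordinates: $\mathcal{K}(x;\dots)=\prod_j \mathcal{K}_j(x_j)$, and so $\norm{\mathcal{K}}_{\ell^3(\Z^d)}=\prod_j\norm{\mathcal{K}_j}_{\ell^3(\Z)}$. In each coordinate the phase is
\begin{equation*}
	\rho_j\bigl(t_0\cos(\theta-\phi_j)+t_1\cos(\theta+u_{1,j}-\phi_j)+t_2\cos(\theta+u_{2,j}-\phi_j)\bigr)=\rho_j r_j\cos(\theta-\phi_j+\psi_j)
\end{equation*}
for some phase $\psi_j$, where $r_j=\abs{t_0+t_1e^{iu_{1,j}}+t_2e^{iu_{2,j}}}$ comes from summing the phasors. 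After shifting $\theta$ by a constant, which only multiplies $\mathcal{K}_j(x)$ by a unimodular factor $e^{ix(\cdot)}$, we get $\mathcal{K}_j(x)=i^{x}J_x(\rho_j r_j)$ up to a phase, with $J_x$ the Bessel function. We then need the one-dimensional bound $\sum_{x\in\Z}\abs{J_x(r)}^3\le C\expec{r}^{-3/7}$, uniformly in $r\ge0$. Since $\rho_j\ge 1/2$ for small $\lambda$, we have $\expec{\rho_j r_j}\gtrsim\expec{r_j}$, and \eqref{eq:Kcrossing} follows.

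The main step, and the one I expect to be the real work, is the Bessel $\ell^3$ estimate. I would interpolate $\sum_x\abs{J_x}^3\le\sup_x\abs{J_x(r)}\cdot\sum_x\abs{J_x(r)}^2$. The second factor equals $1$ by Parseval. For the first, I would use the uniform Landau-type bound $\sup_x\abs{J_x(r)}\le C r^{-1/3}$, which gives decay $r^{-1/3}$; since $1/3\ge 3/7$ is false, this alone does not reach the claimed exponent. So I would instead split the sum into three ranges. In the oscillatory range $\abs{x}<r-r^{1/3}$, use $\abs{J_x(r)}\lesssim(r^2-x^2)^{-1/4}$. In the transition range $\abs{\abs{x}-r}\le r^{1/3}$, use $\abs{J_x}\lesssim r^{-1/3}$. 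In the range $\abs{x}>r+r^{1/3}$, use the exponential decay. Summing the cubes gives roughly $r^{-1/2}\log r$ from the first range and $r^{1/3}\cdot r^{-1}$ from the second, and both are $\lesssim r^{-3/7}$. The exponent $1/7$ is then just a comfortable non-sharp choice, and I would choose the cutoffs to avoid the logarithm or absorb it into the slack. For $r\le1$ the bound is trivial because $\norm{J_\cdot(r)}_3\le\norm{J_\cdot(r)}_2=1$.

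Finally, for \eqref{eq:l3_dispersivity2}, set $t_1=t_2=0$ and $t_0=t$. Then $\mathcal{K}=p_t^{\lambda}$ up to the constant phase, and $r_j=\abs{t}$ for every $j$, so $\norm{p_t^{\lambda}}_3^3\le C\expec{t}^{-3d/7}$. For $d\ge3$ we have $3d/7>1$, which is \cref{ass:DR2} with $\delta=3d/7-1>0$.
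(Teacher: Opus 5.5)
Your proof is correct, but it takes a different route from the paper. The paper never identifies $\mathcal{K}$ with Bessel functions. Its steps are:
\begin{itemize}
\item it computes $\|\mathcal{K}\|_{\ell^2}$ by Parseval;
\item it applies Parseval to $|\mathcal{K}|^2$, writing $\|\mathcal{K}\|_{\ell^4}^4$ as a product over coordinates of $\int_0^{2\pi}dp\,|\int_0^{2\pi}dq\,e^{-2ir\sin(p/2)\psi(p,q)}|^2$;
\item it bounds the inner integral by a single stationary-phase estimate, with nondegeneracy of critical points for small $\lambda$ coming from $0\le W_L^\lambda\le 1$, giving $\lesssim \log(1+r)/r\lesssim \langle r\rangle^{-6/7}$;
\item it interpolates $\|\mathcal{K}\|_{\ell^3}\le\|\mathcal{K}\|_{\ell^4}^{2/3}\|\mathcal{K}\|_{\ell^2}^{1/3}$.
\end{itemize}

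Your key observation is that $\widetilde R_\lambda$ is itself a first-harmonic trigonometric polynomial in each coordinate. Hence $\widetilde\omega^\lambda=-\sum_j\rho_j\cos(2\pi k_j-\phi_j)$ with $\rho_j=1+O(\lambda)$ uniformly in $L$. This reduces the $\lambda$-dependence to a rescaling $r_j\mapsto\rho_j r_j$ and gives $|\mathcal{K}_j(x)|=|J_x(\rho_j r_j)|$ exactly. The paper's $\psi$ is in fact also a pure sinusoid in $q$, but the paper does not exploit this.

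What your route buys is a sharper and more transparent bound. The oscillatory range gives $\sum_{|x|<r-r^{1/3}}(r^2-x^2)^{-3/4}\lesssim r^{-3/4}\sum_{m\le r}m^{-3/4}\lesssim r^{-1/2}$, with no logarithm. You therefore actually get $\langle r_j\rangle^{-1/6}$ rather than $\langle r_j\rangle^{-1/7}$.

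The price is that you must import classical pointwise Bessel bounds in every regime: the $(r^2-x^2)^{-1/4}$ bound away from the turning point, the uniform $r^{-1/3}$ bound, and an evanescent decay estimate. The paper needs only an averaged ($\ell^4$) quantity and one elementary stationary-phase bound. Its argument would also tolerate a dispersion perturbation that is not exactly of cosine form, whereas yours is tied to the exact nearest-neighbour structure.

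One detail needs care when you write it up. For $|x|>r+r^{1/3}$, a bare bound of the type $e^{-c(|x|-r)^{3/2}/r^{1/2}}$ is of order one when $|x|-r\sim r^{1/3}$, so on its own it gives nothing useful over those $\sim r^{1/3}$ terms. You must either combine it with the $r^{-1/3}$ bound, or use Airy-type uniform asymptotics. For the combination, use $r^{-1/3}$ up to $|x|-r\approx r^{1/3}(\log r)^{2/3}$ and the exponential bound beyond. Either way this range contributes $\lesssim r^{-2/3}(\log r)^{2/3}$, comfortably below $r^{-3/7}$.
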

\begin{proof}
	We will prove \eqref{eq:Kcrossing} by providing bounds on $\mathcal{K}$ in $\ell^2 \myp{\Z^d}$ and $\ell^4 \myp{\Z^d}$, and then interpolating.
	Note first that we can write
	\begin{align}
		\MoveEqLeft[3] \mathcal{K} \myp[\Big]{x;t_0,t_1,t_2,\frac{u_1}{2 \pi}, \frac{u_2}{2 \pi}} \nn \\
		={}& \frac{1}{\myp{2 \pi}^d} \int_{\myb{0,2\pi}^d} \ids p \, e^{i p \cdot x} \prod_{j=1}^d \myb[\Big]{ e^{i \myp{t_0 \cos \myp{p_j} + t_1 \cos \myp{p_j+u_{1,j}} + t_2 \cos \myp{p_j + u_{2,j}} }} \nn \\
		&\times e^{-i \lambda \myp[\big]{ \int_{\Omega^{\ast}} \ids k W_L^{\lambda} \myp{k} \myp{ t_0 \cos \myp{p_j - 2\pi k_j} + t_1 \cos \myp{p_j+u_{1,j} - 2\pi k_j} + t_2 \cos \myp{p_j + u_{2,j} - 2\pi k_j} } }} } \nn \\
		={}& \frac{1}{\myp{2 \pi}^d} \int_{\myb{0,2\pi}^d} \ids p \, e^{i p \cdot x} \prod_{j=1}^d e^{i r_j \myp[\big]{\cos \myp{p_j+ \phi_j} - \lambda \int_{\Omega^{\ast}} \ids k W_L^{\lambda} \myp{k} \cos \myp{p_j+\phi_j - 2\pi k_j} }},
	\label{eq:Kcrossing1}
	\end{align}
	where $r_j = \abs{t_0 + t_1 e^{i u_{1,j}} + t_2 e^{i u_{2,j}}}$, and the vector $\phi$ is independent of $p$.
	The numbers $\phi_j$ are chosen so that independently of $p_j \in \myb{0,2 \pi}$,
	\begin{align*}
		t_0 \cos \myp{p_j} + t_1 \cos \myp{p_j+u_{1,j}} + t_2 \cos \myp{p_j + u_{2,j}}
		={}& \mathrm{Re}\myb{e^{ip_j} \myp{t_0 + t_1 e^{i u_{1,j}} + t_2 e^{i u_{2,j}} }} \\
		={}& r_j \cos\myp{p_j + \phi_j}.
	\end{align*}
	Applying Parseval's identity to \eqref{eq:Kcrossing1} immediately implies that $\normt{\mathcal{K}}{\ell^2} = \myp{2\pi}^{-d}$.
	On the other hand, continuing from \eqref{eq:Kcrossing1}, we also have
	\begin{align*}
		\MoveEqLeft[4] \abs[\Big]{\mathcal{K} \myp[\Big]{x;t_0,t_1,t_2,\frac{u_1}{2 \pi}, \frac{u_2}{2 \pi}}}^2 \nn \\
		={}& \frac{1}{\myp{2 \pi}^{2d}} \int_{\myb{0,2\pi}^d} \ids p \, e^{i p\cdot x} \int_{\myb{0,2\pi}^d} \ids q \prod_{j=1}^d e^{i r_j \myp{\cos \myp{p_j + q_j + \phi_j} - \cos \myp{q_j + \phi_j}}} \nn \\
		&\times e^{-i \lambda r_j \int_{\Omega^{\ast}} \ids k W_L^{\lambda} \myp{k} \myp{\cos \myp{p_j+q_j+\phi_j - 2\pi k_j} - \cos \myp{q_j+\phi_j - 2\pi k_j}} },
	\end{align*}
	so denoting 
	\begin{equation*}
		\psi_j \myp{p_j,q_j}
		\coloneq{}  \sin \myp[\big]{\frac{p_j}{2} + q_j + \phi_j} - \lambda \int_{\Omega^{\ast}} \ids k W_L^{\lambda} \myp{k} \sin \myp[\big]{\frac{p_j}{2}+q_j+\phi_j - 2\pi k_j},
	\end{equation*}
	it follows, again by Parseval's identity, that
	\begin{align}
		\MoveEqLeft[2] \normt[\Big]{\mathcal{K} \myp[\Big]{\nonarg;t_0,t_1,t_2,\frac{u_1}{2 \pi}, \frac{u_2}{2 \pi}}}{\ell^4}^4 
		={} \frac{1}{\myp{2 \pi}^{4d}} \prod\limits_{j=1}^d \int_{0}^{2\pi} \ids p_j \abs[\bigg]{ \int_{0}^{2\pi} \ids q_j e^{-2 i r_j \sin\myp{\frac{p_j}{2}} \psi_j \myp{p_j,q_j}} }^2.
	\label{eq:Kcrossing3}
	\end{align}
	
	We now ignore the subscript $j$ and fix the $p$-variable, and note that at any critical point where $\partial_q \psi \myp{p,q} = 0$, we must have $\abs{\cos \myp{\frac{p}{2} + q + \phi}} \leq \lambda$, since $0 \leq W_L^{\lambda} \leq 1$.
	It follows for $\lambda$ small enough, independently of $L$, that we can estimate the double derivative at critical points by $\abs{\partial_q^2 \psi\myp{p,q}} \geq \abs{\sin\myp{\frac{p}{2} + q+ \phi}} - \lambda \geq \frac{1}{2}$, so by the stationary phase approximation (e.g. \cite[Proposition 2.6.7, Corollary 2.6.8]{Grafakos-2014}), we conclude for $r \sin \myp{p/2}$ sufficiently large that
	\begin{equation*}
		\abs[\bigg]{ \int_{0}^{2\pi} \ids q \, e^{-2 i r \sin\myp{\frac{p}{2}} \psi \myp{p,q}} }
		\leq{} \frac{C}{\myp{\sin\myp{\frac{p}{2}} r}^{\frac{1}{2}}}.
	\end{equation*}
	Since the $q$-integral is also bounded by $2 \pi$ uniformly in $p$, we obtain for any $r \geq 0$,
	\begin{equation*}
		\int_0^{2 \pi} \ids p \abs[\bigg]{ \int_{0}^{2\pi} \ids q \, e^{-2 i r \sin\myp{\frac{p}{2}} \psi \myp{p,q}} }^2
		\leq{} \int_0^{2 \pi} \ids p \frac{C}{1+\sin\myp{\frac{p}{2}} r}
		\leq{} \frac{C \log \myp{1+r}}{r}.
	\end{equation*}
	Using that the $p$-integral is also uniformly bounded for any $r$, we finally have
	\begin{equation*}
		\int_0^{2 \pi} \ids p \abs[\bigg]{ \int_{0}^{2\pi} \ids q \, e^{-2 i r \sin\myp{\frac{p}{2}} \psi \myp{p,q}} }^2
		\leq{} \frac{C}{\expec{r}^{\frac{6}{7}}}.
	\end{equation*}
	
	Returning to \eqref{eq:Kcrossing3}, we have proven that
	\begin{align*}
		\normt[\Big]{\mathcal{K} \myp[\Big]{\nonarg;t_0,t_1,t_2,\frac{u_1}{2 \pi}, \frac{u_2}{2 \pi}}}{\ell^4}^4
		\leq{}& C \prod\limits_{j=1}^d \frac{1}{\expec{r_j}^{\frac{6}{7}}}.
	\end{align*}
	Noting that $\frac{1}{3} = \frac{\theta}{4} + \frac{1-\theta}{2}$ for $\theta = \frac{2}{3}$, we get by standard interpolation in $\ell^p$,
	\begin{equation*}
		\normt{\mathcal{K}}{\ell^3}
		\leq{} \normt{\mathcal{K}}{\ell^4}^{\theta} \normt{\mathcal{K}}{\ell^2}^{1-\theta}
		={} \normt{\mathcal{K}}{\ell^4}^{\frac{2}{3}} \normt{\mathcal{K}}{\ell^2}^{\frac{1}{3}}
		\leq{} C \prod\limits_{j=1}^d \frac{1}{\myp{\expec{r_j}^{\frac{6}{7}}}^{\frac{2}{12}} }
		={} C \prod\limits_{j=1}^d \frac{1}{\expec{r_j}^{\frac{1}{7}} },
	\end{equation*}
	finishing the proof of \eqref{eq:Kcrossing}.
	The bound \eqref{eq:l3_dispersivity2} follows immediately, noting that $ p_t^{\lambda} \myp{x} = \mathcal{K} \myp{x; t,0,0,0,0} $.
\end{proof}
\begin{lemma}[Constructive interference]
	Suppose that $d \geq 2$.
	Let $M^{\mathrm{sing}} \subseteq \T^d$ be the subset consisting of the points $k \in \T^d$ with all but one component equal to either $0$ or $\frac{1}{2}$.
	There exist $C,\lambda_0>0$, such that for any $k_0 \in \T^d$, $\sigma \in \Set{\pm 1}$, and $t \in \R$, we have for all $0 \leq \lambda < \lambda_0$,
	\begin{equation}
		\label{eq:constr_interf2}
		\abs[\bigg]{\int_{\T^d} \ids k \, e^{-it \myp{\omega^{\lambda} \myp{k} + \sigma \omega^{\lambda} \myp{k-k_0}}}}
		\leq{} \frac{C \expec{t}^{-1}}{d \myp{k_0, M^{\mathrm{sing}}}}.
	\end{equation}
\end{lemma}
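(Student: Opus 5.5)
The integral factorizes over coordinates, so the plan is to reduce to a one-dimensional oscillatory integral and extract $\expec{t}^{-1}$ from two coordinates $i\neq j$ at which $k_0$ is "generic". Write $\omega^{\lambda}(k)=\sum_{j}\myp{-\cos(2\pi k_j)+\lambda\widetilde R_{\lambda,j}(k_j)}$ up to a constant. Using the assumed form \eqref{eq:Vassump} with $\alpha_j=0$, one has $\widetilde R_{\lambda}(k)=\sum_j A_\lambda\cos(2\pi k_j)+B_\lambda\sin(2\pi k_j)$ with $\abs{A_\lambda},\abs{B_\lambda}\le 1$, since $0\le W_L^{\lambda}\le 1$. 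Hence each coordinate contributes $-\cos(2\pi k_j)+\lambda(A\cos+B\sin)(2\pi k_j)=-\rho_\lambda\cos(2\pi k_j+\theta_\lambda)$ with $\rho_\lambda\in[1-2\lambda,1+2\lambda]$. Shifting $k_j$ removes $\theta_\lambda$, because the same shift appears in the $k_j-k_{0,j}$ term. The integral therefore becomes, up to a unimodular constant,
\begin{equation*}
	\prod_{j=1}^d I(\rho_\lambda t,k_{0,j}),\qquad I(s,a)=\int_0^1 \ids x\, e^{is(\cos(2\pi x)+\sigma\cos(2\pi(x-a)))}.
\end{equation*}
As in the proof of \cref{lemma:Kcrossing}, $\cos(2\pi x)+\sigma\cos(2\pi(x-a))=r(a)\cos(2\pi x+\phi)$ with $r(a)=\abs{1+\sigma e^{-2\pi i a}}$, so $I(s,a)=J_0(s\,r(a))$. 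This gives $\abs{I(s,a)}\le \min\myp{1, C(\abs{s}r(a))^{-1/2}}$, and since $\rho_\lambda\ge 1/2$, the $\lambda$-dependence becomes harmless.

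Next, $r(a)$ vanishes exactly when $a\in\Set{0,\tfrac12}$ (which of the two depends on $\sigma$), and $r(a)\gtrsim \operatorname{dist}(a,\Set{0,\tfrac12})$. For $k_0\in\T^d$, let $\delta_j=\operatorname{dist}(k_{0,j},\Set{0,\tfrac12})$ and order these as $\delta_{(1)}\ge\delta_{(2)}\ge\dots$. Since $M^{\mathrm{sing}}$ is the set where all but one coordinate lies in $\Set{0,\tfrac12}$, we get $d(k_0,M^{\mathrm{sing}})\lesssim \delta_{(2)}$ (move every coordinate except the one achieving $\delta_{(1)}$, costing at most $\sqrt{d}\,\delta_{(2)}$). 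Bounding all factors by $1$ except the two coordinates with the largest $\delta$ then gives
\begin{equation*}
	\abs{\cdots}\le \frac{C}{\abs{t}\sqrt{\delta_{(1)}\delta_{(2)}}}\le\frac{C}{\abs{t}\,\delta_{(2)}}\le\frac{C'}{\abs{t}\,d(k_0,M^{\mathrm{sing}})},
\end{equation*}
using $\delta_{(1)}\ge\delta_{(2)}$. For $\abs{t}\le 1$, the trivial bound $1$ suffices, because $d(k_0,M^{\mathrm{sing}})$ is bounded above on $\T^d$. Combining the two cases yields $\expec{t}^{-1}$.

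The main obstacle is the exact separability in the first step. This relies on the cosine form of $\widehat V$, which makes the correction $\lambda\widetilde R_\lambda$ a pure first-harmonic term that only rescales and shifts each $\cos$. If one insists on $L$-finite sums and Riemann-sum issues, nothing changes, because $W_L^\lambda$ enters only through the bounded coefficients $A,B$. The $J_0$ decay bound with its uniform constant is standard, and the geometric comparison between $\delta_{(2)}$ and $d(k_0,M^{\mathrm{sing}})$ is elementary.
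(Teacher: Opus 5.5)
Your proposal is correct and follows the same route as the paper: factorize the integral over coordinates as in \eqref{eq:Kcrossing1}, bound each factor by $C\langle r_j\rangle^{-1/2}$ with $r_j=\abs{t}\abs{1+\sigma e^{-2\pi i k_{0,j}}}$, and keep only the two coordinates whose distance to $\{0,\tfrac12\}$ is largest (the paper uses $\abs{\sin(2\pi k_{0,j})}$, which is comparable to that distance). The differences are minor. You observe that $\lambda\widetilde R_\lambda$ is a pure first harmonic, so each factor is exactly a Bessel function $J_0$, whereas the paper invokes stationary phase for the perturbed phase. You also spell out the comparison $d(k_0,M^{\mathrm{sing}})\lesssim\delta_{(2)}$, which the paper leaves implicit.
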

\begin{proof}
	From \eqref{eq:Kcrossing1} we have
	\begin{align*}
		\MoveEqLeft[4] \abs[\bigg]{\int_{\T^d} \ids k \, e^{-it \myp{\omega^{\lambda} \myp{k} + \sigma \omega^{\lambda} \myp{k-k_0}}}}
		={} \abs{\mathcal{K} \myp{0;t,\sigma t,0,-k_0,0} } \\
		={}& \frac{1}{\myp{2 \pi}^d} \prod_{j=1}^d \abs[\bigg]{ \int_0^{2\pi} \ids p_j \, e^{i r_j \myp[\big]{\cos \myp{p_j+ \phi_j} - \lambda \int_{\Omega^{\ast}} \ids k W_L^{\lambda} \myp{k} \cos \myp{p_j+\phi_j - 2\pi k_j} }} },
	\end{align*}
	with $r_j = \abs{t} \abs{1+\sigma e^{-2 \pi i k_{0,j}}}$.
	By the argumentation following \eqref{eq:Kcrossing3}, we can choose $\lambda$ small enough, independently of $L$, so that by the stationary phase approximation, we have for any value of $r_j \geq 0$,
	\begin{equation}
	\label{eq:constr_interf3}
		\abs[\bigg]{\int_{\T^d} \ids k \, e^{-it \myp{\omega^{\lambda} \myp{k} + \sigma \omega^{\lambda} \myp{k-k_0}}}}
		\leq C \prod_{j=1}^d \frac{1}{\expec{r_j}^{\frac{1}{2}}}.
	\end{equation}
	Now, using $ \abs{1+\sigma e^{-2 \pi i k_{0,j}}} \geq \abs{\sin \myp{ 2 \pi k_{0,j}}} $ and picking the index $j$ corresponding to the \emph{second} largest of the numbers $\abs{\sin \myp{ 2 \pi k_{0,j}}}$, we find
	\begin{equation*}
		\abs[\bigg]{\int_{\T^d} \id k \, e^{-it \myp{\omega^{\lambda} \myp{k} + \sigma \omega^{\lambda} \myp{k-k_0}}}}
		\leq{} \frac{C}{\expec{t} \abs{\sin \myp{2 \pi k_{0,j}}}},
	\end{equation*}
	so choosing $M^{\mathrm{sing}}$ as described in the statement finishes the proof.
\end{proof}

With \eqref{eq:Kcrossing} in hand, the crossing bounds in \cref{ass:DR4} can be proven in the exact same way as in \cite{LukkarinenSpohn:WNS:2011}.
However, the proof in \cite{LukkarinenSpohn:WNS:2011} can easily be extended to dimensions $d \geq 3$ instead of only $d \geq 4$, so we write out the details here for completeness.
\begin{lemma}[Crossing bounds]
	Suppose that $d \geq 3$.
	Let $u_m \in \T^d$ and $\sigma_m \in \Set{\pm 1}$ for $m = 1,2,3$, and $0 < \zeta \leq 1$.
	Then we have
	\begin{equation}
	\label{eq:crossing1}
		\iint_{\R^2} \ids s \ids t \, e^{- \zeta \abs{s}} \prod\limits_{m=1}^3 \normt{\mathcal{K} \myp{\nonarg; t, \sigma_m s,0,u_m,0}}{\ell^3}
		\leq{} C \zeta^{\frac{1}{7}-1} \prod\limits_{j=1}^d \frac{1}{\abs{\sin \myp{2 \pi u_{n,j}}}^{\frac{1}{7d}}}
	\end{equation}
	for any $n = 1,2,3$, and
	\begin{align}
		\MoveEqLeft[8] \iint_{\R^2} \ids s \ids t \, e^{- \zeta \abs{s}} \normt{p_t^{\lambda}}{\ell^3}^2 \normt{\mathcal{K} \myp{\nonarg; t, \sigma_1 s, \sigma_2 s,u_1,u_2}}{\ell^3} \nn \\
		\leq{}& C \zeta^{\frac{1}{7}-1} \prod\limits_{j=1}^d \frac{1}{\abs{\sin \myp{2 \pi \myp{ u_{2,j}-u_{1,j}}} }^{\frac{1}{7d}}}.
	\label{eq:crossing2}
	\end{align}
	It follows that \cref{ass:DR4} is satisfied with
	\begin{equation}
	\label{eq:FcrDef}
		F^{cr} \myp{u,\zeta} \coloneq C \prod_{j=1}^d \frac{1}{\abs{\sin \myp{2 \pi u_{j}}}^{\frac{1}{7}}}.
	\end{equation}
\end{lemma}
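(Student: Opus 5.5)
We reduce both bounds to one-dimensional integrals using the product estimate \eqref{eq:Kcrossing} of \cref{lemma:Kcrossing}. For \eqref{eq:crossing1}, with $t_0=t$, $t_1=\sigma_m s$, $t_2=0$, each factor satisfies
\begin{equation*}
	\normt{\mathcal{K} \myp{\nonarg; t, \sigma_m s,0,u_m,0}}{\ell^3}
	\leq C \prod_{j=1}^d \expec{\abs{t+\sigma_m s e^{2\pi i u_{m,j}}}}^{-1/7}.
\end{equation*}
Write $r = \abs{t+\sigma s e^{i\theta}}$. Using $r \geq \abs{s}\abs{\sin\theta}$ (the imaginary part) and $r\geq \abs{t}-\abs{s}$, we get the key elementary inequality $r \gtrsim \abs{s}\abs{\sin\theta}$ and, for $\abs{t}\geq 2\abs{s}$, $r\geq \abs{t}/2$. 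The plan is to use the factor with index $n$ for decay in $s$ and the other two factors, together with $\expec{t}$-decay, for integrability in $t$.

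For \eqref{eq:crossing1}, fix $n$. On the region $\abs{t}\geq 2\abs{s}$, every factor is bounded by $C\expec{t}^{-d/7}$, so the product is at most $\expec{t}^{-3d/7}$. This is integrable in $t$ when $3d/7>1$, that is for $d\geq 3$. The $s$-integral then gives $\int e^{-\zeta\abs{s}}\ids s \lesssim \zeta^{-1}$, which is not yet good enough. To gain the factor $\zeta^{1/7}$, we split the exponents: from the $n$-th factor we extract $\prod_j \expec{\abs{s}\abs{\sin 2\pi u_{n,j}}}^{-1/(7d)}$, using $r\geq \abs{s}\abs{\sin}$ for a $1/d$ fraction of its decay. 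The remaining decay, of total exponent $3d/7-1/7 >1$ for $d\geq 3$, is used in $t$. This is consistent: on $\abs{t}\geq 2\abs{s}$ all factors are bounded by $\expec{t}$ powers, which dominate $\expec{s}$ powers. The $s$-integral becomes
\begin{equation*}
	\int e^{-\zeta\abs{s}}\prod_j \myp{\abs{s}\abs{\sin 2\pi u_{n,j}}}^{-1/(7d)}\ids s \lesssim \zeta^{1/7-1}\prod_j\abs{\sin 2\pi u_{n,j}}^{-1/(7d)}.
\end{equation*}
On the region $\abs{t}\leq 2\abs{s}$, we bound the $n$-th factor by $\prod_j\myp{\abs{s}\abs{\sin}}^{-1/7}$, which supplies the $u$-dependence and a power $\abs{s}^{-d/7}$. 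We bound the other two factors by $1$ and integrate $t$ over an interval of length $4\abs{s}$. The result is $\int e^{-\zeta \abs{s}}\abs{s}^{1-d/7}\ids s\sim \zeta^{d/7-2}$, which is worse than $\zeta^{1/7-1}$. This is the main obstacle. To fix it, on this region we keep genuine $t$-decay from the other two factors: for fixed $s$, the map $t\mapsto \abs{t+\sigma s e^{i\theta}}$ is at least $\abs{t-t_*}$ up to the sine part. Hence $\int \prod_{m\neq n}\prod_j\expec{r_{m,j}}^{-1/7}\ids t$ is bounded uniformly in $s$ by $C$, provided $2d/7>1$. This holds for $d\geq 4$, and is borderline for $d=3$, where one can instead use a $3/2$ split and Hölder in $t$ to cover $d\geq 3$. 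With the $t$-integral bounded by a constant, the remaining $s$-integral of $e^{-\zeta\abs{s}}\abs{s}^{-1/7}\prod_j\abs{\sin}^{-1/(7d)}$ gives the claim. Here we take only the $1/(7d)$-th root of the $n$-th factor per coordinate, which is allowed since each factor is at most $1$.

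For \eqref{eq:crossing2}, $\normt{p_t^{\lambda}}{3}^2\lesssim\expec{t}^{-2d/7}$ by \eqref{eq:l3_dispersivity2}, which is integrable in $t$ for $d\geq 4$; for $d=3$ the $\mathcal{K}$ factor supplies the missing decay on $\abs{t}\geq 4\abs{s}$. For $\mathcal{K}$ we have $r_j=\abs{t+s(\sigma_1e^{i\theta_1}+\sigma_2e^{i\theta_2})}\geq \abs{s}\,\abs{\mathrm{Im}(\sigma_1e^{i\theta_1}+\sigma_2 e^{i\theta_2})e^{-i\theta_1}}$. The imaginary part equals $\abs{\sin(\theta_2-\theta_1)}$ up to sign, so $r_j\geq \abs{s}\abs{\sin 2\pi(u_{2,j}-u_{1,j})}$. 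The same $1/(7d)$-root extraction and the $s$-integral then give \eqref{eq:crossing2}. Finally, setting $F^{cr}$ as in \eqref{eq:FcrDef} with $\gamma=1/7$, it remains to check \eqref{eq:crossingest1} and \eqref{eq:crossingest2}. The first holds because $\abs{\sin}^{-1/7}$ is integrable on the torus. For the second, we use Hölder: $F^{cr}\in L^p$ for $p<7$, and the resolvent $1/\abs{\alpha-\Theta+i\zeta}$ has $L^{p'}$-norm bounded by powers of $\expec{\ln\zeta}$. That norm bound follows via \eqref{eq:resolvent_integral} and \cref{prop:osc_bound} applied to the remaining free variable. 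The factor $F^{cr}(k_n+u)$ depends on only one of the integrated momenta, which allows integrating it against the smoothing in the other.
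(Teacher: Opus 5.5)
Your handling of \eqref{eq:crossing1} on $\abs{t}\geq 2\abs{s}$ and your check of \eqref{eq:crossingest1} are fine. There are, however, two genuine gaps and one loose end.

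\emph{Verification of \eqref{eq:crossingest2}.} Your H\"older step needs $\normt{\abs{\alpha-\Theta+i\zeta}^{-1}}{L^{p'}\myp{\T^{2d}}}$, with $p'=p/(p-1)>7/6$, to be bounded by powers of $\expec{\ln\zeta}$. This is false. Near a point of $\Set{\Theta=\alpha}$ where $\nabla\Theta\neq0$ (such points exist for a.e.\ $\alpha$ in the range of $\Theta$), the set $\Set{\abs{\Theta-\alpha}<\zeta}$ has measure $\gtrsim\zeta$. So this norm is $\gtrsim\zeta^{-(1-1/p')}$, a power loss that \eqref{eq:crossingest2} does not permit. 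Also, \cref{prop:osc_bound} is a two-variable estimate and gives nothing for a single remaining variable. Only the mixed norm $L^{p'}_{k_1}L^1_{k_2}$ is harmless, and that is how the paper argues. For $n=1$: write the resolvent via \eqref{eq:resolvent_integral}. Do the $k_2$-integral first (the variable $F^{cr}$ does not see), using the one-variable constructive-interference bound \eqref{eq:constr_interf3}; this gives $\expec{s}^{-d/2}\prod_j\abs{\sin 2\pi\myp{k_{1,j}-k_{0,j}}}^{-1/2}$. Then apply H\"older in $k_1$ only (e.g.\ exponents $3$ and $\frac32$) to this singularity times $F^{cr}\myp{k_1+u}$. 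Finish with $\int\ids s\,F\myp{s;\zeta}\expec{s}^{-d/2}\lesssim\expec{\ln\zeta}$ for $d\geq3$. The case $n=2$ is symmetric, and $n=3$ reduces to $n=1$ via $\tilde k_1=k_0-k_1-k_2$. Your last sentence points in this direction, but the estimate you actually state does not hold.

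\emph{The lower bound behind \eqref{eq:crossing2}.} Fix $j$, set $\theta_m=2\pi u_{m,j}$ and $w_j=\sigma_1e^{i\theta_1}+\sigma_2e^{i\theta_2}$. The claim $r_j\geq\abs{s}\abs{\mathrm{Im}\myp{w_je^{-i\theta_1}}}$ is false, because multiplying by $e^{-i\theta_1}$ rotates the real number $t$ off the real axis. Only $r_j\geq\abs{s}\abs{\mathrm{Im}\,w_j}=\abs{s}\abs{\sigma_1\sin\theta_1+\sigma_2\sin\theta_2}$ holds, and this is not controlled by $\abs{\sin\myp{\theta_2-\theta_1}}$. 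For $\theta_1=\pi/4$, $\theta_2=3\pi/4$, $\sigma_1=-\sigma_2=1$ one gets $w_j=\sqrt{2}$, so $r_j=0$ at $t=-\sqrt{2}s$, although $\abs{\sin\myp{\theta_2-\theta_1}}=1$. (The paper's written proof states the same lower bound, so that step needs repair there as well.)

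Part of the $s$-decay has to come from $\normt{p_t^{\lambda}}{3}^2$. Since $\abs{t}+r_j\geq\abs{s}\abs{w_j}$ and $\abs{w_j}=\abs{1+\sigma_1\sigma_2e^{i\myp{\theta_2-\theta_1}}}\geq\abs{\sin\myp{\theta_2-\theta_1}}$, one has $\expec{t}^{-\frac{1}{7d}}\expec{r_j}^{-\frac{1}{7d}}\lesssim\myp{\abs{s}\abs{\sin 2\pi\myp{u_{2,j}-u_{1,j}}}}^{-\frac{1}{7d}}$ for every $j$. What is left is $\expec{t}^{-\frac{2d-1}{7}}\prod_j\expec{t+s\,\mathrm{Re}\,w_j}^{-\frac{d-1}{7d}}$ (using $r_j\geq\abs{t+s\,\mathrm{Re}\,w_j}$), of total exponent $\frac{3d-2}{7}>1$ when $d\geq4$. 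H\"older in $t$ then bounds its $t$-integral uniformly in $s$, and $\int e^{-\zeta\abs{s}}\abs{s}^{-1/7}\ids s\lesssim\zeta^{1/7-1}$ yields \eqref{eq:crossing2} for $d\geq4$, which is all the main theorem uses. For $d=3$ this count is exactly borderline, and one has to lower $\gamma$ or split regions more carefully.

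\emph{The case $d=3$ in \eqref{eq:crossing1}.} On $\abs{t}\leq2\abs{s}$ the other two factors give total exponent $6/7<1$. This is insufficient rather than borderline, and the ``$3/2$ split'' is not an argument. Instead, spend the leftover $\frac{d-1}{7d}$ per coordinate of the $n$-th factor on $t$ as well, via $r_{n,j}\geq\abs{t+\sigma_n s\cos 2\pi u_{n,j}}$, for a total of $\frac{3d-1}{7}>1$. The paper avoids the region split altogether: $r\geq\abs{\abs{t}-\abs{s}}$ for every factor, so all factors share one centre. The integrand is then at most $e^{-\zeta\abs{s}}\expec{\abs{t}-\abs{s}}^{-\frac{3d-1}{7}}\myp{\abs{s}\abs{\sin 2\pi u_{n,i}}}^{-1/7}$, with $i$ maximising $\abs{\sin 2\pi u_{n,i}}$.
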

\begin{proof}
	We start out by proving \eqref{eq:crossing1}.
	By \cref{lemma:Kcrossing}, we can bound 
	\begin{equation*}
		\normt{\mathcal{K} \myp{\nonarg; t, \sigma_m s,0,u_m,0}}{\ell^3} \lesssim \prod_{j=1}^d \expec{r_j \myp{u_m}}^{-1/7}, 
	\end{equation*}
	where $r_j \myp{u_m} = \abs{t \pm s e^{i 2 \pi u_{m,j}}} \geq \max \Set{ \abs{\abs{t} - \abs{s}}, \abs{s} \abs{\sin \myp{2 \pi u_{m,j}}} } $.
	We use the latter bound on $r_j$ in a single coordinate $u_{n,i}$ of $u_n$ satisfying that $\abs{\sin \myp{2 \pi u_{n,i}}}$ is as large as possible, and we apply the former bound in all other coordinates of the $u_m$, to get sufficient decay in $t$.
	This means that we can bound
	\begin{align*}
		\MoveEqLeft[6] \iint_{\R^2} \ids s \ids t \, e^{- \zeta \abs{s}} \prod\limits_{m=1}^3 \normt{\mathcal{K} \myp{\nonarg; t, \sigma_m s,0,u_m,0}}{\ell^3} \\
		\leq{}& C \iint_{\R^2} \ids s \ids t \, e^{- \zeta \abs{s}} \prod\limits_{m=1}^3 \prod_{j=1}^d \frac{1}{\expec{r_j \myp{u_m}}^{\frac{1}{7}}}  \\
		\leq{}& C \iint_{\R^2} \ids s \ids t \, e^{- \zeta \abs{s}} \expec{\abs{t} - \abs{s}}^{-\frac{3d-1}{7}} \frac{1}{ \myp{ \abs{s} \abs{\sin \myp{2 \pi u_{n,i}}} }^{\frac{1}{7}} },
	\end{align*}
	where, for $d \geq 3$, the $t$-integral is bounded independently of $s$,
	\begin{equation*}
		\int_{\R} \ids t \expec{\abs{t} - \abs{s}}^{-\frac{3d-1}{7}} 
		\leq{} 2 \int_{\R} \ids t \expec{t}^{-\frac{3d-1}{7}}
		< \infty.
	\end{equation*}
	Finally, noting that $\abs{\sin \myp{2 \pi u_{n,i}}} \geq \prod_{j=1}^d \abs{\sin \myp{2 \pi u_{n,j}}}^{1/d}$ by choice of the coordinate $u_{n,i}$, it follows that \eqref{eq:crossing1} holds.
	
	To prove \eqref{eq:crossing2}, we have similarly, with the use of \eqref{eq:l3_dispersivity2},
	\begin{align*}
		\MoveEqLeft[8] \iint_{\R^2} \ids s \ids t \, e^{- \zeta \abs{s}} \normt{p_t^{\lambda}}{\ell^3}^2 \normt{\mathcal{K} \myp{\nonarg; t, \sigma_1 s, \sigma_2 s,u_1,u_2}}{\ell^3} \\
		\leq{}& C \iint_{\R^2} \ids s \ids t \, e^{- \zeta \abs{s}} \expec{t}^{-\frac{2d}{7}} \prod\limits_{j=1}^d \expec{r_j}^{-\frac{1}{7}},
	\end{align*}
	where in this case we can bound
	\begin{align*}
		r_j 
		={}& \abs{t \pm s \myp{e^{i 2 \pi u_{1,j}} \pm e^{i 2 \pi u_{2,j}} }} \\
		\geq{}& \max \Set[\big]{ \abs[\big]{ \abs{t}-\abs{s} \abs{1 \pm e^{i 2 \pi \myp{u_{2,j} - u_{1,j}} }} }, \abs{s} \abs{ \sin \myp{2 \pi \myp{u_{2,j} - u_{1,j}}} }}.
	\end{align*}
	As before, using the latter bound in the coordinate where $ \abs{ \sin \myp{2 \pi \myp{u_{2,i} - u_{1,i}}} } $ is the largest possible, and the former bound in all other coordinates, leads to
	\begin{align*}
		\MoveEqLeft[2] \iint_{\R^2} \ids s \ids t \, e^{- \zeta \abs{s}} \expec{t}^{-\frac{2d}{7}} \prod\limits_{j=1}^d \expec{r_j}^{-\frac{1}{7}} \\
		\leq{}& \iint_{\R^2} \ids s \ids t \frac{e^{- \zeta \abs{s}} }{\expec{t}^{\frac{2d}{7}}} \frac{1}{ \myp{ \abs{s} \abs{ \sin \myp{2 \pi \myp{u_{2,i} - u_{1,i}}} } }^{\frac{1}{7}}} \prod\limits_{j \neq i} \frac{1}{\expec{ \abs{t}-\abs{s} \abs{1 \pm e^{i 2 \pi \myp{u_{2,j} - u_{1,j}} }} }^{\frac{1}{7}}} \\
		\leq{}& C \int_{\R} \ids s \frac{e^{-\zeta \abs{s}}}{\abs{s}^{\frac{1}{7}}} \prod\limits_{j=1}^d \frac{1}{\abs{ \sin \myp{2 \pi \myp{u_{2,j} - u_{1,j}}} }^{\frac{1}{7d}}},
	\end{align*}
	and \eqref{eq:crossing2} follows.
	
	We still need to check that \eqref{eq:crossingest1} and \eqref{eq:crossingest2} hold with our choice \eqref{eq:FcrDef} of $F^{cr}$.
	Since $F^{cr}$ is independent of $\zeta$, and belongs to $L^1 \myp{\T^d}$, \eqref{eq:crossingest1} holds trivially with $c_2 = 0$.
	For \eqref{eq:crossingest2}, we consider first $n=1$ and rewrite the resolvent using \eqref{eq:resolvent_integral}.
	We thus have to estimate
	\begin{align*}
		\MoveEqLeft[4] \iint_{\myp{\T^d}^2} \ids k_1 \ids k_2 F^{cr} \myp{k_1 + u} \frac{1}{\abs{\alpha - \Theta \myp{k,\sigma} + i \zeta}} \\
		\leq{}& \int_{\R} \ids s F \myp{s;\zeta} \int_{\T^d} \ids k_1 F^{cr} \myp{k_1 + u} \abs[\bigg]{ \int_{\T^d} \ids k_2 \, e^{-i s \myp{\omega^{\lambda} \myp{k_0-k_1-k_2} + \sigma \omega^{\lambda} \myp{k_2}}} } \\
		\lesssim{}& \int_{\R} \ids s F \myp{s;\zeta} \int_{\T^d} \ids k_1 F^{cr} \myp{k_1 + u} \prod_{j=1}^d \frac{1}{\expec{r_j}^{\frac{1}{2}}},
	\end{align*}
	where we used the estimate \eqref{eq:constr_interf3} with $r_j = \abs{s} \abs{1+\sigma e^{-2 \pi i \myp{k_{1,j}- k_{0,j}} }} $.
	Plugging in the expression for $F^{cr}$ and then applying H\"older's inequality in the $k$-integral (for instance with the conjugate pair $\myp{3,\frac{3}{2}}$), we obtain that this is bounded from above by
	\begin{align*}
		\MoveEqLeft[6] \int_{\R} \ids s \frac{F \myp{s;\zeta}}{\expec{s}^{\frac{d}{2}}} \prod_{j=1}^d  \int_{0}^{1} \ids k_{1,j} \frac{1}{\abs{\sin 2 \pi \myp{k_{1,j} +  u_j}}^{\frac{1}{7}} \abs{\sin 2 \pi \myp{k_{1,j} - k_{0,j}}}^{\frac{1}{2}} } \\
		\lesssim{}& \int_{\R} \ids s \frac{1}{\expec{s}^{\frac{d}{2}}} \myp[\Big]{\expec{\ln \zeta} e^{-\zeta \abs{s}} + \1 \myp{\abs{s} \leq 1} \ln \frac{1}{\abs{s}}}
		\lesssim{} 1,
	\end{align*}
	uniformly for $0 < \zeta \leq 1$ since $d \geq 3$, showing that \eqref{eq:crossingest2} holds with $c_2 = 0$.
	The case $n = 2$ is handled in the same way, and for $n = 3$, we simply make the change of variables $\tilde{k}_1 = k_0 - k_1 - k_2$, which leads back to the case $n=1$, but with $\Theta \myp{k,\sigma}$ replaced by $-\Theta \myp{k,-\sigma}$ in the resolvent.
\end{proof}

\section{Graph estimates}
\label{sec:graph_estimates}
\subsection{Basic estimates}
In this section, we prove estimates on the amplitudes corresponding to graphs that are either higher order (that is, containing clusters that are not pairing), or partially paired (meaning pairing graphs that contain a vertex of degree one).
\begin{lemma}[Basic graph estimate 1]
\label{lemma:BasicGEstimate1}
	There is a constant $C>0$ such that for $1 \leq n \leq N_0$ and $s>0$, we have
	\begin{align}
		\limsup_{L\to\infty} &\lambda^{2n} \sum_{\bar{\sigma},\bar{\sigma}' \in \Set{\pm 1}^{\mathcal{I}_n'}} \1 \myp{\sigma_{n,1} = 1} \1 \myp{\sigma_{n,1}'=-1} \int_{\myp{\Omega^{\ast}}^{\mathcal{I}_n'}} \ids \bar{k} \int_{\myp{\Omega^{\ast}}^{\mathcal{I}_n'}} \ids \bar{k}' \nn \\
		&\times \Delta_{n,\ell} \myp{\bar{k},\bar{\sigma}} \Delta_{n,\ell'} \myp{\bar{k}',\bar{\sigma}'} \prod_{A \in S} \delta_L \myp[\Big]{ \sum_{j\in A} K_j } \nn \\
		&\times\prod_{j=1}^n \abs[\big]{ \Psi_1 \myp{k_{j-1:\ell_j}, \sigma_{j,\ell_j}} \Psi_1 \myp{-k'_{j-1:\ell_j'}, \sigma_{j,\ell_j'}'} } \nn \\
		&\times \abs[\bigg]{ \int_{\myp{\R_+}^{I_{2,2n}}} \mathrm{d}r \delta \myp[\bigg]{s - \sum_{j=2}^{2n} r_j} \prod_{j=2}^{2n} e^{-ir_j \gamma \myp{j;J}} } \nn \\
		\leq{}& C^{1+\tilde{n}_1+\tilde{n}_2} \normt{\widehat{V}}{\infty}^n e^{s \lambda^2} \frac{\myp{s\lambda^2}^{\tilde{n}_0-{n}'_0}}{\myp{\tilde{n}_0-{n}'_0}!} \lambda^{2 + \tilde{n}_2 + \myp{1-b} \tilde{n}_1 - \tilde{n}_0} N_0^{-b_0 n'_0} \expec{\ln n} \expec{\ln\lambda}^{1+\tilde{n}_2+2\tilde{n}_1},
	\label{eq:BasicGEstimate1}
	\end{align}
	where $\tilde{n}_j$ is the number of non-amputated interaction vertices of degree $j$ and $n'_0$ is the number of degree $0$ interaction vertices $v_l$ with $2 < l \leq 2n-N_0+1$.
	We have used the notations $ \mathcal{I}_n' = \mathcal{I}_n \cup \Set{\myp{n,1}} $ and $K = \myp{k_{0,\nonarg}',k_{0,\nonarg}}$ from \cref{Propo:QERR3}.
\end{lemma}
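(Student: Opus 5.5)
The plan is to follow the proof of the corresponding basic estimate in \cite{LukkarinenSpohn:WNS:2011}: turn the time integral into resolvents, integrate the momentum delta functions along the spanning tree, and then integrate out the free momenta vertex by vertex from the top of the graph down. The order of steps is as follows.

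First, I resolve all cluster and interaction delta functions using the spanning tree of \cref{thm:momentum_resolution}. By \cref{lemma:freemomenta} this leaves $2(n+n')+2-\abs{S}$ free momenta, attached to fusion vertices according to their degree. The factors $\abs{\Psi_1}$ are bounded by $\normt{\widehat{V}}{\infty}$, except that at degree-one vertices I keep one cut-off factor $F_1$ from \cref{Propo:Cutoff} (v). In the limit $L\to\infty$ the discrete integrals become Lebesgue integrals via \eqref{eq:discrete-lebesque}; all integrands are bounded or continuous, so this is routine.

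Second, I treat the time integral with \cref{Lemma:PhaseResolve}. The choice of $A$ is the set of short time slices, i.e. those ending at vertices of degree $1$ or $2$, together with the last slice. The remaining long slices, ending at degree-zero vertices, stay in the time integral with the contour variable $z$ inserted. Taking the contour $\Gamma$ to be a circle-plus-line at distance $\lambda^2 + \kappa$ below the real axis, of length $\lesssim \expec{\ln\lambda}$ contribution after integration, I bound each remaining long-slice exponential by $e^{\mathrm{Im}(\cdot)s}$. The simplex integral over the long slices then gives
\[
\frac{s^{\tilde n_0}}{\tilde n_0!}\, e^{s\lambda^2}.
\]
The slices carrying $\kappa' = \lambda^2 N_0^{b_0}$, namely the $n'_0$ degree-zero vertices with $2<l\le 2n-N_0+1$, are moved into $A$ as resolvents instead. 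Each of these gives a factor $\abs{z-\gamma}^{-1}\le (\kappa')^{-1}$. This is where the factor $(s\lambda^2)^{\tilde n_0-n'_0}/(\tilde n_0-n'_0)!\,\lambda^{-\tilde n_0}$ times $N_0^{-b_0 n_0'}$ comes from: writing $s^{\tilde n_0-n_0'}=(s\lambda^2)^{\tilde n_0-n_0'}\lambda^{-2(\tilde n_0-n_0')}$ and $(\kappa')^{-n_0'}=\lambda^{-2n_0'}N_0^{-b_0n_0'}$, the bookkeeping of the $\lambda$-powers against the prefactor $\lambda^{2n}$ is done in the final step. Amputated slices contribute no time integral, and $\expec{\ln n}$ arises from summing the $1/\abs{z-\gamma}$ contour integral.

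Third, I integrate out the free momenta from top to bottom. At each vertex of degree $2$, the resolvent of the slice ending there depends on the two free momenta only through $\Theta$ of that vertex, plus a remainder independent of them (\cref{lemma:freeedgesorder2}). \cref{lemma:degree2vertex} then gives $\expec{\ln\lambda}$, and the $\lambda^2$ gain from the prefactor is not used up. At a degree-one vertex, \cref{lemma:degreeofafusionvertex}(a) puts the dependence in the form $\omega^\lambda(k_f)\pm\omega^\lambda(q-k_f)$, and \cref{lemma:degree1vertex} with the kept $F_1$ gives $\lambda^{-b}\expec{\ln\lambda}^2$. Counting the powers: each non-amputated vertex carries $\lambda$ from $\lambda^{2n}$, and each degree-zero slice costs $\lambda^{-2}$ relative to its $\lambda$. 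This yields $\lambda^{\tilde n_2+(1-b)\tilde n_1-\tilde n_0}$ up to the $\lambda^2$ coming from the two amputated vertices, which produces the factor $\lambda^{2}$. The leftover free momenta at amputated vertices are integrated trivially over the compact torus.

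The main obstacle is the ordering. When a free momentum is integrated, all other factors must be independent of it, and the resolvent's dependence must have exactly the form required by the lemmas. This is guaranteed by \cref{thm:momentum_resolution}, since edges created earlier are independent of later free edges, combined with processing vertices in decreasing time order. One must check that the resolvent attached to vertex $v_j$ is the only one still present that involves its free momenta. This holds because resolvents of higher slices have already been integrated, and lower ones do not depend on these momenta.
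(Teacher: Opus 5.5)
Your overall scheme (spanning-tree resolution, \cref{Lemma:PhaseResolve} with the short slices and the last slice in $A$, \cref{lemma:degree1vertex,lemma:degree2vertex}, and the power counting) is the paper's. Your handling of the $n_0'$ slices in $B$ is an acceptable variant. Bounding $\abs{z-\gamma(j;J)}^{-1}\le(\kappa')^{-1}$ uniformly on the contour gives the same factor $(\kappa')^{-n_0'}$ as the paper's integration of $e^{-\kappa' r_j}$ over $\R_+$. You should still say why $\mathrm{Im}(-\gamma(j;J))\ge\kappa'$ there: $m+m'=j$ forces $\min(m,m')\le n-N_0/2$.

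The genuine problem is in your third step, the one you yourself identify as the crux. The momentum integrations must run from the bottom of the graph to the top, not from the top down, and your justification ``lower ones do not depend on these momenta'' is false. The resolvent of slice $j$ is built from the momenta of the edges crossing that slice, and all of these lie in $\mathscr{E}_-(v_l)$ for some $l\ge j+1$. By \cref{thm:momentum_resolution} ($k_e$ is independent of $k_f$ whenever $e<f$), they are independent of the free momenta of vertices \emph{below} slice $j$. In general, however, they do depend on free momenta of vertices \emph{above} it, because the loops closed by the free edges of $v_l$ run down through the slices under $v_l$. For example, suppose an immediate recollision is attached below to a lower leg of a degree-two vertex $v_l$. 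Its incoming momentum then depends on the free momenta of $v_l$, and so, in general, does the resolvent of its short slice. Hence, if you start at the top, the free momenta of the topmost vertex of positive degree appear in every short-slice resolvent below it that is still present. \cref{lemma:degree1vertex,lemma:degree2vertex} control a single resolvent, so they cannot be applied, and bounding the surplus resolvents by $\lambda^{-2}$ each destroys the estimate. The property you quote (earlier edges are independent of later free edges) is exactly what supports the opposite order.

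With the paper's bottom-to-top order the argument closes. When the free momenta of $v_l$ are integrated, the resolvents of the short slices below $v_l$ have already been replaced by bounds that are uniform in all remaining variables, since both lemmas are uniform in $k_0$ and $\alpha$. Resolvents of slices above $v_l$ do not involve these momenta. The only one left is the resolvent of slice $l-1$, whose real part is $\Theta_l$ plus terms independent of the free momenta of $v_l$. The amputated vertices, being lowest, are simply integrated first.

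A smaller slip: the top edge of the contour must lie at $\mathrm{Im}\,z=\lambda^2$ \emph{above} the real axis, not below it. Every $\gamma(j;J)$ has nonpositive imaginary part and must be enclosed. This top edge is what produces $e^{s\lambda^2}$ and the value $\beta=\lambda^2$ used in \cref{lemma:degree1vertex,lemma:degree2vertex}.
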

\begin{proof}
	Note first that the constraints on the parities in the expression above imply that the sums over $\sigma$, $\sigma'$ contain only one non-zero term, so they can be ignored.
	We next argue that the $\limsup$ can be taken inside the integrals.
	First, we resolve the momentum constraints of the $\delta_L$-functions using the spanning tree constructed in \cref{Sec:PropMometum}.
	This removes all the $\delta_L$-functions and the integrals over the "integrated edges", and only the integrals over the "free edges" remain.
	Then, we convert the remaining free momentum integrals into Lebesgue integrals of step functions on $\T^d$, using \eqref{eq:discrete-lebesque}, and note that the resulting integrand is uniformly bounded in $L$.
	Indeed, since we have removed all the $\delta_L$-functions, all that remains of the integrand on the left hand side of \eqref{eq:BasicGEstimate1} are the interaction terms $\Psi_1$ and the time integral, which are bounded by $\normt{\widehat{V}}{\infty}$ and $s^{2n-2}$, respectively.
	Consequently, by dominated convergence we can move the $L \to \infty$ limit inside the integrals.
	By reinserting the momentum constraints, it is possible to write the limit as an expression identical to the left hand side of \eqref{eq:BasicGEstimate1}, but with every $\Omega^{\ast}$ replaced by $\T^d$, and every $\delta_L$ replaced by a continuum $\delta$-function $\delta = \delta_{\T^d}$.
	However, we will refrain from doing this and instead continue working with the resolved momentum constraints in the following.

	There are in total $2n$ interaction vertices.
	We denote $A_j$, $j=0,1,2$, to be the set of time slice indices $2 \leq l < 2n$ such that $\mathrm{deg} \myp{v_{l+1}} = j$.
	It could happen that $A_j=\emptyset$ for some $j$ but we always have $$A_0\cup A_1\cup A_2=\Set{2,3,\dotsc,2n-1}.$$ 
	We set $B = \Set{l\in A_0 \mid l \leq 2n-N_0}$, which can be an empty set.
	For all $l \in B$, it is clear that $\floor{l/2} \leq n-N_0/2$.
	We set $\gamma_j = \gamma \myp{j;J}$ and find by \cref{Lemma:PhaseResolve},
	\begin{align}
	\label{eq:Aestimate1}
		\MoveEqLeft[4] \abs[\bigg]{ \int_{\myp{\R_+}^{I_{2,2n}}} \ids r \delta \myp[\bigg]{s-\sum_{j=2}^{2n}r_j } \prod_{j=2}^{2n} e^{-ir_j \gamma_j} } \nn \\
		\leq{}& \oint_{\Gamma_{n}} \frac{\abs{\ids z}}{2\pi} \int_{\myp{\R_+}^{A'}} \ids r \delta \myp[\bigg]{s-\sum_{j\in A'} r_j } \abs[\big]{e^{-ir_*z}} \prod_{j\in A_0} \abs[\big]{e^{-i r_j \gamma_j}} \prod_{j\in A} \frac{1}{\abs{z-\gamma_j}},
	\end{align}
	where $\Gamma_{n}$ is the contour depicted in Figure \ref{Fig9}, the set $A = \Set{2n}\cup A_1 \cup A_2$ and $A' = \Set{*} \cup A_0$. 
	\begin{figure}
	\centering
		\includegraphics[width=0.85\linewidth]{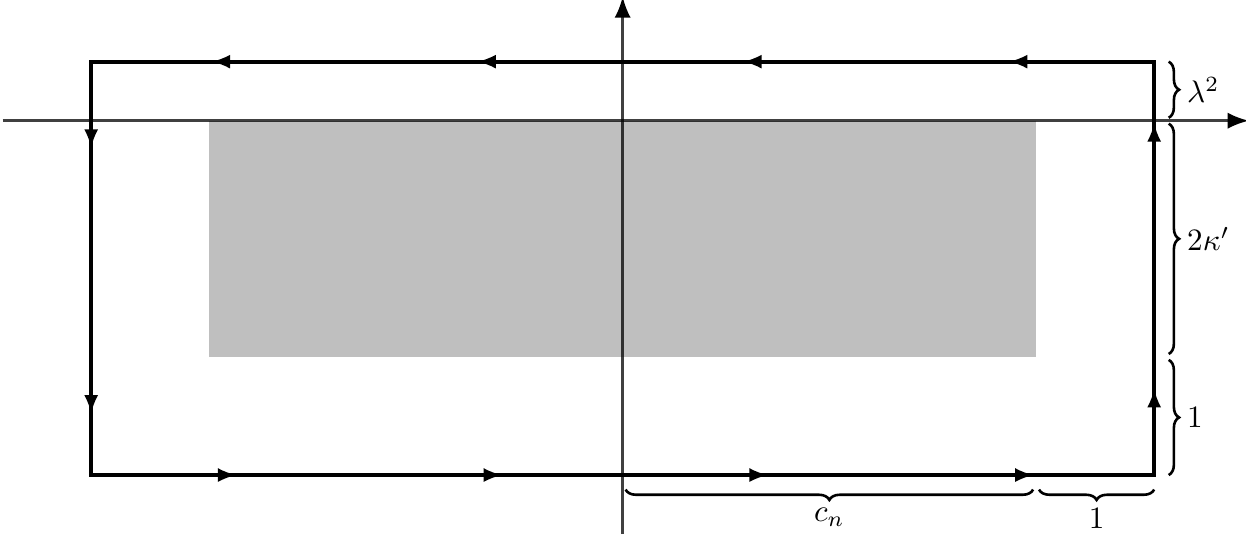}
		\caption{The integration path $\Gamma_n$ of \eqref{eq:Aestimate1}.
		Here, $c_n = 2 (2n+1) (\lVert \omega \rVert_{\infty} + 2 \lVert \widehat{V} \rVert_{\infty} )$, and for any momentum graph with $2n$ vertices, the phase factors $\gamma \myp{j;J}$ always lie in the shaded region.}
	\label{Fig9}
	\end{figure}
	
	In the case that $j\in B$, we have $\mathrm{Im} \myp{-\gamma_j} = \kappa_{n-m}+\kappa_{n-m'}$, in which $m+m'=2+J_+(j-2;J)+J_-(j-2;J)=j$.
	As a consequence, $\min \myp{m,m'} \leq \floor{l/2} \leq n-N_0/2$ and then $\mathrm{Im} \myp{- \gamma_j} \geq \kappa_{n-\min \myp{m,m'}} = \kappa' = \lambda^2 N_0^{b_0} \geq 0$.
	We then find
	\begin{align}
	\label{eq:Aestimate2}
		\MoveEqLeft[4] \int_{\myp{\R_+}^{A'}} \ids r \delta \myp[\bigg]{s-\sum_{j\in A'}r_j } \abs[\big]{e^{-i r_*z}} \prod_{j \in A_0} \abs[\big]{e^{-i r_j \gamma_j}} \nn \\
		\leq{}& e^{s \myp{\mathrm{Im} z}_+} \int_{\myp{\R_+}^B} \ids r \prod_{j \in B} e^{-\kappa' r_j} \int_{\myp{\R_+}^{A'\setminus B}} \ids r \delta \myp[\bigg]{ s-\sum_{j\in B} r_j - \sum_{j \in A'\setminus B} r_j } \nn \\
		\leq{}& e^{s \myp{\mathrm{Im} z}_+} \myp{\kappa'}^{-\abs{B}} \frac{s^{\tilde{n}}}{\tilde{n}!},
	\end{align}
	where $\myp{\mathrm{Im} z}_+ = \max\Set{\mathrm{Im} z,0}$, and
	\begin{equation*}
		\tilde{n} ={} \abs{A' \setminus B}-1 = \abs{A_0 \setminus B} = \abs{A_0} - \abs{B} = \tilde{n}_0 - n'_0.
	\end{equation*}
	Since $\gamma_{2n}=0$, we now deduce
	\begin{align}
	\label{eq:Aestimate3}
		\MoveEqLeft[4] \abs[\bigg]{ \int_{\myp{\R_+}^{I_{2,2n}}} \ids r \delta \myp[\bigg]{ s-\sum_{j=2}^{2n} r_j } \prod_{j=2}^{2n} e^{-ir_j \gamma_j} } \nn \\
		\leq{}& \myp{\kappa'}^{-n'_0} \frac{s^{\tilde{n}_0-n'_0}}{\myp{\tilde{n}_0-n'_0}!} \oint_{\Gamma_{n}} \frac{\abs{\ids z}}{2\pi} \frac{e^{s \myp{\mathrm{Im} z}_+}}{\abs{z}} \prod_{j \in A_1 \cup A_2} \frac{1}{\abs{z-\gamma_j}}. 
	\end{align}

	Now, let us estimate $\abs{\Psi_1 \myp{k_{0;\ell_1}, \sigma_{1,\ell_1}} \Psi_1 \myp{-k_{0;\ell_1'}', \sigma_{1,\ell_1'}'}} \leq \normt{\widehat{V}}{\infty}^2$ to remove the dependence on the two ``amputated'' interaction vertices at the bottom of the trees.
	Supposing that there are any free momenta associated with these two vertices, they will be integrated out next, resulting in a factor of $1$.
	
	Now, let us consider the resolvents $\frac{1}{\abs{z-\gamma_j}}$, $j \in A_1\cup A_2$.
	These vertices depend only on the free momenta associated to edges ending on time slices $j' \geq j$.
	Consider a degree one vertex in the plus tree.
	By \cref{lemma:degreeofafusionvertex}, there is a permutation $\pi$ of $\Set{0,1,2}$ such that $k_{j-1,\ell_j+\pi\myp{1}}$ is free.
	Moreover, both $k_{j-1,\ell_j + \pi \myp{3}}$ and $k_0 \coloneq k_{j-1,\ell_j + \pi \myp{2}} + k_{j-1,\ell_j + \pi \myp{1}}$ are independent of $k_{j-1,\ell_j + \pi \myp{1}}$.
	Then, recalling \cref{Propo:Cutoff} and estimating $\abs{\Psi_1 \myp{k_{j-1;\ell_j},\sigma_{j,\ell_j}}} \leq \normt{\widehat{V}}{\infty} F_1 \myp{k_0}$, we may apply \cref{lemma:degree1vertex} for $k_0 = k_{j-1,\ell_j + \pi \myp{2}} + k_{j-1,\ell_j + \pi \myp{1}}$, $\sigma = 1$, and $\mathrm{d}k = \mathrm{d}k_{j-1,\ell_j + \pi \myp{1}}$.
	For a degree one vertex in the minus tree, the same strategy can be applied with $k_0' = k_{j-1,\ell_j' + \pi' \myp{2}}' + k_{j-1,\ell_j' + \pi' \myp{1}}'$, $\sigma = -1$, and $\mathrm{d}k = \mathrm{d} k_{j-1,\ell_j' + \pi' \myp{1}}'$, by estimating $\abs{\Psi_1 \myp{-k_{j-1;\ell_j'}',\sigma_{j,\ell_j'}'}} \leq \normt{\widehat{V}}{\infty} F_1 \myp{-k_0'}$.
	Consider a degree two vertex, by \cref{lemma:degreeofafusionvertex}, there is a permutation $\pi$ of $\Set{0,1,2}$ such that $k_{j-1,\ell_j + \pi \myp{1}}$, $k_{j-1,\ell_j + \pi \myp{2}}$ are free and   $k_{j-1,\ell_j + \pi \myp{3}} - k_{j-1,\ell_j + \pi \myp{1}} - k_{j-1,\ell_j + \pi \myp{2}}$ is independent of both $k_{j-1,\ell_j + \pi \myp{1}}$, $k_{j-1,\ell_j + \pi \myp{2}}$.
	We can then apply \cref{lemma:degree2vertex} for $k_0 = k_{j-1,\ell_j + \pi \myp{3}}$, $\sigma=1$, and $\mathrm{d}k = \mathrm{d} k_{j-1,\ell_j + \pi \myp{1}}$, $\mathrm{d}k' = \mathrm{d} k_{j-1,\ell_j + \pi \myp{2}}$.
	If this degree two vertex is in the minus tree, $\sigma=-1$.
	For degree zero vertices, we simply use the estimate $\abs{\Psi_1} \leq \normt{\widehat{V}}{\infty}$. 
	
	By \cref{lemma:degree1vertex,lemma:degree2vertex}, we can iteratively evaluate the free momentum integrals in the time direction, from the bottom to the top of the graph.
	At each iteration, there is at most one resolvent factor depending on the corresponding free momenta.
	The remaining free momenta only affect the value of the constant $\alpha$ in \cref{lemma:degree1vertex,lemma:degree2vertex}.
	Using the contour $\Gamma_{n}$, the value for $\beta$ is $\lambda^2$ for the topmost edge of $\Gamma_{n}$.
	The remaining $z$ makes $\abs{z-\gamma_j} \geq 1$ and the upper bound remains valid after changing the constant.
	Thus, from the momentum integrals we gain a contribution $C\lambda^{-b} \expec{\ln \beta}^2$ from each degree one vertex, and $C \expec{\ln \beta}$ from each degree two vertex.

	The only integral left is 
	\begin{equation*}
		\oint_{\Gamma_{n}} \frac{\abs{\ids z}}{2\pi} \frac{e^{s \myp{\mathrm{Im}z}_+}}{\abs{z}},
	\end{equation*}
	which is bounded by $C e^{s\lambda^2} \expec{\ln \myp{2n}} \expec{\ln\beta}$.
	Adding up the exponents, the result of the lemma follows.
\end{proof}

\begin{lemma}[Basic graph estimate 2]
\label{lemma:BasicGEstimate2}
	There is a constant $C>0$ such that for $n,n'\ge 0$ and $s>0$, we have
	\begin{align}
	\label{eq:BasicGEstimate2}
		\MoveEqLeft[3] \limsup_{L\to\infty} \lambda^{n+n'} \sum_{\bar{\sigma} \in \Set{\pm 1}^{\mathcal{I}_n'}} \sum_{\bar{\sigma}'\in \Set{\pm 1}^{\mathcal{I}_{n'}'}} \1 \myp{\sigma_{n,1}=1} \1 \myp{\sigma_{n',1}'=-1} \int_{\myp{\Omega^{\ast}}^{\mathcal{I}_n'}} \ids \bar{k} \int_{\myp{\Omega^{\ast}}^{\mathcal{I}_{n'}'}} \ids \bar{k}' \nn \\
		&\times \Delta_{n,\ell} \myp{\bar{k},\bar{\sigma}} \Delta_{n',\ell'} \myp{\bar{k}',\bar{\sigma}'} \prod_{A \in S} \delta_L \myp[\bigg]{ \sum_{j\in A} K_j } \nn \\
		&\times \prod_{j=1}^n \abs[\big]{ \Psi_1 \myp{k_{j-1:\ell_j}, \sigma_{j,\ell_j}} } \prod_{j=1}^{n'} \abs[\big]{ \Psi_1 \myp{-k'_{j-1:\ell_j'}, \sigma_{j,\ell_j'}'} } \nn \\
		&\times \abs[\bigg]{ \int_{\myp{\R_+}^{I_{0,n+n'}}} \ids r \delta \myp[\bigg]{ s-\sum_{j=0}^{n+n'} r_j } \prod_{j=0}^{n+n'} e^{-ir_j \gamma \myp{j;J}} } \nn \\
		\leq{}& C^{1 + n_1 + n_2} \normt{\widehat{V}}{\infty}^{n+n'} e^{s\lambda^2} \frac{\myp{s\lambda^2}^{n_0}}{n_0!} \lambda^{n_2 + \myp{1-b} n_1 - n_0} \expec{\ln \myp{n+n'+1}} \expec{\ln\lambda}^{1 + n_2 + 2n_1},
	\end{align}
	where $n_j$ is the number of interaction vertices of degree $j$.
\end{lemma}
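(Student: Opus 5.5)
The argument is a simplified version of the proof of \cref{lemma:BasicGEstimate1}. Here no time slices are amputated and the sequence $\kappa$ is only used through $\mathrm{Im}(-\gamma(j;J))\ge 0$, so no $N_0^{-b_0}$ gain is sought. First I move the limit $L\to\infty$ inside the integrals exactly as before. I resolve the momentum constraints with the spanning tree of \cref{thm:momentum_resolution}, so that only integrals over free momenta remain. I then rewrite these as Lebesgue integrals of step functions via \eqref{eq:discrete-lebesque}. The integrand is bounded uniformly in $L$: by $\normt{\widehat{V}}{\infty}^{n+n'}$ for the vertex factors, and by $s^{n+n'}$ for the time integral. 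Dominated convergence then applies. The parity indicators reduce the $\bar\sigma,\bar\sigma'$ sums to a single term.

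Next I split the time slices $\{0,\dots,n+n'-1\}$ into sets $A_0,A_1,A_2$ according to the degree of the vertex $v_{l+1}$ ending slice $l$. The top slice $n+n'$ has $\gamma(n+n';J)=0$ up to the $\kappa$ term. I apply \cref{Lemma:PhaseResolve} with $A=\{n+n'\}\cup A_1\cup A_2$ and $A'=\{*\}\cup A_0$, on the contour $\Gamma$ of Figure~\ref{Fig9}, whose top edge sits at $\mathrm{Im}\,z=\lambda^2$. Because $|e^{-ir_j\gamma_j}|\le 1$ for all $j$, the remaining time integral over $A'$ is bounded by
\begin{equation*}
e^{s(\mathrm{Im} z)_+}\, \frac{s^{n_0}}{n_0!},
\end{equation*}
the volume of the simplex. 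Since $|A_0|=n_0$, this yields the factor $(s\lambda^2)^{n_0}\lambda^{-2n_0}/n_0!$. The resolvent $1/|z-\gamma_{n+n'}|$ contributes $1/|z|$ up to constants.

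Then I integrate the free momenta from the bottom of the graph to the top. At each interaction vertex at most one resolvent depends on that vertex's free momenta; all other dependence enters only through the constant $\alpha$.
\begin{itemize}
\item For a degree one vertex I bound $|\Psi_1|\le\normt{\widehat V}{\infty}F_1(k_0)$, with $k_0$ as in \cref{lemma:degreeofafusionvertex}, and apply \cref{lemma:degree1vertex}. This gives $C\lambda^{-b}\expec{\ln\lambda}^2$.
\item For a degree two vertex, \cref{lemma:degree2vertex} gives $C\expec{\ln\lambda}$.
\item For a degree zero vertex I simply bound $|\Psi_1|\le\normt{\widehat V}{\infty}$.
\end{itemize}
On the top edge of the contour $\beta=\lambda^2$, so $\expec{\ln\beta}\lesssim\expec{\ln\lambda}$. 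On the other parts of the contour $|z-\gamma_j|\ge1$, and the bounds survive with a larger constant. The last step is the contour integral
\begin{equation*}
\oint_\Gamma \frac{|\mathrm{d}z|}{2\pi}\,\frac{e^{s(\mathrm{Im} z)_+}}{|z|}\le C e^{s\lambda^2}\expec{\ln(n+n'+1)}\expec{\ln\lambda},
\end{equation*}
where the contour length grows linearly in $n+n'$ through $c_n$, which produces only the logarithm.

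Collecting powers of $\lambda$ gives
\begin{equation*}
\lambda^{n+n'}\cdot\lambda^{-2n_0}\cdot\lambda^{-bn_1}=\lambda^{n_2+(1-b)n_1-n_0},
\end{equation*}
using $n+n'=n_0+n_1+n_2$. The logarithms combine to $\expec{\ln\lambda}^{1+n_2+2n_1}$, which is the claimed bound.

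The main point needing care is the bottom-to-top integration order. I must check that when the integrals for vertex $v_j$ are performed, the resolvent $1/|z-\gamma_{j-1}|$ is the only remaining factor that depends on its free momenta. Resolvents of higher slices do not depend on them, because free edges of $v_j$ enter only phases of slices below $v_j$. Resolvents of lower slices have already been integrated out. This follows from \cref{thm:momentum_resolution} and \cref{lemma:orientedpathedcomposition}, as in the previous lemma. I also need that the bottom slice $0$, if it has degree $0$, just joins $A_0$ with no amputation issue.
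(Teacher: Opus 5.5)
Your proposal is correct and follows the paper's proof: the paper simply reruns the argument of \cref{lemma:BasicGEstimate1} with no amputated vertices and $B=\emptyset$, so that no $N_0^{-b_0}$ gain appears. Your decomposition $A=\{n+n'\}\cup A_1\cup A_2$, $A'=\{*\}\cup A_0$, the bottom-to-top integration of free momenta, and the bookkeeping $\lambda^{n+n'}\lambda^{-2n_0}\lambda^{-bn_1}=\lambda^{n_2+(1-b)n_1-n_0}$ are exactly that argument.
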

\begin{proof}
	The proof follows exactly the same lines of arguments with that of \cref{lemma:BasicGEstimate1}, except there are no amputated vertices, and we instead set $B = \emptyset$.
\end{proof}
\begin{lemma}[Basic graph estimate 3]
\label{lemma:BasicGEstimate3}
	If we change the term 
	\begin{equation*}
		\Psi_1 \myp{k_{0:\ell_1}, \sigma_{1,\ell_1}} \Psi_1 \myp{-k'_{0:\ell_1'}, \sigma_{1,\ell_1'}}
	\end{equation*}
	in \eqref{eq:BasicGEstimate1} to 
	\begin{equation*}
		\Psi_0 \myp{k_{0:\ell_1}, \sigma_{1,\ell_1}} \Psi_0 \myp{-k'_{0:\ell_1'}, \sigma_{1,\ell_1'}},
	\end{equation*}
	the estimate can be improved by a factor of $C \expec{\ln\lambda} \lambda^{bd-2}$ or the corresponding $\mathcal{F}_n^2$ is $0$.
	If $\tilde{n}_1=0$, then the factor can be improved by $C \expec{\ln\lambda} \lambda^{b\myp{d-2}}$.
	If any of the amputated vertices has degree $2$, then the factor can be improved by $C \expec{\ln\lambda} \lambda^{b\myp{d-1}}$. 
\end{lemma}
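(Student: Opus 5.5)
We will rerun the proof of \cref{lemma:BasicGEstimate1} unchanged on all non-amputated structure: the time integral is converted to resolvents via \cref{Lemma:PhaseResolve}, bounded by \eqref{eq:Aestimate3}, and the free momenta of non-amputated vertices are integrated from the bottom up with \cref{lemma:degree1vertex,lemma:degree2vertex}. The only change is at the two amputated bottom vertices $v_1,v_2$. In \cref{lemma:BasicGEstimate1} they carried no resolvent, their $\Psi_1$ factors were bounded by $\normt{\widehat{V}}{\infty}^2$, and their free momenta were integrated trivially, contributing a factor $1$. We keep the bound $\abs{U}\leq \normt{\widehat{V}}{\infty}$. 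For $\Phi_0$ we use \eqref{PsiProperty4}, which localizes some pair sum $k_i+k_j$ of momenta entering an amputated vertex to the $\lambda^b$-neighbourhood of $M^{\mathrm{sing}}$. By \eqref{PsiProperty5}, each such constraint costs at most $C\lambda^{b(d-1)}$ in measure, provided the constrained combination depends on a free momentum integrated at that vertex. This is the source of the improvement, and it is purely a volume gain, since these vertices carry no oscillatory factor.

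The first step is a case analysis by the degrees of $v_1,v_2$. By \cref{Propo:QERR2}, $\mathcal{F}^2_n=0$ whenever a cluster pairs two initial vertices under the same amputated vertex. Otherwise \cref{lemma:dependenceoftwoedges} ensures that every pair sum $k_i+k_j$ at an amputated vertex depends on some free momentum. This is the ``or $\mathcal{F}^2_n=0$'' alternative of the statement.

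\emph{Degree two amputated vertex.} Both free momenta $k_f,k_{f'}$ at $v$ are integrated there. By \cref{lemma:freeedgesorder2}(c), the three pair sums are $-k_f+q$, $-k_{f'}+q'$ and $k_f+k_{f'}+q''$. Each indicator term in \eqref{PsiProperty4} restricts a nondegenerate linear image of $(k_f,k_{f'})$ to a set of measure $\lesssim\lambda^{b(d-1)}$, giving the gain $\lambda^{b(d-1)}$.

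\emph{Degree one amputated vertex.} One pair sum is independent of the free momentum (\cref{lemma:degreeofafusionvertex}(a)). The dangerous term is $\1(d(k_i+k_j,M^{\mathrm{sing}})<\lambda^b)$ for that independent sum $k_0$, since this term gives no gain at the vertex itself. We propagate it upward: $k_0$ is a function of free momenta integrated at later vertices. Either it can be absorbed into the measure of one of those integrations, or, when it hits a vertex where the free momentum also enters a resolvent, we use the resolvent bound with the $\lambda^{b(d-1)}$ measure restriction. This trades at most a factor $\lambda^{-b}$, coming from the fact that \cref{lemma:degree1vertex} with $F_1\to$ indicator no longer applies, together with an extra $\expec{\ln\lambda}$.

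\emph{Degree zero amputated vertex.} No momentum is integrated there, so the gain must come from the free momenta above, as in the degree one case.

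Collecting the cases: the worst combination, a constraint transported to a degree one vertex in the non-amputated part, gives a gain of order $\lambda^{b(d-1)}\cdot\lambda^{b}\cdot\lambda^{-2}$. The $\lambda^{-2}$ reflects that bounding the two removed vertex integrals by a sup on $1/\abs{z-\gamma}$ loses one factor $\lambda^{-2}$ relative to $\beta=\lambda^2$. This yields $C\expec{\ln\lambda}\lambda^{bd-2}$. If $\tilde n_1=0$, no degree one resolvent absorbs the constraint and only the degree two bound applies, giving $\lambda^{b(d-2)}$.

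The main obstacle is the degree one/zero transport step: showing rigorously that the localized combination $k_0$ always retains a free momentum whose integral can be restricted without destroying the resolvent bound. I would first try writing $\1(d(k_0,M^{\mathrm{sing}})<\lambda^b)\le F_0$-type factors and redoing the proof of \cref{lemma:degree1vertex,lemma:degree2vertex} with \eqref{eq:resolvent_integral} and Hölder, splitting the measure gain against the oscillatory decay.
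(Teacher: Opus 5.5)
Your framework matches the paper's. The improvement is a pure volume gain from bounding $\Phi_0$ by the indicators in \eqref{PsiProperty4}. A pair sum at an amputated vertex that is independent of all free momenta forces, by \cref{lemma:dependenceoftwoedges}, a pairing under that vertex, so $\mathcal{F}^2_n=0$ by \cref{Propo:QERR2}. A degree two amputated vertex gives $\lambda^{b(d-1)}$ directly via \cref{lemma:freeedgesorder2}(c). When the constraint lands on a non-amputated degree one vertex, bounding its resolvent by $\lambda^{-2}$ and integrating the indicator replaces $\lambda^{-b}\expec{\ln\lambda}^2$ by $\lambda^{b(d-1)-2}$, which is a gain of $\lambda^{bd-2}$.

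The gap is the ``transport'' step you yourself flag, and it is not a technicality. You never handle the case where the first free momentum (bottom-up) on which the pair sum depends belongs to a non-amputated \emph{degree two} vertex. That case can occur whether or not $\tilde n_1=0$. Without it, neither the $\lambda^{bd-2}$ claim nor the $\tilde n_1=0$ claim is established. Your sentence ``only the degree two bound applies, giving $\lambda^{b(d-2)}$'' is asserted without any mechanism, and the H\"older/oscillatory-integral refinement you propose to try is not what is needed.

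The paper closes this with a direct case analysis.
\begin{itemize}
\item Use only one amputated vertex (bound the other $\Psi_0$ by a constant) and fix a pair $e_1<e_2$ in its $\mathscr{E}_-$.
\item Let $e_3$ be the \emph{largest} free edge in the creation order on which $k_{e_1}+k_{e_2}$ depends. By \cref{thm:momentum_resolution} it ends at the amputated vertex or above. It is the first such edge met when integrating bottom-up, so the indicator is inert in every basic-scheme integration below it.
\item If $e_3$ ends at the amputated vertex or at the top fusion vertex, integrate the indicator directly for $\lambda^{b(d-1)}$.
\item If $e_3$ ends at a degree one vertex, use your sup bound.
\item If $e_3$ ends at a degree two vertex with free momenta $k_1,k_2$ and $k_3=k_0-k_1-k_2$, then \cref{lemma:freeedgesorder2}(a) gives $k_{e_1}+k_{e_2}=\pm k_j+k_0'$ with $k_0'$ independent of $k_1,k_2$. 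So the dependence is through a single leg.
\end{itemize}

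In the degree two case, do \emph{not} bound this vertex's $\Psi_1$ by $\normt{\widehat{V}}{\infty}$ as in the basic scheme. Keep it, because $\Phi_1$ contains $F_1$ of each pair sum of the three legs. Then proceed as follows.
\begin{itemize}
\item Take $k=\pm k_j+k_0'$ as one integration variable (for $j=3$ this means using $k_1+k_2$ as a variable).
\item Integrate the complementary variable with \cref{lemma:degree1vertex}. Once $k$ is fixed, the relevant pair sum is fixed and its $F_1$ is available. The cost is $\lambda^{-b}\expec{\ln\beta}^2$, uniformly in $k$.
\item Integrate $k$ against the indicator for $\lambda^{b(d-1)}$.
\end{itemize}
Compared with the basic $\expec{\ln\beta}$ from \cref{lemma:degree2vertex}, this is an improvement of $\lambda^{b(d-2)}$ up to a logarithm. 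The $\lambda^{-b}$ here is the price of \emph{applying} \cref{lemma:degree1vertex} with the retained $F_1$, not of its failing to apply.

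Since $b<1$, we have $\lambda^{b(d-2)}\le\lambda^{bd-2}$. So the degree one case is the worst and gives the first claim. When $\tilde n_1=0$ that case is absent, and the worst remaining case gives $\lambda^{b(d-2)}$.
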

\begin{proof}
	Let us consider the case in which one of the amputated vertices has degree $2$.
	If this is a minus vertex, then by \cref{Propo:Cutoff} and the fact that $\widehat{V}$ is bounded,
	\begin{equation}
	\label{lemma:BasicGEstimate3:E1}
		\Psi_0 \myp{-k'_{0:\ell_1'}} 
		\lesssim \sum_{e_1,e_2 \in \mathscr{E}_- \myp{v_1}, e_1<e_2} \1 \myp[\big]{ d \myp{- \myp{k_{e_1}+k_{e_2}}, M^{\mathrm{sing}} } \leq \lambda^b },
	\end{equation}
	where $v_1$ is the vertex that all the momenta $k'_{0:\ell_1'}$ are attached to.
	It follows from \cref{lemma:freeedgesorder2} that $-k_{e_1}-k_{e_2}$ depends on the two free momenta of $v_1,$ denoted by $\bar{k},\bar{k}'$, in the form of $\bar{k},\bar{k}'$, or $-\bar{k}-\bar{k}'$.
	When we integrate first over $\bar{k}$, then over $\bar{k}'$, \cref{Propo:Cutoff} can be applied, leading to 
	\begin{equation}
	\label{lemma:BasicGEstimate3:E2}
		\sup_{k_0 \in \T^d} \int_{\T^d} \ids k \1 \myp[\big]{ d \myp{\pm k +k_0,M^{{\mathrm{sing}}}} < \lambda^b }
		\lesssim \lambda^{b(d-1)}. 
	\end{equation}
	Now, we can estimate $\Psi_0 \myp{k_{0:\ell_1}} \lesssim 1$ and the rest of the proof follows the same lines of arguments used in the proof of \cref{lemma:BasicGEstimate1}. 
	
	If the vertex does not have degree two, we can just estimate $\Psi_0 \myp{-k'_{0:\ell_1'}} \lesssim 1$, with the notice that $\widehat{V}$ is bounded, and integrate out on any free momenta attached to it.
	The extra factor in the amputated plus vertex, denoted by $v_2$, can be estimated as
	\begin{equation}
	\label{lemma:BasicGEstimate3:E3}
		\Psi_0 \myp{k_{0:\ell_1}} 
		\lesssim \sum_{e_1,e_2 \in \mathscr{E}_- \myp{v_2}, e_1 < e_2} \1 \myp[\big]{ d \myp{k_{e_1}+k_{e_2}, M^{\mathrm{sing}}} \leq \lambda^b }.
	\end{equation}
	If $v_2$ has degree two, we gain a factor $\lambda^{b \myp{d-1}}$ and the proof follows the same lines of arguments used in the proof of \cref{lemma:BasicGEstimate1}. 
	
	To prove the other claims in the Lemma, we only need to consider the term associated to some fixed pair $e_1 < e_2$ with $e_1,e_2 \in \mathscr{E}_- \myp{v_2}$.
	If $k_{e_1}+k_{e_2}$ is independent of all free momenta, then by \cref{lemma:dependenceoftwoedges}, we conclude that the two initial time vertices belonging to $e_1 \cup e_2$ are isolated from the rest of the initial time vertices, so they must be paired.
	It then follows by \cref{Propo:QERR2} that the graph is irrelevant, so we may therefore assume that $k_{e_1}+k_{e_2}$ depends on some free momenta.
	Among all of those free edges, we denote by $e_3$ the one that is added last in the creation algorithm of the momentum graph (i.e. it is the maximum in the edge ordering).
	Hence, $e_3$ is the first of these free edges we will encounter when iterating through the graph from the bottom to the top.
	Now, we carry on the same same lines of arguments used in the proof of \cref{lemma:BasicGEstimate1}, with the following exception: If the fusion vertex that $e_3$ is attached to is of degree two, we will need the corresponding $\Psi_1$ factor, which is kept unchanged.
	The interaction vertices will be integrated until this fusion vertex is reached.
	There are a few possibilities.
	If this fusion vertex is either $v_2$ itself or the top fusion vertex, we use \eqref{lemma:BasicGEstimate3:E2} to gain a factor $\lambda^{b \myp{d-1}}$.
	If this vertex is a degree one non-amputated interaction vertex, we can remove the resolvent factor using a trivial $L^\infty$ estimate and apply \eqref{lemma:BasicGEstimate3:E2}.
	Since this degree one vertex was previously estimated by a power $\lambda^{-b} \expec{\ln \beta}^2 $ in the proof of \cref{lemma:BasicGEstimate1}, we thus gain a factor $\expec{\ln \beta}^2 \lambda^{b-2+b \myp{d-1}} = \expec{\ln \beta}^2 \lambda^{bd-2}$ compared to that estimate.
	In the last case, $e_3$ is attached to a non-amputated interaction vertex of degree two.
	We denote the two free momenta by $k_1$ and $k_2$, the third integrated one by $k_3$ and $k_0=k_1+k_2+k_3$.
	Then $k_{e_1}+k_{e_2} = \pm k_j+k'_0$ for some $j \in \Set{1,2,3}$ and $k'_0$ is independent of $k_1,k_2$.
	We then need to estimate
	\begin{align}
	\label{lemma:BasicGEstimate3:E4}
		\MoveEqLeft[4] \iint_{\myp{\T^d}^2} \ids k_1 \ids k_2 \1 \myp[\big]{ d \myp{\pm k_j +k_0', M^{\mathrm{sing}}} < \lambda^b } \nn \\
		&\times \frac{\Psi_1 \myp{\pm \myp{ k_1, k_2, k_0-k_1-k_2}}}{\abs{ \omega^{\lambda} \myp{k_1} + \sigma'\omega^{\lambda} \myp{k_2} + \sigma \omega^{\lambda} \myp{k_0-k_1-k_2}-\alpha-i\beta }}.
	\end{align}
	If $j=1$, then $\Psi_1 \lesssim F_1 \myp{\pm \myp{k_0-k_1}}$.
	We then change the integration variable $k_1$ to $k = \pm k_1+k_0'$.
	\cref{lemma:degree1vertex} can be applied to the integral of $k_2$ and estimate \eqref{lemma:BasicGEstimate3:E2} can be applied to the $k$ integral.
	This process gives a factor $\expec{ \ln \beta}^2 \lambda^{-b+b\myp{d-1}}$.
	In case $j=2$ or $3$, similar changes of variables can be applied to obtain the same contribution in $\lambda$.
	Therefore, we get an improvement of $ \expec{\ln \beta}^2 \lambda^{b \myp{d-2}}$ in this case.
	After this, we can finish estimating the amplitude by following the same lines of arguments used in the proof of \cref{lemma:BasicGEstimate1}.
	
	If $\tilde{n}_1=0$, then there is no non-amputated degree one vertex, so the worst contribution $\lambda^{bd - 2}$ does not occur.
	This finally gives $ \expec{\ln \beta}^2 \lambda^{b\myp{d-2}}$, and the claims of the Lemma are proved.
\end{proof}

\subsection{Estimating \texorpdfstring{$Q_3^{err}$}{Q3err} and \texorpdfstring{$Q_2^{err}$}{Q2err}}
The graph estimates provided in the previous section suffice to prove that the constructive interference term $Q_2^{err}$ and the amputated term $Q_3^{err}$ do not contribute in the weak coupling limit.
\begin{proposition}
\label{Lemma:Q3err}
	Suppose that $t>0$ and $0 < \lambda < \lambda_0'$ and define $N_0$ and $\kappa$ as before.
	There are constants $C>0$ depending on $f$ and $g$, and $c > 0$ depending on $\omega$ and $V$ such that 
	\begin{equation}
	\label{Lemma:Q3err:1}
		\limsup_{L\to\infty} \abs{Q_3^{err} \myb{g,f} \myp{t}}^2 
		\leq{} C t^2 e^t \expec{ct}^{N_0} N_0^{2N_0+2} \myp{4N_0}! \expec{\ln \lambda}^{4N_0+1} \myp[\big]{\lambda + \lambda^{-2} N_0^{-b_0 N_0/4}}
	\end{equation}
	when $N_0 \myp{\lambda} \geq 44$.
	In particular, in view of \eqref{eq:lambda_expression}, it follows that
	\begin{equation*}
		\lim_{\lambda \to 0} \limsup_{L \to \infty} \abs{Q_3^{err}}
	\end{equation*}
\end{proposition}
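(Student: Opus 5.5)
Starting from the Cauchy--Schwarz reduction \eqref{Def:OperatorQErrors:2}, I will bound $\bar{Q}_3^{err}$ from \eqref{Def:OperatorQErrors:Bis}. The first factor there is uniformly bounded. For the second, \cref{Propo:QERR3} with $n=N=N_0$ expresses the integrand as a sum over $J$ interlacing $(N_0-1,N_0-1)$, over $\ell,\ell'\in G_{N_0}$, and over $S\in\pi(I_{4N_0+2})$ of amplitudes $\mathcal{F}^3_{N_0}(S,J,\ell,\ell',t/\eps-s,\kappa)$. I take absolute values inside and bound $|\hat g|^2\le\norm{\hat g}_\infty^2$. The cluster factors $C_{|A|}$ are estimated by their suprema, so \cref{lemma:Clustercombinatorics} controls the sum over $S$. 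Pairing $S$ contribute at most $c^{4N_0}(4N_0)!$, and non-pairing $S$ gain an extra factor $\lambda$. What remains is exactly the left-hand side of \cref{lemma:BasicGEstimate1} with $s=t/\eps-s'$, and for it I use the bound \eqref{eq:BasicGEstimate1}. The combinatorial factors are $|G_{N_0}|^2\le((2N_0)!)^2$, absorbed as $N_0^{2N_0}$-type terms together with $c^{N_0}$, and the number of interlacings, at most $\binom{2N_0-2}{N_0-1}\le 4^{N_0}$. The outer integral $\frac{t}{\eps}\int_0^{t/\eps}\ids s'$ contributes a factor $t^2\lambda^{-4}$, while $e^{s\lambda^2}\le e^t$ and $(s\lambda^2)^{m}/m!\le\expec{t}^{N_0}$.

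The core step is extracting a power of $\lambda$ from \eqref{eq:BasicGEstimate1}, by cases on the graph type, using \cref{lemma:numberofnonpairclusters} and \cref{lemma:n2jn0j} to relate $\tilde n_0,\tilde n_1,\tilde n_2$. There are $2N_0$ vertices, two of which are amputated, so $\tilde n_0+\tilde n_1+\tilde n_2=2N_0-2$. Since $n_2-n_0=r\ge0$ and the amputated vertices change the counts by at most $2$, the exponent $2+\tilde n_2+(1-b)\tilde n_1-\tilde n_0$ is at least $(1-b)\tilde n_1-2$. I treat two cases.

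\emph{Case (a): non-pairing or partially paired graphs.} For non-pairing graphs the factor $\lambda$ from cluster combinatorics combines with the $\lambda^{-2}$, up to corrections. For partially paired graphs, $\tilde n_1\ge1$, and I need the exponent to be positive after the $\lambda^{-4}$ from the time integrals is absorbed. Here I have to be careful. The correct bookkeeping is that the prefactor $\lambda^{2n}$ is already inside \eqref{eq:BasicGEstimate1}, and the $\lambda^{-4}$ from $t^2/\eps^2$ must be compensated. So I would aim for a total exponent of at least $1$ whenever $\tilde n_0$ is not too large.

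\emph{Case (b): fully paired graphs, and more generally any graph with many degree-zero vertices.} Here the mechanism must be the $\kappa$-damping. By \cref{lemma:n2jn0j}, $n_0(j)\ge (j-n_1-r)/2$. Hence among the vertices $v_l$ with $2<l\le 2N_0-N_0+1=N_0+1$ there are roughly $n_0'\gtrsim N_0/2-(n_1+r)/2-O(1)\ge N_0/4$ degree-zero ones, provided $n_1+r\le N_0/2$. This gives the factor $N_0^{-b_0N_0/4}$, and the condition $N_0\ge 44$ is used to make these counting inequalities clean. In the complementary regime $n_1+r>N_0/2$, the factor $\lambda^{(1-b)\tilde n_1}$ or $\lambda^{r}$ (the latter from \cref{lemma:Clustercombinatorics} together with $n_2-n_0=r$) already gives a high power of $\lambda$, dominated by the $\lambda$ term. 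Putting the cases together yields the bracket $\lambda+\lambda^{-2}N_0^{-b_0N_0/4}$, with logarithms $\expec{\ln\lambda}^{4N_0+1}$ and $\expec{\ln n}\le N_0$. Finally, \eqref{eq:lambda_expression} with $b_0=16(3+1/a_0)$ sends both terms to $0$.

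\textbf{Main obstacle.} I expect the main obstacle to be the precise counting in the fully paired and small-$n_1$ regime: showing that $n_0'$ is at least about $N_0/4$, and simultaneously that the leftover power $\lambda^{2+\tilde n_2-\tilde n_0}$, combined with the time prefactor, is at worst $\lambda^{-2}$. This requires checking that $\tilde n_2\ge\tilde n_0-2$ survives amputation and that $(s\lambda^2)^{\tilde n_0-n_0'}$ absorbs the remaining factors.
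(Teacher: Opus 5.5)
Your architecture matches the paper's: Cauchy--Schwarz, \cref{Propo:QERR3} with $n=N_0$, \cref{lemma:BasicGEstimate1}, and then a split into a regime where the graph exponent alone is large and a regime where the damping $N_0^{-b_0n_0'}$ does the work. The gap is the exponent bookkeeping in the damped regime, which is the step you flag as the main obstacle. To reach $\lambda^{-2}N_0^{-b_0N_0/4}$ after the prefactor $t^2\lambda^{-4}$, you need $\lambda^{2+\tilde n_2-\tilde n_0+(1-b)\tilde n_1}\le\lambda^{2}$, i.e.\ $\tilde n_2\ge\tilde n_0$. You only assert that the exponent is at least $(1-b)\tilde n_1-2$, and you propose to check $\tilde n_2\ge\tilde n_0-2$. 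These put $\lambda^{-6}$ and $\lambda^{-4}$ respectively in front of $N_0^{-b_0N_0/4}$, not $\lambda^{-2}$. This is not cosmetic. By \eqref{eq:N0limits}, $N_0^{-b_0N_0/4}\approx\lambda^{a_0b_0/4}=\lambda^{4+12a_0}$, while $N_0^{2N_0}(4N_0)!\expec{\ln\lambda}^{4N_0}$ costs about $\lambda^{-10a_0}$. So with your stated bound the term does not even tend to zero, and with $\lambda^{-4}$ it vanishes only by the margin $\lambda^{2a_0}$ and the stated inequality is lost. The missing input is \cref{lemma:n2jn0j} at $j=2$, namely $n_2(2)\le r+n_0(2)$. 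Combined with $n_2-n_0=r$ from \cref{lemma:numberofnonpairclusters}, it gives $\tilde n_2-\tilde n_0=r+n_0(2)-n_2(2)\ge0$. Note that even your own remark that the amputated vertices shift the counts by at most two already gives exponent $\ge r+(1-b)\tilde n_1$, not $(1-b)\tilde n_1-2$.

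Your Case (a) should be dropped. When $r+n_1$ is bounded, the exponent $2+r+n_0(2)-n_2(2)+\tfrac14\tilde n_1$ can stay below $4$; for example it equals $2+\tfrac14\tilde n_1$ if $r=0$ and $n_0(2)=n_2(2)$. Then nothing small survives the $\lambda^{-4}$, and the single extra $\lambda$ from \cref{lemma:Clustercombinatorics} does not rescue it (the paper does not use it here). The power $\lambda^r$ comes from $\tilde n_2-\tilde n_0$, not from cluster combinatorics. So the split must be by $r+n_1$ alone, whatever the graph type.
\begin{itemize}
\item The paper takes $r+n_1\ge20$, where $\tilde n_2-\tilde n_0+(1-b)\tilde n_1\ge r+\tfrac14 n_1-2\ge3$ gives $\lambda^5$ and hence the term $\lambda$.
\item For $r+n_1\le19$ it uses $n_0'=n_0(N_0+1)-n_0(2)\ge\tfrac{N_0+1-(n_1+r)}{2}-2\ge\tfrac{N_0}{2}-11$. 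This is $\ge\tfrac{N_0}{4}$ exactly when $N_0\ge44$, which is where that condition comes from. Your threshold $N_0/2$ only gives $n_0'\ge N_0/4-3/2$.
\end{itemize}
Finally, $\abs{G_{N_0}}^2\le((2N_0)!)^2$ is of order $N_0^{4N_0}$ and cannot be absorbed into $c^{N_0}N_0^{2N_0}$. Use instead $\abs{G_{N_0}}=(2N_0-1)!!\le(2N_0)^{N_0}$.
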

\begin{proof}
	We first  observe  that
	\begin{align}
	\label{Lemma:Q3err:E1}
		\MoveEqLeft[2] \limsup_{L\to\infty} \abs{Q_3^{err} \myb{g,f} \myp{t}}^2 \nn \\
		\leq{}& C\normt{f}{2}^2 t^2 \lambda^{-4} \nn \\
		&\times \sup_{0 \leq s \leq t\lambda^{-2}} \sum_{\substack{J \text{ interlaces } \\ \myp{N_0-1,N_0-1}}} \sum_{\ell,\ell' \in G_{N_0}} \sum_{S \in \pi \myp{I_{4N_0+2}}} \limsup_{L\to\infty} \abs{\mathcal{F}^3_{N_0} \myp{S,J,\ell,\ell',s,\kappa }},
	\end{align}
	with $\mathcal{F}^3_n$ as in \eqref{eq:DefF3}.
	By \cref{lemma:BasicGEstimate1}, the $\limsup$ can be bounded by
	\begin{align}
	\label{Lemma:Q3err:E2}
		\MoveEqLeft[8] \normt{\hat{g}}{\infty}^2 \sum_{\substack{J \text{ interlaces } \\ \myp{N_0-1,N_0-1}}} \sum_{\ell, \ell' \in G_{N_0}} \sum_{S \in \pi \myp{I_{4N_0+2}}} \prod_{A \in S} \sup_{k_A,\sigma_A,\Omega} \abs{{C}_{\abs{A}} \myp{k_A,\sigma_A,\lambda,\Omega)}} \nn \\
		&\times e^t \expec{ct}^{N_0} \expec{\ln \lambda}^{1+4N_0} \lambda^{2+\tilde{n}_2 - \tilde{n}_0 + \myp{1-b} \tilde{n}_1} N_0^{-b_0 n'_0},
	\end{align}
	where the notations of $\tilde{n}_0, \tilde{n}_1, \tilde{n}_2, n_0'$ are the same as in \cref{lemma:BasicGEstimate1}.
	We set $r = 2N_0 + 1 - \abs{S}$ and note that $\tilde{n}_j = n_j-n_j \myp{2} \geq 0$, $n_0' = n_0 \myp{N_0+1} - n_0 \myp{2}$.
	It then follows from \cref{lemma:numberofnonpairclusters} that $\tilde{n}_2 - \tilde{n}_0 = r + n_0 \myp{2} - n_2 \myp{2}$.
	Using that $n_0 \myp{2} + n_1 \myp{2} + n_2 \myp{2} = 2$, we obtain in the case where $r + n_1 \geq 20$ that $\tilde{n}_2 - \tilde{n}_0 + \myp{1-b} \tilde{n}_1 \geq r + \frac{1}{4} n_1 - 2 \geq 3$, and hence $\lambda^{2+\tilde{n}_2 - \tilde{n}_0 + \myp{1-b} \tilde{n}_1} N_0^{-b_0n'_0} \leq \lambda^5$.
	
	On the other hand, if $r+n_1 \leq 19,$ then by \cref{lemma:n2jn0j}, $n'_0 \geq \frac{N_0+1 - \myp{n_1 + r}}{2} -2 \geq \frac{N_0}{2}-11$, which is at least $\frac{N_0}{4}$ when $N_0 \geq 44$.
	It follows in this case that $\lambda^{2+\tilde{n}_2 - \tilde{n}_0 + \myp{1-b} \tilde{n}_1} N_0^{-b_0n'_0} \leq \lambda^2 N_0^{-b_0N_0/4}$.
	By being a bit more careful with estimating the exponents, it is possible to push the bound on $N_0$ further down, but this is not needed.
	
	Now, we know that the number of terms in the sum over $J$ is smaller than $2^{2N_0-2}$ and the number of terms in the sum over $\ell$ is smaller than $\myp{2N_0}^{N_0}$.
	Using \cref{lemma:Clustercombinatorics}, and collecting all the terms, we get the conclusion of the lemma after adjusting the constants $c$ and $C$. 
\end{proof}
\begin{proposition}\label{Lemma:Q2err}
	Suppose that $t>0$  and $0 < \lambda < \lambda_0'$ and define $N_0$ and $\kappa$ as before.
	There are constants $C>0$ depending on $f$ and $g$, and $c > 0$ depending on $\omega$ and $V$ such that 
	\begin{equation}
		\limsup_{L\to\infty} \abs{Q_2^{err} \myb{g,f} \myp{t}}^2
		\leq{} Ct^2 e^t \expec{ct}^{N_0} N_0^{2N_0+4} \myp{4N_0}! \expec{\ln \lambda}^{4N_0+2} \lambda^{\frac{1}{4}}.
	\end{equation}
	In particular, in view of \eqref{eq:lambda_expression}, it follows that
	\begin{equation*}
		\lim_{\lambda \to 0} \limsup_{L \to \infty} \abs{Q_2^{err}}
		={} 0.
	\end{equation*}
\end{proposition}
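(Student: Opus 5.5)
The proof follows the template of \cref{Lemma:Q3err}, with \cref{lemma:BasicGEstimate3} replacing \cref{lemma:BasicGEstimate1} at the amputated vertices. Starting from the Cauchy--Schwarz bound \eqref{Def:OperatorQErrors:2} and the expansion in \cref{Propo:QERR2}, we have
\begin{equation*}
	\limsup_{L\to\infty}\abs{Q_2^{err}}^2 \leq C\normt{f}{2}^2 t^2\lambda^{-4} \sum_{n=1}^{N_0} \sup_{0\leq s\leq t\lambda^{-2}} \sum_{J}\sum_{\ell,\ell'}\sum_{S} \limsup_{L\to\infty}\abs{\mathcal{F}^2_n(S,J,\ell,\ell',s,\kappa)}.
\end{equation*}
Here the extra sum over $n\leq N_0$ costs only a factor $N_0$, which accounts for the power $N_0^{2N_0+4}$ in place of $N_0^{2N_0+2}$. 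For each relevant $S$ (those without pairings inside an amputated triple, by \cref{Propo:QERR2}), \cref{lemma:BasicGEstimate3} gives the bound of \cref{lemma:BasicGEstimate1} improved by a factor $C\expec{\ln\lambda}\lambda^{\theta}$. The exponent is $\theta=bd-2$ in general, $\theta=b(d-2)$ if $\tilde n_1=0$, and $\theta=b(d-1)$ if an amputated vertex has degree two. With $b=3/4$ and $d\geq4$ we get $bd-2\geq1$ and $b(d-2)\geq 3/2$, so in every case the improvement is at least $C\expec{\ln\lambda}\lambda$. This is the source of the extra logarithm, giving $\expec{\ln\lambda}^{4N_0+2}$.

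It remains to show that $\lambda^{-4}\cdot\lambda^{2n}$ times the remaining factors, which by \cref{lemma:BasicGEstimate1} is $\lambda^{2+\tilde n_2-\tilde n_0+(1-b)\tilde n_1}N_0^{-b_0n_0'}$ (times $\expec{ct}^{N_0}e^{t}$ after setting $s\lambda^2\leq t$), multiplied by the gained $\lambda$, is at most $\lambda^{1/4}$ up to the logarithmic and combinatorial factors. As in \cref{Lemma:Q3err}, \cref{lemma:numberofnonpairclusters} gives
\begin{equation*}
	\tilde n_2-\tilde n_0=r+n_0(2)-n_2(2)\geq r-2,
\end{equation*}
and the $-2$ comes only from the two amputated vertices. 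Hence the exponent is at least $r+\tfrac14\tilde n_1$. Here the restriction $n\leq N_0$, rather than $n=N_0$, matters: we cannot rely on the $N_0^{-b_0n_0'}$ gain, since $\kappa_j=0$ for small graphs. We therefore need the power of $\lambda$ to be nonnegative before the gain. I would handle this through the degrees of the amputated vertices. A first amputated vertex $v_1$ necessarily has degree $0$, because all lower edges come from clusters and nothing has been fused yet. So $n_2(2)\leq1$, and $n_2(2)=1$ exactly when $v_2$ has degree two, in which case the gain is $\lambda^{b(d-1)}\geq\lambda^{9/4}$. The possible shortfalls of $\lambda^{-1}$ (for $n_2(2)=1$) or $\lambda^{0}$ (otherwise) are then compensated, leaving at least $\lambda^{1/4}$ up to logs. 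This case distinction, matching each degree configuration of the amputated pair against the improvement factor in \cref{lemma:BasicGEstimate3}, is the step I expect to be the main obstacle; in particular one must check the case $\tilde n_1>0$, where only $\lambda^{bd-2}$ is gained but $(1-b)\tilde n_1\geq1/4$ helps.

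Finally, the combinatorics are the same as before. The number of interlacings $J$ is at most $2^{2n}$ and the number of histories $\ell,\ell'$ is at most $(2N_0)^{2N_0}$. \cref{lemma:Clustercombinatorics} bounds the cluster sum by $c^{4N_0}(4N_0+2)!$. Absorbing constants into $\expec{ct}^{N_0}$ and $C$ gives the stated bound, and the limit then follows from \eqref{eq:lambda_expression}, since $a_0(2+4+4)<1/4$ by the choice $a_0=\gamma'/24\leq1/96$.
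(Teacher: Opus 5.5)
Your overall structure matches the paper: Cauchy--Schwarz, \cref{lemma:BasicGEstimate3} in place of \cref{lemma:BasicGEstimate1}, the combinatorial count, and the final appeal to \eqref{eq:lambda_expression}. The gap is in the degree analysis of the amputated vertices, which you rightly flag as the crux. \emph{First}, the claim that the bottom amputated vertex $v_1$ always has degree $0$ is false. \cref{Propo:QERR2} only removes partitions containing a \emph{pair} inside an amputated triple. Two or all three legs of $v_1$ may still lie in a single cluster $A$ with $\abs{A}\geq 4$, and then the iterative cluster scheme gives $\deg v_1\in\{1,2\}$. So your reduction to $n_2(2)\leq 1$ is not justified. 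This is exactly the configuration the paper must treat separately. If $n_2(2)=0$ and $n_0(2)=0$, both amputated vertices have degree one, so two legs of $v_1$ share a cluster. That cluster cannot be a pair, or $\mathcal{F}^2_n=0$. Hence $r\geq 1$ by \cref{lemma:numberofnonpairclusters}.

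\emph{Second}, your exponent bookkeeping does not close. After the prefactor $\lambda^{-4}$, the total power of $\lambda$ is $-2+(\tilde n_2-\tilde n_0)+\tfrac14\tilde n_1+\theta$, and this must be at least $\tfrac14$. With $\theta=b(d-1)=\tfrac94$ (for $d=4$) you need $\tilde n_2-\tilde n_0\geq 0$; a genuine shortfall of $\lambda^{-1}$, as you allow, would leave only $\lambda^{-3/4}$. With $\theta=b(d-2)$ or $\theta=bd-2$ you need $\tilde n_2-\tilde n_0\geq 1$; a shortfall of $\lambda^{0}$ leaves only $\lambda^{-1/2}$ or $\lambda^{-3/4}$.

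Two ingredients are missing. (i) Apply \cref{lemma:n2jn0j} at $j=2$: $n_2(2)\leq r+n_0(2)$ gives $\tilde n_2-\tilde n_0=r+n_0(2)-n_2(2)\geq 0$ for every relevant graph, rather than your crude $\geq r-2$. This settles the case where an amputated vertex has degree two. (ii) When $n_2(2)=0$, prove $\tilde n_2-\tilde n_0\geq 1$. This comes either from $n_0(2)\geq 1$, or, when both amputated vertices have degree one, from $r\geq 1$ as above. With these, the three cases give $\lambda^{2+b(d-1)}$, $\lambda^{3+b(d-2)}$ and $\lambda^{3+(1-b)+bd-2}$, each at most $\lambda^{4+1/4}$ for $d\geq 4$, which is what absorbs the $\lambda^{-4}$.
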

\begin{proof}
	The same argument used to prove \cref{Lemma:Q3err} can be reused to obtain the following estimate for $Q_2^{err}$.
	\begin{align}
	\label{Lemma:Q2err:E1}
		\MoveEqLeft[3] \limsup_{L\to\infty} \abs{Q_2^{err} \myb{g,f} \myp{t}}^2 \nn \\
		\leq{}& CN_0 \normt{f}{2}^2 t^2 \lambda^{-4} \nn \\
		&\times \sup_{0\leq s\leq t\lambda^{-2}} \sum_{n=1}^{N_0} \sum_{\substack{J \text{ interlaces } \\ \myp{n-1,n-1}}} \sum_{\ell,\ell' \in G_n} \sum_{S \in \pi \myp{I_{4n+2}}} \limsup_{L\to\infty} \abs{\mathcal{F}^2_n \myp{S,J,\ell,\ell',s,\kappa}},
	\end{align}
	with $\mathcal{F}^2_n$ as in \eqref{eq:DefF2}, and then the $\sup$ term can be bounded, using \cref{lemma:BasicGEstimate3}, as
	\begin{align}
	\label{Lemma:Q2err:E2}
		\MoveEqLeft[6] \normt{\hat{g}}{\infty}^2 \sum_{n=1}^{N_0} \sum_{\substack{J \text{ interlaces } \\ \myp{n-1,n-1}}} \sum_{\ell,\ell' \in G_n} \sum_{S \in \pi \myp{I_{4n+2}}} \prod_{A \in S} \sup_{k_A,\sigma_A,\Omega} \abs{C_{\abs{A}} \myp{k_A,\sigma_A,\lambda,\Omega}} \nn \\
		&\times e^t \expec{ct}^n \expec{\ln \lambda}^{2+4n} \lambda^{2+\tilde{n}_2 - \tilde{n}_0 + \myp{1-b} \tilde{n}_1 + db} \1 \myp[\big]{\mathcal{F}^2_n \neq 0} \mathfrak{G},
	\end{align}
	where $\mathfrak{G}=\lambda^{-b}$ when $n_2 \myp{2}>0$, $\mathfrak{G}=\lambda^{-2b}$ when $n_2 \myp{2}=0, \tilde{n}_1=0$, and $\mathfrak{G}=\lambda^{-2}$ otherwise. 
	
	By \cref{lemma:n2jn0j}, we always have $\tilde{n}_2 - \tilde{n}_0 = r -n_2 \myp{2} + n_0 \myp{2} \geq 0$.
	Hence, if $n_2 \myp{2} > 0$, the power of $\lambda$ in the expression above is bounded by $\lambda^{2+b\myp{d-1}} \leq \lambda^{4+\frac{1}{4}}$ in dimension $d \geq 4$.
	
	On the other hand, if $n_2 \myp{2} = 0$, then either $n_0 \myp{2} \geq 1$ (in which case $\tilde{n}_2 - \tilde{n}_0 \geq 1$) or $n_1 \myp{2}=2$.
	In the second case, there is exactly one free momentum attached to the bottom interaction vertex, the amputated vertex in the minus tree.
	It follows by the iterative cluster scheme that there is a cluster $A \in S$ that is connected to exactly two edges of this vertex.
	If $A$ is a pairing, then the graph is irrelevant by \cref{Propo:QERR2}, so we may assume that $\abs{A} \geq 4$, which in turn implies by \cref{lemma:numberofnonpairclusters} that $r \geq 1$.
	We conclude in any case that $\tilde{n}_2 - \tilde{n}_0 \geq 1$.
	
	Thus, if $n_2 \myp{2} = 0$ and $\tilde{n}_1 = 0$, then the power of $\lambda$ in \eqref{Lemma:Q2err:E2} is bounded by $\lambda^{2+1+bd-2b} \leq \lambda^{4+\frac{1}{2}} \leq \lambda^{4+\frac{1}{4}}$.
	In the final case where $n_2 \myp{2} = 0$ and $\tilde{n}_1 \geq 1$, we obtain $\lambda^{2+1+1-b+db-2} = \lambda^{2+b \myp{d-1}} \leq \lambda^{4+\frac{1}{4}}$.
	The conclusion of the proposition now follows by estimating the sums in \eqref{Lemma:Q2err:E2} as in the proof of \cref{Lemma:Q3err}.
\end{proof}

\subsection{Estimating \texorpdfstring{$Q_1^{err}$}{Q1err} and \texorpdfstring{$Q^{main}$}{Qmain}}
The following two results show that for the remaining terms in the Duhamel expansion, we can bound the amplitudes for all the graphs that are not fully paired.
\begin{proposition}\label{Lemma:Q1err}
	Suppose that $t>0$ and $0<\lambda<\lambda_0'$ and define $N_0$ and $\kappa$ as before.
	There are constants $C>0$ depending on $f$ and $g$, and $c>0$ depending on $\omega$, $V$, and $\lambda_0'$ such that 
	\begin{align}
	\label{Lemma:Q1err:1}
		\MoveEqLeft[4] \limsup_{L\to\infty} \abs{Q_1^{err} \myb{g,f} \myp{t}}^2 \nn \\
		\leq{}& Ct^2 e^t \expec{ct}^{N_0} N_0^{2N_0+5+2b_0} \myp{4N_0}! \expec{\ln\lambda}^{4N_0+2} \lambda^{\frac{1}{4}} + Ct^2 N_0^{2+2b_0} \nn \\
		&\times \sup_{\substack{0\leq s\leq t\lambda^{-2} \\ N_0/2 \leq n<N_0} } \sum_{\substack{J \text{ interlaces } \\ \myp{n,n}}} \sum_{\ell,\ell' \in G_n} \sum_{S \in \pi \myp{I_{4n+2}}} \abs{\mathcal{F}^{err,pair}_n \myp{S,J, \ell,\ell',s,\kappa}},
	\end{align}
	where $\mathcal{F}^{err,pair}_n \myp{S,J, \ell,\ell',s,\kappa} = 0$ if the graph created by $S,J, \ell,\ell'$ is not fully paired, otherwise it is defined as
	\begin{align}
	\label{Lemma:Q1err:2}
		\MoveEqLeft[2] \mathcal{F}^{err,pair}_n \myp{S,J, \ell,\ell',s,\kappa} \nn \\
		={}& \myp{-\lambda^2}^n \epsilon \myp{S} \sum_{\bar{\sigma}, \bar{\sigma}' \in \Set{\pm 1}^{\mathcal{I}_n'}} \int_{\myp{\T^d}^{\mathcal{I}_n'}} \int_{\myp{\T^d}^{\mathcal{I}_n'}} \ids \bar{k} \ids \bar{k}' \Delta_{n,\ell} \myp{\bar{k},\bar{\sigma}} \Delta_{n,\ell'} \myp{\bar{k}',\bar{\sigma}'} \nn \\
		&\times \prod_{j=1}^n \myb[\big]{ \sigma_{j,\ell_j} \Psi_1 \myp{k_{j-1:\ell_j}, \sigma_{j,\ell_j}} \sigma_{j,\ell_j'}' \Psi_1 \myp{-k_{j-1:\ell_j'}', -\sigma_{j,\ell_j'}'} } \nn \\
		&\times \prod_{A=\Set{j,l} \in S} \myb[\big]{ \delta \myp{K_j + K_l} \1 \myp{o_j = - o_l} W^0 \myp{K_j, o_j} } \nn \\
		&\times \1 \myp{\sigma_{n,1}=1} \1 \myp{\sigma_{n,1}'=-1} \abs{\hat{g} \myp{k_{n,1}}}^2 \int_{\myp{\R_+}^{I_{0,2n}}} \ids r \delta \myp[\bigg]{s-\sum_{j=0}^{2n} r_j } \prod_{j=0}^{2n} e^{-ir_j\gamma \myp{j;J}},
	\end{align}
	where we have reused the notation $K = \myp{k_{0,\nonarg}',k_{0,\nonarg}}$ and $o = \myp{\sigma_{0,\nonarg}', \sigma_{0,\nonarg}}$ from \cref{Propo:QERR3}, as well as the notation from \eqref{eq:Wnotations}.
	$W^0$ is given by \eqref{eq:wignerfunction0}.
	For the pairings $A = \Set{j,l} \in S$, we use the convention $j < l$, so that $o_j$ denotes the parity of the left leg of the pairing.
\end{proposition}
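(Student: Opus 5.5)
We follow the template of \cref{Lemma:Q3err,Lemma:Q2err}. By the Cauchy--Schwarz reduction \eqref{Def:OperatorQErrors:2}, $\limsup_L \abs{Q_1^{err}}^2$ is bounded by $C\normt{f}{2}^2 t^2\lambda^{-4}\sum_{n=0}^{N_0-1}\kappa_n\sup_{0\le s\le t\lambda^{-2}}\sum_{J,\ell,\ell',S}\limsup_L\abs{\mathcal{F}^1_n}$, with $\mathcal{F}^1_n$ from \cref{Propo:QERR1}. Since $\kappa_n=0$ for $n<N_0/2$, only $N_0/2\le n<N_0$ contribute. For these, $\kappa_n\lambda^{-4}=\lambda^{-2}N_0^{b_0}$, and the extra $s$-integration over $[0,t/\eps]$ contributes another $t\lambda^{-2}$; together with the sum over $n$ this is the prefactor $t^2N_0^{2+2b_0}$ (up to $\lambda^{-4}$, which must be paid back by the graph bounds). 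For each $(S,J,\ell,\ell')$, we take $\limsup_L$ inside by dominated convergence after resolving the momentum constraints (\cref{thm:momentum_resolution}), exactly as in \cref{lemma:BasicGEstimate1}. We then split graphs into three classes: higher order, partially paired, and fully paired.

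\emph{Non-fully-paired graphs.} We apply \cref{lemma:BasicGEstimate2} with $n=n'$, $s\le t\lambda^{-2}$, which gives a factor $e^t t^{n_0}/n_0!\cdot\lambda^{n_2+(1-b)n_1-n_0}\expec{\ln\lambda}^{1+n_2+2n_1}$. By \cref{lemma:numberofnonpairclusters}, $n_2-n_0=r$, so the $\lambda$-power is $\lambda^{r+(1-b)n_1}$. If the graph is higher order, \cref{lemma:Clustercombinatorics} gives an extra $\lambda$ from the clusters, so we gain at least $\lambda^{1+r}$ with $r\ge1$. If it is partially paired, $n_1\ge1$ gives $\lambda^{1/4}$. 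In both cases the gain is at least $\lambda^{1/4}$ per graph relative to $\lambda^0$. We must also absorb the $\lambda^{-4}$ and the $\lambda^{2n}$ vertex factor, and check the bookkeeping: $\lambda^{2n}$ against $2n$ vertices of which $n_0$ are long slices, each giving $t\lambda^{-2}\cdot\lambda^2$. This matches the $\lambda^{-n_0}$ convention in \cref{lemma:BasicGEstimate2}. We expect the $\lambda^{-4}$ to be cancelled precisely as in \cref{Lemma:Q3err}, where the expansion of $\abs{\cdot}^2$ already carries $\lambda^{2n}$ and the normalisation leaves the $\lambda^2$ in \eqref{eq:BasicGEstimate1}. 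The non-amputated analogue yields this naturally once one notes that $\mathcal{F}^1$ has time integral over $I_{0,2n}$.

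Counting then proceeds as before: fewer than $2^{2n}$ choices of $J$, fewer than $(2N_0)^{2N_0}$ choices of $\ell,\ell'$, and $c^{4N_0}(4N_0)!$ from \cref{lemma:Clustercombinatorics}. Collecting powers gives the first term in \eqref{Lemma:Q1err:1}.

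\emph{Fully paired graphs.} We keep these exactly, replacing $C_2$ by $W^{\lambda}_L$ via \cref{rem:pair_correl_trunc}. Then, as $L\to\infty$, we pass to continuum integrals and replace $W_L^\lambda$ by $W^0$ using \cref{lem:wigner_estimate}. The error in this replacement is $O(\lambda)$ per pairing, with at most $2n+1$ pairings. It is controlled by the same basic estimate, since the amplitude with absolute values is bounded by \cref{lemma:BasicGEstimate2}, and can be folded into the first term. This produces $\mathcal{F}^{err,pair}_n$, with the pairing deltas $\delta(K_j+K_l)\1(o_j=-o_l)W^0(K_j,o_j)$.

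The main obstacle is the exponent bookkeeping in the partially paired case. There the gain is only $\lambda^{(1-b)n_1}=\lambda^{n_1/4}$, and we must verify that the factors $\expec{\ln\lambda}^{2n_1}$ and the combinatorics remain subdominant, which is exactly what \eqref{eq:lambda_expression} with the choice of $a_0$ guarantees.
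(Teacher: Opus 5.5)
There is a genuine gap in your accounting of the factor $t^2\lambda^{-4}$. This is exactly where this proposition differs from \cref{Lemma:Q3err}.

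You carry only a single power of $\kappa_n$. By your own computation, $\kappa_n\lambda^{-4}=\lambda^{-2}N_0^{b_0}$, so a negative power of $\lambda$ is left over, and you propose to recover it from the graph bounds ``as in \cref{Lemma:Q3err}''. That mechanism is not available here:
\begin{itemize}
\item The factor $\lambda^{2}$ in \eqref{eq:BasicGEstimate1} comes from the two amputated bottom vertices, whose time slices have zero length.
\item $\mathcal{F}^1_n$ has its time integral over all of $I_{0,2n}$, so nothing is amputated, and \cref{lemma:BasicGEstimate2} only yields $\lambda^{r+(1-b)n_1}$.
\item For fully paired graphs this is $\lambda^0$; the leading graphs really are of order one, cf.\ \cref{Propo:LeadingGraphs}.
\item For partially paired graphs with few degree-one vertices, the guaranteed gain $\lambda^{(1-b)n_1}$ is far smaller than $\lambda^{2}$.
\end{itemize}
So your argument does not produce the asserted prefactor $t^2N_0^{2+2b_0}$, with no negative power of $\lambda$. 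This prefactor is needed both in front of the $\lambda^{1/4}$ term and in front of the exact fully paired sum.

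The missing idea is that the $\lambda^{-4}$ is paid by the damping parameter, not by the graphs. Write $Q_1^{err}=\sum_n\kappa_nX_n$ with $X_n=\int_0^{t/\eps}\mathrm{d}s\,\langle\cdots\rangle_\lambda$. Then apply Cauchy--Schwarz so that $\kappa_n$ stays inside the square:
\[
|Q_1^{err}|^2\le\Big(\sum_n\kappa_n\Big)\sum_n\kappa_n|X_n|^2\le N_0\kappa'(\lambda)\sum_{N_0/2\le n<N_0}\kappa'(\lambda)\,|X_n|^2 .
\]
Equivalently, $\bar Q^{err}_1$ from \eqref{Def:OperatorQErrors:Bis} must be multiplied by $\sum_n\kappa_n\le N_0\kappa'$ before it bounds $|Q_1^{err}|^2$. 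Now bound each $|X_n|^2$ by \eqref{Def:OperatorQErrors:2}. By \eqref{Def:Para3} this gives the prefactor $N_0^2\kappa'(\lambda)^2t^2\lambda^{-4}=t^2N_0^{2+2b_0}$. This cancellation is precisely why $\kappa'$ is chosen of size $\lambda^2$ times a power of $N_0$.

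With this correction, the rest of your plan is the paper's proof:
\begin{itemize}
\item \cref{lemma:BasicGEstimate2} together with $n_2-n_0=r$ gives $\lambda^{1/4}$ for higher order and partially paired graphs.
\item The sums over $J,\ell,\ell',S$ are counted as before.
\item For fully paired graphs, $W_L^\lambda$ is replaced by $W^0$ with an $O(n\lambda)$ error, followed by dominated convergence.
\end{itemize}
The partially paired exponent bookkeeping you single out as the main obstacle is in fact routine.
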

\begin{proof}
	The same argument used to prove \cref{Lemma:Q3err} can be reused to obtain the following estimate for $\bar{Q}_1^{err}$,
	\begin{align}
	\label{Lemma:Q1err:E1}
		\MoveEqLeft[4] \limsup_{L\to\infty} \abs{Q_1^{err} \myb{g,f} \myp{t}}^2 \nn \\
		\leq{}& CN_0^2 \kappa' \myp{\lambda}^2 \normt{f}{2}^2 t^2 \lambda^{-4} \nn \\
		&\times \sup_{\substack{0 \leq s \leq t\lambda^{-2} \\ N_0/2 \leq n < N_0}} \sum_{\substack{J \text{ interlaces } \\ \myp{n,n}}} \sum_{\ell,\ell' \in G_n} \sum_{S \in \pi \myp{I_{4n+2}}} \limsup_{L\to\infty} \abs{\mathcal{F}^1_n \myp{S,J,\ell,\ell',s,\kappa}},
	\end{align}
	with $\mathcal{F}^1_n$ as in \eqref{eq:DefF1}.
	Recall here that by definition \eqref{Def:Para3}, $\kappa' \myp{\lambda} = \lambda^2 N_0^{b_0}$ cancels the negative power of $\lambda$ above.
	For any graph that is not fully paired, we take the absolute value up to the time-integration and apply \cref{lemma:BasicGEstimate2}.
	For these terms, we can thus bound the $\limsup$ in \eqref{Lemma:Q1err:E1} by
	\begin{equation}
	\label{Lemma:Q1err:E2}
		e^t \frac{t^{n_0}}{n_0!} \expec{\ln \lambda}^{2+4n} \lambda^{r + \myp{1-b} n_1} C^{2n+1} \normt{\hat{g}}{\infty}^2 \prod_{A \in S} \sup_{k_A,\sigma_A,\Omega} \abs{C_{\abs{A}} \myp{k_A,\sigma_A,\lambda,\Omega}}.
	\end{equation}
	If the graph is pairing but not fully paired, $n_1 \geq 1$.
	If the graph is higher order, $r \geq 1$.
	For both types, \eqref{Lemma:Q1err:E2} is bounded by
	\begin{equation}
	\label{Lemma:Q1err:E3}
		e^t \expec{ct}^n \expec{\ln \lambda}^{2+4n} \lambda^{\frac{1}{4}} \normt{\hat{g}}{\infty}^2 \prod_{A \in S} \sup_{k_A,\sigma_A,\Omega} \abs{C_{\abs{A}} \myp{k_A,\sigma_A,\lambda,\Omega}}.
	\end{equation}
	
	Let us now consider a fully paired graph.
	In this case, the clusters are pairings with $C_2 \myp{\myp{k',k}, \myp{\sigma',\sigma}, \lambda, \Omega} = W_L^{\lambda} \myp{k,\sigma'} \1 \myp{\sigma' = -\sigma}$ by \cref{rem:pair_correl_trunc}.
	Since by \cref{lem:wigner_estimate},
	\begin{equation}
	\label{Lemma:Q1err:E4}
		\limsup_{L\to\infty} \sup_{k \in \T^d} \abs{W_L^{\lambda} \myp{\myb{k}} - W^0 \myp{k}} 
		\leq{} 2c_0^2 \lambda,
		\mbox{ and } W^0 \myp{k} = W^0 \myp{-k},
	\end{equation}
	for any finite index set $I$ it holds true by induction that
	\begin{equation}
	\label{Lemma:Q1err:E5}
		\limsup_{L\to\infty} \abs[\bigg]{\prod_{j \in I} W_L^{\lambda} \myp{\myb{\pm k_j}, \sigma_j} - \prod_{j \in I} W^0 \myp{k_j,\sigma_j}}
		\leq{} \abs{I} C^{\abs{I}-1} 2c_0^2 \lambda,
	\end{equation}
	with $C = 2c_0^2 \lambda_0' + \normt{W^0}{\infty} < \infty$. 
	For pairing $S$, we have $\abs{S} = 2n+1$ by \cref{lemma:BasicGEstimate2}
	Hence, we can replace in the definition of $\mathcal{F}^1_n$ all $C_2$ terms by $W^0 \myp{K_j,o_j} \1 \myp[\big]{ o_j = -o_l}$ with an error bounded by
	\begin{equation}
	\label{Lemma:Q1err:E6}
		Cn \normt{\hat{g}}{\infty}^2 e^t \expec{ct}^n \expec{\ln\lambda}^{2+4n} \lambda.
	\end{equation}
	After making this replacement, we resolve all the $\delta_L$-functions in the resulting expression and rewrite the integrals over the free momenta as Lebesgue integrals as in \eqref{eq:discrete-lebesque}.
	As explained in the beginning of the proof of \cref{lemma:BasicGEstimate1}, we can then apply the dominated convergence theorem to take the $L \to \infty$ limit and obtain an expression of the form \eqref{Lemma:Q1err:2}.
	
	Finally, collecting all the terms and estimating the sums over $J,\ell,\ell'$, and $S$ as before, the conclusion of the proposition holds true.
\end{proof}
By exactly the same argument, we can also prove.
\begin{proposition}\label{Lemma:Qmain}
	Suppose that $t>0$  and $0<\lambda<\lambda_0'$ and define $N_0$ and $\kappa$ as  before.
	There  are constants $C>0$ depending on  $f$ and $g$, and $c>0$ depending on $\omega$, $V$, and $\lambda_0'$ such that 
	\begin{align}
	\label{Lemma:Qmain:1}
		\MoveEqLeft[6] \limsup_{L\to\infty} \abs[\big]{{Q}^{main} \myb{g,f} \myp{t} - {Q}^{main,pair} \myb{g,f} \myp{t}}^2 \nn \\
		\leq{}& Ce^t \expec{ct}^{N_0} N_0^{N_0+4} \myp{2N_0}! \expec{\ln\lambda}^{2N_0+2} \lambda^{\frac{1}{4}},
	\end{align}
	where 
	\begin{equation}
	\label{Lemma:Qmain:2}
		Q^{main,pair} \myb{g,f} \myp{t} 
		={} \sum_{n=0}^{N_0-1} \sum_{\ell \in G_n} \sum_{S \in \pi \myp{I_{0,2n+1}}} \mathcal{F}^{main,pair}_n \myp{S,\ell,t/\eps,\kappa}
	\end{equation}
	in which $\mathcal{F}^{main,pair}_n \myp{S,\ell,t/\eps,\kappa}=0$ if the graph created by $S,\ell$ is not fully paired, otherwise, it is defined as
	\begin{align}
	\label{Lemma:Qmain:3}
		\MoveEqLeft[4] \mathcal{F}^{main,pair}_n \myp{S,\ell,t/\eps,\kappa} \nn \\
		={}& \myp{-i\lambda}^n \epsilon \myp{S} \sum_{\bar{\sigma} \in \Set{\pm1}^{\mathcal{I}_n''}} \int_{\myp{\T^d}^{\mathcal{I}_n''}} \ids \bar{k} \Delta_{n,\ell} \myp{\bar{k}, \bar{\sigma}} \hat{g} \myp{k_{n,1}}^{\ast} \hat{f} \myp{k_{n,1}} \nn \\
		&\times \1 \myp{\sigma_{n,1}=1} \1 \myp{\sigma_{0,0}=-1} \prod_{j=1}^n \myb[\big]{ \sigma_{j,\ell_j} \Psi_1 \myp{k_{j-1:\ell_j}, \sigma_{j,\ell_j}} } \nn \\
		&\times \prod_{A = \Set{j,l} \in S} \myb[\big]{ \delta \myp{k_{0,l}+k_{0,j}} \1 \myp{\sigma_{0,j} = -\sigma_{0,l}} W^0 \myp{k_{0,j},\sigma_{0,j}} } \nn \\
		&\times \int_{\myp{\R_+}^{I_{0,n}}} \mathrm{d}r \delta \myp[\bigg]{ \frac{t}{\eps} - \sum_{j=0}^{n} r_j } \prod_{j=0}^{n} e^{-ir_j \gamma_j},
	\end{align}
	with $\mathcal{I}_n'' = \mathcal{I}_n \cup \Set{\myp{n,1}} \cup \Set{\myp{0,0}}$ as in \cref{Proposition:Ampl}, and $\gamma_j$ defined by $\eqref{eq:gamma_j}$.
	We also use the notation from \eqref{eq:Wnotations} with $W^0$ as in \eqref{eq:wignerfunction0}.
	For the pairings $A = \Set{j,l} \in S$, we use the convention $j < l$, so that $\sigma_{0,j}$ denotes the parity of the left leg of the pairing.
\end{proposition}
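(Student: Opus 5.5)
The plan follows the proof of \cref{Lemma:Q1err}, adapted to the single-tree structure of the main term. Here there is no Cauchy--Schwarz step and no $\kappa'$ prefactor. By \cref{Proposition:Ampl}, $Q^{main}$ is the sum over $0\le n\le N_0-1$, $\ell\in G_n$ and $S\in\pi(I_{0,2n+1})$ of $\mathcal{F}^{main}_n$. These are momentum graphs with $n'=0$ and a single trivial minus tree consisting of the edge $k_{0,0}$. I split the sum into three classes: graphs that are not fully paired (higher order or partially paired), fully paired graphs with the exact $C_2=W_L^\lambda$ factors, and the replacement of $W_L^\lambda$ by $W^0$ together with $L\to\infty$.

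For a non-fully-paired graph, I take absolute values inside everything except the time integral. I then apply the analogue of \cref{lemma:BasicGEstimate2} with $n'=0$, noting that $\lambda^n$ replaces $\lambda^{n+n'}$ and $s=t/\eps$. This gives the bound
\begin{equation*}
C^{1+n_1+n_2}e^{t}\frac{t^{n_0}}{n_0!}\lambda^{n_2+(1-b)n_1-n_0}\expec{\ln \lambda}^{1+n_2+2n_1}.
\end{equation*}
By \cref{lemma:numberofnonpairclusters}, $n_2-n_0=r$. Hence the power of $\lambda$ is at least $r+(1-b)n_1\ge \min(1,1/4)=1/4$ whenever $r\ge1$ (higher order) or $n_1\ge1$ (partially paired). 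For higher-order graphs, \cref{lemma:Clustercombinatorics} gives an additional factor of $\lambda$. The combinatorial sums are then bounded as in \cref{Lemma:Q3err}: $\abs{G_n}\le (2n)^n\le (2N_0)^{N_0}$, the sum over $S$ of cluster sups is at most $c^{2n+2}(2n+2)!$, and the sum over $n$ contributes $N_0$. This yields a bound of the form $C e^t\expec{ct}^{N_0}N_0^{N_0+c'}(2N_0)!\expec{\ln\lambda}^{2N_0+2}\lambda^{1/4}$. Squaring only affects constants, or one can state the bound for $\abs{\cdot}$ and note that it is dominated by the right-hand side of \eqref{Lemma:Qmain:1}.

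For fully paired graphs, every $C_2$ equals $W_L^\lambda(\cdot,\sigma)\1(\sigma'=-\sigma)$ by \cref{rem:pair_correl_trunc}. By \eqref{Lemma:Q1err:E5}, replacing the product of the $n+1$ factors $W_L^\lambda$ by $W^0$ costs at most $(n+1)C^n\lambda$ times the absolute amplitude. This absolute amplitude is again bounded by the basic estimate with $n_1=r=0$, so the exponent $n_2-n_0=0$ and no negative power of $\lambda$ appears. After the replacement, I resolve the $\delta_L$ constraints via \cref{thm:momentum_resolution}, write the free-momentum sums as Lebesgue integrals of step functions using \eqref{eq:discrete-lebesque}, and use the uniform bound on the integrand to pass $L\to\infty$ by dominated convergence. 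This is the same argument as at the start of \cref{lemma:BasicGEstimate1}, and the limit is exactly $\mathcal{F}^{main,pair}_n$. The remaining point in this step is continuity of $\hat g,\hat f,\widehat V,\Phi_1,\omega^\lambda$, so that the step-function integrands converge pointwise. One subtlety is that $\omega^\lambda$ itself depends on $L$ through $W_L^\lambda$; I handle this by noting that $R_\lambda$ converges uniformly, again by \cref{lem:wigner_estimate} and smoothness of $\widehat V$.

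I expect the main obstacle to be the bookkeeping needed to adapt \cref{lemma:BasicGEstimate2} to main diagrams. In particular, one must check that the top-slice phase $\gamma_n=-i\kappa_0=0$ plays the role of the resolvent $1/\abs{z}$ in the contour estimate, and that the single minus edge $k_{0,0}$ creates no extra free momenta beyond those counted by \cref{lemma:freemomenta}. Once these checks are done, collecting the errors gives \eqref{Lemma:Qmain:1}.
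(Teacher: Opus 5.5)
Your plan is the paper's proof. The paper obtains this proposition ``by exactly the same argument'' as \cref{Lemma:Q1err}: \cref{lemma:BasicGEstimate2} with $n'=0$ gives $\lambda^{r+(1-b)n_1}\le\lambda^{1/4}$ for higher-order and partially paired graphs. For fully paired graphs, the replacement $W_L^\lambda\to W^0$ via \eqref{Lemma:Q1err:E5} costs $O(\lambda)$ times a basic bound carrying no negative power of $\lambda$. Dominated convergence then handles $L\to\infty$, followed by the same counting over $\ell$ and $S$. Your checks that $\gamma_n=-i\kappa_0=0$ supplies the $1/\abs{z}$ factor, and that the edge $k_{0,0}$ is accounted for by \cref{lemma:freemomenta} with $n'=0$, are the right ones.

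Two side remarks need correcting.

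First, squaring does not only change constants. It squares $(2N_0)!$, $N_0^{N_0}$, $\expec{ct}^{N_0}$, $e^t$ and $\expec{\ln\lambda}^{2N_0+2}$. Since there is no Cauchy--Schwarz step here, the argument bounds $\limsup_{L}\abs{Q^{main}-Q^{main,pair}}$ itself by an expression of the displayed form. The square in the statement appears to be carried over from \cref{Lemma:Q1err}, and the unsquared bound is all that \cref{Lemma:MainTheorem2} needs.

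Second, \cref{lem:wigner_estimate} does not show that $W_L^\lambda$, and hence $R_\lambda$, converges as $L\to\infty$. It only gives $\limsup_L\sup_k\abs{W_L^\lambda([k])-W^0(k)}\le 2c_0^2\lambda$. Nor can you replace $R_\lambda$ by its $W^0$ analogue: that shifts each $\Theta$ by $O(\lambda^2)$ over time slices of length up to $t/\lambda^2$, which is not negligible.

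The fix is to keep the $L$-dependent $\omega^\lambda$ inside $\mathcal{F}^{main,pair}_n$, as the statement implicitly does since $\gamma_j$ is built from $\omega^\lambda$. Then, at each fixed $L$, compare the step-function integrand with the continuum integrand built from the same $\omega^\lambda$. Because $0\le W_L^\lambda\le 1$ and $\widehat V\in C^2$, $R_\lambda$ is Lipschitz with constant $\|\nabla\widehat V\|_\infty$, uniformly in $L$. So for $r_j\le t/\lambda^2$ the integrands are equicontinuous in the free momenta uniformly in $L$. Their difference therefore tends to zero pointwise, and dominated convergence applies to the difference. The paper passes over this point without comment.
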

\subsection{Crossing graphs}
All that remains now is to handle the contributions of fully paired graphs to $Q_1^{err}$ and $Q^{main}$.
According to the results in \cref{sec:FullyPairedGraphs}, these consist exactly of the crossing, nested, and leading graphs.
We treat first the crossing graphs here.
\begin{proposition}[Crossing graphs]
	\label{Propo:CrossingGraphs} There is a constant $c_0$ depending on $\omega$ and $V$, and a constant $C$ depending on $\omega,f,g,$ such that the following estimates hold true for all crossing graphs
	\begin{align}
		\abs[\big]{\mathcal{F}^{err,pair}_n \myp{S,J,\ell,\ell',s,\kappa}} 
		\leq{}& C\lambda^{2\gamma} e^{s\lambda^2} \expec{c_0 s \lambda^2}^{n-1} \expec{\ln\lambda}^{3+c_2+2n}, \\
		\abs[\big]{\mathcal{F}^{main,pair}_n \myp{S,\ell,t\lambda^{-2},\kappa}}
		\leq{}& C\lambda^{2\gamma} e^t \expec{c_0t}^{n/2-1} \expec{\ln\lambda}^{3+c_2+n},
	\end{align}
	where the constants $\gamma, c_2$ appear in \cref{ass:DR4}.
\end{proposition}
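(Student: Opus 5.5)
Fix a crossing graph and follow the structure of the proofs of \cref{lemma:BasicGEstimate1,lemma:BasicGEstimate2}, with one change: at the crossing, apply the crossing bound \cref{ass:DR4} in place of the degree-one/degree-two resolvent lemmas. Let $i_2$ be the first degree two vertex that is not an immediate recollision, and let $i_0-1$ be the last propagated crossing slice, as in \cref{lem:CrossingGraphs}.

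\textbf{Removing the recollisions and going to resolvents.} By \cref{lemma:recollisions}, every degree two vertex below $v_{i_2}$ belongs to a leading motive, which preserves phase and momentum. Each such motive therefore contributes only a bounded factor (its own $\Psi_1$'s and an $r$-integral over its short slice), costing at most a factor $C\lambda^2 s$ per motive. These are absorbed into $\expec{c_0 s\lambda^2}^{n-1}$. Then apply \cref{Lemma:PhaseResolve}:
\begin{itemize}
\item keep the oscillatory time variables of the long slices $i_0-1$ and $i_2-1$ (or whichever long slices carry the phases $\mathrm{Re}\gamma(i_0-1)$ and $\Theta_{i_2}$) as genuine time integrals;
\item turn all other slices into resolvents $1/\abs{z-\gamma_j}$, integrated over a contour $\Gamma$ at height $\beta=\lambda^2$ (at the price $e^{s\lambda^2}$).
\end{itemize}
Now integrate the free momenta from the bottom. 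Away from the crossing, handle the degree two vertices with \cref{lemma:degree2vertex}, each giving $\expec{\ln\lambda}$. Fully paired graphs have no degree one vertices. Degree zero vertices contribute their time factor $s\lambda^2$.

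\textbf{The crossing.} By \eqref{eq:CrossingPhase1}, the phase of the slice $i_0-1$ is $p\Theta_{i_2}+\Theta_{i_0}+\alpha_1+\mathrm{Re}\gamma(i_2)$. Here $\Theta_{i_0}$ has the form \eqref{eq:CrossingPhase2} (for $v_{i_0}$ a $T$-vertex) or \eqref{eq:CrossingPhase3} (for $v_{i_0}$ the $X$-vertex), with a nonvanishing shift $u'-u$, respectively some $u_j\neq0$.
\begin{itemize}
\item Write the two free momenta $k_1,k_2$ of $v_{i_2}$ and the dependent $k_3=k_0-k_1-k_2$ in position space via the propagators $p_t^\lambda$ and the kernel $\mathcal{K}$ of \eqref{eq:defp2tx}.
\item The $(k_1,k_2)$-integral of $e^{-is\Theta_{i_2}-it\Theta_{i_0}}$ (times bounded smooth factors from $\Psi_1$ and $W^0$) becomes a sum over $x$ of products of three such kernels.
\item Hölder's inequality in $\ell^3$ bounds this by the products appearing in \eqref{eq:DR4_1} (in the $T$-vertex case, $\normt{p_t^\lambda}{3}^2\normt{\mathcal{K}(\cdot;t,\sigma_1 s,\sigma_2 s,u_1,u_2)}{3}$) or \eqref{eq:DR4_2} (in the $X$-vertex case).
\item Insert a factor $e^{-\zeta\abs{s}}$ with $\zeta=\lambda^2$ from the contour, and apply \cref{ass:DR4}. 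This gives $\lambda^{2(\gamma-1)}F^{\mathrm{cr}}(u'-u,\lambda^2)$, resp. $\lambda^{2(\gamma-1)}F^{\mathrm{cr}}(u_n,\lambda^2)$.
\end{itemize}
The two time integrals over the slices would otherwise have contributed $(s\lambda^2)^2\lambda^{-2}$, i.e. one power $\lambda^{-2}$ per slice against the $\lambda^2$ of the two vertices. The crossing thus gains exactly $\lambda^{2\gamma}$ relative to the basic count.

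\textbf{Integrating out $F^{\mathrm{cr}}$.} The factor $F^{\mathrm{cr}}$ depends on a free momentum further up: $u'-u=\pm(k_e+k_{e'})$ depends on free momenta by \cref{lemma:dependenceoftwoedges}, and some $u_j$ depends on momenta ending above $v_{i_2}$. When that free momentum is reached, integrate it using \eqref{eq:crossingest1}, or using \eqref{eq:crossingest2} if it sits at a degree two vertex with its resolvent. This costs $\expec{\ln\lambda}^{1+c_2}$. Collecting everything gives $\lambda^{2\gamma}e^{s\lambda^2}\expec{c_0s\lambda^2}^{n-1}\expec{\ln\lambda}^{3+c_2+2n}$. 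The main-term bound is the same argument with $2n$ replaced by $n$ vertices and $s=t/\eps$.

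\textbf{Main obstacle.} The hard step is the bookkeeping at the crossing. One must check that exactly two time integrals can be traded for the $(s,t)$-integral of \cref{ass:DR4}, with all other dependence on $k_1,k_2$ removed by bounded factors. In particular the $\Psi_1$ and $W^0$ factors at $v_{i_2}$, $v_{i_0}$ must be written as $\ell^1$ Fourier multipliers, as in \cref{prop:osc_bound}. One must also check that the intermediate resolvents are independent of $k_1,k_2$, which follows from \cref{lem:CrossingGraphs}(i).
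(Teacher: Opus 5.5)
There is a genuine gap at the crossing step. It comes from splitting the time slices the wrong way round. You keep the crossing slice $i_0-1$ and the slice $i_2-1$ as genuine time integrals; note that $i_2-1$ is short, not long. You then want to evaluate the $(k_1,k_2)$-integral with the oscillating factor $e^{-ir_a\gamma(i_2-1)-ir_b\gamma(i_0-1)}$ inside. But that integral also carries everything else that depends on $k_1,k_2$:
\begin{itemize}
\item the $\Psi_1$ factors at every vertex of the double loop, including the cut-offs $\Phi_1$, whose gradients are of size $\lambda^{-b}$, so they have no $\lambda$-uniform $\ell^1$-multiplier bound;
\item the functions of $k_1,k_2$ left over after integrating out the recollision momenta below $v_{i_2}$, which are bounded but are not multipliers;
\item the phases or resolvents of other slices that depend on the double loop, e.g.\ nested long slices between $i_0$ and $i_2$, or earlier crossing slices below $i_0-1$.
\end{itemize}
\cref{lem:CrossingGraphs}(i) gives independence only for $\alpha_1$, $\Theta_{i_0}$ and $\mathrm{Re}\,\gamma(i_2)$, not for these slices. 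You cannot take absolute values of these factors without destroying the oscillation you need. So the reduction to the H\"older/\cref{ass:DR4} form does not go through. Your ``main obstacle'' paragraph names exactly this difficulty, but the proposed $\ell^1$-multiplier fix is not available. For the same reason, the damping $e^{-\zeta\abs{s}}$ does not come ``from the contour'' in your setup.

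The missing idea is a positivity trick, and the paper's order of operations is the reverse of yours.
\begin{itemize}
\item It puts the crossing slice $m'=i_0-1$ into the resolvent set, $A=\{m',N\}\cup A_2$. The remaining long slices then give $s^{N/2-1}/(N/2-1)!$, i.e.\ one factor of $s$ is given up.
\item All time phases are removed by absolute values before any momentum integration. The integrand is then nonnegative, so at $v_{i_2}$ every other factor can be bounded by a constant. This leaves $\int \mathrm{d}k_1\mathrm{d}k_2\,\abs{z-\gamma(i_2-1)}^{-1}\abs{z-\gamma(m')}^{-1}$.
\item On the top segment $z=\alpha+i\beta$, $\beta=\lambda^2$, replace $\gamma$ by $\mathrm{Re}\,\gamma$ and use \eqref{eq:resolvent_integral}. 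Each $1/\abs{\cdot}$ is an oscillatory integral against a nonnegative kernel $F(\cdot;\beta)$.
\item By Fubini and \eqref{eq:CrossingPhase1}, the double integral is bounded by $\int\!\!\int F(r;\beta)F(s;\beta)\,\abs{\int \mathrm{d}k_1\mathrm{d}k_2\, e^{-i(r+ps)\Theta_{i_2}-is\Theta_{i_0}}}$.
\item The bound on $F$ supplies $\expec{\ln\beta}e^{-\beta\abs{s}}$. After the substitution $t=r+ps$ you are exactly in \eqref{eq:DR4_1} or \eqref{eq:DR4_2}, which gives $\beta^{\gamma-1}$ in place of the lost factor $s$.
\item Off the top segment, the trivial bound $\abs{z-\gamma(m')}^{-1}\le 1$ already gains $\lambda^2$.
\end{itemize}

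Your remaining steps match the paper: the recollision bookkeeping, the $T$- and $X$-vertex cases, and integrating out $F^{\mathrm{cr}}$ with \eqref{eq:crossingest1} or \eqref{eq:crossingest2}.
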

\begin{proof}
	We consider a general relevant crossing graph with $N = n' + n$ interaction vertices.
	As in \cref{lem:CrossingGraphs}, let $v_{i_2}$ be the first degree two vertex that does not correspond to an immediate recollision, and let $m' = i_0-1$ be the index of the last propagated crossing slice.
	We denote the momenta of the edges in $\mathscr{E} \myp{v_{i_2}}$ by $k_1,k_2,k_3$, and $k_0 = k_1+k_2+k_3$, with $k_1$, $k_2$ being the free momenta.
	
	We follow the general strategy of \cref{lemma:BasicGEstimate1} with a few exceptions, by first rewriting the time simplex integral. 
	We have now $A_1 = \emptyset$, and we take $A = \Set{m',N} \cup A_2$, that is, we bundle the long time slice $m' =i_0-1$ together with the short time slices in order to get an extra resolvent factor where we can then apply the results of \cref{lem:CrossingGraphs}.
	The point is that instead of getting the usual long time slice contribution of $\lambda^{-2}$ for the last propagated crossing slice, we can use the crossing estimates of \cref{ass:DR4} to get a contribution of $\lambda^{2\gamma - 2}$, which gives an improvement of $\lambda^{2 \gamma}$ over the basic estimates.
	Observing that $\abs{A_2} = \abs{A_0} = N/2$, we find
	\begin{align}
		\label{eq:crossingraph4}
		\MoveEqLeft[6] \abs[\bigg]{\int_{\myp{\R_+}^{I_{0,N}}} \ids r \delta \myp[\bigg]{s-\sum_{j=0}^N r_j} \prod_{j=0}^N e^{-ir_j\gamma \myp{j}} } \nn \\
		\leq{}& \frac{s^{N/2}-1}{\myp{N/2-1}!} \oint_{\Gamma_{N/2}} \frac{\abs{\ids z}}{2\pi} \frac{e^{s \myp{\mathrm{Im} z}_+}}{\abs{z}} \prod_{j \in \Set{m'} \cup A_2} \frac{1}{\abs{z-\gamma \myp{j}}}.
	\end{align}
	By \cref{lem:CrossingGraphs}, $\mathrm{Re} \gamma \myp{m'} = p \Theta_{i_2} + \Theta_{i_0} + \alpha_1 + \mathrm{Re} \gamma \myp{i_2}$ for a $p \in \Set{0,1}$, where $\alpha_1$, $\Theta_{i_0}$, and $\mathrm{Re} \gamma \myp{i_2}$ are independent of all free momenta ending before or at $v_{i_2}$.
	We can thus follow the basic iteration process until reaching the time slice $i_2-1$ that ends at the crossing vertex $v_{i_2}$.
	At this stage, the difficulty is the dependence of the factor $\frac{1}{z-\gamma \myp{m'}}$ on the free momenta.
	The trivial estimate $\frac{1}{\abs{z-\gamma \myp{m'}}} \leq 1$ can be applied if $z$ does not belong to the top of the integration path, and the usual procedure of \cref{lemma:BasicGEstimate1} and \cref{lemma:BasicGEstimate2} can then be reused.
	We obtain in this case an additional factor of $\lambda^2$ in the upper bound of the amplitude, compared to the basic estimates.
	
	We proceed now with the more complicated scenario: $z$ belongs to the top of the integration path.
	In this case, $z = \alpha + i\beta$ for some $\abs{\alpha} \leq 1 + c_{N/2} = 1+2 \myp{N+1} \myp{ \normt{\omega}{\infty} + 2 \normt{\widehat{V}}{\infty} }$ and $\beta = \lambda^2$.
	Recalling that $\mathrm{Im} \gamma \myp{m} \leq 0$ for all $m$, we can estimate all the remaining resolvent factors by
	\begin{equation}
		\label{eq:crossingraph5}
		\frac{1}{\abs{\alpha - \gamma \myp{m} + i\beta}} 
		\leq{} \frac{1}{\abs{\alpha - \mathrm{Re} \gamma \myp{m} + i\beta}}.
	\end{equation}
	At this point, we need to estimate the double loop integral of $v_{i_2}$, which now also contains the resolvent factor of $\gamma \myp{m'}$.
	Using \eqref{eq:resolvent_integral} to express the resolvents as oscillating integrals and then applying Fubini shows that
	\begin{align}
		\MoveEqLeft[3] \int_{\myp{\T^d}^2} \ids k_1 \ids k_2 \frac{1}{\abs{\alpha - \mathrm{Re} \gamma \myp{i_2-1} + i\beta}} \frac{1}{\abs{\alpha - \mathrm{Re} \gamma \myp{m'} + i\beta}} \nn \\
		={}& \int_{\myp{\T^d}^2} \ids k_1 \ids k_2 \frac{1}{\abs{\alpha - \mathrm{Re} \gamma \myp{i_2} - \Theta_{i_2} + i\beta}} \frac{1}{\abs{\alpha - \mathrm{Re} \gamma \myp{i_2} - \alpha_1 - p\Theta_{i_2} - \Theta_{i_0} + i\beta}} \nn \\
		\leq{}& \int_{\R^2} \ids r \ids s F \myp{r ; \beta} F \myp{s;\beta} \abs[\bigg]{\int_{\myp{\T^d}^2} \ids k_1 \ids k_2 \, e^{-i \myp{r+ps} \Theta_{i_2} - is\Theta_{i_0}} } \nn \\
		\leq{}& 4 \expec{\ln\beta}^2 \myp[\bigg]{1 + \int_{\R^2} \ids r \ids s \, e^{-\beta \abs{s}} \abs[\bigg]{\int_{\myp{\T^d}^2} \ids k_1 \ids k_2 \, e^{-i \myp{r+ps} \Theta_{i_2} - is\Theta_{i_0}} }}.
	\label{eq:crossingraph6}
	\end{align}
	Here, we can do the change of variables $r \to t = r+ps$ and then estimate the integral using the crossing bounds \cref{ass:DR4}, depending on whether $v_{i_0}$ is a $T$- or the $X$-vertex of the double loop of $v_{i_2}$.
	
	\emph{Case 1:} $v_{i_0}$ is a $T_j$-vertex.
	In this case, we have by \cref{lem:CrossingGraphs} that $\Theta_{i_0} = \pm \omega^{\lambda} \myp{k_j+u} \pm \omega^{\lambda} \myp{k_j+u'} + \alpha'$ for some choice of the signs, where $\alpha',u',u$ are independent of $k_1,k_2$, and $u'-u$ depends on some free momenta ending above $v_{i_2}$.
	The momentum integral becomes
	\begin{align}
		\label{eq:crossingraph7}
		\MoveEqLeft[4] \abs[\bigg]{ \int_{\myp{\T^d}^2} \ids k_1 \ids k_2 \, e^{-it \Theta_{i_2} - is \Theta_{i_0}}} \nn \\
		={}& \abs[\bigg]{ \int_{\myp{\T^d}^2} \ids k_1 \ids k_2 \, e^{-it \myp{\pm\omega_1^{\lambda} \pm\omega_2^{\lambda} \pm \omega_3^{\lambda}} - is \myp{\pm \omega^{\lambda} \myp{k_j+u} \pm\omega^{\lambda} \myp{k_j+u'}}} } \nn \\
		\leq{}& \normt{p_{\pm t}^{\lambda}}{3}^2 \normt{\mathcal{K} \myp{\nonarg; \pm t, \pm s, \pm s, u,u'}}{3}
		={} \normt{p_t^{\lambda}}{3}^2 \normt{\mathcal{K} \myp{\nonarg; t, \pm s, \pm s, u,u'}}{3},
	\end{align}
	where at the end we have used the time-reversal symmetries of $p_t^{\lambda}$ and $\mathcal{K}$, and we will show the validity of the inequality in detail at the end of the proof.
	Plugging \eqref{eq:crossingraph7} back into \eqref{eq:crossingraph6}, we may now use the bound \eqref{eq:DR4_1} from \cref{ass:DR4} to estimate \eqref{eq:crossingraph6} by
	\begin{equation}
		\label{eq:crossingraph8}
		4 \expec{\ln\beta}^2 \beta^{\gamma - 1} \myp{1+F^{\mathrm{cr}} \myp{u'-u,\beta}},
	\end{equation}
	where $u'-u$ depends on some free momenta, and we have used $1 \leq \beta^{\gamma-1}$ since $0 < \gamma \leq 1$.
	We continue iterating the basic estimates of the resolvents for the remaining degree two vertices in \eqref{eq:crossingraph4} until the first of these momenta appears.
	If this momentum is attached to a degree two vertex, we recall from \cref{lemma:freeedgesorder2} that the dependence has to be of the form "$\pm k_j$" for a $j \in \Set{1,2,3}$.
	We can then use \eqref{eq:crossingest2} to obtain a factor $\expec{\ln \beta}^{1+c_2}$.
	Otherwise, if $u'-u$ only depends on the free momentum at the top fusion vertex, we apply \eqref{eq:crossingest1}, yielding a factor $\expec{\ln \beta}^{c_2}$.
	By comparing with the basic estimates, we have gained an additional factor of 
	\begin{equation}
		\label{eq:crossingraph9}
		C \expec{\ln\beta}^{c_2+1} \beta^{\gamma} \frac{1}{\expec{s\beta}},
	\end{equation}
	which implies the bounds stated in the Proposition.
		
	\emph{Case 2:} $v_{i_0}$ is the $X$-vertex.
	Here, we have by \cref{lem:CrossingGraphs} that $\Theta_{i_0} = \alpha' + \sum_{j=1}^3 \pm\omega^{\lambda} \myp{k_j+u_j} $ for some choice of the signs, where $\alpha', u_1,u_2,u_3$ are independent of $k_1,k_2$, and at least one of the $u_j$ depends on some free momenta ending above $v_{i_2}$.
	We find, as for \eqref{eq:crossingraph7}, that the momentum integral in \eqref{eq:crossingraph6} is bounded by
	\begin{align}
		\label{eq:crossingraph10}
		\abs[\bigg]{ \int_{\myp{\T^d}^2} \ids k_1 \ids k_2 \, e^{-it \Theta_{i_2} - is \Theta_{i_0}}}
		={}& \abs[\bigg]{ \int_{\myp{\T^d}^2} \ids k_1 \ids k_2 \prod_{j=1}^3 e^{-i \myp{\pm t \omega^{\lambda} \myp{k_j} \pm s \omega^{\lambda} \myp{{k_j}+u_j}} }} \nn \\
		\leq{}& \prod_{j=1}^3 \normt{\mathcal{K} \myp{\nonarg; t, \pm s, 0, u_j,0}}{3}.
	\end{align}
	This time, we can apply the estimate \eqref{eq:DR4_2} from \cref{ass:DR4}, and find that \eqref{eq:crossingraph6} is bounded by
	\begin{equation}
	\label{eq:crossingraph11}
		4 \expec{\ln\beta}^2 \beta^{\gamma-1} \myp{1+F^{\mathrm{cr}} \myp{u_{j'},\beta}},
	\end{equation}
	for some index $j'$ satisfying that $u_{j'}$ depends on some free momenta above $v_{i_2}$.
	From here, we can then follow the same steps used for the $T$-vertex case to get the same improvement \eqref{eq:crossingraph9} for the amplitude.
	
	Finally, we just need to argue that the bounds in \eqref{eq:crossingraph7} and \eqref{eq:crossingraph10} hold.
	For this, we note that $p_t^{\lambda}$ and $\mathcal{K}$ are defined as the inverse Fourier transforms of the functions $e^{-i t \omega^{\lambda}}$ and $e^{-i \myp{ t_0 \omega^{\lambda} + t_1 \omega^{\lambda} \myp{\nonarg+u_1} + t_2 \omega^{\lambda} \myp{\nonarg + u_2}}}$, respectively.
	Thus, we recognize that the integrals in \eqref{eq:crossingraph7} and \eqref{eq:crossingraph10} are both of the form
	\begin{equation*}
		\int_{\myp{\T^d}^2} \ids k_1 \ids k_2 \hat{f}_1 \myp{k_1} \hat{f}_2 \myp{k_2} \hat{f}_3 \myp{k_0-k_1-k_2},
	\end{equation*}
	where the $f_i \in \ell^1 \myp{\Z^d}$ can be either $p_t^{\lambda}$ or $\mathcal{K}$.
	Simply expanding the Fourier transforms and then applying Fubini and H\"older, we find
	\begin{align*}
		\MoveEqLeft[2] \abs[\bigg]{\int_{\myp{\T^d}^2} \ids k_1 \ids k_2 \hat{f}_1 \myp{k_1} \hat{f}_2 \myp{k_2} \hat{f}_3 \myp{k_0-k_1-k_2}} \\
		={}& \abs[\bigg]{\int_{\myp{\T^d}^2} \ids k_1 \ids k_2 \sum_{x_1,x_2,x_3 \in \Z^d} e^{-2 \pi i \myp{ k_1 \cdot \myp{x_1-x_3} + k_2 \cdot \myp{x_2-x_3} + k_0 \cdot x_3 } } f_1 \myp{x_1} f_2 \myp{x_2} f_3 \myp{x_3} } \\
		={}& \abs[\bigg]{\sum_{x_3 \in \Z^d} e^{-2 \pi i k_0 \cdot x_3} f_1 \myp{x_3} f_2 \myp{x_3} f_3 \myp{x_3} }
		\leq{} \normt{f_1}{3} \normt{f_2}{3} \normt{f_3}{3},
	\end{align*}
	which is exactly the stated bound in both \eqref{eq:crossingraph7} and \eqref{eq:crossingraph10}.
\end{proof}

\subsection{Contributions from attaching leading motives to a graph}
\label{sec:AddLeadingMotive}
Before providing bounds on the amplitudes of the leading and nested graphs, we investigate more closely how the amplitude of a graph is affected when attaching a leading motive to the graph.
\begin{lemma}[Adding a leading motive to a graph]
\label{lemma:AddMotive}
	Consider any momentum graph with a pairing cluster at the bottom, and let $\mathcal{M}$ be a leading motive (cf. \cref{fig:LeadingMotives}) that can be attached to this pairing.
	
	Let $\sigma$ denote the parity of the left leg of the pairing where $\mathcal{M}$ is attached, $k_0$ the momentum of the edge in the initial time slice where $\mathcal{M}$ is attached, and $r_2$ the initial time variable.
	Assuming that $\Phi_1 = 1$, adding the motive $\mathcal{M}$ to the graph changes a factor $e^{-\kappa_0 r_2} W^0 \myp{k_0,\sigma}$ to an expression of the form
	\begin{equation}
	\label{eq:AddMotive}
		-\lambda^2 \int_{\myp{\T^d}^3} \id k \delta \myp{k_0-k_1-k_2-k_3} \int_{\myp{\R_+}^2} \id r \delta \myp{r_2-r_1-r_0} \prod_{m=0}^2 e^{-r_m \kappa_{2-m}} F_{\mathcal{M}} \myp{r_1,k,\sigma},
	\end{equation}
	where $k = \myp{k_1,k_2,k_3}$, and the function $F_{\mathcal{M}}$ is explicitly calculable.
	For convenience, we have gathered all the possible contributions in \cref{table:motives}.
	\begin{table}[h]
		\begin{tabularx}{0.7\textwidth}{Xl}
			\toprule
			$\mathcal{M}$		&	\multicolumn{1}{c}{$F_{\mathcal{M}} \myp{r_1,k,\sigma}$} \\ \midrule
			$L1^{\tau}$			&	$\widehat{V} \myp{k_1+k_2}^2 e^{-i r_1 \tau \Theta \myp{k,1}} W_0^{\sigma} W_2 W_3$	\\ \midrule
			$L2^{\tau}$			&	$\widehat{V} \myp{k_1+k_2} \widehat{V} \myp{k_1+k_3} e^{-i r_1 \tau \Theta \myp{k,1}} W_0^{\sigma} W_1 W_3$	\\ \midrule
			$L3^{\tau}$			&	$\widehat{V} \myp{k_1+k_2}^2 e^{-i r_1 \tau \Theta \myp{k,1}} W_0^{\sigma} W_1 \widetilde{W}_2$	\\ \midrule
			$L4^{\tau}$			&	$-\widehat{V} \myp{k_1+k_2} \widehat{V} \myp{k_1+k_3} e^{-i r_1 \tau \Theta \myp{k,1}} W_0^{\sigma} W_2 W_3$	\\ \midrule
			$L5^{\tau}$			&	$-\widehat{V} \myp{k_1+k_2}^2 e^{-i r_1 \tau \Theta \myp{k,1}} W_0^{\sigma} W_1 W_3$	\\ \midrule
			$L6^{\tau}$			&	$-\widehat{V} \myp{k_1+k_2} \widehat{V} \myp{k_1+k_3} e^{-i r_1 \tau \Theta \myp{k,1}} W_0^{\sigma} W_1 \widetilde{W}_2$	\\ \midrule
			$G1^{\sigma}$		&	$\widehat{V} \myp{k_1+k_2} \widehat{V} \myp{k_1+k_3} e^{-i r_1 \sigma \Theta \myp{k,1}} W_1^{-\sigma} W_2^{\sigma} W_3^{\sigma}$	\\ \midrule
			$G2^{\sigma}$		&	$\widehat{V} \myp{k_1+k_2} \widehat{V} \myp{k_1+k_3} e^{i r_1 \sigma \Theta \myp{k,1}} W_1^{-\sigma} W_2^{\sigma} W_3^{\sigma}$	\\ \midrule
			$G3^{\sigma}$		&	$-\widehat{V} \myp{k_1+k_2}^2 e^{-i r_1 \sigma \Theta \myp{k,1}} W_1^{-\sigma} W_2^{\sigma} W_3^{\sigma}$	\\ \midrule
			$G4^{\sigma}$		&	$-\widehat{V} \myp{k_1+k_2}^2 e^{i r_1 \sigma \Theta \myp{k,1}} W_1^{-\sigma} W_2^{\sigma} W_3^{\sigma}$	\\ \bottomrule
		\end{tabularx}
		\vspace{5pt}
		\caption{Contributions to the amplitude from adding a leading motive to a pairing whose left leg has parity $\sigma$.
		For the loss motives, we have introduced the parameter $\tau = \pm1$ to write the contribution from a motive and the conjugate motive on the same line.
		We use the shorthand notation $W_j^{\sigma} \coloneq W^0 \myp{k_j,\sigma} $.}
	\label{table:motives}
	\end{table}
\end{lemma}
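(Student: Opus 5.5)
The plan is to treat \cref{lemma:AddMotive} as a direct local computation, generalizing \cref{ex:L1}. Fix a momentum graph with a pairing $\Set{j,l}$ at the bottom, whose left leg has parity $\sigma$ and momentum $k_0$ (the right leg then carries $-k_0$ and parity $-\sigma$). Before attaching $\mathcal{M}$, this pairing contributes $C_2 = W^0 \myp{k_0,\sigma}$ (after replacing $W_L^{\lambda}$ by $W^0$ as in \cref{Lemma:Q1err,Lemma:Qmain}), together with the factor $e^{-\kappa_0 r_2}$ from the initial time slice of length $r_2$. Attaching $\mathcal{M}$ introduces two new interaction vertices, each carrying $\mp i\lambda \sigma_{j,\ell_j}\Psi_1$, which gives the prefactor $(\mp i\lambda)^2 = -\lambda^2$ up to the parity signs. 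It also introduces three new pairings, i.e. three $W^0$ factors, replacing the old one. Finally, it splits the old bottom slice $r_2$ into three slices $r_0,r_1,r_2'$ with $r_0+r_1+r_2' = r_2$. I would relabel so that the delta function reads $\delta\myp{r_2-r_1-r_0}$ as in \eqref{eq:AddMotive}, with the $\kappa$-indices shifted accordingly.

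I would carry this out in four steps, for each of the twenty motives.

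First, I read off the momentum constraints. Using momentum preservation (\cref{rem:leading_motives}), I parametrize the motive by $k_1,k_2,k_3$ with $k_0=k_1+k_2+k_3$, exactly as in \cref{ex:L1}. With $\Phi_1=1$ assumed, each $\Psi_1$ reduces to $U$ from \eqref{Def:Uinteraction}, which equals $\widehat{V}$ of a sum of two momenta. The two vertex factors then give $\widehat{V}\myp{k_1+k_2}^2$ or $\widehat{V}\myp{k_1+k_2}\widehat{V}\myp{k_1+k_3}$, depending on the leg structure of the motive.

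Second, I determine the phases. By \cref{rem:leading_motives} the two vertex phases are $\pm\Theta$ with opposite signs, so only the middle slice $r_1$ carries a phase $e^{\mp i r_1\Theta\myp{k,1}}$. I would use the symmetry $\Theta\myp{-\myp{k_3,k_2,k_1},-\sigma}=-\Theta\myp{\myp{k_1,k_2,k_3},\sigma}$ to bring every case to the form $\Theta\myp{k,1}$, relabelling $k_1,k_2,k_3$ when needed; this produces the $\tau$ or $\sigma$ in the exponent.

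Third, I determine the three $W^0$ factors. I use \cref{rem:pair_correl_trunc} together with \eqref{eq:Wnotations} and the parities $-1,\sigma,1$ forced at each vertex. The identities $W^0\myp{-k}=W^0\myp{k}$ and $W^0\myp{k,-\sigma}=1-W^0\myp{k,\sigma}$ account for the appearance of $\widetilde W$ in L3 and L6.

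Fourth, I fix the overall sign. This combines $\epsilon\myp{S}$, computed via the multiplicativity lemma as the number of new cluster-edge crossings (as in the count of three crossings in \cref{ex:L1}), with the product $\sigma_{j,\ell_j}\sigma_{j',\ell_{j'}}$ of vertex parities and, in the error graphs, the conjugation sign in the minus tree.

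The main obstacle is this last step: getting the sign bookkeeping right uniformly across all twenty motives. The fermionic sign $\epsilon\myp{S}$, the vertex parities, and the signs from reflecting the minus tree must be combined consistently, and the multiplicativity lemma is only sketched. My approach is to verify carefully one representative of each shape (L1, L2, L3 and G1, which fixes the mirror cases through the reflection rule in \cref{fig:LeadingMotives}). The remaining motives then follow by the parity inversion and reflection symmetry, which flips $\tau$ (respectively $\sigma$) and maps each $W$ to $W^{\pm\sigma}$ consistently. The entries of \cref{table:motives} are the outcome of this computation.
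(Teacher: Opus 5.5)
Your method is the paper's: compute the motive's amplitude as in \cref{ex:L1}. That means taking the pairing factors from \cref{rem:pair_correl_trunc}, replacing $\Psi_1$ by $U$, using the cancellation of the two vertex phases so that only the middle slice oscillates, and computing $\epsilon(S)$ from crossings and multiplicativity; the remaining entries then come from symmetry. The gap is in your final reduction. The symmetries you invoke do not generate the table from L1, L2, L3 and G1.
\begin{itemize}
\item Inverting all parities and the order of the edges complex-conjugates the amplitude, as in the proof of \cref{Propo:QERR3}. It maps L$j$ to L$j^-$, i.e.\ it flips $\tau$.
\item For loss motives, changing the parity of the host pairing only changes $W_0^{\sigma}$.
\item For gain motives, these operations only relate motives with the same cluster pattern.
\end{itemize}
None of these operations changes which legs of the motive are paired with which.

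That is exactly how L4, L5, L6 differ from L1, L2, L3. For example, L4 has the same tree as L1 but pairs $\{02,03\},\{01,04\}$ instead of $\{01,03\},\{02,04\}$. Attached to the trivial pairing as in \cref{ex:L1}, this gives $S=\{\{0,3\},\{1,5\},\{2,4\}\}$ with two crossings, instead of $\{\{0,3\},\{1,4\},\{2,5\}\}$ with three. So $\epsilon(S)$ flips sign. Also $k_{0,2}=-k_2$ instead of $-k_3$, which turns $\widehat{V}(k_1+k_2)^2$ into $\widehat{V}(k_1+k_2)\widehat{V}(k_1+k_3)$. Likewise G$3^{\pm}$, G$4^{\pm}$ differ from G$1^{\pm}$, G$2^{\pm}$ by pairing $\{02,05\},\{01,06\}$ instead of $\{01,05\},\{02,06\}$. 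So L4--L6 and a member of the G3/G4 family must be computed directly. The paper computes L1--L6, G$1^-$, G$3^-$, G$4^-$ and G1 directly. It obtains only L$j^-$, G$2^{\pm}$, G3, G4 and the $\sigma=1$ loss cases by conjugation or parity flip.

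This omission matters. The minus signs in the L4--L6 and G3, G4 rows are fermionic signs $\epsilon(S)$ coming from the different cluster structures. They produce the combination $W_2W_3-W_1W_3+W_1\widetilde{W}_2$ in $\nu$ and the cancellation against the collision operator in \cref{Lemma:MainTheorem3}. They cannot be read off from the L1--L3 and G1 computations.
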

\begin{proof}
	Consider first a pairing whose left leg has parity $\sigma = -1$, and suppose that we attach the loss motive L1 to this pairing (recall that the leading motives are gathered in \cref{fig:LeadingMotives}).
	Attaching L1 is exactly the same as attaching the diagram in \cref{fig:L1} to the original graph at the given pairing.
	Recalling by \cref{lem:cluster_sign_order} that the sign $\epsilon \myp{S}$ of the clusters behaves multiplicatively when attaching graphs to each other, the resulting contribution has already been calculated in \eqref{eq:L1} in \cref{ex:L1}, except in the present case we have already taken $L \to \infty$.
	Comparing with the zeroth order amplitude (which corresponds to the trivial graph consisting of just a single pairing),
	\begin{equation*}
		\mathcal{F}_0^{main}
		={} \int_{\Omega^{\ast}} \ids k_0 \hat{g} \myp{k_0}^{\ast} \hat{f} \myp{k_0} e^{-\kappa_0 r_2} W^0 \myp{k_0,-1},
	\end{equation*}
	it is apparent that adding L1 to the given pairing gives a contribution equal to \eqref{eq:AddMotive} with
	\begin{equation*}
		F_{L1} \myp{r_1,k,-1}
		={} \widehat{V} \myp{k_1+k_2}^2 e^{-i r_1 \Theta \myp{k,1}} W_0 W_2 W_3.
	\end{equation*}
	The contributions from adding any of the motives $L2, \dotsc, L6,G1^-,G3^-,G4^-$ is calculated in exactly the same way, providing all the entries in \cref{table:motives} for $\sigma = -1$ and $\tau = 1$, except for $G2^-$.
	
	The contributions from attaching the remaining loss motives $L1^-,\dotsc, L6^-$ are easily obtained by noting that the resulting diagrams are "conjugated" to the ones used to calculate the contributions from $L1,\dotsc,L6$, that is, the diagrams are obtained by reversing the parities of all edges and inverting the order of all edges in each time slice.
	As explained in the proof of \cref{Propo:QERR3}, this affects the corresponding amplitude by simply adding a complex conjugation. (The entire reason for introducing the minus trees in the first place is to handle the conjugated factors in \eqref{Def:OperatorQErrors:Bis}).
	Similarly, we note that the diagram of G$2^-$ is also obtained by conjugating the diagram of G$1^-$, so their amplitudes only differ by a complex conjugation.
	This finishes all the entries in \cref{table:motives} for $\sigma = -1$.
	
	If we instead attach a loss motive to a pairing whose left leg has parity $\sigma = 1$, it only changes the parity of the cluster corresponding to the initial momentum $k_0$.
	In other words, instead of seeing $W^0 \myp{k_0}$, we see now $\widetilde{W}^0 \myp{k_0}$.
	Everything else in the calculations of \cref{ex:L1} would be the same.
	
	For the remaining gain motives, we calculate the contribution from G1 by hand and note that also G2 is obtained from G1 by conjugating the diagram.
	finally. we note that the diagrams of G3 and G4 are obtained by conjugating the diagrams of G$4^-$ and G$3^-$, respectively, except all the parities are reversed.
	This means that the amplitude of G3 can be obtained from that of G$4^-$ by adding a complex conjugation and switching any $W^0 \myp{k_j,\pm 1}$ to $W^0 \myp{k_j,\mp 1}$, and the same holds for G4 and G$3^-$.
\end{proof}
We need to control the time-integrability of factors obtained by iteratively attaching leading motives, so in view of the assertions of \cref{lemma:AddMotive}, we consider the function $G_{s,\tau}^{\lambda}: L^2 \myp{\T^d}^4 \rightarrow L^2 \myp{\T^d}$, defined for $s \in \R$ and $\tau \in \Set{\pm 1}^4$ by
\begin{align}
	\MoveEqLeft[2] G_{s,\tau}^{\lambda} \myb{f_0,f_1,f_2,f_3} \myp{k_0} \nn \\
	\coloneq{}& \int_{\myp{\T^d}^3} \ids k_1 \ids k_2 \ids k_3 \delta\myp{k_0-k_1-k_2-k_3} \widetilde{U} \myp{k_1,k_2,k_3} \prod_{i=0}^3 e^{-i s \tau_i \omega^{\lambda} \myp{k_i}} f_i \myp{k_i},
\label{eq:Gdef}
\end{align}
where $\widetilde{U}$ can be either $\pm \widehat{V} \myp{k_1+k_2}^2$, $\pm \widehat{V} \myp{k_1+k_3}^2$, or $\pm \widehat{V} \myp{k_1+k_2} \widehat{V} \myp{k_1+k_3}$, depending on the leading motive in question.
We will typically abbreviate $G_{s,\tau}^{\lambda} = G_{s,\tau}^{\lambda} \myb{f_0,f_1,f_2,f_3} $.
Note that by Young's inequality,
\begin{align}
	\abs{G_{s,\tau}^{\lambda} \myp{k_0}}
	\leq{}& \abs{f_0 \myp{k_0}} \normt{\widehat{V}}{\infty}^2 \int_{\myp{\T^d}^2} \ids k_1 \ids k_2 \abs{f_1 \myp{k_1}} \abs{f_2 \myp{k_2}} \abs{f_3 \myp{k_0-k_1-k_2}} \nn \\
	={}& \abs{f_0 \myp{k_0}} \normt{\widehat{V}}{\infty}^2 \myp[\big]{\abs{f_1} \ast \abs{f_2} \ast \abs{f_3}} \myp{k_0}
	\leq{} \abs{f_0 \myp{k_0}} \normt{\widehat{V}}{\infty}^2 \prod_{i=1}^3 \normt{f_i}{\frac{3}{2}},
\label{eq:Gbound1}
\end{align}
so $G_{s,\tau}$ indeed maps into $L^2 \myp{\T^d}$, since $\normt{f}{3/2} \leq \normt{f}{2}$.
An analogous definition of $G_{s,\tau}^{\lambda}$ is treated in \cite{LukkarinenSpohn:WNS:2011}, and the proofs of the following bounds on $G_{s,\tau}^{\lambda}$ are essentially the same as in \cite{LukkarinenSpohn:WNS:2011}.
The only differences in our case is the presence of the interaction term $\widetilde{U}$, and the dependence of $\lambda$ in the modified dispersion relation in \eqref{eq:Gdef}.

Consider also the free evolution semigroup $U_t^{\lambda}$ on $\ell^2 \myp{\Z^d}$, that is,
\begin{equation}
	\myp{U_t^{\lambda} g} \myp{x}
	\coloneq{} p_t^{\lambda} \ast g \myp{x}
	={} \sum_{y \in \Z^d} p_t^{\lambda} \myp{x-y} g \myp{y},
	\quad g \in \ell^2 \myp{\Z^d},
\end{equation}
where $p_t^{\lambda} \myp{x} = \int_{\T^d} \ids k e^{2 \pi i k \cdot x - i t \omega^{\lambda} \myp{k}} $ is the free propagator as defined in \eqref{eq:propagator}.
\begin{lemma}
\label{lemma:Gbound}
	We have the pointwise bound
	\begin{equation}
	\label{eq:Gbound2}
		\abs{G_{s,\tau}^{\lambda} \myp{k_0}}
		\leq{} \abs{f_0 \myp{k_0}} \normt{V}{\ell^1 \myp{\Z^d}}^2 \prod_{i=1}^3 \normt{U_{\tau_i s}^{\lambda} \check{f}_i}{\ell^3 \myp{\Z^d}},
	\end{equation}
	and for any $t \in \R$,
	\begin{equation}
	\label{eq:Gbound3}
		\normt{U_t^{\lambda} \check{G}_{s,\tau}^{\lambda}}{\ell^3 \myp{\Z^d}}
		\leq{} \normt{V}{\ell^1 \myp{\Z^d}}^2 \normt{\check{f_0}}{\ell^1 \myp{\Z^d}} \normt{p_{t+\tau_0 s}^{\lambda}}{\ell^3 \myp{\Z^d}} \prod_{i=1}^3 \normt{U_{\tau_i s}^{\lambda} \check{f_i}}{\ell^3 \myp{\Z^d}}.
	\end{equation}
\end{lemma}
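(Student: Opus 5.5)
The plan is to pass to physical space, where the momentum convolution becomes a pointwise product, and then apply Hölder's inequality with three $\ell^3$ factors, exactly as in the final step of the proof of \cref{Propo:CrossingGraphs}.

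\textbf{Step 1: remove the interaction factor.} Write $\widetilde{U}$ as a Fourier transform. Each admissible $\widetilde{U}$ is a product of two factors $\pm\widehat{V}(k_1+k_2)$ or $\pm\widehat{V}(k_1+k_3)$, so
\begin{equation*}
	\widetilde{U}(k_1,k_2,k_3) = \pm\sum_{y,y'\in\Z^d} V(y)V(y') e^{-2\pi i (y\cdot(k_1+k_{a}) + y'\cdot(k_1+k_{b}))},
\end{equation*}
with $a,b\in\{2,3\}$. Inserting this into \eqref{eq:Gdef}, the phases $e^{-2\pi i y\cdot k_j}$ are absorbed into the functions. That is, $f_j(k_j)$ is replaced by $e^{-2\pi i c_j\cdot k_j}f_j(k_j)$ for some shift $c_j\in\Z^d$ built from $y,y'$. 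In physical space this modulation is a translation of $\check f_j$. Translation commutes with $U^{\lambda}_t$ and preserves every $\ell^p$ norm. Pulling the sum over $y,y'$ outside with the triangle inequality therefore costs a factor $\normt{V}{\ell^1}^2$. It then suffices to treat $\widetilde{U}\equiv 1$ with the $f_j$ replaced by translates.

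\textbf{Step 2: prove \eqref{eq:Gbound2}.} With $\widetilde U=1$, set $\hat h_j = e^{-is\tau_j\omega^{\lambda}}f_j$ for $j=1,2,3$. Then $\check h_j = U^{\lambda}_{\tau_j s}\check f_j$, because multiplication by $e^{-it\omega^\lambda}$ in momentum space corresponds to convolution with $p^\lambda_t$. The remaining integral is
\begin{equation*}
	\int \ids k_1\ids k_2\, \hat h_1(k_1)\hat h_2(k_2)\hat h_3(k_0-k_1-k_2) = \sum_x e^{-2\pi i k_0\cdot x} h_1(x)h_2(x)h_3(x),
\end{equation*}
by the computation at the end of the proof of \cref{Propo:CrossingGraphs}. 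Hölder's inequality bounds this by $\prod_j\normt{\check h_j}{3}$. The remaining factor $|e^{-is\tau_0\omega^\lambda(k_0)}f_0(k_0)|=|f_0(k_0)|$ gives \eqref{eq:Gbound2}.

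\textbf{Step 3: prove \eqref{eq:Gbound3}.} Keep the $k_0$-dependence. By Step 2, $\check{G}$ is a superposition over $y,y'$ of functions of the form $\check{(\hat f_0 e^{-is\tau_0\omega^\lambda})}\ast \bigl(h_1h_2h_3\bigr)$. Up to orientation conventions, which do not affect norms, this equals $(U^\lambda_{\tau_0 s}\check f_0)\ast(h_1h_2h_3)$. Applying $U^\lambda_t$ gives
\begin{equation*}
	U^\lambda_t\check G = p^\lambda_{t+\tau_0 s}\ast \check f_0\ast(h_1h_2h_3).
\end{equation*}
Young's inequality then bounds the $\ell^3$ norm by
\begin{equation*}
	\normt{p^\lambda_{t+\tau_0 s}}{3}\,\normt{\check f_0}{1}\,\normt{h_1h_2h_3}{1},
\end{equation*}
and Hölder gives $\normt{h_1h_2h_3}{1}\le\prod_j\normt{h_j}{3}$. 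Restoring the factor $\normt{V}{\ell^1}^2$ from Step 1 completes the bound.

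\textbf{Main obstacle.} The only delicate point is the bookkeeping in Step 1. One must check that every admissible $\widetilde U$ factorizes so that each momentum phase attaches to a single $f_j$, and that the $k_0$-phase is harmless: it either cancels via $k_0=k_1+k_2+k_3$ or attaches to $f_0$. In the latter case it becomes a translation of $\check f_0$, which does not change $\normt{\check f_0}{1}$. The remaining subtlety is sign and orientation conventions in the Fourier transform, which do not affect any norm.
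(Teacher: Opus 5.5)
Your proof is correct and has the same skeleton as the paper's. You move to physical space so the momentum convolution becomes $\sum_x e^{-2\pi i k_0\cdot x}h_1h_2h_3(x)$. You use H\"older with three $\ell^3$ factors for the pointwise bound. For the second bound you write $e^{-it\omega^\lambda}G$ as $e^{-i(t+\tau_0 s)\omega^\lambda}f_0F$ and apply Young.

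The real difference is how you handle $\widetilde U$. The paper uses the $\delta$-function to rewrite $\widehat V(k_1+k_2)=\widehat V(k_0-k_3)$ and $\widehat V(k_1+k_3)=\widehat V(k_0-k_2)$. It then introduces the modulated kernel $V_{k_0}(x)=e^{2\pi i k_0\cdot x}V(-x)$ and lets it act by convolution on one or two of the $h_i$, paying $\|V_{k_0}\|_{\ell^1}=\|V\|_{\ell^1}$ per factor by Young. You instead expand $\widetilde U$ in $(k_1,k_2,k_3)$ as an absolutely convergent superposition of modulations, i.e.\ translations of the $\check f_i$, and pull the $(y,y')$-sum out with the triangle inequality.

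The two are equivalent, since convolution with $V$ is a superposition of translations, and the paper's version is more compact. Yours is cleaner for the second estimate, which needs $\|\check F\|_{\ell^1}$. In your decomposition, $\check F=\pm\sum_{y,y'}V(y)V(y')\,(\text{product of translates of }U^\lambda_{\tau_i s}\check f_i)$ is an honest $k_0$-independent function, so $\|\check F\|_{\ell^1}\le\|V\|_{\ell^1}^2\prod_i\|U^\lambda_{\tau_i s}\check f_i\|_{\ell^3}$ is immediate. In the paper's formulation the $h_i$ depend on $k_0$ through $V_{k_0}$. So ``$F$ is the Fourier transform of $h_1h_2h_3$'' holds only for each fixed $k_0$, and the step $\|\check F\|_{\ell^1}\le\prod_i\|h_i\|_{\ell^3}$ is made literal precisely by your expansion.

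Two minor points. Since you expand in $k_1,k_2,k_3$ only, the $k_0$-phase discussed in your last paragraph never arises. In Step~2 the H\"older bound should read $\prod_j\|h_j\|_{\ell^3}$, because the $h_j$ are already the physical-space functions.
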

\begin{proof}
	Since $p_t^{\lambda}$ is an inverse Fourier transform, $ p_t^{\lambda} \myp{x} = \myp{e^{-it \omega^{\lambda}}}\check{\mkern6mu} \myp{x}$, we have for any $f \in L^2 \myp{\T^d}$ and $t \in \R$,
	\begin{equation}
	\label{eq:free_evol1}
		U_t^{\lambda} \check{f}
		={} p_t^{\lambda} \ast \check{f}
		={} \myp{e^{-it \omega^{\lambda}} f}\check{\mkern6mu},
	\end{equation}
	so that
	\begin{equation}
	\label{eq:free_evol2}
		e^{-it \omega^{\lambda} \myp{k}} f \myp{k}
		={} \myp{U_t^{\lambda} \check{f}}\hat{\mkern6mu} \myp{k}
		={} \sum_{x \in \Z^d} e^{-2 \pi i k \cdot x} \myp{U_t^{\lambda} \check{f}} \myp{x}.
	\end{equation}
	The idea is to use the convolution structure, as in \eqref{eq:Gbound1}, but we cannot immediately take the absolute value inside the integral, so we have to be a bit more careful with the interaction factor.
	We write
	\begin{equation}
	\label{eq:Gexpr}
		G_{s,\tau}^{\lambda} \myp{k_0}
		={} f_0 \myp{k_0} e^{-i \tau_0 s \omega^{\lambda} \myp{k_0}} F \myp{k_0},
	\end{equation}
	with
	\begin{equation}
	\label{eq:Fdef}
		F \myp{k_0} = \int_{\myp{\T^d}^3} \ids k_1 \ids k_2 \ids k_3 \delta \myp{k_0-k_1-k_2-k_3} \widetilde{U} \myp{k_1,k_2,k_3} \prod_{i=1}^3 e^{-i s \tau_i \omega^{\lambda} \myp{k_i}} f_i \myp{k_i},
	\end{equation}
	Note that because of the $\delta$-function, we have $\widehat{V} \myp{k_1+k_2} = \widehat{V} \myp{k_0-k_3}$ and $\widehat{V} \myp{k_1+k_3} = \widehat{V} \myp{k_0-k_2}$.
	Writing $V_{k_0} \myp{x} \coloneq e^{2\pi i k_0 \cdot x} V \myp{-x}$, so that $\widehat{V} \myp{k_0-k_3} = \widehat{V}_{k_0} \myp{k_3}$, and using \eqref{eq:free_evol2}, we can express $F$ as the Fourier transform of a single function, evaluated at $k_0$.
	For instance, if $\widetilde{U} \myp{k_1,k_2,k_3} = \widehat{V} \myp{k_0-k_3}^2 $, we have
	\begin{align*}
		F \myp{k_0}
		={}&	\begin{aligned}[t]	
					\int_{\myp{\T^d}^3} \ids k_1 \ids k_2 \ids k_3 \delta\myp{k_0-k_1-k_2-k_3} &\myp{U_{\tau_1 s}^{\lambda} \check{f}_1}\hat{\mkern6mu} \myp{k_1} \myp{U_{\tau_2 s}^{\lambda} \check{f}_2}\hat{\mkern6mu} \myp{k_2} \\
					\times& \myp{ V_{k_0} \ast V_{k_0} \ast U_{\tau_3 s}^{\lambda} \check{f}_3}\hat{\mkern6mu} \myp{k_3}
				\end{aligned} \\
		={}& \myp{U_{\tau_1 s}^{\lambda} \check{f}_1}\hat{\mkern6mu} \ast \myp{U_{\tau_2 s}^{\lambda} \check{f}_2}\hat{\mkern6mu} \ast \myp{ V_{k_0} \ast V_{k_0} \ast U_{\tau_3 s}^{\lambda} \check{f}_3}\hat{\mkern6mu} \myp{k_0} \\
		={}& \myp[\big]{ \myp{U_{\tau_1 s}^{\lambda} \check{f}_1} \myp{U_{\tau_2 s}^{\lambda} \check{f}_2} \myp{ V_{k_0} \ast V_{k_0} \ast U_{\tau_3 s}^{\lambda} \check{f}_3} }\hat{\mkern6mu} \myp{k_0},
	\end{align*}
	and similarly for the other possible cases of $\widetilde{U}$.
	We conclude that $F$ can be written
	\begin{equation*}
		\label{eq:Fmdef}
		F \myp{k_0}
		={} \sum_{x \in \Z^d} e^{-2 \pi i k_0 \cdot x} \prod_{i=1}^3 h_i \myp{x},
	\end{equation*}
	where the $h_i$ can be either $\pm U_{\tau_i s}^{\lambda} \check{f}_i$, $\pm V_{k_0} \ast \myp{U_{\tau_i s}^{\lambda} \check{f}_i}$, or $\pm V_{k_0} \ast V_{k_0} \ast \myp{U_{\tau_i s}^{\lambda} \check{f}_i}$, depending on the form of $\widetilde{U}$, but with $V_{k_0}$ appearing twice in total in the expression of $F$.
	Now, applying the H\"older and Young inequalities,
	\begin{equation}
	\label{eq:Fmbound}
		\normt{F}{L^{\infty} \myp{\T^d}} 
		\leq{} \normt{\check{F}}{\ell^1 \myp{\Z^d}} 
		\leq{} \prod_{i=1}^3 \normt{h_i}{\ell^3 \myp{\Z}^d} 
		\leq{} \normt{V}{\ell^1 \myp{\Z^d}}^2 \prod_{i=1}^3 \normt{U_{\tau_i s}^{\lambda} \check{f}_i}{\ell^3 \myp{\Z^d}},
	\end{equation}
	since $\normt{V_{k_0}}{\ell^1} = \normt{V}{\ell^1}$.
	This finishes the proof of \eqref{eq:Gbound2}.
	
	To see that \eqref{eq:Gbound3} holds, note first that
	\begin{equation*}
		\normt{\myp{f_0F}\check{\mkern6mu}}{\ell^1 \myp{\Z^d}}
		={} \normt{\check{f}_0 \ast \check{F}}{\ell^1 \myp{\Z^d}}
		\leq{} \normt{\check{f}_0}{\ell^1 \myp{\Z^d}} \normt{V}{\ell^1 \myp{\Z^d}}^2 \prod_{i=1}^3 \normt{U_{\tau_i s}^{\lambda} \check{f}_i}{\ell^3 \myp{\Z^d}}.
	\end{equation*}
	Returning to \eqref{eq:Gexpr} and using \eqref{eq:free_evol2}, we have for any $t \in \R$ that
	\begin{equation*}
		\myp{U_t^{\lambda} \check{G}_{s,\tau}^{\lambda}} \hat{\mkern6mu}
		={} e^{-i t \omega^{\lambda}} G_{s,\tau}^{\lambda}
		={} e^{-i \myp{t + \tau_0 s} \omega^{\lambda}} f_0 F
		={} \myp{U_{t + \tau_0 s}^{\lambda} \myp{f_0F}\check{\mkern6mu}} \hat{\mkern6mu},
	\end{equation*}
	so \eqref{eq:Gbound3} now follows by \eqref{eq:free_evol1} and Young's inequality.
\end{proof}
We will need to control the time-integrability of iterations of $G_{s,\tau}$, i.e., factors of the form \eqref{eq:Gdef} where the input functions $f_i$ are themselves of the form \eqref{eq:Gdef}.
\begin{proposition}
\label{prop:Gbounds}
	Let $G^{\lambda}$ be a factor of the form \eqref{eq:Gdef} obtained by iteration of $M$ leading motive integrals, and let $s_m \in \R$, $\tau_m \in \Set{\pm 1}^4$ be the corresponding time variables and parities.
	Then, for each $m = 1,\dotsc,M$ there are (possibly empty) sets $B_{m,i} \subseteq \Set{1,\dotsc,m-1}$, $i=1,2,3$, such that for any $k \in \T^d$,
	\begin{equation}
	\label{eq:Gbound4}
		\abs{G^{\lambda} \myp{k}}
		\leq{} \abs{h_0 \myp{k}} c_1^{3M} \normt{V}{\ell^1}^{2M} \prod_{m=1}^M \prod_{i=1}^3 \normt[\big]{p_{\tau_{m,i}s_m + \sum_{j\in B_{m,i}} \tau_{j,i} s_j}^{\lambda}}{\ell^3},
	\end{equation}
	where $h_0 \in \Set{1,W^0,1-W^0}$ and $c_1 = \max\Set{1,\normt{\check{W}^0}{\ell^1}, \normt{\myp{1-W^0}\check{\mkern6mu}}{\ell^1}}$.
	Furthermore, there is a subset $B_M \subseteq \Set{1,\dotsc,M}$ such that for any $t \in \R$,
	\begin{equation}
	\label{eq:Gbound5}
		\normt{U_t^{\lambda} \check{G}^{\lambda} }{\ell^3}
		\leq{} c_1^{3M+1} \normt{V}{\ell^1}^{2M} \normt[\big]{p_{t + \sum_{j\in B_M} \tau_{j,i} s_j}^{\lambda}}{\ell^3} \prod_{m=1}^M \prod_{i=1}^3 \normt[\big]{p_{\tau_{m,i}s_m + \sum_{j\in B_{m,i}} \tau_{j,i} s_j}^{\lambda}}{\ell^3}.
	\end{equation}
	
	In particular, if the modified dispersion relation $\omega^{\lambda}$ satisfies \cref{ass:DR2}, then \eqref{eq:Gbound4} provides an upper bound $\abs{G^{\lambda} \myp{k}} \leq \widetilde{G} \myp{s}$, uniformly in $k$ and $\lambda$, such that $\widetilde{G}$ is integrable on $\R^M$.
\end{proposition}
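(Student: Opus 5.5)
The plan is an induction on the number $M$ of iterated leading-motive integrals, using \cref{lemma:Gbound} as the single-step estimate. The basic observation is that each leading-motive integral has the form \eqref{eq:Gdef}, and each input $f_i$ is one of two things. It is either an elementary factor from $\Set{1,W^0,1-W^0}$, which comes from a pairing of the motive. Or it is itself a factor $G^{\lambda}$ of the same type, obtained from motives attached further down at that leg; the motive's top leg may likewise carry a previously built factor. I order the $M$ motives so that motive $m$ only takes as inputs factors built from motives with index $<m$. The sets $B_{m,i}$ then record which earlier time variables are accumulated, through the free-evolution composition $U_t^{\lambda}U_{s}^{\lambda}=U_{t+s}^{\lambda}$, along the chain of factors feeding into leg $i$ of motive $m$.

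For \eqref{eq:Gbound4}, I apply \eqref{eq:Gbound2} to the outermost motive $M$. This gives $\abs{f_0(k)}\normt{V}{\ell^1}^2\prod_{i}\normt{U^{\lambda}_{\tau_{M,i}s_M}\check f_i}{\ell^3}$. The bound on $\abs{f_0}$ iterates: either $f_0=h_0$, or $f_0$ is again a $G$-factor and I apply \eqref{eq:Gbound4} inductively. For each leg $i$ there are again two cases. If $f_i=h\in\Set{1,W^0,1-W^0}$, then $\normt{U_t^{\lambda}\check h}{\ell^3}\le\normt{\check h}{\ell^1}\normt{p_t^{\lambda}}{\ell^3}\le c_1\normt{p_t^{\lambda}}{\ell^3}$ by Young's inequality; for $h=1$ one has $\check h=\delta_0$ and the factor is exactly $\normt{p_t^{\lambda}}{\ell^3}$. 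If $f_i$ is an earlier $G$-factor, I invoke the induction hypothesis \eqref{eq:Gbound5} with $t=\tau_{M,i}s_M$. This produces $\normt{p^{\lambda}_{\tau_{M,i}s_M+\sum_{j\in B}\tau_js_j}}{\ell^3}$ together with the product of the bounds for the earlier motives. Setting $B_{M,i}:=B$ and counting the constants (each motive contributes at most three $c_1$'s and two $\normt{V}{\ell^1}$'s) yields \eqref{eq:Gbound4}.

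For \eqref{eq:Gbound5}, I apply \eqref{eq:Gbound3} to the outermost motive. The extra $\normt{\check f_0}{\ell^1}$ is at most $c_1$ when $f_0=h_0$. If instead $f_0$ is a $G$-factor, I rewrite $U_t^{\lambda}\check G$ so that the evolution time shifts by $\tau_0 s_M$ and recurse, which generates the set $B_M$. This recursion is the step I expect to be the main obstacle: one must check that the accumulated time shifts always combine linearly into a single propagator, and that at each level each $s_m$ still appears with a fresh, unshared role.

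For the integrability statement, I change variables triangularly. Ordering the variables by $m$, for fixed $s_1,\dots,s_{m-1}$ the map $s_m\mapsto\tau_{m,i}s_m+\sum_{j\in B_{m,i}}\tau_{j,i}s_j$ is a shift of $\pm s_m$. By \cref{ass:DR2} each $\normt{p_t^{\lambda}}{\ell^3}\lesssim\expec{t}^{-(1+\delta)/3}$, and the three legs give a product of three such factors in $s_m$. I then bound this product by Hölder's inequality in $s_m$, using that each $\expec{\cdot}^{-(1+\delta)}$ is integrable, with constants uniform in the shifts. Integrating in the order $s_M,s_{M-1},\dots,s_1$ shows $\widetilde G\in L^1(\R^M)$, uniformly in $k$ and $\lambda$.
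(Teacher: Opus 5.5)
Your proposal is essentially the paper's proof: a simultaneous induction on $M$ for \eqref{eq:Gbound4} and \eqref{eq:Gbound5} built on \cref{lemma:Gbound}, with $B_M$ generated by the chain of top legs $f_0$, and integrability obtained by H\"older in $s_M, s_{M-1},\dots,s_1$ in turn. The step you flag closes as you indicate, provided you never use \eqref{eq:Gbound3} with $\normt{\check f_0}{\ell^1}$ when $f_0$ is itself a $G$-factor, since that norm contains the inner free evolution and grows in the inner time variable. Instead, write $U_t^{\lambda}\check G^{\lambda} = (U^{\lambda}_{t+\tau_{M,0}s_M}\check f_0)\ast\check F$ with $F$ as in \eqref{eq:Fdef}, apply Young's inequality $\ell^3\ast\ell^1\to\ell^3$ together with \eqref{eq:Fmbound}, and then use the induction hypothesis \eqref{eq:Gbound5} at time $t+\tau_{M,0}s_M$; this is a one-step version of the paper's full unrolling \eqref{eq:Gexpr2}--\eqref{eq:Gbound6}.
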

\begin{proof}
	In order to prove \eqref{eq:Gbound4} and \eqref{eq:Gbound5}, we will need an extension of \eqref{eq:Gbound3}.
	First, we need to expand $G^{\lambda}$ in the first argument $f_0$ until we see either a factor $1$, $W^0$, or $\widetilde{W}^0$.
	More precisely, we may write $G^{\lambda} = G_{r_1,\sigma_1}^{\lambda} \myb{f_{1,0},f_{1,1},f_{1,2},f_{1,3}} $, where each $f_{1,i}$ is either of the form \eqref{eq:Gdef} or one of $1,W^0,\widetilde{W}^0$, and $r_1=s_M$, $\sigma_1=\tau_M$.
	If $f_{1,0}$ is of the form \eqref{eq:Gdef}, we keep expanding in the first argument until it is no longer possible, at which point we have for some $ 1 \leq N \leq M$, and some $r_n$, $\sigma_n$ which are a subset of the $s_m$, $\tau_m$,
	\begin{equation}
	\label{eq:Gexpr2}
		G^{\lambda} \myp{k}
		={} h_0 \myp{k} \prod_{n=1}^N e^{-i r_n \sigma_{n,0} \omega^{\lambda} \myp{k}} \prod_{n=1}^N F_n \myp{k},
	\end{equation}
	where $h_0 \in \Set{1,W^0,\widetilde{W}^0}$, and $F_n$ is of the form \eqref{eq:Fdef} with some $f_{n,i}$ obtained from other iterations, that is, $f_{n,i}$ is either of the form \eqref{eq:Gdef}, or $f_{n,i} \in \Set{1,W^0,\widetilde{W}^0}$.
	In particular, as in \eqref{eq:Fmbound},
	\begin{equation}
	\label{eq:Fmbound2}
		\normt{F_n}{L^{\infty}} 
		\leq{} \normt{\check{F}_n}{\ell^1} 
		\leq{} \normt{V}{\ell^1}^2 \prod_{i=1}^3 \normt{U_{\sigma_{n,i}r_n}^{\lambda} \check{f}_{n,i}}{\ell^3}.
	\end{equation}
	Denoting $H \coloneq{} h_0 \prod_{n=1}^N F_n$, we then have by \eqref{eq:Fmbound2} that
	\begin{equation*}
		\normt{\check{H}}{\ell^1}
		={} \normt{\check{h}_0 \ast \check{F}_1 \ast \cdots \ast \check{F}_N}{\ell^1}
		\leq{} \normt{\check{h}_0}{\ell^1} \normt{V}{\ell^1}^{2N} \prod_{n=1}^N \prod_{i=1}^3 \normt{U_{\sigma_{n,i}r_n}^{\lambda} \check{f}_{n,i}}{\ell^3}.
	\end{equation*}
	Returning to \eqref{eq:Gexpr2} and using \eqref{eq:free_evol2}, we have for any $t \in \R$ that
	\begin{equation*}
		\myp{U_t^{\lambda} \check{G}^{\lambda}}\hat{\mkern6mu}
		={} e^{-i t \omega^{\lambda}} G^{\lambda}
		={} e^{-i \myp[\big]{t + \sum_{n=1}^N r_n \sigma_{n,0}} \omega^{\lambda}} H
		={} \myp{U_{t + \sum_{n=1}^N r_n \sigma_{n,0}}^{\lambda} \check{H}} \hat{\mkern6mu},
	\end{equation*}
	so by \eqref{eq:free_evol1}, we conclude
	\begin{equation}
	\label{eq:Gbound6}
		\normt{U_t^{\lambda} \check{G}^{\lambda}}{\ell^3}
		\leq{} \normt{\check{h}_0}{\ell^1} \normt{V}{\ell^1}^{2N} \normt[\big]{p_{t + \sum_{n=1}^N r_n \sigma_{n,0}}^{\lambda}}{\ell^3} \prod_{n=1}^N \prod_{i=1}^3 \normt{U_{\sigma_{n,i}r_n}^{\lambda} \check{f}_{n,i}}{\ell^3}.
	\end{equation}
	
	We now return to the bounds \eqref{eq:Gbound4} and \eqref{eq:Gbound5}, which we prove simultaneously by induction.
	Note that the bounds have already been established for $M=1$ by \cref{lemma:Gbound}, and note that both bounds also hold for $M=0$, in the case where $G^{\lambda} = h_0$ is just a given function.
	Thus we only need to do the induction step, so assume that \eqref{eq:Gbound4} and \eqref{eq:Gbound5} hold for any factor obtained by $K<M$ iterations of leading motive integrals.
	Write now
	\begin{equation*}
		G^{\lambda} = G_{s_M,\tau_M}^{\lambda} \myb{G_0,G_1,G_2,G_3},
	\end{equation*}
	where each $G_p$, $p=0,1,2,3$, is an iteration of $K_p$ applications of $G_{s,\tau}^{\lambda}$, with $K_p \in \Set{0,\dotsc,M-1}$ and $\sum_{p=0}^3 K_p = M-1$.
	Note that if a time variable $s_m$ appears in $G_p$, then it does not appear in any of the other factors $G_q$, $q \neq p$.
	In other words, there are (possibly empty) disjoint subsequences $\myp{m_n^p}_{n=1}^{K_p} \subseteq \Set{1,\dotsc,M-1}$ whose union is $\Set{0,\dotsc,M-1}$, such that the time variables in $G_p$ are given by $s_{m_n^p}$.
	Then, by the induction hypothesis,
	\begin{equation*}
		\abs{G_0 \myp{k}}
		\leq{} \abs{h_0 \myp{k}} c_1^{3K_0} \normt{V}{\ell^1}^{2K_0} \prod_{n=1}^{K_0} \prod_{i=1}^3 \normt[\big]{p_{\tau_{m_n^0,i} s_{m_n^0} + \sum_{j\in B_{m_n^0,i}} \tau_{j,i} s_j}^{\lambda}}{\ell^3},
	\end{equation*}
	and for $p=1,2,3$,
	\begin{equation*}
		\normt{U_t^{\lambda} \check{G}_p}{\ell^3}
		\leq{} c_1^{3K_p+1} \normt{V}{\ell^1}^{2K_p} \normt[\big]{p_{t + \sum_{j\in B_{K_p}} \tau_{j,i} s_j}^{\lambda}}{\ell^3} \prod_{n=1}^{K_p} \prod_{i=1}^3 \normt[\big]{p_{\tau_{m_n^p,i} s_{m_n^p} + \sum_{j\in B_{m_n^p,i}} \tau_{j,i} s_j}^{\lambda}}{\ell^3}.
	\end{equation*}
	Now, combining these with \eqref{eq:Gbound2} applied to $G^{\lambda}$, we obtain \eqref{eq:Gbound4} by rearranging the factors and setting $B_{M-1,i} \coloneq B_{K_i}$.
	
	To prove \eqref{eq:Gbound5}, the argument is the same, except one starts from \eqref{eq:Gbound6} and applies the induction hypothesis to each of the factors $\normt{U_{\sigma_{n,i}r_n}^{\lambda} \check{f}_{n,i}}{\ell^3}$.
	
	Finally, to easily see that the right hand side of \eqref{eq:Gbound4} is integrable in $s_1,\dotsc,s_M$ uniformly in $k$ and $\lambda$ when $\omega^{\lambda}$ satisfies \cref{ass:DR2}, we note first that $\abs{h_0 \myp{k}} \leq 1$ for all $k$.
	Then we can apply the $\ell^{3}$-dispersivity to get rid of the $\lambda$ dependence, and iteratively integrate out the time variables, starting with $s_M$.
	At the first step, exactly three factors depend on $s_M$, the integral of which we can estimate by
	\begin{align*}
		\int_{\R} \ids s_M \prod_{i=1}^3 \normt[\big]{p_{\tau_{M,i}s_M + \sum_{j\in B_{M,i}} \tau_{j,i} s_j}^{\lambda}}{\ell^3}
		\lesssim{}& \int_{\R} \ids s_M \prod_{i=1}^3 \expec[\Big]{\tau_{M,i}s_M + \sum_{j\in B_{M,i}} \tau_{j,i} s_j}^{-\frac{1+\delta}{3}} \\
		\leq{}& \int_{\R} \ids s_M \expec{s_M}^{-1-\delta},
	\end{align*}
	by use of H\"older's inequality.
	Now, only three of the remaining factors in \eqref{eq:Gbound4} depend on $s_{M-1}$, so we can repeat this procedure $M-1$ more times to integrate out all the time variables.
\end{proof}

\subsection{Leading and nested graphs}
Here we provide the final bounds that we need on the leading and nested graphs in order to finish the proof of the main theorem in the next section.
The contributions from the presence of leading motives in the graphs need to be estimated carefully, and we need to remove the momentum cut-offs $\Phi_1$ in order to use the structure of the $G_{s,\tau}$-factors from \cref{sec:AddLeadingMotive}.
\begin{lemma}[Removing the cut-off]
\label{lem:RemoveCutoff}
	There is a constant $c>0$ depending on $\omega$, and a $C>0$ depending on $\omega,f,g,$ such that the following estimates hold true for all relevant fully paired graphs,
	\begin{align}
		\MoveEqLeft[10] \abs[\Big]{\mathcal{F}^{err,pair}_n \myp{S,J,\ell,\ell',s,\kappa} - \mathcal{F}^{err,pair}_n \myp{S,J,\ell,\ell',s,\kappa}|_{\Phi_1 \to 1} } \nn \\
		\leq{}& C \normt{\widehat{V}}{\infty}^{2n} n \lambda^{\gamma'} e^{s\lambda^2} \expec{cs\lambda^2}^n \expec{\ln n} \expec{\ln\lambda}^{1+n}, \label{eq:RemoveCutoff1} \\
		\MoveEqLeft[10] \abs[\Big]{ \mathcal{F}^{main,pair}_n \myp{S,\ell,t\lambda^{-2},\kappa} - \mathcal{F}^{main,pair}_n \myp{S,\ell, t\lambda^{-2}, \kappa}|_{\Phi_1 \to 1} } \nn \\
		\leq{}& C \normt{\widehat{V}}{\infty}^n n \lambda^{\gamma'} e^{t} \expec{ct}^{n/2} \expec{\ln n} \expec{\ln\lambda}^{1+n/2}, \label{eq:RemoveCutoff2}
	\end{align}
	with $\gamma'$ as in \eqref{Def:Para1}.
\end{lemma}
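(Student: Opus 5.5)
The plan is to write $\Phi_1 = 1-\Phi_0$ at each of the interaction vertices and telescope. Ordering the vertices $v_1,\dotsc,v_N$ ($N=2n$ for the error term, $N=n$ for the main term), we write the difference of the amplitudes as a sum of $N$ terms. In the $j$-th term the vertices below $v_j$ carry $\Phi_1$, the vertices above carry $1$, and $v_j$ itself carries $\Phi_0$. This accounts for the factor $n$ on the right hand side. It remains to show that each single term is bounded by the basic estimate of a fully paired graph, times an extra factor $\lambda^{\gamma'}$ up to logarithms.

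For a fully paired graph we have $n_1=r=0$ and $n_0=n_2=N/2$. With the absolute value taken up to the time integral, \cref{lemma:BasicGEstimate2} (resp.\ its main-term analogue) gives the bound $e^{s\lambda^2}(s\lambda^2)^{n_0}/n_0!\,\expec{\ln\lambda}^{1+n_2}$ times constants. At degree zero vertices no integration uses $\Phi_1$, and degree two vertices only use \cref{lemma:degree2vertex}. So the only input from $\Phi_1$ in that lemma is the bound $\abs{\Psi_1}\le\normt{\widehat V}{\infty}$, and the same procedure applies verbatim with $\Phi_1$ replaced by $1$ or by $\Phi_0$. Hence each telescoped term obeys the same bound. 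The remaining task is to extract smallness from $\Phi_0$ at $v_j$.

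By \eqref{PsiProperty4}, $\Phi_0$ at $v_j$ is bounded by a sum of three indicators $\1(d(k_e+k_{e'},M^{\mathrm{sing}})<\lambda^b)$ with $e,e'\in\mathscr{E}_-(v_j)$. Since the graph is relevant, \cref{lemma:dependenceoftwoedges} shows that $k_e+k_{e'}$ depends on some free momentum. We then argue exactly as in the proof of \cref{lemma:BasicGEstimate3}, taking the last-created free edge $f$ on which $k_e+k_{e'}$ depends.

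\emph{Case of a degree two vertex with free momenta $k_1,k_2$.} By \cref{lemma:freeedgesorder2}, $k_e+k_{e'}=\pm k_i+q$. If $v_j$ is the vertex owning $f$, we drop the resolvent at that vertex and use \eqref{PsiProperty5}. The cost is $\lambda^{-2}$ from the trivial $L^\infty$ bound on the resolvent at the top of the contour, against a gain of $\lambda^{b(d-1)}$, which is unfavorable. So instead we keep the resolvent and integrate the other free momentum with the constructive interference bound \cref{lemma:degree1vertex}. This requires $F_1$, which we do not have when the cut-off is removed. I therefore plan to use \cref{ass:DR3} directly: after fixing $k_i$, the remaining one-momentum resolvent integral is bounded by $\expec{\ln\beta}^2/d(k_i+q',M^{\mathrm{sing}})$. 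Integrating this against the indicator of a $\lambda^b$-neighbourhood of $M^{\mathrm{sing}}$, using $d\geq4$ so that the codimension of $M^{\mathrm{sing}}$ is at least $3$, gives $\lambda^{b(d-3)}$ or better. Combined with $\lambda^{-b}$-type losses, this should still leave a positive power, at least $\lambda^{\gamma'}$, since $\gamma'\le 1/4$ and $b=3/4$.

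\emph{Case where $f$ ends at the top fusion vertex, or $v_j$ has degree zero.} If $f$ ends at the top fusion vertex, we simply use \eqref{lemma:BasicGEstimate3:E2}. If $v_j$ has degree zero and $f$ lies higher up, the indicator is carried until the vertex of $f$ is integrated, and the degree two argument above applies.

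The main obstacle is precisely this degree two step: bounding $\iint \1(d(\pm k_i+q,M^{\mathrm{sing}})<\lambda^b)\,\abs{\alpha-\Theta+i\beta}^{-1}$ by $\lambda^{\gamma'}\expec{\ln\lambda}$. It must be done without a cut-off and without spending the logarithmic budget. I would first try the oscillatory representation \eqref{eq:resolvent_integral}, followed by \cref{ass:DR3} in the non-indicator variable, and if needed \cref{prop:osc_bound} to handle the small-$s$ region.
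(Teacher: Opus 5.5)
Your architecture matches the paper's. You telescope $\Phi_1=1-\Phi_0$ over the vertices; your choice of which vertices carry $\Phi_1$ and which carry $1$ is the reverse of the paper's bottom-to-top expansion, which is immaterial. You run the basic scheme using only $\abs{\Psi}\le\normt{\widehat V}{\infty}$, bound $\Phi_0$ by the indicators of \eqref{PsiProperty4}, and use relevance and \cref{lemma:dependenceoftwoedges} to see that $k_e+k_{e'}$ depends on a free momentum. You then climb to the first vertex whose free momenta it involves.

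The gap is at the step you flag as the main obstacle, and it comes from an arithmetic slip. You reject the trivial bound as unfavorable, but with $b=3/4$ and $d\ge4$ one has $b(d-1)\ge 9/4>2$. On the integration contour, $\abs{z-\gamma(m')}^{-1}\le\lambda^{-2}$, because $\mathrm{Im}\,\gamma\le0$ and the top edge has $\mathrm{Im}\,z=\lambda^2$. Integrating the indicator over the two free momenta of $v_{m'}$ via \eqref{lemma:BasicGEstimate3:E2} then gives $\lambda^{b(d-1)-2}\le\lambda^{1/4}\le\lambda^{\gamma'}$ in place of the usual $\langle\ln\lambda\rangle$. After that the basic iteration continues unchanged. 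This is exactly the paper's proof and is where $d\ge4$ enters; no appeal to \cref{ass:DR3} is needed.

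The substitute you propose does not close as written. You claim $\lambda^{b(d-3)}$ from the indicator integral and then charge ``$\lambda^{-b}$-type losses''. In $d=4$ that leaves exponent $b(d-4)=0$, i.e.\ no gain. The route can be repaired:
\begin{itemize}
\item Since you never use $F_1$, there is no $\lambda^{-b}$ loss at all.
\item $d(\cdot,M^{\mathrm{sing}})^{-1}\in L^p(\T^d)$ for $p<d-1$, so by H\"older the integral over $k_i$ of $\1\bigl(d(\pm k_i+q,M^{\mathrm{sing}})<\lambda^b\bigr)\,d(k_i+q',M^{\mathrm{sing}})^{-1}$ is $O(\lambda^{b(d-2)-\epsilon})$ for any $\epsilon>0$. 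This is an even better gain.
\item \cref{prop:osc_bound} is unnecessary, since the small-$s$ part of $F(s;\beta)$ in \eqref{eq:resolvent_integral} is already integrable.
\end{itemize}
But this is extra work to replace a one-line estimate that already suffices.
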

\begin{proof}
	Consider a relevant fully paired graph with $N$ interaction vertices.
	The idea is to iteratively expand the cut-off functions $\Phi_1 = 1 - \Phi_0$ at each interaction vertex $v_m$, going from the bottom of the graph to the top.
	At each step we obtain two terms, one that corresponds to replacing $\Phi_1$ with $1$ at the vertex $v_m$, which we use to continue the iteration, and one that corresponds to replacing $\Phi_1$ with $\Phi_0$ at $v_m$, which we need to estimate.
	
	In the amplitude corresponding to the term with $\Phi_0$, we take the absolute value inside all the momentum integrals, and then estimate any remaining $\Phi_1$ factors and all $W^0$ factors by one, and all interaction factors by $\normt{\widehat{V}}{\infty}$.
	The resulting amplitude looks almost identical to the one in \eqref{eq:BasicGEstimate2}, except there is now a single $\Phi_0$ factor at $v_m$ instead of all the usual interaction factors $\Phi_1$.
	Then, we estimate the time integral as in the basic scheme and obtain an expression with resolvents $\abs{z-\gamma \myp{j}}^{-1}$ for the degree two vertices, of which there are $N/2$ since the graph is fully paired.
	Now, we estimate the $\Phi_0$ factor using \cref{Propo:Cutoff},
	\begin{equation*}
		\Phi_0 \myp{\pm k} \leq \sum\limits_{e_i < e_j} \1 \myp[\big]{ d \myp{\pm \myp{k_{e_i} + k_{e_j}}, M^{\mathrm{sing}}} < \lambda^b },
	\end{equation*}
	where the sum runs over pairs of edges in $\mathscr{E}_- \myp{v_m}$.
	
	Fix now any pair $e_i<e_j$ and note that since the graph is relevant, we must have $k_{e_i} + k_{e_j} \neq 0$ (because of the initial presence of the $\Phi_1$ factors), so by \cref{lemma:dependenceoftwoedges}, $k_{e_i} + k_{e_j}$ must depend on some free momenta.
	We now iterate through the momentum integrals of the degree two vertices, from the bottom to the top of the graph, until we reach a vertex where $k_{e_i} + k_{e_j}$ depends on the free momenta.
	If it only depends on the free momentum at the top fusion vertex, then we obtain an additional factor $\lambda^{b \myp{d-1}}$ from integrating the characteristic function, by \cref{Propo:Cutoff}.
	Otherwise, $k_{e_i} + k_{e_j}$ depends double loop of some degree two vertex $v_{m'}$, and we have to estimate
	\begin{equation*}
		\iint_{\myp{\T^d}^2} \ids k_1 \ids k_2 \frac{\1 \myp[\big]{ d \myp{\pm \myp{k_{e_i} + k_{e_j}}, M^{\mathrm{sing}}} < \lambda^b }}{\abs{z-\gamma \myp{m'}}},
	\end{equation*}
	where $k_1,k_2$ are the free momenta of $v_{m'}$.
	However, trivially estimating the resolvent $\abs{z-\gamma \myp{m'}}^{-1} \leq \lambda^{-2}$ and then integrating the characteristic functions yields a bound $\lambda^{b \myp{d-1}-2} \leq \lambda^{\frac{1}{4}} \leq \lambda^{\gamma'}$ for this vertex.
	After this, the iteration scheme of the basic estimate can be continued, yielding a contribution of $\expec{ \ln \lambda}$ from any other degree two vertex.
	In conclusion, we find that the amplitude with the extra characteristic function is bounded by
	\begin{equation*}
		C \normt{\widehat{V}}{\infty}^N \lambda^{\gamma'} e^{s\lambda^2} \expec{cs\lambda^2}^N \expec{\ln N} \expec{\ln\lambda}^{1+N},
	\end{equation*}
	which has to be multiplied by $N$ to give an upper bound on the difference in \eqref{eq:RemoveCutoff1}, \eqref{eq:RemoveCutoff2}, since we remove $\Phi_1$ in $N$ steps.
\end{proof}
\begin{remark}[Dimension]
	\cref{lem:RemoveCutoff} is one of two places where we use the assumption that $d \geq 4$, the other place being in \cref{lemma:BasicGEstimate3}.
	However, it might be worth noting that for the specific case of the nearest neighbour dispersion relation (with corresponding $V$) as in \cref{sec:verify_assumptions}, it is possible to combine the bound \eqref{eq:constr_interf3} with the representation \eqref{eq:resolvent_integral} to obtain a different estimate of the degree one vertex integrals in \cref{lemma:degree1vertex}, of the form
	\begin{equation*}
		\int_{\T^d} \ids k \frac{F_1 \myp{\sigma' k}}{\abs{\omega^{\lambda} \myp{k} + \sigma \omega^{\lambda} \myp{k_0-k} - \alpha + i\beta}}
		\lesssim{} \expec{ \ln \beta } \prod_{j=1}^d \frac{1}{d_{\T} \myp[\big]{k_{0,j}, \Set{0,\frac{1}{2}}}^{1/2}}.
	\end{equation*}
	Here, the right hand side is not uniform in $k_0$, but it has a better dependence on $\lambda$.
	Then, combining with H\"older's inequality, one may avoid using the trivial estimate $\lambda^{-2}$ on the resolvent factor in the proof above.
	This allows for removal of the cut-off functions also in three dimensions, in this specific case.
\end{remark}
\begin{proposition}[Leading graphs]
	\label{Propo:LeadingGraphs}
	There is a constant $c_0 > 0$ depending on $\omega$,$V$, and a $C>0$ depending on $\omega,V,f,g,$ such that the following estimates hold true for all leading graphs,
	\begin{align}
		&\abs[\Big]{\mathcal{F}^{err,pair}_n \myp{S,J,\ell,\ell',s,\kappa}|_{\Phi_1 \to 1}}
		\leq{} \frac{C}{n!} \myp{c_0\lambda^2 s}^n, \label{eq:LeadingGraphs1} \\
		&\abs[\Big]{\mathcal{F}^{main,pair}_n \myp{S,\ell,t\lambda^{-2},\kappa}|_{\Phi_1\to 1} }
		\leq{} \frac{C}{\myp{n/2}!} \myp{c_0 t}^{n/2}. 
	\label{eq:LeadingGraphs2}
	\end{align}
\end{proposition}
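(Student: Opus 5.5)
The approach is to use the recursive structure of leading graphs: every leading graph arises from the trivial loop (a single pairing) by iteratively attaching $M=n/2$ leading motives (for $\mathcal{F}^{main,pair}$ with $n$ vertices), respectively $M=n$ motives for $\mathcal{F}^{err,pair}$ with $2n$ vertices. With $\Phi_1\to 1$, \cref{lemma:AddMotive} lets us rewrite each attachment as a factor $-\lambda^2$, a momentum convolution $\delta(k_0-k_1-k_2-k_3)$, and a time splitting $\delta(r_2-r_1-r_0)$, with the explicit integrand $F_{\mathcal{M}}$ from \cref{table:motives}. By \cref{rem:leading_motives}, a leading motive preserves both the incoming momentum and the phase. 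So the only oscillating factor a motive contributes is $e^{\mp i r_1\Theta}$, living on its own short time slice $r_1$. All long time slices have $\mathrm{Re}\,\gamma=0$, as follows from \cref{prop:leadinggraphs}.

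The first step is to write the amplitude as a nested composition of the maps $G^{\lambda}_{s,\tau}$ of \eqref{eq:Gdef}. Each short-slice variable $r_1$ of a motive becomes an $s_m$, and the phases $e^{-ir_1\Theta}$ split into products $\prod_i e^{-is\tau_i\omega^\lambda(k_i)}$ as in \eqref{eq:Gdef}. Phases of motives attached below are carried along through the time variables, which is exactly what the sets $B_{m,i}$ in \cref{prop:Gbounds} record. The $\kappa$ factors satisfy $|e^{-r\kappa}|\le 1$ and are discarded; the $W^0$ factors are the leaf functions $h_0\in\{1,W^0,1-W^0\}$. The signs $\epsilon(S)$ are irrelevant since we bound absolute values.

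The second step is the time integration. The total time $s$ (resp. $t\lambda^{-2}$) is split among the $M+1$ long slices and the $M$ short slices $s_1,\dots,s_M$. We bound the integrand in absolute value by $\widetilde G(s_1,\dots,s_M)$ from \cref{prop:Gbounds}, uniformly in the momenta; the outer momentum integral is bounded by $\|\hat g\|_\infty\|\hat f\|_{\infty}$ or $\|\hat g\|_2^2$. We then extend the short-slice integrals to $\R^M$, which only increases the bound since everything is now nonnegative, and obtain $\int_{\R^M}\widetilde G\le C_1^M$. This uses \cref{ass:DR2} through the iterated Hölder step at the end of \cref{prop:Gbounds}, with constants $c_1^{3M}\|V\|_{\ell^1}^{2M}$. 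The remaining long slices have vanishing phase, so their integral is the volume of a simplex, bounded by $s^M/M!$. Together with the prefactor $\lambda^{2M}$ this gives $C(c_0\lambda^2 s)^M/M!$. With $M=n$ this is \eqref{eq:LeadingGraphs1}, and with $M=n/2$, $s=t\lambda^{-2}$, it is \eqref{eq:LeadingGraphs2}.

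The main obstacle is the first step: verifying that an arbitrary leading graph, including the interlacing $J$ between plus and minus trees in the error term, really factorizes as an iterated $G^\lambda$ expression. Motives may be attached inside other motives, so one must check that the time variables decouple as claimed, with each short slice contained in exactly one motive and long slices phase-free. I would handle this by induction on $M$. Removing the bottom-most motive, as in the proof of \cref{lemma:recollisions}, leaves a leading graph. Reattaching it replaces a leaf function $f_i$ by $G^\lambda_{s,\tau}[\dots]$, shifting the free evolution times of the affected edges, which are precisely the $\sum_{j\in B_{m,i}}$ terms. The interlacing only reorders long and short slices and does not affect this structure.
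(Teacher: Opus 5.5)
Your proposal is correct and follows essentially the same route as the paper. Phase preservation of the leading motives makes $\mathrm{Re}\,\gamma$ vanish on long slices and equal the motive's $\Theta$ on its short slice. The momenta are then resolved so that each motive's double-loop integral becomes a $G^{\lambda}_{s,\tau}$ factor, the iterated factor is bounded by the $k$- and $\lambda$-uniform integrable majorant of \cref{prop:Gbounds}, and the phase-free long slices are integrated as a simplex against the prefactor $\lambda^{2M}$. The only cosmetic difference is that the paper rescales the long-slice variables by $\lambda^2$ and uses the exact simplex volume \eqref{eq:timesimplex}, i.e.\ $(t-\lambda^2\sum_i s_i)^m/m!\le t^m/m!$, which is the same computation as your bound $s^M/M!$.
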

\begin{proof}
	The argument is the same for the two bounds, so we write it in detail only for $\mathcal{F}^{main,pair}_n$.
	Recall that any leading graph with $n=2m$ interaction vertices is built up from exactly $m$ leading motives.
	We start from \eqref{Lemma:Qmain:3} and consider first the phase factors $\gamma_j$ given by \eqref{eq:gamma_j},
	\begin{equation*}
		\gamma_j = \sum_{l=j+1}^n \Theta_{l-1:\ell_l} \myp{k,\sigma} - i \kappa_{n-j}.
	\end{equation*}
	Since the leading motives preserve the phase, meaning that $\Theta_{l-1:\ell_l} = -\Theta_{l:\ell_{l+1}} $ for all odd $l$, as discussed in \cref{rem:leading_motives}, the terms in the sum above cancel pairwise, implying for all $j$ that
	\begin{equation*}
		\mathrm{Re} \, \gamma_j
		={}	\begin{cases}
			0,						&	j \text{ even}, \\
			\Theta_{j-1:\ell_j},	&	j \text{ odd}.
		\end{cases}
	\end{equation*}
	Now, since there are two free edges for every leading motive in the graph, we can resolve the momentum constraints as usual and write, by abbreviating $U_j \coloneq U \myp{k_{j-1:\ell_j},\sigma_{j,\ell_j}}$ from \eqref{Def:Uinteraction},
	\begin{align*}
		\MoveEqLeft[4] \mathcal{F}^{main,pair}_{2m} \myp{S,\ell,t \lambda^{-2},\kappa} \big|_{\Phi_1\to 1} \nn \\
		={}& \myp{-i}^m \lambda^{2m} \epsilon \myp{S} \int_{\T^d} \ids k_{2m,1} \hat{g} \myp{k_{2m,1}}^{\ast} \hat{f} \myp{k_{2m,1}} \nn \\
		&\times \int_{\myp{\T^d}^{2m}} \prod_{j=1}^m \ids k_{j,1} \ids {k_{j,2}} \prod_{j=1}^{2m} \myb[\big]{ \sigma_{j,\ell_j} U_j } \prod_{A = \Set{j,l} \in S} W^0 \myp{k_{0,j},\sigma_{0,j}} \nn \\
		&\times \int_{\myp{\R_+}^{I_{0,2m}}} \ids r \delta \myp[\bigg]{ \frac{t}{\lambda^2} - \sum_{j=0}^{2m} r_j } \prod_{j=0}^{2m} e^{-r_j \kappa_{2m-j}} \prod_{j=1}^{m} e^{-ir_{2j-1} \Theta_{2j}}.
	\end{align*}
	Then, changing the time variables from $r$ to $\myp{\tilde{s},s}$ with $\tilde{s}_i = \lambda^2 r_{2i}$ for $i = 0,1,\dotsc,m$, and $s_i = r_{2i-1}$ for $i = 1,\dotsc,m$, we can rewrite the time integral,
	\begin{align*}
		\MoveEqLeft[4] \int_{\myp{\R_+}^{I_{0,2m}}} \ids r \delta \myp[\bigg]{ \frac{t}{\lambda^2} - \sum_{j=0}^{2m} r_j } \prod_{j=0}^{2m} e^{-r_j \kappa_{2m-j}} \prod_{j=1}^{m} e^{-ir_{2j-1} \Theta_{2j}} \\
		={}& \lambda^{-2m} \int_{\R_+^m} \ids s \prod_{j=1}^{m} e^{-s_j \kappa_{2m-\myp{2j-1}} } \prod_{j=1}^{m} e^{-is_j \Theta_{2j}} \nn \\
		&\times \int_{\R_+^{m+1}} \ids \tilde{s} \delta \myp[\Big]{t - \lambda^2 \sum_{i=1}^m s_i -\sum_{i=0}^{m} \tilde{s}_i } \prod_{j=0}^{m} e^{-\lambda^{-2} \tilde{s}_j \kappa_{2m-2j} }.
	\end{align*}
	Now, evaluating the momentum integrals for each motive leads to iterations of $G_{s_i,\tau_i}$ as in \eqref{eq:Gdef}, yielding a function $\widetilde{G}^{\lambda} \myp{k_{2m,1},s;S,\ell}$, which by \cref{prop:Gbounds} has an $s$-integrable upper bound $G \myp{s;S,\ell}$, independent of $k_{2m,1}$ and $\lambda$.
	That is, we can rewrite $\mathcal{F}^{main,pair}_{2m} \myp{S,\ell,t \lambda^{-2},\kappa}$ as
	\begin{align}
		\MoveEqLeft[4] \mathcal{F}^{main,pair}_{2m} \myp{S,\ell,t \lambda^{-2},\kappa} \big|_{\Phi_1\to 1} \nn \\
		={}& \myp{-i}^m \epsilon \myp{S} \int_{\T^d} \ids k \hat{g} \myp{k}^{\ast} \hat{f} \myp{k} \int_{\R_+^m} \ids s \prod_{j=1}^{m} e^{-s_j \kappa_{2m-\myp{2j-1}} } \widetilde{G}^{\lambda} \myp{k,s;S,\ell} \nn \\
		&\times \int_{\R_+^{m+1}} \ids \tilde{s} \delta \myp[\Big]{t - \lambda^2 \sum_{i=1}^m s_i -\sum_{i=0}^{m} \tilde{s}_i } \prod_{j=0}^{m} e^{-\lambda^{-2} \tilde{s}_j \kappa_{2m-2j} }.
	\label{eq:LeadingGraphs4}
	\end{align}
	Finally, taking absolute values inside the integrals and simply estimating the $\kappa$-dependent factors by $1$, we obtain \eqref{eq:LeadingGraphs2} after noting that
	\begin{equation}
		\int_{\R_+^{m+1}} \ids \tilde{s} \delta \myp[\Big]{t - \lambda^2 \sum_{i=1}^m s_i -\sum_{i=0}^{m} \tilde{s}_i }
		={} \frac{1}{m!} \myp[\Big]{ t - \lambda^2 \sum_{i=1}^m s_i }^m \1 \myp[\Big]{0 \leq t - \lambda^2 \sum_{i=1}^m s_i},
	\label{eq:timesimplex}
	\end{equation}
	which is easily shown by induction.
	The constant $c_0$ comes from the integral of the upper bound on $G^{\lambda}$ from \eqref{eq:Gbound4}.
\end{proof}
\begin{proposition}[Nested graphs]
\label{Propo:NestedGraphs}
	There is a constant $c > 0$, depending on $\omega,V$, and a $C>0$ depending on $\omega,f,g,$ such that the following estimates hold true for all relevant nested graphs,
	\begin{align}
		\abs[\Big]{ \mathcal{F}^{err,pair}_n \myp{S,J,\ell,\ell',s,\kappa}  \big|_{\Phi_1\to 1} }
		\leq{}& C \lambda^{\gamma'} e^{s\lambda^2} \expec{cs\lambda^2}^n \expec{\ln n} \expec{\ln\lambda}^{1+n}, \label{eq:NestedGraphs1} \\
		\abs[\Big]{ \mathcal{F}^{main,pair}_n \myp{S,\ell,t\lambda^{-2},\kappa}  \big|_{\Phi_1\to 1} }
		\leq{}& C \lambda^{\gamma'} e^t \expec{ct}^{n/2} \expec{\ln n} \expec{\ln\lambda}^{1+n/2}. \label{eq:NestedGraphs2}
	\end{align}
\end{proposition}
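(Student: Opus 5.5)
We take the nested graph with $\Phi_1$ replaced by $1$ and peel off the immediate recollisions below the nest. The bound we aim for is the basic estimate of \cref{lemma:BasicGEstimate2} (no degree-one vertices, $r=0$), improved by a factor $\lambda^{\gamma'}$ extracted from the first nest. By \cref{lem:NestedGraphs}, let $v_{i_2}$ be the first degree two vertex that is not an immediate recollision, and $v_{j_0}$ its $X$-vertex. Then $j_0-1$ and $j_0$ are long time slices, $\Theta_{j_0}=-\Theta_{i_2}$, and every other long slice below $v_{i_2}$ is independent of the double loop of $v_{i_2}$.

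\textbf{Time integrals.} Following \cref{lemma:BasicGEstimate1}, we apply \cref{Lemma:PhaseResolve} with $A=\{N\}\cup A_2\cup\{j_0\}$. That is, we bundle the nested long slice $j_0$ with the short slices, so that the resolvent $|z-\gamma(j_0)|^{-1}$ is kept rather than a factor $s$. The remaining long slices produce $s^{N/2-1}/(N/2-1)!$, which costs one factor $s\lambda^2$ relative to the basic bound and hence gives a gain of $\lambda^2$. Off the top of the contour $\Gamma$ we bound $|z-\gamma(j_0)|^{-1}\le1$ and obtain the full gain $\lambda^2$. On the top edge, $z=\alpha+i\lambda^2$, and it remains to show that the combined integral is at most $\lambda^{\gamma'-2}$ rather than the $\lambda^{-2}$ the trivial estimate would give.

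\textbf{The key double-loop integral.} Write $\mathrm{Re}\gamma(j_0)=-\Theta_{i_2}+\alpha''$ with $\alpha''$ independent of $k_1,k_2$ (the free momenta of $v_{i_2}$); this follows from the nest property. The resolvent $|z-\gamma(i_2-1)|^{-1}$ has real part $\mathrm{Re}\gamma(i_2)+\Theta_{i_2}$. We must therefore bound
\begin{equation*}
	\iint \ids k_1\ids k_2\,\frac{1}{|\alpha-\Theta_{i_2}-a+i\beta|\,|\alpha+\Theta_{i_2}-a'+i\beta|}.
\end{equation*}
This is where the main obstacle lies. Unlike in the crossing case, both resolvents depend on the same phase $\Theta_{i_2}$, so the crossing bounds of \cref{ass:DR4} cannot be used. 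We plan to do the following.

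First, on the set $|\Theta_{i_2}+\alpha-a'|\le\lambda^{2-\gamma'}$ (an energy shell of width $\lambda^{2-\gamma'}$), we bound the second resolvent by $\lambda^{-2}$. We control the measure of this set using the dispersive bound \cref{prop:osc_bound} (decay $\expec{s}^{-1-\delta}$) via a Fourier representation of the indicator. This gives a measure $\lesssim\lambda^{2-\gamma'}$ up to logarithms, and $\gamma'\le2\delta$ ensures the bookkeeping closes.

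Second, off that set, the second resolvent is at most $\lambda^{\gamma'-2}$, and the first is integrated with \cref{lemma:degree2vertex}, giving $\expec{\ln\lambda}$.

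Combined, the top-edge contribution is $\lesssim\lambda^{\gamma'-2}\expec{\ln\lambda}^2$, which yields the claimed $\lambda^{\gamma'}$ improvement. (Alternatively, one can split $\frac1{ab}$ by partial fractions and apply \eqref{eq:resolvent_integral} to get an oscillatory integral in $\Theta_{i_2}$ alone, then use \cref{prop:osc_bound}.)

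\textbf{Remaining vertices and sums.} The remaining degree two vertices above and below are integrated exactly as in the basic scheme, each contributing $\expec{\ln\lambda}$. The immediate recollisions below $v_{i_2}$ are handled by the same scheme, since they preserve the phase. The contour integral of $e^{s(\mathrm{Im}z)_+}/|z|$ gives $e^{s\lambda^2}\expec{\ln n}$, and we bound $s^{k}/k!$ by $\expec{cs\lambda^2}^n$ times the appropriate power of $\lambda$. This gives \eqref{eq:NestedGraphs1}. For \eqref{eq:NestedGraphs2}, the main-term graphs have $n$ vertices with $n/2$ of degree two, and $s=t\lambda^{-2}$; the same argument gives the stated powers $n/2$.
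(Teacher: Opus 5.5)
Your key double-loop estimate is false, so the argument has a genuine gap. By the definition of a nest it is $\mathrm{Re}\,\gamma(j_0)-\Theta_{i_2}$ that is independent of $k_1,k_2$. So $\mathrm{Re}\,\gamma(j_0)=+\Theta_{i_2}+\alpha''$; you have the sign wrong.

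With the correct sign, both resolvents you keep are singular on level sets of the same function:
\begin{equation*}
	\abs{\alpha-\Theta_{i_2}-a+i\beta}^{-1}\,\abs{\alpha-\Theta_{i_2}-\alpha''+i\beta}^{-1}.
\end{equation*}
Here $\alpha''-a$ is the sum of the $\Theta_j$ over the non-recollision degree-zero vertices strictly between $v_{j_0}$ and $v_{i_2}$. In the simplest nested graph (\cref{Fig8}) this sum is empty, and $\mathrm{Re}\,\gamma(j_0)=\mathrm{Re}\,\gamma(i_2-1)=\Theta_{i_2}$. You would then have to bound
\begin{equation*}
	\iint \ids k_1 \ids k_2 \, \frac{1}{\abs{\alpha-\Theta_{i_2}+i\beta}^2}
	= \int_{\R} \frac{\rho(E)\,\ids E}{(\alpha-E)^2+\beta^2}
	\approx \frac{\pi\rho(\alpha)}{\beta},
\end{equation*}
where $\rho$ is the density of states of $\Theta_{i_2}$ in $(k_1,k_2)$. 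It is bounded and continuous by \cref{prop:osc_bound}, and positive on an interval since it has mass one. This integral is genuinely of order $\lambda^{-2}$. So bundling $j_0$ only trades a factor $s$ for a factor $\beta^{-1}=\lambda^{-2}$, and no estimate on moduli can give $\lambda^{\gamma'-2}$.

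Concretely, your energy-shell splitting breaks down on the shell itself. There the first resolvent is singular on the same level set, so the shell contributes $\lambda^{-2}\expec{\ln\lambda}$, or $\lambda^{-2-\gamma'}$ if both resolvents are bounded by $\lambda^{-2}$. The small measure $\lambda^{2-\gamma'}$ is never converted into a gain.

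The alternative fails too:
\begin{itemize}
\item Partial fractions degenerate exactly when $a=\alpha''$.
\item \eqref{eq:resolvent_integral} represents $1/\abs{r+i\beta}$ with a nonnegative weight, so it produces no cancellation.
\item Integrating the recollisions inside the nest with \cref{lemma:degree2vertex}, as you propose, leaves only a bounded positive function of $(k_1,k_2)$, because their resolvents also contain $\Theta_{i_2}$. This destroys the structure any oscillatory argument would need.
\end{itemize}

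The missing idea is that the nest gain is an oscillatory cancellation in the nest time variables, and it must be extracted before any modulus is taken. The paper proceeds as follows.
\begin{itemize}
\item It keeps all slices $0,\dots,i_2-1$ in the time domain, applying \cref{Lemma:PhaseResolve} with $A=\{N\}\cup\{i_2\le i<N:\deg v_{i+1}=2\}$.
\item Using \cref{lem:NestedGraphs}, it writes the total phase of these slices as $\tilde\zeta(r)+\sum_{j\in B_2}\Theta_j r_{j-1}+\Theta_{i_2}\sum_{i=j_0}^{i_2-1}r_i$.
\item It integrates the momenta of all immediate recollisions below $v_{i_2}$ and of the nest loop exactly, keeping the $W^0$ and interaction factors. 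This produces iterated $G^\lambda_{r,\tau}$ factors.
\item By \cref{prop:Gbounds}, H\"older and \cref{ass:DR2}, these factors decay like $\expec{\sum_{i=j_0}^{i_2-1}r_i+\alpha}^{-1-\delta}$.
\item Integrating this over the $M\ge 2$ nest time variables gives $s^{M-2}(s^{1-\gamma'/2}+\alpha_-)$ instead of $s^{M-1}$. These variables are $r_{i_2-1}$ and the long slices from $j_0$ on, and $M\ge2$ because $j_0$ is long.
\item With $\gamma'\le 2\delta$ and $s\le t\lambda^{-2}$, this is the factor $\lambda^{\gamma'}$. The $\alpha_-$ term is absorbed by iterating down through the recollision time integrals.
\end{itemize}
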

\begin{proof}
	Let $i_2$ and $j_0$ be as in \cref{lem:NestedGraphs}, that is, $v_{i_2}$ is the first degree two vertex from the bottom that does not belong to an immediate recollision, and $j_0$ is the index of the first time slice that is nested inside the double loop of $v_{i_2}$.
	Consider the contribution of the first $i_2$ time slices to the total phase in the time integral, i.e. $e^{-i \sum_{j=0}^{i_2-1} r_j \gamma \myp{j}}$.
	Recalling that the imaginary part of $\gamma \myp{j}$ is always non-positive just a sum of entries of $\kappa$, we can re-write
	\begin{equation}
	\label{eq:nested_phase}
		\sum_{i=0}^{i_2-1} r_i \gamma \myp{i}
		={} \zeta \myp{r} + \sum_{i=0}^{i_2-1} r_i \sum_{j=i+1}^{i_2} \Theta_j,
	\end{equation}
	where $\mathrm{Im} \zeta \myp{r} \leq 0$ does not depend on any free momenta ending before (or at) $v_{i_2}$.
	Denote by
	\begin{equation*}
		B_j \coloneq \Set{1 \leq i < i_2 \mid \deg v_i = j},
		\qquad j=0,2,
	\end{equation*}
	the indices of the vertices below $v_{i_2}$ with degree $j$.
	Then by choice of $v_{i_2}$, every $v_i$ with $i \in B_2$ ends and immediate recollision, implying that $i-1 \in B_0$ and $\Theta_{i-1} = - \Theta_i$.
	The remaining indices in $B_0$ that do not belong to an immediate recollision are
	\begin{equation*}
		B_0' \coloneq \Set{j \in B_0 \mid j+1 \notin B_2}.
	\end{equation*}
	Since both of the time slices $j_0-1$ and $j_0$ are long, we must have that $j_0 \in B_0'$.
	Continuing to rewrite the phase factor from before, we have
	\begin{align*}
		\sum_{i=0}^{i_2-1} r_i \sum_{j=i+1}^{i_2} \Theta_j
		= \sum_{j \in B_0'\setminus \Set{j_0}} \sum_{i=0}^{j-1} r_i \Theta_j + \myp{\Theta_{i_2}+\Theta_{j_0}} \sum_{i=0}^{j_0-1} r_i + \sum_{j \in B_2} \Theta_j r_{j-1} + \Theta_{i_2} \sum_{i=j_0}^{i_2-1} r_i,
	\end{align*}
	where by \cref{lem:NestedGraphs}, the terms in the first two sums do not depend on any free momenta ending before or at $v_{i_2}$.
	We conclude that the phase factor \eqref{eq:nested_phase} can be written
	\begin{equation}
	\label{eq:nested_phase2}
		\sum_{i=0}^{i_2-1} r_i \gamma \myp{i}
		= \tilde{\zeta} \myp{r} + \sum_{j \in B_2} \Theta_j r_{j-1} + \Theta_{i_2} \sum_{i=j_0}^{i_2-1} r_i,
	\end{equation}
	where $\tilde{\zeta} \myp{r}$ is independent of all free momenta ending before or at $v_{i_2}$ and $\mathrm{Im} \tilde{\zeta} \myp{r} \leq 0$.
	
	We can now follow the basic iterative scheme with a few modifications.
	The idea is to integrate out the free momenta of the vertices below $v_{i_2}$ before taking absolute values inside the remaining integrals.
	Recall from earlier the sets of time slice indices
	\begin{equation*}
		A_j \coloneq \Set{ 0 \leq i < N \mid \deg v_{i+1} = j }
	\end{equation*}
	for $j=0,2$, where now $\abs{A_j} = N/2$ since the graph is fully paired.
	To decompose the time integral, we now take
	\begin{equation*}
		A \coloneq \Set{N} \cup \Set{i_2 \leq i < N \mid \deg v_{i+1} = 2},
	\end{equation*}
	and
	\begin{equation*}
		A' \coloneq \Set{\ast} \cup I_{0,i_2-1} \cup A_0,
	\end{equation*}
	and write, using \cref{Lemma:PhaseResolve},
	\begin{align*}
		\MoveEqLeft[3] \int_{\myp{\R_+}^{I_{0,N}}} \ids r \delta \myp[\Big]{s - \sum_{j=0}^N r_j} \prod_{j=0}^N e^{-i r_j \gamma \myp{j}} \\
		={}& - \oint_{\Gamma_N} \frac{\ids z}{2 \pi} \prod_{j \in A} \frac{i}{z-\gamma \myp{j}} \int_{\myp{\R_+}^{A'}} \ids r \delta \myp[\Big]{s - \sum_{j \in A'} r_j} e^{-i r_{\ast} z} \prod_{j \in I_{0,i_2-1} \cup A_0} e^{-i r_j \gamma \myp{j}} \\
		={}& - \oint_{\Gamma_N} \frac{\ids z}{2 \pi} \frac{i}{z} \prod_{i_2 \leq j \in A_2} \frac{i}{z-\gamma \myp{j}}
		\begin{aligned}[t] 
			&\int_{\myp{\R_+}^{A'}} \ids r \delta \myp[\Big]{s - \sum_{j \in A'} r_j} \\
			&\times e^{-i r_{\ast} z} e^{-i \sum_{j=0}^{i_2-1} r_j \gamma \myp{j}} \prod_{i_2 \leq j \in A_0} e^{-i r_j \gamma \myp{j}}.
		\end{aligned}
	\end{align*}
	Now, inserting \eqref{eq:nested_phase2} into this expression, we may integrate out all the double loop momenta of the immediate recollision vertices below $v_{i_2}$.
	Each of these double loop integrals corresponds to a $G_{r,\tau}^{\lambda}$-factor as in \eqref{eq:Gdef}, which in total yields a factor $G_1^{\lambda} \myp{k,r}$ that depends only on the recollision time variables $r_j$ with $j \in \Set{i \mid i+1 \in B_2}$, and only on the free momenta of the vertices $v_{i_2}$ and above.
	Note that $G_1^{\lambda}$ is a product of factors of the form as in \cref{prop:Gbounds}, and not all of them have to depend on the free momenta of $v_{i_2}$.
	Taking absolute values inside all remaining integrals except for the double loop integral of $v_{i_2}$, and estimating $\abs{e^{-i r_{\ast} z}} \leq e^{s \lambda^2}$, $\abs{e^{-i \tilde{\zeta} \myp{r}}} \leq 1$, and $\abs{e^{-i r_j \gamma \myp{j}}} \leq 1$ for all remaining $j\in A_0$ above $v_{i_2}$, we arrive for instance at an expression of the form
	\begin{align}
		\abs{\mathcal{F}_n^{err,pairs}}
		\lesssim{}& \lambda^{N} \normt{\widehat{V}}{\infty}^{N-2 \myp{\abs{B_2}+1}} \int_{\T^d} \ids k_{N} \abs{\hat{g} \myp{k_{N}}}^2 \int_{\myp{\T^d}^{N-2 \myp{\abs{B_2}+1} } } \prod_{i_2 < j \in A_2} \ids k_{j,1} \ids k_{j,2} \nn \\
		&\times \oint_{\Gamma_N} \ids z \frac{e^{\lambda^2s}}{\abs{z}} \prod_{i_2 \leq j \in A_2} \frac{1}{\abs{z-\gamma \myp{j}}} \int_{\myp{\R_+}^{A'}} \ids r \delta \myp[\Big]{s- \sum_{j\in A'} r_j} \nn \\
		&\times \abs[\bigg]{ \int_{\myp{\T^d}^2} \ids k_1 \ids k_2 e^{-i \Theta_{i_2} \sum_{i=j_0}^{i_2-1} r_i} U_{j_0} U_{i_2}  \myb[\Big]{\prod W^0} G_1^{\lambda} \myp{k,r} },
	\label{eq:nestedbound1}
	\end{align}
	where $k_1$, $k_2$ denote the free momenta of $v_{i_2}$, $\prod W^0$ contains the remaining factors of $W^0$ that might depend on $k_1,k_2$, and $U_{j_0} $, $ U_{i_2}$ are the interaction factors of $v_{j_0}$ and $v_{i_2}$.
	Since the double loop of $v_{i_2}$ corresponds to that of a leading motive by \cref{lem:NestedGraphs}, the interaction factor $U_{j_0} U_{i_2}$ is of the same form as the allowed factors in \eqref{eq:Gdef}, and hence the integral over $k_1,k_2$ also gives rise to a $G_{r,\tau}^{\lambda}$-factor, except it depends on a sum of time variables $\sum_{i=j_0}^{i_2-1} r_i$.
	The resulting factor, which we denote by $G_2^{\lambda} \myp{k,r}$, depends on the time variables corresponding to the set $B \coloneq I_{j_0,i_2-1} \cup \Set{i \mid i+1 \in B_2}$, but we note that \emph{only} the last application of $G_{r,\tau}^{\lambda}$, corresponding to the double loop of the nest, depends on any of the long time slices.
	
	We will estimate the integral of $G_2^{\lambda}$ using \cref{prop:Gbounds} by first integrating out the time slice $i_2-1$ along with all the long time slices in the nest of $v_{i_2}$.
	Since the time slice $j_0$ is long, we have $ M \coloneq \abs{\Set{i_2-1}\cup \myp{A_0 \cap I_{j_0,i_2-1}}} \geq 2$ time variables to integrate in this step, which allows us to use the dispersive estimate of \cref{ass:DR2} to obtain the extra decay in $\lambda$ that we need for the nested graphs.
	Afterwards, we iterate through the short time slice integrals going from the top of the nest and downwards.
	Applying \eqref{eq:Gbound4}, we thus have to estimate
	\begin{align}
		\MoveEqLeft[2] \int_{\myp{\R_+}^{A'}} \ids r \delta \myp[\Big]{s- \sum_{j\in A'} r_j} \abs[\big]{G_2^{\lambda} \myp{k,r}} \nn \\
		\leq{}& \int_{\myp{\R_+}^{A' \setminus B}} \ids r \delta \myp[\Big]{s- \sum_{j \in A' \setminus B} r_j} \int_{\myp{\R_+}^B} \ids r \abs[\big]{G_2^{\lambda} \myp{k,r}} \1 \myp[\Big]{\sum_{j \in B} r_j \leq s} \nn \\
		\leq{}& \frac{s^{\tilde{n}}}{\tilde{n}!} C^{\abs{B_2}+1} \normt{V}{1}^{2 \myp{\abs{B_2}+1}}
			\begin{aligned}[t]
				&\int_{\myp{\R_+}^B} \ids r  \1 \myp[\Big]{\sum_{j \in B} r_j \leq s} \prod_{i=1}^3 \normt[\big]{p_{\sum_{j=j_0}^{i_2-1} r_j + \sum_{\substack{ \scriptscriptstyle{j\in A_2} \\ \scriptscriptstyle{j<i_2-1} } } \tau_{i_2-1,i,j} r_j}^{\lambda}}{3} \\
				&\times \prod_{\substack{m\in A_2 \\ m < i_2-1}} \myp[\bigg]{ \1 \myp[\Big]{\sum_{\substack{ j\in A_2 \\ j<m }} r_j \leq s} \prod_{i=1}^3 \normt[\big]{p_{r_m + \sum_{\substack{ \scriptscriptstyle{j\in A_2} \\ \scriptscriptstyle{j<m} } } \tau_{m,i,j} r_j}^{\lambda}}{3} },
			\end{aligned}
		\label{eq:nestedbound2}
	\end{align}
	with $\tilde{n} = \abs{A'\setminus B} -1 = \abs{A_0 \setminus I_{j_0,i_2-1}}$ and $\tau_{m,i,j} \in \Set{-1,0,1}$.
	Using H\"older's inequality and \cref{ass:DR2}, the integral over $r_{i_2-1}$ and the long time slices is bounded by an integral of the type
	\begin{align*}
		\int_{\R_+^M} \ids r \1 \myp[\Big]{\sum_{i=1}^M r_i \leq s} \expec[\Big]{\sum_{i=1}^M r_i + \alpha}^{-1-\delta}
		={}& \int_0^s \ids t \frac{\abs[\big]{\Set{\sum_{i=1}^M r_i = t}\cap \R_+^M}}{\expec{t+\alpha}^{1+\delta}} \\
		={} \frac{\sqrt{M}}{\myp{M-1}!} \int_0^s \ids t \frac{t^{M-1}}{\expec{t+\alpha}^{1+\delta}}
		\leq{}& \frac{\sqrt{M} s^{M-2}}{\myp{M-1}!} \int_0^s \ids t \frac{t}{\expec{t+\alpha}^{1+\delta}},
	\end{align*}
	where
	\begin{equation*}
		-s \leq -\sum_{\substack{j \in A_2 \\ j < i_2-1}} r_j
		\leq \alpha
		\leq \sum_{\substack{j \in A_2 \\ j_0< j < i_2-1}} r_j + \sum_{\substack{j \in A_2 \\ j < i_2-1}} r_j
		\leq 2s.
	\end{equation*}
	Using $2 \delta \geq \gamma'$ and $\gamma' < 1$, we have
	\begin{equation}
	\label{eq:nestedbound3}
		\int_0^s \ids t \frac{t}{\expec{t+\alpha}^{1+\delta}}
		\leq{} \int_0^s \ids t \frac{1}{\expec{t+\alpha}^{\delta}} + \alpha_- \int_0^s \ids t \frac{1}{\expec{t+\alpha}^{1+\delta}}
		\leq{} C \myp{s^{1-\gamma'/2} + \alpha_-},
	\end{equation}
	which we now plug back into \eqref{eq:nestedbound2}.
	For the term involving $s^{1-\gamma'/2}$, we simply use that the remaining time integrals in \eqref{eq:nestedbound2} are bounded by \cref{prop:Gbounds}.
	For the term involving $\alpha_- \leq \sum_{j \in A_2 ; j < i_2-1} r_j$, we have to iterate through the time integrals of each of the remaining degree two vertices, from the top of the nest and down.
	If $m$ is the short time slice corresponding to the next vertex, there are now exactly three factors left in \eqref{eq:nestedbound2} that depend on $r_m$, as well as a contribution from $\alpha_-$.
	We save all other time variables appearing in $\alpha_-$ for later iterations.
	We estimate the $r_m$-integral by H\"older,
	\begin{align*}
		\int_0^s \ids r_m \, r_m \prod_{i=1}^3 \normt[\big]{p_{r_m + \sum_{\substack{ \scriptscriptstyle{j\in A_2} \\ \scriptscriptstyle{j<m} } } \tau_{m,i,j} r_j}^{\lambda}}{3}
		\leq{} \prod_{i=1}^3 \myp[\bigg]{ \int_0^s \ids r_m \, r_m \expec[\Big]{r_m + \sum_{\substack{ j\in A_2 \\ j<m } } \tau_{m,i,j} r_j }^{-1-\delta} }^{\frac{1}{3}},
	\end{align*}
	where each of the factors is of the same form as in \eqref{eq:nestedbound3} for some $\alpha_i$ with $\abs{\alpha_i} \leq \sum_{j\in A_2 ; j<m } r_j \leq s $, allowing us to continue the iteration in the next step.
	Noting that each iteration may add additional terms that depend on the time slices below (through the $\alpha_i$), and that there are at most $\abs{B_2 + 1} \leq \abs{A_2} = \abs{A_0}$ steps to be carried out, we find that \eqref{eq:nestedbound2} is bounded by
	\begin{align*}
		\int_{\myp{\R_+}^{A'}} \ids r \delta \myp[\Big]{s- \sum_{j\in A'} r_j} \abs[\big]{G_2^{\lambda} \myp{k,r}}
		\lesssim{}& C^{\abs{B_2}+1} \normt{V}{1}^{2 \myp{\abs{B_2}+1}} \frac{\sqrt{M} \abs{A_0}^2}{\tilde{n}! \myp{M-1}!} s^{\tilde{n}+ M-1 -\gamma'/2} \\
		\leq{}& 2 C^{\abs{B_2}+1} \normt{V}{1}^{2 \myp{\abs{B_2}+1}} s^{N/2 -\gamma'/2},
	\end{align*}
	since $\tilde{n} + M-1 = \abs{A_0 \setminus I_{j_0,i_2-1}} + \abs{A_0 \cap I_{j_0,i_2-1}} = \abs{A_0} = N/2$.
	
	Finally, going back to \eqref{eq:nestedbound1}, we may now iterate through the remaining degree two vertices above $v_{i_2}$ as in the scheme of the basic estimates.
	We find that the amplitude is bounded by
	\begin{align*}
		\abs{\mathcal{F}_n^{err,pairs}}
		\leq{}& \widetilde{C} C^{N/2} \normt{V}{1}^N \lambda^{\gamma'/2} e^{s\lambda^2} \myp{s\lambda^2}^{N/2-\gamma'/2} \expec{\ln N} \expec{\ln \lambda}^{1+N} \\
		\leq{}& \widetilde{C} \lambda^{\gamma'/2} e^{s\lambda^2} \expec{c s\lambda^2}^{N/2} \expec{\ln N} \expec{\ln \lambda}^{1+N},
	\end{align*}
	concluding the proof.
\end{proof}

\section{Completion of the proof of the main theorem}\label{sec:mainproof}
We are now able to finish the proof of \cref{thm:main}.
We need the following lemma providing a bound on the number of leading graphs, but we refer to \cite[Lemma 10.1]{LukkarinenSpohn:WNS:2011} for the proof.
\begin{lemma}
\label{Lemma:MainTheorem1}
	Consider a momentum graph with $n'$ interaction vertices in the minus tree and $n$ interaction vertices in the plus tree.
	If $n+n'$ is odd, the graph is not leading.
	If $n+n'$ is even, there are at most 
	\begin{equation}
		4^{n+n'} \myp{n+n'-1}!!
		\leq{} 8^{n+n'} \myp[\Big]{\frac{n+n'}{2}}!
	\end{equation}
	leading graphs, where we define $\myp{-1}!! = 1$. 
\end{lemma}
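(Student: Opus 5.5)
The plan is to show that leading graphs are in bijection with certain recursive decorations of the trivial graph, and to count these decorations.

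First, the parity claim. Every leading graph is obtained from the trivial graph, which has $n=n'=0$ and a single pairing, by iteratively attaching leading motives. Each motive contains exactly two interaction vertices, so $n+n'$ is even for every leading graph. Equivalently, \cref{prop:leadinggraphs} combined with the fully paired observation $n_0=n_2$ gives $n+n'=2n_0$.

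For the count, write $n+n'=2m$. I would encode a leading graph by the sequence of motive attachments, read from the top of the graph downwards, and build it by induction on $m$. A graph with $m-1$ motives has $2(m-1)+1=2m-1$ pairings at the bottom, hence $2(2m-1)$ initial edges.

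A new motive can be attached in one of two ways.
\begin{itemize}
\item A gain motive replaces one pairing. For a fixed parity of the left leg there are 4 choices.
\item A loss motive is attached to one leg of a pairing. There are 6 choices for each parity, and the required parity fixes which leg.
\end{itemize}
In either case, what is chosen is a pairing at the bottom, together with a choice among a bounded number of motive types.

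A naive count gives about $(2m-1)\cdot 10$ options at step $m$, hence $10^m(2m-1)!!$ overall. To reach the sharper constant $4^{2m}=16^m$, I would instead reproduce the argument of \cite[Lemma 10.1]{LukkarinenSpohn:WNS:2011}. There one notes that each leading graph is determined by its pairing structure together with, for each of the $m$ motives, at most 16 choices of motive type and attachment data. The number of admissible bottom positions at stage $j$ is at most $2j-1$, which gives the product $\prod_{j=1}^m (2j-1)=(2m-1)!!$ up to the factor $4^{2m}$.

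Finally, the elementary estimate $(2m-1)!!\le 2^m m!$ turns $4^{2m}(2m-1)!!$ into at most $8^{2m} m!$.

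The main obstacle is uniqueness of the encoding. Different attachment sequences could a priori produce the same momentum graph, which would only help the upper bound, but we must also ensure no leading graph is missed. I would argue this via \cref{lemma:recollisions}: the lowest degree-two vertex is always an immediate recollision that can be cut out. This gives a canonical decomposition, so the map from attachment sequences to graphs is surjective, and the count above is a valid upper bound.
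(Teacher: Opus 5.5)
Your strategy is correct and is the argument behind \cite[Lemma 10.1]{LukkarinenSpohn:WNS:2011}, to which the paper simply defers. You induct on the number $m$ of motives, use the $2j-1$ bottom pairings available when the $j$-th motive is attached, bound the admissible motives per pairing, and finish with $(2m-1)!!\le 2^m m!$. Your parity argument is also fine. However, the per-step count as you wrote it is wrong, and the step from $10$ to $16$ runs backwards.

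The two legs of a (relevant) pairing carry opposite parities. So a pairing admits the $4$ gain motives matching the parity of its left leg, plus $6$ loss motives on its $+$ leg \emph{and} the $6$ conjugate loss motives on its $-$ leg. That gives $4+6+6=16=4^2$ options, which is exactly what the paper uses in the proof of \cref{Lemma:MainTheorem3} (``any of the $16$ leading motives suitable for any given cluster'').

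Hence $10(2m-1)$ is an undercount, and $10^m(2m-1)!!$ is not a valid bound. Already for $n'=0$, $n=6$ there are $6\cdot 38\cdot 70>10^3\cdot 5!!$ attachment sequences. They give distinct graphs, since each leading graph arises from exactly one such sequence (the last-attached motive is the bottom pair $v_1,v_2$). Conversely, $16^m>10^m$, so $16$ is not a ``sharper constant''; it is simply the correct count.

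With it, writing $L_m$ for the number of leading graphs with $m$ motives, you get $L_m\le 16(2m-1)L_{m-1}$ with $L_0=1$. This gives $L_m\le 4^{2m}(2m-1)!!$ directly, with no appeal to unspecified ``attachment data'' from the literature.

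Finally, the surjectivity worry is unnecessary. Leading graphs are \emph{defined} as those obtained from the trivial loop by iteratively attaching motives at the bottom, so every leading graph arises from some attachment sequence by definition. Any non-injectivity could only lower the count, so \cref{lemma:recollisions} is not needed here.
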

Note also that the total number of pairing graphs with $n$ interaction vertices is bounded by $5n\myp{2n}!$.
This follows easily from the facts that there are $\myp{2n}!! = 2^n n!$ ways of distributing the interaction vertices across the plus and minus trees, and there are $\frac{\myp{2n+2}!}{2^{n+1} \myp{n+1}!}$ ways of dividing the $2n+2$ initial vertices into sets of $n+1$ pairs.

Letting $c_0$ denote the constant from \cref{Propo:LeadingGraphs}, we choose $t_0 = \myp{2^6 c_0}^{-1}$.
Then, for any $0 < t < t_0$ and $0 < s \leq t \lambda^{-2}$, the series $\sum_{m=0}^{\infty} \myp{2^6 c_0 \lambda^2 s}^m$ is convergent.
\begin{lemma}
\label{Lemma:MainTheorem2}
	Let $t_0$ be the constant specified above, and assume $0 < t < t_0$, it follows that
	\begin{equation}
	\label{eq:MainTheorem2_1}
		\lim_{\lambda\to 0} \limsup_{L\to\infty} \abs[\big]{Q^{\lambda} \myb{g,f} \myp{t} - Q^{main,pair} \myb{g,f} \myp{t} }
		={} 0,
	\end{equation}
	Where $Q^{main,pair} \myb{g,f} \myp{t}$ is given by \eqref{Lemma:Qmain:2}.
	Moreover, we also have
	\begin{equation}
	\label{eq:MainTheorem2_2}
		\lim_{\lambda\to 0} \abs[\bigg]{ Q^{main,pair} \myb{g,f} \myp{t} - \sum_{\substack{n=0, \\ n \text{ even}}}^{N_0-1} \sum_{\text{leading graphs}} \mathcal{F}^{main,pair}_n \myp{S,\ell,t\lambda^{-2},\kappa} \big|_{\Phi_1\to 1} }
		={} 0. 
	\end{equation}
\end{lemma}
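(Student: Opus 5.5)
The plan is to assemble the bounds of \cref{sec:graph_estimates}. For \eqref{eq:MainTheorem2_1}, we start from the decomposition \eqref{Def:OperatorQLambda2}, $Q^{\lambda} = Q^{main} + Q_1^{err} + Q_2^{err} + Q_3^{err}$, with $N = N_0(\lambda)$.

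\begin{enumerate}[(i)]
\item $Q_2^{err}$ and $Q_3^{err}$ vanish in the double limit by \cref{Lemma:Q2err,Lemma:Q3err}.
\item $Q^{main} - Q^{main,pair}$ is controlled by \cref{Lemma:Qmain}. Its right-hand side is of the form \eqref{eq:lambda_expression} with $p_1 = 1/8$ (after taking square roots). With $a_0 = \gamma'/24$ this is a positive power of $\lambda$, since $a_0(N_0 + 2N_0 + 2N_0) \ll 1/8$.
\item For $Q_1^{err}$, \cref{Lemma:Q1err} reduces the problem to bounding $\sup_{N_0/2 \le n < N_0}$ of the sums of $\abs{\mathcal{F}^{err,pair}_n}$ over fully paired graphs, times $t^2 N_0^{2+2b_0}$. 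By \cref{Propo:LeadingGraphs} and the results of \cref{sec:FullyPairedGraphs}, these graphs split into leading, crossing and nested graphs, and we bound each class separately:
\begin{itemize}
\item Crossing graphs: use \cref{Propo:CrossingGraphs}, which gains $\lambda^{2\gamma}$.
\item Nested graphs: use \cref{lem:RemoveCutoff} together with \cref{Propo:NestedGraphs}, which gains $\lambda^{\gamma'}$ (or $\lambda^{\gamma'/2}$).
\item Leading graphs: use \cref{lem:RemoveCutoff} and \cref{Propo:LeadingGraphs}.
\end{itemize}
The number of graphs is at most $2^{2n}(2n)^{2n}\, 5n(2n)!$. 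Each small-power gain then beats the factorial growth, because $a_0 = \gamma'/24$ was chosen so that \eqref{eq:lambda_expression} still decays with $p_1 = \gamma'/2$ against roughly $4N_0$ worth of $N_0^{N_0}$-type factors.
\end{enumerate}

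The one delicate piece in step (iii) is the leading graphs in $Q_1^{err}$. They carry no $\lambda$ gain, so there we must use the count from \cref{Lemma:MainTheorem1}: at most $8^{2n} n!$ leading graphs. The bound from \cref{Propo:LeadingGraphs} is $C(c_0\lambda^2 s)^{n}/n!$ with $\lambda^2 s \le t$, so the total is at most $C(64 c_0 t)^n$. Since $n \ge N_0/2$ and $t < t_0 = (2^6 c_0)^{-1}$, this is $C\theta^{N_0/2}$ with $\theta < 1$. However, it is multiplied by the prefactor $N_0^{2+2b_0}$, and $\theta^{N_0/2}$ decays only like $\exp(-c\abs{\ln\lambda}/\ln\expec{\ln\lambda})$. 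That is still faster than any power of $N_0 \sim \abs{\ln\lambda}/\ln\expec{\ln\lambda}$, so the product tends to $0$.

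I expect this to be the main obstacle: it is exactly where the restriction $t < t_0$ enters, and the exponent of $n$ must be tracked carefully. The error amplitude has $2n$ vertices in total, so $m = n$ motives. The count and the factorials then match as $8^{2n} n!$ against $1/n!$, and these must not be mismatched.

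For \eqref{eq:MainTheorem2_2} we work directly with $Q^{main,pair}$, summing over $n < N_0$.
\begin{itemize}
\item Odd $n$ contributes no leading graphs by \cref{Lemma:MainTheorem1}, so those terms need only the crossing and nested bounds.
\item Non-leading fully paired graphs are crossing or nested. Their total is bounded by $\sum_n 5n(2n)!\,\lambda^{\min(2\gamma,\gamma'/2)}\expec{ct}^{n/2}\expec{\ln\lambda}^{\dots}$ using \cref{Propo:CrossingGraphs,Propo:NestedGraphs,lem:RemoveCutoff}, and this vanishes by \eqref{eq:lambda_expression}.
\item For leading graphs, the replacement $\Phi_1 \to 1$ costs the bound of \cref{lem:RemoveCutoff}, summed over at most $8^n (n/2)!$ graphs, which again vanishes.
\end{itemize}
Finally, we use that there are only finitely many graph types per $n$ and that every bound is uniform in $L$, so the $\limsup_{L}$ can be taken as in \cref{lemma:BasicGEstimate1}.
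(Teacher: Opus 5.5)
Your proposal is correct and follows essentially the same route as the paper. You split $Q^{\lambda}$ via \eqref{Def:OperatorQLambda2} and discard $Q^{main}-Q^{main,pair}$, $Q_2^{err}$ and $Q_3^{err}$ by their lemmas. You then reduce $Q_1^{err}$ to fully paired graphs, and kill crossing graphs, nested graphs and the cut-off removal by their small $\lambda$-powers against the factorial counts via \eqref{eq:lambda_expression}. For leading graphs you play the count $2^{6n}n!$ against $(c_0\lambda^2 s)^n/n!$ to get $N_0^{2+2b_0}(t/t_0)^{N_0/2}\to 0$, and \eqref{eq:MainTheorem2_2} is done the same way. There are only two cosmetic slips: the leading/nested/crossing classification is \cref{prop:leadinggraphs}, not \cref{Propo:LeadingGraphs}, and the exponent check in step (ii) should read $5a_0<1/4$ for the squared bound rather than an expression involving $N_0$.
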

\begin{proof}
	Returning to \eqref{Def:OperatorQLambda2}, we have
	\begin{align*}
		\abs[\big]{Q^{\lambda} - Q^{main,pair}}
		\leq{} \abs{Q^{main} - Q^{main,pair}} + \abs{Q^{err}_1} + \abs{Q^{err}_2} + \abs{Q^{err}_3},
	\end{align*}
	where the first, third, and fourth terms vanish in the limit, by \cref{Lemma:Qmain}, \cref{Lemma:Q2err}, and \cref{Lemma:Q3err}, respectively.
	It also follows from \cref{Lemma:Q1err} that we only need to consider fully paired graphs in order to bound $\abs{Q^{err}_1}$.
	Now, considering for instance the nested graphs (of which there are at most $5N_0 \myp{2N_0}!$), we get by combining \cref{lem:RemoveCutoff} and \cref{Propo:NestedGraphs} that the total contribution from these is bounded by
	\begin{equation*}
		Ct^2 \expec{ct}^{N_0} \lambda^{\gamma'} N_0^{4+2b_0+1} \myp{2N_0}! \expec{\ln \lambda}^{2+N_0},
	\end{equation*}
	which vanishes as $\lambda \to 0^+$, in view of \eqref{eq:lambda_expression}.
	Similarly, applying \cref{Propo:CrossingGraphs} shows that the contribution from the crossing graphs also vanishes in the limit.
	For the remaining graphs, which are leading, we use \cref{Propo:LeadingGraphs} combined with \cref{lem:RemoveCutoff} (to remove the cut-off $\Phi_1$), as well as the fact that there are at most $2^{6n} n!$ such graphs, by \cref{Lemma:MainTheorem1}.
	The contribution coming from removing the cut-off is handled in the same way as the contributions from the nested and crossing graphs.
	The final term that we need to bound is thus 
	\begin{equation*}
		N_0^{2+2b_0} \sup_{\substack{0\leq s\leq t\lambda^{-2} \\ N_0/2 \leq n<N_0} } 2^{6n} \myp{c_0\lambda^2 s}^n
		\leq{} N_0^{2+2b_0} \sup_{N_0/2 \leq n<N_0} \myp[\Big]{\frac{t}{t_0}}^n
		\leq{} N_0^{2+2b_0} \myp[\Big]{\frac{t}{t_0}}^{\frac{N_0}{2}},
	\end{equation*}
	which tends to zero as $\lambda \to 0^+$.
	This shows that \eqref{eq:MainTheorem2_1} holds.
	
	The statement \eqref{eq:MainTheorem2_2} about $Q^{main,pair}$ is proved in the same way, starting from \cref{Lemma:Qmain}, and then using \cref{Propo:CrossingGraphs} and \cref{Propo:NestedGraphs} to bound the contributions from the crossing and nested graphs.
	Finally, applying \cref{lem:RemoveCutoff} allows us to remove the momentum cut-off $\Phi_1$ for the remaining leading graphs.
\end{proof}
Finally, the proof of \cref{thm:main} will be completed after computing the sum of the contributions from the leading diagrams in the weak-coupling limit.
Note that combining \cref{Lemma:MainTheorem1} with \cref{Propo:LeadingGraphs} shows that for each even $n$,
\begin{equation*}
	\sum_{\text{leading graphs}} \abs[\Big]{ \mathcal{F}^{main,pair}_n \myp{S,\ell,t\lambda^{-2},\kappa} \big|_{\Phi_1\to 1} }
	\leq c \myp[\Big]{\frac{t}{t_0}}^{\frac{n}{2}},
\end{equation*}
implying for $t < t_0$ that we can move the $\lambda \to 0$ limit inside the sum over $n$ in \eqref{eq:MainTheorem2_2}.
After combining with the lemma below, this finishes the proof of \cref{thm:main}.
\begin{lemma}[Convergence of leading terms]
\label{Lemma:MainTheorem3}
	For any even integer $n= 2m$, and with $\nu$ as in \eqref{eq:nu-infty}, we have
	\begin{align}
		\MoveEqLeft[6] \lim_{\lambda\to 0} \sum_{\text{leading graphs}} \mathcal{F}^{main,pair}_n \myp{S,\ell,t\lambda^{-2},\kappa} \big|_{\Phi_1\to 1} \nn \\
		={}& \int_{\T^d} \ids k \hat{g} \myp{k}^{\ast} \hat{f} \myp{k} W^0 \myp{k} \nu \myp{k}^{n/2} \frac{\myp{-t}^{n/2}}{\myp{n/2}!}.
	\label{eq:MainTheorem3}
	\end{align}
\end{lemma}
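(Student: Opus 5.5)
The plan is to start from the representation \eqref{eq:LeadingGraphs4} obtained in the proof of \cref{Propo:LeadingGraphs} and pass to the limit $\lambda\to 0$ inside the integrals. For fixed $n=2m<N_0/2$, which holds once $\lambda$ is small, all $\kappa$ entries that appear vanish. Indeed, a leading graph with $2m$ vertices uses only $\kappa_0,\dots,\kappa_{2m}$, and these are all zero by \eqref{Def:Para4} as soon as $N_0(\lambda)>4m$. The representation therefore reduces to
\begin{equation*}
	(-i)^m\epsilon(S)\int_{\T^d}\ids k\,\hat g(k)^{\ast}\hat f(k)\int_{\R_+^m}\ids s\,\widetilde G^{\lambda}(k,s;S,\ell)\,\frac{1}{m!}\Bigl(t-\lambda^2\sum_i s_i\Bigr)_+^m ,
\end{equation*}
by \eqref{eq:timesimplex}. \cref{prop:Gbounds} gives an integrable majorant of $\widetilde G^{\lambda}$ that is uniform in $k$ and $\lambda$. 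Dominated convergence then lets us replace the simplex factor by $t^m/m!$ and $\widetilde G^{\lambda}$ by its pointwise limit $\widetilde G^0$. That limit is obtained by replacing $\omega^\lambda$ by $\omega$: since $\lambda R_\lambda\to0$ uniformly by \cref{lem:wigner_estimate}, the phases converge for fixed $s$.

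The next step is to identify the structure of $\widetilde G^0$. A leading graph is an iterated attachment of $m$ leading motives, and by \cref{lemma:AddMotive} each attachment replaces a pairing factor $W^0(k_0,\sigma)$ by $-\lambda^2\int\delta\,F_{\mathcal M}$. The crucial observation is that, after integrating over the recollision time $s_j$, every motive contributes a factor depending only on the momentum $k_0$ of the edge it is attached to. Its phase is $e^{\mp i s_j\Theta(k,1)}$ with $k_1+k_2+k_3=k_0$. The momentum integrals therefore factorize, provided the inner $G$-factors feeding a motive's legs reduce, after $s$-integration, to multiples of the $W^0$ of those legs. To organise this, I would sum over all leading graphs with $m$ motives by induction on $m$. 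Summing the $20$ motive types at a given pairing, the loss motives attached to either leg together with the gain motives replacing the pairing, should give, for a pairing of momentum $k_0$, the factor $-\lambda^2\nu(k_0)$ per motive times $W^0(k_0)$. Here one uses $\int_0^\infty \ids s\, e^{-is\Theta}$ (in the sense of \cref{prop:osc_bound}) and the combinations of $W$'s from \cref{table:motives}, and the result should match the integrand of \eqref{eq:nu-infty}.

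The combinatorial heart is then the following claim. Iterating motives on the internal legs of earlier motives, rather than on the top pairing, contributes only through these legs, but because of time ordering those contributions are \emph{not} factorized in the limit. I expect that in the limit $\lambda\to0$ only the graphs where all $m$ motives sit on the ``spine'' pairing carrying the momentum $k$ survive with the time-simplex weight $t^m/m!$. Motives attached to internal legs would give integrals over the short times $s$, which are $\lambda$-independent, so those graphs are not obviously suppressed. I would therefore need to check carefully whether they cancel, for example via the gain/loss identity $\mathcal C(W^0)=0$ applied to the inner legs, or whether they vanish because their momenta are bounded by the $\widetilde G$ structure. This is the step I expect to be the main obstacle.

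My first attempt would be to show, using the detailed-balance identity $\widetilde W^0_1\widetilde W^0_2W^0_3W^0_4=W^0_1W^0_2\widetilde W^0_3\widetilde W^0_4$ on the energy shell, that the sum over all motives attached to an inner pairing cancels to zero after $s$-integration. That would leave exactly the spine graphs. Counting orderings of $m$ spine motives in the $m+1$ long slices then yields $\frac{(-t)^m}{m!}\nu(k)^m W^0(k)$, with signs tracked through $\epsilon(S)$ via the multiplicativity lemma and through the $(-i)^m$ prefactor. Comparing the sign conventions of the real and imaginary parts with \eqref{eq:nu-infty} is the final routine check.
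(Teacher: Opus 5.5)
Your overall route is the paper's: dominated convergence from \eqref{eq:LeadingGraphs4} and \eqref{eq:timesimplex} once $N_0(\lambda)>4m$, then reducing to motives on the pairing joined to the minus tree by a cancellation at every other pairing. However, the decisive step is left as a hope, and the spine computation is misstated.

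At the pairing joining the minus tree to the plus tree, only L1,\dots,L6 attached to the plus leg are admissible. Gain motives and L1$^-,\dots,$L6$^-$ would put vertices in the minus tree, which $Q^{main}$ does not have. By \cref{lemma:AddMotive} these six motives give $-W^0(k_0)\nu(k_0)$. At any \emph{other} pairing, the $16$ admissible motives (not $20$) are the loss motives on either leg plus the gain motives. They sum to zero, not to $-\nu W^0$, so your second and fourth paragraphs contradict each other.

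Nor does time ordering obstruct factorization. After $\lambda\to0$, $\widetilde G^0$ is an iterated $G_{s,\tau}$-structure in which each recollision time ranges over $\R_+$ independently. \cref{prop:Gbounds} gives an integrable majorant, so by Fubini you may integrate the newest motive's time first. The clean induction uses that each $m$-motive leading graph arises uniquely from an $(m-1)$-motive graph by attaching one motive, the one containing $v_1,v_2$, at a bottom pairing. No counting of orderings is needed, since $t^m/m!$ comes entirely from \eqref{eq:timesimplex}.

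What is missing is the mechanism of the cancellation, and detailed balance on the energy shell does not supply it by itself. Group the contributions of \cref{table:motives} by phase. For $\sigma=-1$, the motives L1--L6, G2$^-$, G4$^-$ carry $e^{-ir\Theta(k,1)}$, and the other eight carry $e^{ir\Theta(k,1)}$. After symmetrizing in $k_2\leftrightarrow k_3$, both groups carry the \emph{same} coefficient
\[
\tfrac12\myp{\widehat V(k_1+k_2)-\widehat V(k_1+k_3)}^2\myp{\widetilde W_0\widetilde W_1W_2W_3-W_0W_1\widetilde W_2\widetilde W_3}.
\]
This equal weighting of $e^{\pm ir\Theta}$ kills the principal-value part. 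That part lives off the shell, where detailed balance says nothing.

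Only then does the explicit Fermi--Dirac form of $W^0$ for the unmodified $\omega$ enter; this is why $\lambda\to0$ must be taken first. It rewrites the coefficient as $W_0W_1W_2W_3e^{\beta(\omega_0+\omega_1)}(1-e^{\beta\Theta})=\Theta\,h$, with $\beta=1/T$ and $h$ smooth. Since $\int_0^M(e^{-ir\Theta}+e^{ir\Theta})\Theta\,\ids r=i(e^{-iM\Theta}-e^{iM\Theta})$, only oscillatory terms remain, and they vanish as $M\to\infty$ by \cref{prop:osc_bound}. Without this computation, contributions from motives on inner legs are not excluded, and \eqref{eq:MainTheorem3} does not follow.
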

\begin{proof}
	Take first $\lambda$ small enough so that $N_0 \myp{\lambda} > 4m$.
	This ensures that $\kappa_j=0$ for all $j$ appearing in the expression of $\mathcal{F}^{main,pair}_{2m}$, by the choice of parameters \eqref{Def:Para4}.
	Proceeding as in the proof of \eqref{eq:LeadingGraphs2}, we combine \eqref{eq:LeadingGraphs4} and \eqref{eq:timesimplex} to immediately obtain by dominated convergence,
	\begin{align}
		\MoveEqLeft[2] \mathcal{F}^{main,pair}_{2m} \myp{S,\ell,t \lambda^{-2},\kappa}\big|_{\Phi_1\to 1} \nn \\
		&={} \frac{\myp{-1}^m}{m!} \int_{\T^d} \ids k \hat{g} \myp{k}^{\ast} \hat{f} \myp{k} \int_{\R_+^m} \ids s \widetilde{G}^{\lambda} \myp{k,s;S,\ell} \myp[\Big]{ t - \lambda^2 \sum_{i=1}^m s_i }^m \1 \myp[\Big]{\lambda^2 \sum_{i=1}^m s_i \leq t} \nn \\
		&\to{} \frac{\myp{-t}^m}{m!} \int_{\T^d} \ids k \hat{g} \myp{k}^{\ast} \hat{f} \myp{k} \int_{\R_+^m} \ids s \widetilde{G}^0 \myp{k,s;S,\ell},
	\label{eq:graph_limit}
	\end{align}
	as $\lambda \to 0$.
	Recall here that $\widetilde{G}^{\lambda} \myp{k,s;S,\ell}$ is an $m$-fold iteration of $G_{s_i,\tau_i}$-factors as in \eqref{eq:Gdef}, and that $\widetilde{G}^{\lambda}$ has an $s$-integrable upper bound which is independent of $k$ and $\lambda$, by \cref{prop:Gbounds}.

	We will argue by induction that \eqref{eq:MainTheorem3} holds, or, more specifically, that
	\begin{equation}
	\label{eq:main_ind}
		\sum_{\substack{m-\text{motive} \\\text{leading graphs}}} \int_{\R_+^m} \ids s \widetilde{G}^0 \myp{k,s;S,\ell}
		={} W^0 \myp{k} \nu \myp{k}^m.
	\end{equation}
	For $m=0$, we have the amplitude for the trivial graph consisting of a single loop,
	\begin{equation*}
		\mathcal{F}_0^{main,pair} \myp{t\lambda^{-2}, \kappa}
		={} \int_{\T^d} \ids k \hat{g} \myp{k}^{\ast} \hat{f} \myp{k} W^0 \myp{k} e^{-\frac{t}{\lambda^2} \kappa_0}
		={} \int_{\T^d} \ids k \hat{g} \myp{k}^{\ast} \hat{f} \myp{k} W^0 \myp{k},
	\end{equation*}
	since in this case $\gamma_0 = -i \kappa_0 = 0$ by \eqref{eq:gamma_j} and our choice of $\kappa$ \eqref{Def:Para4}.
	
	Assume now that \eqref{eq:MainTheorem3} holds for $m-1$, and note that the set of leading graphs with $m$ motives is obtained by adding a single leading motive to a graph with $m-1$ motives in such a way that the resulting graph has no interaction vertices in the minus tree, and each $m$-motive leading graph can be obtained like this in exactly one way.
	This means that we can split the sum in \eqref{eq:MainTheorem3} into a double sum, where we first fix a graph with $m-1$ motives and sum over all the ways of adding a leading motive to this fixed graph, and then afterwards sum over all graphs with $m-1$ motives.
	
	Thus we fix a graph with $m-1$ motives and consider all the ways of adding a leading motive to its pairing clusters.
	Since the resulting graph should contain no interaction vertices in the minus tree, when adding a motive to the special pairing that connects the minus tree to the plus tree (whose left leg has parity -1), we can only use six of the loss motives, L1,\dots,L6 (recall the leading motives in \cref{fig:LeadingMotives}).
	
	For the remaining pairing clusters there are no restrictions, and we may use any of the $16$ leading motives suitable for any given cluster (recall that only half of the gain motives may be attached to any given pairing).
	We consider first the special pairing connecting the plus and minus trees and show that summing over the ways of adding a motive to this pairing results in the right hand side of \eqref{eq:MainTheorem3}.
	Finally, we will argue that the contributions from adding motives to all the other pairings exactly cancel each other.
	
	Thus, denoting by $k_0$ the momentum corresponding to the right leg of the pairing connecting to the minus tree, we compute the contribution from adding the admissible loss motives using \cref{lemma:AddMotive}, which changes the factor $W^0 \myp{k_0}$ to
	\begin{align}
		\MoveEqLeft[4] -\int_{\R_+} \ids s_1 \int_{\myp{\T^d}^3} \ids k \delta \myp{k_0-k_1-k_2-k_3} \sum_{j=1}^6 F_{Lj} \myp{s_1,k,-1} \nn \\
		={}& -\int_{\R_+} \ids s_1 \int_{\myp{\T^d}^3} \ids k \delta \myp{k_0-k_1-k_2-k_3} e^{-i s_1 \Theta \myp{k,1}} W_0 \nn \\ 
		&\qquad\begin{aligned}[t]
			\times \myp[\Big]{& \widehat{V} \myp{k_1+k_2}^2 W_2 W_3 + \widehat{V} \myp{k_1+k_2} \widehat{V} \myp{k_1+k_3} W_1 W_3 \\
			&+\widehat{V} \myp{k_1+k_2}^2 W_1 \widetilde{W}_2 - \widehat{V} \myp{k_1+k_2} \widehat{V} \myp{k_1+k_3} W_2 W_3 \\
			&-\widehat{V} \myp{k_1+k_2}^2 W_1 W_3 - \widehat{V} \myp{k_1+k_2} \widehat{V} \myp{k_1+k_3} W_1 \widetilde{W}_2 }
		\end{aligned} \nn \\
		={}& -W_0 \int_{\R_+} \ids s_1 \int_{\myp{\T^d}^3} \ids k \delta \myp{k_0-k_1-k_2-k_3} e^{-i s_1 \Theta \myp{k,1}} \nn \\
		&\qquad\times \widehat{V} \myp{k_1+k_2} \myp[\big]{ \widehat{V} \myp{k_1+k_2} - \widehat{V} \myp{k_1+k_3} } \myp[\big]{W_2 W_3 - W_1 W_3 + W_1 \widetilde{W}_2} 
		\label{eq:loss_sum} \\
		={}& -W^0 \myp{k_0} \nu \myp{k_0} \nn
	\end{align}
	where we have used that the change of variables $k_2 \leftrightarrow k_3$ leaves the phase $\Theta \myp{k,1}$ invariant.
	Now summing over all graphs with $m-1$ leading motives, we obtain from \eqref{eq:graph_limit} and the induction hypothesis the total contribution
	\begin{align*}
		\MoveEqLeft[6] \lim_{\lambda\to 0} \sum_{\mathcal{G}_m} \mathcal{F}^{main,pair}_{2m} \myp{S,\ell,t\lambda^{-2},\kappa} \Big|_{\Phi_1\to 1} \nn \\
		={}& \frac{\myp{-t}^m}{m!} \int_{\T^d} \ids k \hat{g} \myp{k}^{\ast} \hat{f} \myp{k} \nu \myp{k} \sum_{\substack{\myp{m-1}-\text{motive} \\\text{leading graphs}}} \int_{\myp{\R_+}^{I_{m-1}}} \ids s \widetilde{G} \myp{k,s;S,\ell}  \\
		={}& \int_{\T^d} \ids k \hat{g} \myp{k}^{\ast} \hat{f} \myp{k} W^0 \myp{k} \nu \myp{k}^m \frac{\myp{-t}^m}{m!},
	\end{align*}
	where $\mathcal{G}_m$ is the subset of the $m$-motive leading graphs obtained by attaching one of the six admissible loss motives to an $m-1$-motive graph at the pairing connecting to the minus tree.
	
	As mentioned above, we still need to show that the contributions from all other leading graphs cancel out, so fix a graph consisting of $m-1$ leading motives, as well as any pairing at the bottom of the graph which does not connect to the minus tree.
	Denote by $\sigma$ the parity of the left leg of the pairing, and by $k_0$ the momentum corresponding to this leg.
	As above, we want to sum up all the possible contributions from attaching a leading motive to this pairing, but this time there are $16$ possibilities.
	Assuming for simplicity that $\sigma=-1$, we collect first the terms from \cref{lemma:AddMotive} with the same phase and continue from \eqref{eq:loss_sum},
	\begin{align*}
		\MoveEqLeft[4] -\int_{\R_+} \ids r \int_{\myp{\T^d}^3} \ids k \delta \myp{k_0-k_1-k_2-k_3} \\
		&\times \myp[\bigg]{\sum_{j=1}^6 F_{Lj} \myp{r,k,-1} + F_{G2^-} \myp{r,k,-1} +F_{G4^-} \myp{r,k,-1}} \\
		={}& -\int_{\R_+} \ids r \int_{\myp{\T^d}^3} \ids k \delta \myp{k_0-k_1-k_2-k_3} e^{-i r \Theta \myp{k,1}} \\
		&\qquad\qquad \begin{aligned}[t]
					&\times \widehat{V} \myp{k_1+k_2} \myp[\big]{ \widehat{V} \myp{k_1+k_2} - \widehat{V} \myp{k_1+k_3} } \\
					&\times \myp[\big]{ \underbrace{ W_0 \myp[\big]{W_2 W_3 - W_1 W_3 + W_1 \widetilde{W}_2} -\widetilde{W}_1 W_2 W_3 }_{\mathrlap{\! = \widetilde{W}_0 \widetilde{W}_1 W_2 W_3 - W_0 W_1 \widetilde{W}_2 \widetilde{W}_3}} }
				\end{aligned} \\
		={}& \int_{\R_+} \ids r \int_{\myp{\T^d}^3} \ids k \delta \myp{k_0-k_1-k_2-k_3} e^{-i r \Theta \myp{k,1}} \\
		&\qquad \times \frac{1}{2} \myp[\big]{ \widehat{V} \myp{k_1+k_2} - \widehat{V} \myp{k_1+k_3} }^2 \myp[\big]{\widetilde{W}_0 \widetilde{W}_1 W_2 W_3 - W_0 W_1 \widetilde{W}_2 \widetilde{W}_3 },
	\end{align*}
	where we in the last equality do a change of variables $k_2 \leftrightarrow k_3$ to complete the square.
	Similarly, we get the same expression for the sum of the remaining terms, but with a flipped sign in the phase of the exponential factor, giving the total contribution
	\begin{align}
		\MoveEqLeft[4] \frac{1}{2} \int_{\R_+} \ids r \int_{\myp{\T^d}^3} \ids k \delta \myp{k_0-k_1-k_2-k_3} \myp[\big]{ e^{-i r \Theta \myp{k,1}} + e^{i r \Theta \myp{k,1}} } \nn \\
		&\times \myp[\big]{ \widehat{V} \myp{k_1+k_2} - \widehat{V} \myp{k_1+k_3} }^2 \myp[\big]{\widetilde{W}_0 \widetilde{W}_1 W_2 W_3 - W_0 W_1 \widetilde{W}_2 \widetilde{W}_3 }.
	\label{eq:main_sum}
	\end{align}
	We recognise, at least formally, the Boltzmann-Nordheim collision operator \eqref{eq:BN-C}, so since $W^0$ is a stationary solution to the Boltzmann-Nordheim equation, this should vanish.
	Now, rewriting and using the explicit expression \eqref{eq:wignerfunction0} of $W^0$ with $\beta = 1/T$ denoting the inverse temperature,
	\begin{align*}
		\widetilde{W}_0 \widetilde{W}_1 W_2 W_3 - W_0 W_1 \widetilde{W}_2 \widetilde{W}_3
		={}& W_0 W_1 W_2 W_3 \\
		&\times \myp[\Big]{ \myp[\Big]{\frac{1}{W_0}-1} \myp[\Big]{\frac{1}{W_1}-1} - \myp[\Big]{\frac{1}{W_2}-1} \myp[\Big]{\frac{1}{W_3}-1} } \\
		={}& W_0 W_1 W_2 W_3 \myp[\big]{ e^{\beta \myp{\omega_0+\omega_1}} - e^{\beta \myp{\omega_2 + \omega_3}} } \\
		={}& W_0 W_1 W_2 W_3 e^{\beta \myp{\omega_0 + \omega_1}} \myp{ 1 - e^{\beta \Theta \myp{k,1}} },
	\end{align*}
	so doing the time integral in \eqref{eq:main_sum} from $0$ to some $M>0$ gives
	\begin{align*}
		\MoveEqLeft[4] \lim_{M\to \infty} \frac{i \beta}{2} \int_{\myp{\T^d}^3} \ids k \delta \myp{k_0-k_1-k_2-k_3} \myp[\big]{ e^{-i M \Theta \myp{k,1}} - e^{i M \Theta \myp{k,1}} } \nn \\
		&\times \myp[\big]{ \widehat{V} \myp{k_1+k_2} - \widehat{V} \myp{k_1+k_3} }^2 W_0 W_1 W_2 W_3 e^{\beta \myp{\omega_0 + \omega_1}} \frac{ 1 - e^{\beta \Theta \myp{k,1}} }{\beta \Theta \myp{k,1}}.
	\end{align*}
	With our assumptions on $V$ and $\omega$, the integrand (the expression on the second line above) is smooth and bounded on $\myp{\T^d}^3$, so it is the Fourier transform of a function in $\ell^1 \myp{\myp{\Z^d}^3}$.
	We conclude by \cref{prop:osc_bound} that the limit above is zero, so the contributions coming from attaching a leading motive to our fixed pairing (assuming that $\sigma = -1$) cancel each other.
	The argument when $\sigma = 1$ is exactly the same, so we conclude that \eqref{eq:main_ind} holds, and combining with \eqref{eq:graph_limit} finishes the proof of \eqref{eq:MainTheorem3}, and hence also of \cref{thm:main}.
\end{proof}

\bibliographystyle{plain}
\bibliography{QuantumBoltzmann}

\def\cprime{$'$}
\begin{thebibliography}{10}

\bibitem{balescu1975equilibrium}
R.~Balescu.
\newblock Equilibrium and nonequilibrium statistical mechanics.
\newblock {\em NASA STI Recon Technical Report A}, 76, 1975.

\bibitem{DenHan-23}
Y.~{Deng} and Z.~{Hani}.
\newblock {Full derivation of the wave kinetic equation}.
\newblock {\em Inventiones Mathematicae}, 233(2):543--724, 2023.

\bibitem{DenHan-23pre}
Y.~Deng and Z.~Hani.
\newblock Long time justification of wave turbulence theory.
\newblock arXiv e-print,
  \href{https://arxiv.org/abs/2311.10082}{arXiv:2311.10082}, 2023.

\bibitem{DenHanMa-24pre}
Y.~Deng, Z.~Hani, and X.~Ma.
\newblock Long time derivation of the boltzmann equation from hard sphere
  dynamics.
\newblock arXiv e-print,
  \href{https://arxiv.org/abs/2408.07818}{arXiv:2408.07818}, 2025.

\bibitem{ErdSalYau-04}
L.~Erd\"{o}s, M.~Salmhofer, and H.-T. Yau.
\newblock On the quantum boltzmann equation.
\newblock {\em Journal of Statistical Physics}, 116:367--380, 08 2004.

\bibitem{erdHos2008quantum}
L.~Erd{\H{o}}s, M.~Salmhofer, and H.-T. Yau.
\newblock Quantum diffusion of the random schr{\"o}dinger evolution in the
  scaling limit.
\newblock {\em Acta mathematica}, 200(2):211--277, 2008.

\bibitem{Grafakos-2014}
L.~Grafakos.
\newblock {\em Classical Fourier analysis}.
\newblock Graduate Texts in Mathematics. Springer New York, NY, 2014.

\bibitem{hugenholtz1983derivation}
N.~M. Hugenholtz.
\newblock Derivation of the boltzmann equation for a fermi gas.
\newblock {\em Journal of statistical physics}, 32(2):231--254, 1983.

\bibitem{LukMeiSpo-15}
J.~Lukkarinen, P.~Mei, and H.~Spohn.
\newblock Global well-posedness of the spatially homogeneous hubbard-boltzmann
  equation.
\newblock {\em Communications on Pure and Applied Mathematics}, 68(5):758--807,
  2015.

\bibitem{LukPirVuo-26pre}
J.~Lukkarinen, S.~Pirnes, and A.~Vuoksenmaa.
\newblock Finite lattice kinetic equations for bosons, fermions, and discrete
  {NLS}.
\newblock arXiv e-print,
  \href{https://arxiv.org/abs/2601.10486}{arXiv:2601.10486}, 2026.

\bibitem{LukkarinenSpohn:KLW:2007}
J.~Lukkarinen and H.~Spohn.
\newblock Kinetic limit for wave propagation in a random medium.
\newblock {\em Arch. Ration. Mech. Anal.}, 183(1):93--162, 2007.

\bibitem{lukkarinen2009not}
J.~Lukkarinen and H.~Spohn.
\newblock Not to normal order-notes on the kinetic limit for weakly interacting
  quantum fluids.
\newblock {\em Journal of Statistical Physics}, 134(5-6):1133--1172, 2009.

\bibitem{LukkarinenSpohn:WNS:2011}
J.~Lukkarinen and H.~Spohn.
\newblock Weakly nonlinear {S}chr\"odinger equation with random initial data.
\newblock {\em Invent. Math.}, 183(1):79--188, 2011.

\bibitem{Nordheim}
L.W. Nordheim.
\newblock Transport phenomena in einstein-bose and fermi- dirac gases.
\newblock {\em Proc. Roy. Soc. London A}, 119:689, 1928.

\bibitem{Peierls:1993:BRK}
R.~Peierls.
\newblock Zur kinetischen theorie der warmeleitung in kristallen.
\newblock {\em Annalen der Physik}, 395(8):1055--1101, 1929.

\bibitem{prigogine1962prigogine}
I.~Prigogine.
\newblock Non-equilibrium statistical mechanics.
\newblock {\em Interscience Publ., New York}, 1962.

\bibitem{Salmhofer-09}
M.~Salmhofer.
\newblock Clustering of fermionic truncated expectation values via functional
  integration.
\newblock {\em Journal of Statistical Physics}, 134:941--952, 2009.

\bibitem{Solovej-2014notes}
J.~P. Solovej.
\newblock Many body quantum mechanics.
\newblock Lecture notes at the Erwin Schroedinger Institute 2014, available
  online at
  \url{https://web.math.ku.dk/solovej/MANYBODY/mbnotes-ptn-5-3-14.pdf}.

\bibitem{spohn1977derivation}
H.~Spohn.
\newblock Derivation of the transport equation for electrons moving through
  random impurities.
\newblock {\em Journal of Statistical Physics}, 17(6):385--412, 1977.

\bibitem{UehlingUhlenbeck:TPI:1933}
E.~A. Uehling and G.~E. Uhlenbeck.
\newblock Transport phenomena in {E}instein-{B}ose and {F}ermi-{D}irac gases.
\newblock {\em Phys. Rev.}, 43:552--561, 1933.

\bibitem{van1954quantum}
L.~Van~Hove.
\newblock Quantum-mechanical perturbations giving rise to a statistical
  transport equation.
\newblock {\em Physica}, 21(1-5):517--540, 1954.

\end{thebibliography}

\end{document}